\definecolor{lightblue}{rgb}{0.5,0.5,1.0}
\definecolor{darkred}{rgb}{0.5,0,0}
\definecolor{darkgreen}{rgb}{0,0.5,0}
\definecolor{darkblue}{rgb}{0,0,0.5}
\renewcommand{\phi}{\varphi}
\newcommand{\nat}{\mathbb{N}}
\newcommand{\defining}[1]{\emph{#1}}
\newcommand{\iso}{\cong}
\newcommand{\disunion}{\uplus}
\newcommand{\Struct}{\mathcal{H}}
\newcommand{\autgroupsym}{\text{Aut}}
\newcommand{\autgroup}[1]{\autgroupsym(#1)}
\newcommand{\kernel}[1]{\text{ker}(#1)}
\newcommand{\proj}{\pi}
\newcommand{\outproj}{\overline{\pi}}
\newcommand{\intersect}{\cap}
\newcommand{\wreath}{\wr}
\newcommand{\direct}{\times}
\newcommand{\inv}[1]{#1^{-1}}
\newcommand{\normal}{\triangleleft}
\newcommand{\perm}{\sigma}
\newcommand{\stab}[2]{\mathrm{Stab}_{#1}(#2)}
\newcommand{\restrictGroup}[2]{#1|_{#2}}
\newcommand{\rot} {r}
\newcommand{\rotA} {\rot}
\newcommand{\rotB} {s}
\newcommand{\refl} {\alpha}
\newcommand{\reflA} {\refl}
\newcommand{\reflB} {\beta}
\newcommand{\group}{\Gamma}
\newcommand{\groupA}{\group}
\newcommand{\groupB}{\Delta}
\newcommand{\rotsubgroup}[1]{\text{Rot}(#1)}
\newcommand{\CFIgroup}[0]{\Gamma_\mathsf{CFI}}
\newcommand{\DoubleCFIgroup}[0]{\Gamma_\mathsf{2CFI}}
\newcommand{\ord}[1]{\operatorname{ord}(#1)}
\newcommand{\DihedralGroup}[1]{\mathsf{D}_{#1}}
\newcommand{\CyclicGroup}[1]{\mathsf{C}_{#1}}
\newcommand{\SymGroup}[1]{\mathsf{S}_{#1}}
\newcommand{\SymSetGroup}[1]{\text{Sym}(#1)}
\newcommand{\orbsym}{\text{orb}}
\newcommand{\orbit}[1]{\orbsym(#1)}
\newcommand{\orbitGroup}[2]{\orbsym_{#1}(#2)}
\newcommand{\orbitpart}[1]{\orbsym(#1)}
\newcommand{\ZZ}{\mathbb{Z}}
\newcommand{\set}[1] {\left\lbrace #1 \right\rbrace }
\newcommand{\setcondition}[2] {\left\lbrace #1 \mid #2 \right\rbrace }
\newcommand{\cansym}{\mathsf{can}}
\newcommand{\can}[1]{\cansym(#1)}
\newcommand{\types}{\mathbb{T}}	
\newcommand{\type}{T}
\newcommand{\ces}{\mathcal{S}}
\newcommand{\tces}{\mathcal{T}}
\newcommand{\var}{u}
\newcommand{\varA}{\var}
\newcommand{\varB}{v}
\newcommand{\varC}{w}
\newcommand{\vect}{x}
\newcommand{\vectA}{\vect}
\newcommand{\vectB}{y}
\newcommand{\vectC}{z}
\newcommand{\Vars}{V}
\newcommand{\BVars}{B}
\newcommand{\restrictVect}[2]{#1|_{#2}}
\newcommand{\solutions}[1]{L(#1)}
\newcommand{\isos}[2]{\mathrm{Iso}(#1,#2)}
\newcommand{\img}[1]{\mathrm{im}(#1)}
\newcommand{\orderings}[1]{\mathcal{O}(#1)}
\newcommand{\sig}{\tau}
\newcommand{\rel}{R}
\newcommand{\StructP}{H}
\newcommand{\spleq}{\preceq}
\newcommand{\spless}{\prec}
\newcommand{\colorclasses}[1]{\mathbb{C}_{#1}}
\newcommand{\colclass}{C}
\newcommand{\colstruct}{\mathcal{C}}
\newcommand{\groupSym}{\mathsf{gr}}
\newcommand{\extensionSym}{\mathsf{ex}}
\newcommand{\groupvertices}{\StructP_{\groupSym}}
\newcommand{\extensionvertices}{\StructP_{\extensionSym}}
\newcommand{\typegroupSym}{\group}
\newcommand{\typegroup}[1]{\typegroupSym_{#1}}
\newcommand{\typegroupStruct}[2]{\typegroupSym^{#1}_{#2}}
\newcommand{\grouptypes}{\types_\groupSym}
\def\unionordered{\mathbin{\ooalign{\hss\raisebox{3pt}{\tiny$<$}\hss\cr$\cup$}}}
\newcommand{\tcesunion}[2]{#1 \unionordered #2}
\newcommand{\extExpl}[2]{\text{ext}_{#1}(#2)}
\newcommand{\borderclasses}[1]{B(#1)}
\DeclareMathOperator{\lcm}{lcm}
\newtheorem{theorem}{Theorem}
\newtheorem{definition}[theorem]{Definition}
\newtheorem{lemma}[theorem]{Lemma}
\newtheorem{corollary}[theorem]{Corollary}
\tikzstyle{vertex} = [circle, fill=black, inner sep=0.5mm]
\title{Canonization for Bounded and Dihedral Color Classes in Choiceless Polynomial Time}
\author{Moritz Lichter\\
	TU Kaiserslautern\\
	\texttt{lichter@cs.uni-kl.de}
	\and
	Pascal Schweitzer\\
	TU Kaiserslautern\\
	\texttt{schweitzer@cs.uni-kl.de}
}
\def\PTime{\textsc{Ptime}}
\newcommand\blfootnote[1]{%
\begingroup
\renewcommand\thefootnote{}\footnote{#1}%
\addtocounter{footnote}{-1}%
\endgroup
}
\begin{document}
	
\maketitle

\begin{abstract}
	In the quest for a logic capturing \PTime{} the next natural classes of structures to consider are those with bounded color class size.
	We present a canonization procedure for graphs with dihedral color classes of bounded size in the logic of Choiceless Polynomial Time (CPT), which then captures \PTime{} on this class of structures. This is the first result of this form for non-abelian color classes.
	
	The first step proposes a normal form which comprises a ``rigid assemblage''. This roughly means that the local automorphism groups form 
	$2$-injective $3$-factor subdirect products.
	Structures with color classes of bounded size can be reduced canonization preservingly to normal form in CPT.
	
	In the second step, we show that for graphs in normal form with dihedral color classes of bounded size, the canonization problem can be solved in CPT.
	We also show the same statement for general ternary structures in normal form if the dihedral groups are defined over odd domains.\blfootnote{\noindent The research leading to these results has received funding from the European Research Council (ERC) under the European Union’s Horizon 2020 research and innovation programme (EngageS: grant agreement No.\ 820148).}
\end{abstract}

\section{Introduction}
One of the central open questions in the field of descriptive complexity theory asks about the existence of a logic that captures polynomial time (\PTime{})~\cite{Grohe2008}.
This question goes back to Chandra and Harel~\cite{ChandraHarel82}.
They ask whether there is a logic within which we can define exactly the
polynomial-time computable properties of relational structures.
For the complexity class NP such a logic is known, namely existential second order logic. This was shown by Fagin in his famous theorem~\cite{MR0371622}.
However, for the class \PTime{}, the question has been open now for more than 35 years.
A fundamental difficulty at its heart is a mismatch
between logics and Turing machines.
An input has to be written onto a tape to provide it to a Turing machine.
So all inputs are necessarily ordered by the position of each character on the tape. This is the case 
even when there is no natural order to begin with, which for example happens with the vertices of a graph that is encoded.
In contrast to this, such an order is typically not given for a logic.
In fact, if an order is given a priori then there is a logic capturing \PTime{}, for example on totally ordered structures.
Indeed, IFP (first order logic enriched with a fixed-point operator) is such a logic as shown by the Immerman-Vardi Theorem~\cite{Immerman87}.

In the ongoing search for a logic for unordered structures, one of the most promising candidates is the logic Choiceless Polynomial Time (CPT).
It manages to capture an important aspect demanded from a ``reasonable logic''
in the sense of Gurevich~\cite{Gurevich1988},
namely that such a logic cannot make arbitrary choices.
Whenever there are multiple indistinguishable elements,
a logic can either process all or none of them. It is impossible to pick one element arbitrarily and process just that.
This is common for many algorithms executed on Turing machines
exploiting the order given by the tape.
In the form originally defined by Blass, Gurevich, and Shelah~\cite{BlassGurevichShelah99},
CPT has a pseudocode-like syntax for processing hereditarily finite sets.
Most importantly there is a construct to process all elements of a set in parallel, because we cannot choose one to process first.
Subsequently,
there were definitions by Rossman~\cite{rossman2010}
and Grädel and Grohe~\cite{GradelGrohe2015}
in a more ``logical'' way using iteration terms or fixed points.

The question of whether a logic capturing \PTime{} on a class of structures exists
is closely linked to the problem of canonization.
Suppose it is possible to canonize  
input structures from a particular class in a logic
(i.e., to define an isomorphic copy enriched by a total order).
Then the logic (extended by IFP)  captures \PTime{} on this class
by the already mentioned Immerman-Vardi Theorem.
This yields a general approach to show that some logic captures \PTime{}
on a class of structures:
proving that canonization of the structures is definable in this logic.
This approach has been the method of choice for numerous results in descriptive complexity theory.
To this end, it was shown that canonization is IFP+C (IFP with counting) definable on 
interval graphs~\cite{Laubner2010},
graphs with excluded minors~\cite{DBLP:conf/stoc/Grohe00, Grohe2010, Grohe2017,GroheMarino1999},
and graphs with bounded rank width~\cite{DBLP:conf/lics/GroheN19}. Thus IFP+C captures \PTime{} on these classes.
Regarding CPT, all CPT-definable properties and transformations (e.g.,~canonization)
are in particular polynomial-time computable.
If canonization is CPT-definable for a graph class
then CPT captures \PTime{} on this class
because CPT subsumes IFP+C.

Closely linked to the problem of canonization is the problem of isomorphism testing.
A polynomial-time canonization algorithm implies a polynomial-time isomorphism test.
While we do not know of a formal reduction the other way around,
we usually have efficient canonization algorithms for all classes
for which an efficient isomorphism test is known
(see~\cite{DBLP:conf/stoc/SchweitzerW19} for an overview).
This statement can even be proven unconditionally for classes of vertex-colored structures with a CPT-definable isomorphism problem~\cite{deepWL}.
Accepting for the moment that canonization and isomorphism testing are algorithmically very related,
we arrive at the following observation:
if isomorphism testing is polynomial-time solvable
on a class of structures then, to capture \PTime{}, we must ``solve'' the isomorphism problem in the logic anyway. 
If we do so in CPT we (almost) immediately obtain a logic capturing \PTime{}. In summary, it appears the question of a logic for \PTime{} boils down to isomorphism testing within a logic. 

There is a notable class for which we have polynomial-time isomorphism testing and canonization algorithms, but for which we do not know
how to canonize them in CPT.
This is the class of structures with bounded color class size.
Specifically, for an integer~$q$, a $q$\nobreakdash-bounded structure is a vertex-colored structure, 
where at most $q$ vertices have the same color
and a total order on the colors is given.
Structures with bounded color class size can be canonized in polynomial time
with group-theoretic techniques (see~\cite{FurstHopcroftLuks80, bab79,  DBLP:conf/stoc/BabaiL83}).
The introduction of group-theoretic techniques  marks an important step for the design of canonization algorithms.
The use of algorithmic group theory turned out to be very fruitful
and subsequently lead to Luks's famous polynomial-time isomorphism test for graphs of bounded degree \cite{DBLP:journals/jcss/Luks82}.
It uses a more general and more complicated machinery than needed for bounded color classes.

For the purpose of isomorphism testing,
these group-theoretic techniques inherently rely on choosing generating sets,
and it is not clear how this can be done in a choiceless logic. 
A well-known construction of Cai, Fürer, and Immerman~\cite{CaiFuererImmerman1992} shows 
that IFP+C does not provide us with a \PTime{} logic for $2$-bounded structures.
Finding an natural, alternative logic for $q$-bounded structures is
still an open problem.
Studying structures with bounded color class size is a reasonable a next step,
because the canonization algorithm for them makes use of comparatively easy group theory
but we still do not know how to transfer these techniques into logics.

A first result towards the canonization of
structures with bounded color class size in CPT
was the canonization of structures with abelian colors,
that is the automorphism group of every color class is abelian,
due to Abu Zaid, Grädel, Grohe, and Pakusa~\cite{AbuZaidGraedelGrohePakusa2014}.
They use a certain class of linear equation systems
to encode the group-theoretic structure of abelian color classes
and solve these systems in CPT.
In particular, they show that CPT captures \PTime{} on $2$-bounded structures.
Considering dihedral groups is a next natural step
because dihedral groups are extensions of abelian groups by abelian groups.

\subparagraph{Contribution.}
This paper presents a canonization procedure in CPT for finite $q$-bounded  structures with dihedral colors.
A color class is dihedral (resp.~cyclic) if it induces a substructure
whose automorphism group is dihedral (resp.~cyclic).
A dihedral group is the automorphism group of a regular $n$-gon consisting of rotations and reflections and we call it odd if $n$ is odd.
Dihedral groups are non-abelian for $n>2$.
We thereby provide the first canonization procedure 
for a class of $q$-bounded structures with
non-abelian color classes
and in particular show that CPT captures \PTime{} on it.
Overall, we prove the following theorem:
\begin{theorem}
	\label{thm:canonize-structures-CPT}
	The following structures can be canonized in CPT:
	\begin{enumerate}
		\item $q$-bounded relational structures of arity at most $3$
		with odd dihedral or cyclic colors
		\item $q$-bounded graphs with dihedral or cyclic colors.
	\end{enumerate}
\end{theorem}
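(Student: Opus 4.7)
The overall plan is to chain the two CPT-definable procedures announced in the abstract. Given a $q$-bounded input structure $\Struct$ with dihedral or cyclic colors, the first stage applies the canonization-preserving reduction to rigid-assemblage normal form, which is CPT-definable for arbitrary $q$-bounded structures and produces an object whose local automorphism groups are $2$-injective $3$-factor subdirect products. The second stage invokes the CPT canonization tailored to normal-form structures: the graph routine for case~(2) and the ternary routine, with its odd-dihedral hypothesis, for case~(1). Because the reduction is canonization-preserving, pulling the canonical labelling of the normal form back yields a canonical labelling of $\Struct$.

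Carrying out this plan requires three routine verifications that I would do in order. First, each stage is CPT-definable on hereditarily finite sets, so their composition is. Second, the composition is a canonization: for isomorphic inputs the first stage returns isomorphic normal-form structures by definition of ``canonization-preserving'', and the second stage maps those to the same ordered copy. Third, the hypotheses of the second stage are preserved by the first: the normal-form reduction works within the given signature and does not change the local automorphism type of any color class, so ternary structures with odd dihedral or cyclic colors stay ternary with odd dihedral or cyclic colors, and graphs with dihedral or cyclic colors stay graphs with the same color types. These three checks reduce the theorem cleanly to the two component results.

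The hard part lies entirely inside the two component results that I am invoking. The normal-form reduction must enforce the $2$-injective $3$-factor subdirect-product structure on local automorphism groups without making arbitrary choices, which is delicate because the group-theoretic data being inspected is exactly the kind of thing a choiceless logic struggles to handle. The normal-form canonizer must then cope with the non-abelian reflection coset of $\DihedralGroup{n}$, presumably by reusing the linear-equation machinery of Abu~Zaid, Gr\"adel, Grohe, and Pakusa for the cyclic rotation subgroup and treating the order-two reflection part in parallel, symmetrizing over the two cosets so that no element needs to be singled out. The restriction to odd $n$ in the ternary case, which is lifted for graphs, plausibly reflects a parity obstruction that surfaces only when three color classes can interact simultaneously through an arity-three hyperedge, since for even $n$ the three simultaneous reflections give rise to invariants not pinned down by pairwise data, whereas arity-two relations only ever couple two color classes at once. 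Once both component results are in place, Theorem~\ref{thm:canonize-structures-CPT} follows by composing them as sketched above.
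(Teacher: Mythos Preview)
Your two-stage plan is the right shape, but the third verification step contains a real gap. The normal-form reduction does \emph{not} ``work within the given signature'' and does \emph{not} leave the local automorphism type of each color class unchanged: the signature is replaced by some $\sig'$, and the automorphism group of each new color class is only guaranteed to be a \emph{section} (a quotient of a subgroup) of some original one (Theorem~\ref{thm:convert-to-normal-form}). More importantly, graphs do \emph{not} stay graphs. The reduction to $2$-injective quotient structures (Lemma~\ref{lem:reduce-2-inj}) explicitly pads binary relations to arity~$3$ by duplicating a color class, so the output is genuinely ternary. What survives from the graph hypothesis is a structural residue: for every group type $(\colclass_1,\colclass_2,\colclass_3)$ in the output, the projection $\outproj_{\colclass_1}$ is a diagonal subgroup via a canonical isomorphism. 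This is the property actually used.

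Accordingly, there are not two separate second-stage routines keyed to ``graph'' versus ``odd dihedral ternary''. There is one routine, Theorem~\ref{thm:canonize-double-CFI-free}, whose hypothesis is that the $2$-injective quotient structure be \emph{double-CFI-free}: no group type realizes $\DoubleCFIgroup$ or its rotation restriction. The work you are missing is Lemma~\ref{lem:odd-dihedral-and-graphs-are-double-CFI-free}, which shows that both input classes lead to double-CFI-free outputs, but for different reasons. For odd dihedral inputs, sections of $\DihedralGroup{k}$ with $k$ odd are again odd dihedral or cyclic, so $\DihedralGroup{4}$ (and hence $\DoubleCFIgroup$) never appears. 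For graphs, the diagonal-projection property just mentioned is incompatible with the double CFI group, which has no factor on which it projects diagonally. Your speculation that graphs avoid the obstruction because ``arity-two relations only ever couple two color classes at once'' is therefore not the mechanism; after the reduction three color classes \emph{are} coupled, and it is the diagonal shape of that coupling that rules out $\DoubleCFIgroup$.
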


Our approach consists of two steps.
As a first step, we propose a normal form for arbitrary finite $q$-bounded structures.
Then, in a second step, we use group-theoretic arguments
to canonize structures with dihedral colors given in the aforementioned normal form. 

Concretely, the first step is a reduction
transforming the input structure into a normal form,
which ensures that
a color class and its adjacent color classes form a ``rigid assemblage''.
That is, locally the automorphism groups form 
$2$-injective $3$-factor subdirect products
or are quotient groups of other color classes.
In the the case of $2$-injective $3$-factor subdirect products, the automorphisms of three adjacent color classes
are not independent of each other.
This means that every nontrivial automorphism of
the substructure induced by these three color classes
is never constant on two of them.
More precisely, we prove the following theorem (formal definitions and proofs are given later in the paper).
\begin{theorem}
	\label{thm:convert-to-normal-form-light}
	For every $q$ and signature $\sig$
	there is a $q'$ and another signalure $\sig'$,
	such that
	relational $q$-bounded $\sig$-structures of arity $r$
	can be reduced canonization preservingly in CPT
	to $q'$-bounded $2$-injective
	quotient $\sig'$-structures.
\end{theorem}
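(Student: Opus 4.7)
The plan is to iteratively transform the input structure by detecting and eliminating violations of 2-injectivity, in a canonization-preserving manner expressible in CPT. Since every color class $C$ has size at most $q$, its local automorphism group embeds in $\SymSetGroup{C}$ and thus has size at most $q!$. Hereditarily finite sets of bounded rank over a color class, including permutations and subgroups, can be constructed and quantified over in CPT. Consequently, for any triple of color classes $C_1, C_2, C_3$ I can define the automorphism group $G$ of the induced substructure as a subgroup of $\SymSetGroup{C_1} \direct \SymSetGroup{C_2} \direct \SymSetGroup{C_3}$ directly in CPT, and likewise for each triple across the entire structure.

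The main loop detects and repairs violations of 2-injectivity. For each triple $(C_1, C_2, C_3)$ and each pair of indices $\{i,j\} \subset \{1,2,3\}$, I compute the kernel of the projection $G \to \SymSetGroup{C_i} \direct \SymSetGroup{C_j}$. A nontrivial kernel corresponds to a normal subgroup $N \normal \autgroup{C_k}$ (for $k$ the remaining index) that acts nontrivially on $C_k$ but invisibly from the pair $(C_i,C_j)$. I then replace $C_k$ by its quotient $C_k/N$: the quotient becomes a new color class of the output, and I add to $\sig'$ new relations that (a) record the fiber map $C_k \twoheadrightarrow C_k/N$ and (b) lift the original relations through this quotient, so that ordering $C_k/N$ determines an $N$\nobreakdash-invariant ordering of $C_k$. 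The whole process is iterated to a CPT fixed-point.

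Termination uses the well-founded measure $\sum_{C} \log |\autgroup{C}|$ summed over color classes, which strictly decreases with each quotient step and is bounded by $O(n \log q!)$; hence polynomially many iterations suffice, well within the reach of CPT. Canonization preservation is built into the construction: the fiber relations in $\sig'$ allow one to reconstruct a canonical total order of the input from any canonical total order of the output, since an ordering of $C_k/N$ extends canonically to an $N$-orbit ordering of $C_k$, and by construction the $N$-orbits are indistinguishable from outside $C_k$. The final bound $q'$ depends only on $q$ and $r$ because quotients never increase color class size, while the added fiber relations have arity bounded by a function of $q$ and $r$.

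The main obstacle will be twofold. First, one must justify that $N$ is truly a direct factor of the \emph{global} automorphism action, i.e., that the automorphisms captured by $N$ really do act only on $C_k$ when the whole structure (not just the triple) is taken into account; this may require either examining larger tuples of color classes or re-running the triple-repair step until it stabilizes, and showing that stabilization actually yields the normal form. Second, for CPT-definability the entire iteration — construction of quotient elements, lifted relations, and tracking of which classes are quotients of which — must be carried out inside the hereditarily finite sets without any arbitrary choice. Bounded color class size is precisely what enables this: all relevant group-theoretic objects (kernels, normal subgroups, orbits) have size polynomial in the input and can therefore be enumerated and manipulated choicelessly.
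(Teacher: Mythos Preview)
Your proposal has a genuine structural gap. The target object, a $2$-injective quotient structure, is not obtained by iteratively quotienting color classes in place until all kernels vanish; it has a specific two-layer form in which the original color classes survive intact as \emph{extension color classes}, and for every group type $T=(\colclass_1,\colclass_2,\colclass_3)$ one introduces \emph{fresh} quotient color classes $Q^T_i = \colclass_i / N^T_i$ (one per type and per position), with the heterogeneous relations moved to live only between these quotients. The crucial requirement that ``every group color class occurs in only one group type'' forces this: a single $\colclass_k$ typically participates in many types, each giving a different kernel $N^T_k$, so you need a separate quotient copy for each. Your loop, which picks one violating triple and quotients $\colclass_k$ once, cannot produce this structure; either you replace $\colclass_k$ and destroy information needed by other triples, or you keep $\colclass_k$ (via your fiber map) and the same violation is still present on the next iteration. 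The paper's construction is a single non-iterative pass, and your termination measure and your worry about $N$ being a ``global direct factor'' are both symptoms of trying to iterate something that should not be iterated.

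Several necessary preprocessing steps are also missing. Before quotients can even be formed one needs: (i) arity reduction to $3$ (the theorem is for arbitrary arity $r$, but $2$-injectivity concerns triples), (ii) transitivity on $3$ color classes (otherwise $\typegroup{T}$ is not subdirect), (iii) typed relations (so that each heterogeneous relation sits inside a single type), and (iv) regularity of each color class (so that $\autgroup{\colstruct}/N$ acts regularly on the $N$-orbit partition, which is what makes the quotient a legitimate color class). Finally, your canonization-preservation sketch is not right: an ordering of $\colclass_k/N$ yields only an ordering of the $N$-orbits, not of $\colclass_k$; the paper recovers the original relations because each $\rel_j^\Struct$ is a union of products of $N^T_i$-orbits, and the original $\colclass_k$ is still present as an extension color class with its own order coming from the canonization.
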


It was not necessary to consider $2$-injective groups for abelian colors yet, but it is for non-abelian colors.
Towards a reduction step, a purely group-theoretic analysis of
$2$-injective groups is given in~\cite{NeuenSchweitzer2016}. The main insight is basically that such groups
decompose naturally into structurally simpler parts which are related via a common abelian normal subgroup.
We extend the techniques to canonize abelian color classes
and show how they can be combined with the analysis of $2$-injective groups
to obtain a canonization procedure for said structures with dihedral colors in CPT. 
That is, we provide new methods to integrate group-theoretic reasoning,
which is at the core of canonizing $q$-bounded structures algorithmically,
into logics.

\subparagraph{Our Technique.}
The strategy of our canonization procedure 
is to reduce the dihedral groups
in some way to abelian groups and then exploit
the canonization procedure of~\cite{AbuZaidGraedelGrohePakusa2014}.

Since the automorphism groups of the color classes are restricted
to be dihedral or cyclic,
we can characterize all occurring
$2$-injective $3$-factor subdirect products.
Using this characterization we show that 
we can partition the input structure into parts we call reflection components.
These reflection components have the property that automorphisms either simultaneously reflect the points in all color classes of the component or in none. In the latter case they rotate the points in all color classes.
We use this property to force all groups in a reflection component to become abelian: we prohibit reflections in one color class of the reflection component
and this automatically prohibits reflections in all other classes of the component, too.
Once all reflections are removed, the remaining groups are abelian.
Then we apply the canonization procedure for structures with bounded abelian color classes  from~\cite{AbuZaidGraedelGrohePakusa2014} to the entire reflection component.

Not limiting ourselves to dihedral groups but also allowing cyclic groups has the benefit that the class of occurring groups
is closed under quotients and subgroups.
Quotient and subgroups of the input color classes occur naturally in our reduction process to the normal form.
For dihedral groups it turns out that odd dihedral groups
are easier to handle than the other ones.
In fact, reflection components are not really independent but can have ``global'' dependencies.
We show that for ``odd'' dihedral groups, reflection components can only be related
via color classes with abelian automorphism groups,
that is their global dependencies are abelian.
For ``even'' dihedral groups, there is a single
non-abelian exception that can connect reflection components, which complicates matters.
For even dihedral color classes this restricts us to the treatment of graphs (see Theorem~\ref{thm:canonize-structures-CPT} above).

Towards generalization,
it unfortunately becomes cumbersome to exploit the group structure theory in CPT, which is heavily required to execute the approach.
Extending the treatment of linear equation systems, which is a subroutine in~\cite{AbuZaidGraedelGrohePakusa2014},
to dihedral groups requires significant work already.
We still follow the strategy of~\cite{AbuZaidGraedelGrohePakusa2014}
and use a certain class of equation systems to encode the global dependencies.
However, we need to generalize the equation systems.
Consequently, we have to adapt all operations used on these equations systems
to work in the more general setting (e.g. the check for consistency).
This becomes technically even more involved than the techniques of~\cite{AbuZaidGraedelGrohePakusa2014} already are.

\subparagraph{Related Work.}
There already exist various results for CPT regarding structures with bounded color class size in addition to the ones mentioned above:
Cai, Fürer, and Immerman
introduced the so-called CFI graphs. From every base graph, a pair of two non-isomorphic CFI graphs is derived.
The isomorphism problem on these pairs of graphs is used to separate IFP+C from \PTime{}~\cite{CaiFuererImmerman1992}.
Dawar, Richerby, and Rossman showed in~\cite{DawarRicherbyRossman2008}
that the isomorphism problem for the CFI graphs
can be solved in CPT for base graphs of color class size 1.

This result was strengthened by Pakusa, Schalthöfer, and Selman
to base graphs with logarithmic color class size~\cite{PakusaSchalthoeferSelman2016}.
The techniques of~\cite{DawarRicherbyRossman2008}
and~\cite{PakusaSchalthoeferSelman2016}
are used in~\cite{AbuZaidGraedelGrohePakusa2014}
to solve the mentioned equation systems.

The logic IFP+C has a strong connection to the higher dimensional Weisfeiler-Leman algorithm and to an Ehrenfeucht–Fraïssé-like game,
the so-called bijective pebble game.
They are often used to show that IFP+C identifies
graphs in a given graph class,
i.e., that for any two non-isomorphic graphs of the class
there is an IFP+C formula distinguishing them.
It was shown by Otto~\cite{Otto1997}
that if IFP+C captures \PTime{} on a graph class
then IFP+C also identifies all graphs in this class.
The converse direction is open~\cite{Grohe2010}. The capabilities of IFP+C to detect graph decompositions were recently investigated in~\cite{DBLP:conf/mfcs/KieferN19}.

\subparagraph{Structure of this Paper.}
We begin with the characterization of
$2$-injective $3$-factor subdirect products of dihedral and cyclic groups
in Section~\ref{sec:classification-2-inj-products}.
Then we turn to structures and to permutation groups.
We begin with the reduction to the above mentioned normal form
in Section~\ref{sec:normal-forms}
and then preprocess structures with dihedral colors in Section~\ref{sec:dihedral-colors}.
In Section~\ref{sec:cyclic-linear-equations-systems}
we introduce tree-like cyclic linear equation systems (TCES)
and show that a certain subclass of them can be solved in CPT.
Finally,
we define and analyse reflection components in Section~\ref{sec:canonization-dihedral}
and give the CPT-definable canonization procedure for dihedral colors.

\section{Preliminaries}

We write $[k]$ for the set $\set{1,\dots , k}$ for $k \in \nat$. 

Let $\vectA \in M^N$ be a vector over the set $M$ indexed by the set $N$.
For $u \in N$ we write $\vectA(u)$ for the entry of $\vectA$ at position $u$.
Let $N' \subseteq N$.
We denote with $\restrictVect{\vectA}{N'} \in M^{N'}$
the vector obtained from $\vectA$
that satisfies $\vectA(u) = \restrictVect{\vectA}{N'}(u)$ for all $u \in N'$.
In the case that $N' \not \subseteq N$
we write $\restrictVect{\vectA}{N'}$ as an abbreviation for
$\restrictVect{\vectA}{N' \cap N}$.
We extend the notation to subsets of $K \subseteq M^N$:
$\restrictVect{K}{N'} := \setcondition{\restrictVect{\vect}{N'}}{\vect \in K}$.
For a set of vectors $K' \subseteq M^{N'}$
we write $\extExpl{N}{K'} := \setcondition{\vectA \in M^N}{\restrictVect{\vectA}{N'} \in K'}$
for the extension of $K'$ to $N$.

\subparagraph{Groups}
All groups considered in the paper will be finite.
We make use of standard group notation:
Let $\groupA$ be a group.
The identity (or trivial) element of $\groupA$ is written as~$1$.
The group operation is written as multiplication,
i.e., the product of $g,h\in\groupA$ is $gh$.
We write $\ord{g}$ for the order of $g \in \groupA$, that is the smallest $k \in \nat$ such that $g^k = 1$.
The number of elements $|\groupA|$ of $\groupA$ is called to order of $\groupA$.
For elements $g_1, \dots, g_i \in \groupA$
we denote with $\langle g_1, \dots, g_i \rangle$
the group generated by $\{g_1, \dots, g_i\}$.
We additionally use the notation for subgroups
$\groupB_1, \dots, \groupB_i \leq \groupA$,
where $\langle \groupB_1, \dots, \groupB_i \rangle$
is the group generated by all elements contained in the subgroups.
The index $[\groupA : \groupB]$ of a subgroup $\groupB\leq \groupA$ in $\groupA$
is $[\groupA : \groupB] = |\groupA| / |\groupB|$.
A (left) coset of $\groupA$ is a set
$g\groupA := \setcondition{gh}{h \in \groupA}$.
A normal subgroup $N \normal \groupA$ is a subgroup $N \leq \groupA$
satisfying $gN = Ng$ for all $g \in \groupA$.
Let $N \normal \groupA$. The quotient group $\groupA / N$
is the group of cosets $gN$ for all $g \in \group$.
Two groups $\groupA_1$ and $\groupA_2$ are isomorphic,
denoted $\groupA_1 \iso \groupA_2$,
if there is an isomorphism from $\groupA_1$ to~$\groupA_2$.

Of special interest in this paper are permutation groups.
We write $\SymSetGroup{\Omega}$ for the symmetric group
with domain $\Omega$,
i.e., the group of all permutations of $\Omega$.
We abbreviate $\SymGroup{n} := \SymSetGroup{[n]}$.
Let $\group \leq \SymSetGroup{\Omega}$ be a permutation group.
The orbit of $u \in \Omega$ in $\group$
is the set of points onto which $u$ can be mapped,
i.e.,~$\orbitGroup{\group}{u} := \setcondition{v \in \Omega}{\exists \perm \in \group.\ \perm(u) = v}$. 
If the group is clear from the context, we just write $\orbit{u}$.
The set of orbits of $\group$ is
$\orbitpart{\group} := \setcondition{\orbit{u}}{ u \in \Omega}$
and defines a partition of $\Omega$.
The group $\group$ is called transitive if 
$\group$ has only one orbit, i.e.,~$\orbit{\group} = \set{\Omega}$.
An action of a group~$\group$ on a set~$\Omega$ is a homomorphism from~$\Gamma$ to~$\SymSetGroup{\Omega}$. By considering the image of group elements under the homomorphism, we can speak about orbits of group elements on~$\Omega$. 
The $k$-orbits of~$\group$ are the orbits of $\group$ acting on $\Omega^k$ component-wise.
Lastly, a finite permutation group $\group$ is regular if $\group$ is transitive
and $|\group| = |\Omega|$.

Let $\Omega' \subseteq \Omega$.
We denote with
$\stab{\group}{\Omega'} := \setcondition{\perm \in \group}{\perm(\Omega') = \Omega'}$
the setwise stabilizer of $\Omega'$ in $\group$ and write $\stab{\group}{u}$ for $\stab{\group}{\set{u}}$
for $u \in \Omega$. 
We define $\restrictGroup{\group}{\Omega'} := \setcondition{\restrictVect{\perm}{\Omega'}}{\perm \in \stab{\group}{\Omega'}}$
to be the stabilizer of $\Omega'$ restricted to $\Omega'$.

Let $\groupA \leq G_1 \direct G_2 \direct G_3$ be a group.
We denote with $\proj^\groupA_i \colon \groupA \rightarrow G_i$ the projection
onto the $i$-th factor and 
with $\outproj^\groupA_i \colon \groupA \rightarrow G_{i+1} \direct G_{i+2}$
(indices modulo $3$) the projection onto the factors other than the $i$-th one.
If for $i \in [3]$ the groups~$G_i$ are permutation groups acting on pairwise disjoint sets $\Omega_i$,
we also write $\proj^\groupA_{\Omega_i}$ and $\outproj^\groupA_{\Omega_i}$.
We omit the group and only write~$\proj_i$ and $\outproj_i$
if $\group$ is clear from the context.

\subparagraph{Bounded Relational Structures}
A (relational) signature
$\tau = \set{\rel_1, \dots, \rel_k}$ is a set of relation symbols
with associated arities $r_i \in \nat$ for all $i \in [k]$.
We consider signatures containing a binary relation symbol $\spleq$.
A $\tau$-structure $\Struct$ is a tuple
${\Struct = (\StructP, \rel_1^\Struct, \dots, \rel_k^\Struct, \spleq)}$
where $\rel_i^\Struct \subseteq\StructP^{r_i}$ for all $i \in [k]$
and $\spleq \subseteq \StructP^2$ is a total preorder.
Almost all structures considered in this paper will be finite%
\footnote{The hereditary finite expansion of a finite structure used to define CPT is the only occurrence of a non-finite structure in this paper.}.
Note that a $\tau$-structure defines an order on the relations.
The preorder~$\spleq$ partitions~$\StructP$ into equivalence classes,
which we call \defining{color classes}.
The total preorder~$\spleq$ induces a total order
on the color classes.
We denote the set of $\Struct$-color classes by
$\colorclasses{\Struct}$.
For a set $I \subseteq \StructP$
we denote with~$\Struct[I]$ the substructure induced by $I$.
If $I = \colclass$ is a color class, we just write $\colstruct$ for $\Struct[\colclass]$,
if the structure $\Struct$ is clear from the context.
If $I \subseteq \colorclasses{\Struct}$ we also write
$\Struct[I]$ for $\Struct[\bigcup I]$.
We say that two colors classes $\colclass,\colclass' \in \colorclasses{\Struct}$
are \defining{related},
if $\Struct[\colclass \cup \colclass'] \neq \Struct[\colclass] \cup \Struct[\colclass']$,
so there is a relation that contains a tuple consisting of vertices of both
$\colclass$ and $\colclass'$.

We say that a relation $\rel_i^\Struct$
is \defining{homogeneous} if  $\rel_i^\Struct \subseteq \colclass^k$
for some $\colclass \in \colorclasses{\Struct}$ and $k \in \nat$.
A relation is \defining{heterogeneous} if it is not homogeneous.

In the following, we are mostly interested in the arity of heterogeneous relations.
We say that a structure $\Struct$
is of \defining{inter-color-arity} $r$
if the largest arity of a heterogeneous relation is~$r$.
For simplicity we just say that $\Struct$ is of arity $r$. 
A structure is \defining{$q$-bounded}, if $|\colclass| \leq q$
for all $\colclass \in \colorclasses{\Struct}$
and the arity of every homogeneous relation is bounded by $q$. 
We are interested in bounding the arity of heterogeneous relations,
because it bounds the number of color classes we need to consider 
simultaneously. In contrast to this, we could actually allow homogeneous relations of arbitrary arity, since they only ever involve one color class.
In fact, bounding the arity of homogeneous relations is technically not necessary since higher arity implies repeated entries. However, having the bound simplifies our exposition.

An ordered copy of $\Struct$ is a pair $(\Struct', <)$,
such that
$\Struct' = (\StructP', \rel_1^{\Struct'}, \dots, \rel_k^{\Struct'}, \spleq')$,
$\Struct' \iso \Struct$, and $<$ is a total order that refines
$\spleq'$.
A canonical copy $\can{\Struct}$ is an ordered copy of $\Struct$
obtained in a canonical way, i.e.,~defined in CPT in the following.

We write $\autgroup{\Struct}$ for the automorphism group of a structure $\Struct$.
The orbit of a tuple $(v_1, \dots, v_k) \in \StructP^k$
is the orbit of $(v_1,\dots,v_k)$ under the action of $\autgroup{\Struct}$ on $\StructP^k$.
For two structures~$\Struct_1$ and~$\Struct_2$
we write $\isos{\Struct_1}{\Struct_2}$ for the set of isomorphisms between~$\Struct_1$ and~$\Struct_2$.
For a canonical copy $\can{\Struct}$ we call the set $\isos{\Struct}{\can{\Struct}}$
the \defining{canonical labellings}.
We denote with $\orderings{M}$ the set of orderings of $M$,
that is the set of bijections $M \to [|M|]$.

\newcommand{\Atoms}{\textsf{Atoms}}
\newcommand{\Pair}{\textsf{Pair}}
\newcommand{\Union}{\textsf{Union}}
\newcommand{\Unique}{\textsf{Unique}}
\newcommand{\Card}{\textsf{Card}}
\newcommand{\HFsym}{\textsf{HF}}
\newcommand{\HF}[1]{\HFsym(#1)}
\newcommand{\halt}{\text{halt}}
\newcommand{\out}{\text{out}}
\newcommand{\TC}[1]{\textsf{TC}(#1)}
\newcommand{\act}[2]{\text{act}(#1,#2)}
\newcommand{\actStruct}[3]{\text{act}(#1,#2,#3)}
\newcommand{\denotation}[1]{\llbracket #1 \rrbracket}

\newcommand{\false}{\textsf{false}}
\newcommand{\true}{\textsf{true}}

\subparagraph{Choiceless Polynomial Time}

To give a concise definition of CPT,
we follow the definition of~\cite{GradelGrohe2015} and
use the same idea of e.g.~\cite{pakusa2015} to enforce polynomial bounds.

For a set of atoms $A$
we denote with $\HF{A}$ the \defining{hereditarily finite sets over~$A$}.
This is the inclusion-wise smallest set with $A \subseteq \HF{A}$
and $a \in \HF{A}$ for every ${a \subseteq \HF{A}}$.
A~set $a \in \HF{A}$ is called transitive,
if $c \in a$ whenever there is some $b$ with $ c\in b \in a$.
We denote with $\TC{a}$ the \defining{transitive closure} of $a$,
that is the least (with respect to set inclusion)
transitive set $b$ with $a \subseteq b$.

Let $\sig$ be a signature.
We extend $\sig$ by adding set-theoretic function symbols to obtain
$\sig^\HFsym := \sig \disunion \set{\emptyset,\Atoms, \Pair,\Union, \Unique, \Card}$,
where $\emptyset$ and $\Atoms$ are constants,
$\Union, \Unique,$ and $\Card$ are unary, and $\Pair$ is binary.
For a $\sig$-structure $\Struct$, the \defining{hereditarily finite expansion}
$\HF{\Struct}$ is a $\sig^\HFsym$-structure over the universe $\HF{\StructP}$ defined as follows:
all relations in $\sig$ are interpreted as they are in $\Struct$.
The other function symbols have usual set theoretic interpretation:
\begin{itemize}
	\item $\emptyset^{\HF{\Struct}} = \emptyset$ and $\Atoms^{\HF{\Struct}} = \StructP$
	\item $\Pair^{\HF{\Struct}}(a,b) = \set{a,b}$
	\item $\Union^{\HF{\Struct}}(a) = \setcondition{b}{\exists c \in a.~b\in c}$
	\item $\Unique^{\HF{\Struct}}(a) = \begin{cases}
		b & \text{if } a = \set{b}\\
		\emptyset & \text{otherwise}
	\end{cases}$
	\item $\Card^{\HF{\Struct}}(a) = \begin{cases}
		|a| &\text {if } a \notin \StructP\\
		\emptyset &\text{otherwise}
	\end{cases}$,\\
	where the number $|a|$ is encoded as a von Neuman ordinal.
\end{itemize} 
Note that the $\Unique$ function can only be applied to singleton sets
and hence is invariant under automorphisms.

Before defining the logic CPT we define the BGS logic
(after Blass, Gurevich, and Shelah)
and then obtain CPT as its polynomial time fragment:
A \defining{BGS term} is composed as usual from variables and
function symbols from $\sig^{\HFsym}$.
There are two additional constructs:
if $s(\bar{x},y)$ and $t(\bar{x})$ are terms with
a sequence of free variables $\bar{x}$
(and an additional free variable $y$ in the case of $s$)
and $\phi(\bar{x}, y)$ is a formula with free variables $\bar{x}$ and $y$
then $r(\bar{x}) = \setcondition{s(\bar{x}, y)}{y \in t(\bar{x}), \phi(\bar{x}, y)}$
is a \defining{comprehension term} with free variables $\bar{x}$.
If $s(x)$ is a term with a single free variable $x$,
then $s^*$ is an \defining{iteration term} without free variables.
\defining{BGS formulas} are composed of terms $t_1, \dots, t_k$ as
$\rel(t_1, \dots, t_k)$ (for $\rel \in \sig$ of arity $k$),
terms $t_1 = t_2$, and the usual boolean connectives.

Let $\Struct$ be a $\sig$ structure.
BGS terms and formulas are interpreted over $\HF{\StructP}$.
We define the denotation $\denotation{t}^\Struct \colon \HF{\StructP}^k \to \HF{\Struct}$
that for a term $t$ with free variables ${\bar{x} = (x_1, \dots, x_k)}$
maps $\bar{a} = (a_1, \dots ,a_k) \in \HF{\Struct}^k$
to the value of $t$ if we replace $x_i$ with~$a_i$ (for $i \in [k]$).
For a formula $\phi$ with free variables $\bar{x}$
we define $\denotation{\phi}^\Struct$ to be the set of all 
$\bar{a} = (a_1, \dots ,a_k) \in \HF{\Struct}^k$ satisfying~$\phi$.

For the comprehension term $r$ as above, the denotation is defined as follows:
$\denotation{r}^\Struct(\bar{a}) = 
\setcondition{\denotation{s}^\Struct(\bar{a}b)}{b \in \denotation{t}^\Struct(\bar{a}),
(\bar{a}b) \in \denotation{\phi}^\Struct}$,
where $\bar{a}b$ denotes the tuple $(a_1, \dots , a_k, b)$.
For an iteration term $s^*$ we define a sequence of sets
by setting $a_0 := \emptyset$ and $a_{i+1} := \denotation{s}^\Struct(a_i)$.
Let $\ell := \ell(s^*,\Struct)$
be the least number satisfying $a_{i+1} = a_{i}$.
We set $\denotation{s^*}^\Struct := a_i$ if such an~$\ell$ exists and set
$\denotation{s^*}^\Struct := \emptyset$ otherwise.

A CPT term (or formula respectively) is a tuple $(t,p)$ (or $(\phi, p)$ respectively) of a BGS term (or formula) and a polynomial $p(n)$.
CPT has the same semantics as BGS by replacing~$t$ with $(t,p)$
everywhere (or $\phi$ with $(\phi, p)$)
with an exception for iteration terms:
We set $\denotation{(s^*,p)}^\Struct := \denotation{s^*}^\Struct$
if  $\ell(s^*, \Struct) \leq p(|\StructP|)$
and $|\TC{a_i}| \leq p(|\StructP|)$ for all $i$,
where the $a_i$ are defined as above.
Otherwise, we set $\denotation{(s^*,p)}^\Struct := \emptyset$.
We use $|\TC{a_i}|$ as a measure of the size of $a_i$,
because by transitivity of $\TC{a_i}$,
whenever there is a set $b_k \in  \dots \in b_1 \in a_i$,
then also $b_k \in \TC{a_i}$ and thus $\TC{a_i}$
counts all sets occurring somewhere in the structure of $a_i$.
Putting polynomial bounds on iteration terms suffices,
because all other terms can increase the size of the constructed sets only polynomially.

In this paper we frequently build CPT terms
that compute certain sets, by which we mean that the term's denotation is the output set.
For readability, we use the usual set notation and common abbreviations,
e.g.~$x \cup y$ for $\Union(\Pair(x,y))$
and $(x,y)$ for $\Pair(x,\Pair(x,y))$,
or describe on a higher level how the desired CPT formula is obtained.

\section{Classification of $2$-Injective Subdirect Products of Dihedral Groups}
\label{sec:classification-2-inj-products}

In this section we focus on dihedral groups
and classify all $2$-injective $3$-factor subdirect products
of dihedral and cyclic groups.
We first recall some existing definitions and
introduce new ones, which are then used to state our classification concisely.
The rest of the section is used to prove this classification.

\subparagraph{$2$-Injective Subdirect Products}A group $\groupA \leq G_1 \direct G_2 \direct G_3$ is called a \defining{($3$-factor) subdirect product}
if the projection to each factor is surjective, that is $\proj^\groupA_i(\groupA) = G_i$ for all ${i \in [3]}$.
It is called \defining{$2$-injective} if $\kernel{\outproj^\groupA_i} = \{1\}$ for all $i \in [3]$, that is, each projection onto two components is an injective map. Another way of looking at this is that two components of an element of $\groupA$ determine the third one uniquely.
Simple examples of $2$-injective subdirect products are diagonal subgroups:
for a group $G$, the group $\setcondition{(g,g,g)}{g\in G} \leq G^3$
is called the \defining{diagonal subgroup}.

\subparagraph{Dihedral Groups} For $n\geq 3$, the \defining{dihedral group~$\DihedralGroup{n}$} of order~$2n$ is the automorphism group of a regular~$n$-gon in the plane.
It consists of~$n$ rotations and~$n$ reflections and acts naturally on the set of~$n$ vertices of the polygon. 
In the degenerate cases $\DihedralGroup{1}$ and $\DihedralGroup{2}$ of orders~$2$ and~$4$ respectively, the dihedral group is abelian. For~$n\geq 2$ is it non-abelian.
We write $\CyclicGroup{n}$ for the cyclic group of order $n$.
It holds that $\DihedralGroup{1} \iso \CyclicGroup{2}$ and $\DihedralGroup{2} \iso \CyclicGroup{2}^2$.
For~$n\geq 3$, we can define rotations and reflections of dihedral groups independently of the action of the group.
\begin{definition}[Rotation and Reflection]

	Let $n \geq 3$.
	We call an element $\rot \in \DihedralGroup{n}$ a \defining{rotation},
	if $\ord{\rot} \geq 3$ or $\rot$ commutes with
	all rotations of order at least $3$. 
	An element $\refl \in \DihedralGroup{n}$ is called a \defining{reflection}
	if it is not a rotation.
	We extend the notation to tuples $g=(g_1, \dots, g_k) \in \DihedralGroup{n_1} \direct \cdots \direct \DihedralGroup{n_k}$:
	If $g_i$ is a rotation (respectively reflection) for all $i \in [k]$,
	then $g$ is called a rotation (respectively reflection).
\end{definition}
Note that we regard the identity $1$ as a rotation.
We use the letters $\rotA, \rotB$ for rotations
and $\reflA, \reflB$ for reflections.
Suppose for $n>2$ that $\rotA, \rotB \in \DihedralGroup{n}$ are rotations and~$\reflA, \reflB \in \DihedralGroup{n}$ are reflections. Then
$\rotA\rotB = \rotB\rotA$ and $\reflA\reflB$ are rotations
and $\rotA\reflA=\reflA\inv{\rotA}$ is a reflection.

For~$n_i\geq 3$ the direct product~$\DihedralGroup{n_1} \direct \cdots \direct \DihedralGroup{n_k}$ contains mixed elements that are neither a reflection nor a rotation. Subgroups of such a group may or may not contain such mixed elements.

\begin{definition} [Rotate-or-Reflect Group]
	Let $\group \leq \DihedralGroup{n_1} \direct \cdots \direct \DihedralGroup{n_k}$
	and $n_i > 2$ for all $i \in [k]$.
	We call $\group$ a \defining{rotate-or-reflect} group if 
	every $g \in \group$
	is a rotation or a reflection.
\end{definition}

\begin{definition}[Rotation Subgroup]
	We define the rotation subgroup of $\DihedralGroup{n}$ for $n > 2$
	as
	\[\rotsubgroup{\DihedralGroup{n}} := \setcondition{g \in \DihedralGroup{n}}{g \text{ is a rotation}} \iso \CyclicGroup{n}.\]
	For a group $\group \leq \DihedralGroup{n_1} \direct \dots \direct \DihedralGroup{n_k}$ with $n_i > 2$ for all $i\in [k]$ we define
	\[\rotsubgroup{\group} := \group \intersect \left( \rotsubgroup{\DihedralGroup{n_1}} \direct \cdots \direct \rotsubgroup{\DihedralGroup{n_k}} \right).\]
\end{definition}

\begin{corollary}
	Let $\group \leq \DihedralGroup{n_1} \direct \cdots \direct \DihedralGroup{n_k}$
	be a reflect-or-rotate group.
	Then $\rotsubgroup{\group} = \group \cap \left( \rotsubgroup{\DihedralGroup{n_i}} \direct \DihedralGroup{n_2} \direct \cdots \direct \DihedralGroup{n_k} \right)$.
	By symmetry restricting any other factor to its rotation subgroup yields the same group.
\end{corollary}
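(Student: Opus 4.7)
The plan is to prove the two inclusions of the asserted equality separately, with the nontrivial direction boiling down to an application of the defining property of rotate-or-reflect groups.

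The inclusion $\rotsubgroup{\group} \subseteq \group \cap \left( \rotsubgroup{\DihedralGroup{n_1}} \direct \DihedralGroup{n_2} \direct \cdots \direct \DihedralGroup{n_k} \right)$ is immediate from the definition of $\rotsubgroup{\group}$ together with the trivial containment $\rotsubgroup{\DihedralGroup{n_1}} \subseteq \DihedralGroup{n_1}$, so nothing has to be done here beyond unpacking the definitions.

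For the reverse inclusion I would take an element $g = (g_1, \dots, g_k) \in \group$ whose first coordinate $g_1$ lies in $\rotsubgroup{\DihedralGroup{n_1}}$, that is, is a rotation in $\DihedralGroup{n_1}$. The aim is then to show that every component $g_i$ is a rotation in $\DihedralGroup{n_i}$, because this places $g$ in $\rotsubgroup{\DihedralGroup{n_1}} \direct \cdots \direct \rotsubgroup{\DihedralGroup{n_k}}$ and hence in $\rotsubgroup{\group}$. Since $\group$ is rotate-or-reflect, $g$ is either a rotation or a reflection, which by the componentwise extension of the definitions means that either all $g_i$ are rotations or all $g_i$ are reflections. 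As rotations and reflections of $\DihedralGroup{n_1}$ (with $n_1 > 2$) are disjoint by definition, the already established fact that $g_1$ is a rotation rules out the second case, so all $g_i$ are rotations, as desired.

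The final clause that any other factor may play the role of the first follows because the index $1$ is not used in the argument in any essential way: the same reasoning, applied with an arbitrary coordinate $j$ in place of $1$, gives the analogous identity with $\rotsubgroup{\DihedralGroup{n_j}}$ in the $j$-th factor and $\DihedralGroup{n_i}$ in the remaining factors.

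The only real subtlety, and the ``main obstacle'' in so far as there is one, is to make sure that in $\DihedralGroup{n}$ for $n > 2$ an element is classified as either a rotation or a reflection but never both. This is immediate from the definition above, where reflections are defined as the non-rotations, so the corollary truly reduces to invoking the rotate-or-reflect hypothesis and the componentwise nature of the rotation/reflection designation for tuples.
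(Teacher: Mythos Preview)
Your argument is correct and is exactly the intended one: the paper states this corollary without proof, treating it as immediate from the definitions of rotate-or-reflect group and rotation subgroup, and your two-inclusion argument spells out precisely that reasoning.
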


Now, we are prepared to state the classification of
$2$-injective subdirect products of dihedral groups and cyclic groups.
For this we analyse how reflections and rotations of
dihedral groups in $2$-injective subdirect products can be combined.
We will see that if no factor is isomorphic to $\DihedralGroup{1}$, $\DihedralGroup{2}$, and $\DihedralGroup{4}$
we always obtain a rotate-or-reflect group.
Specifically, the goal of this section is to prove the following two theorems
(the mentioned double CFI group is defined in Section~\ref{sec:CFI-and-double-CFI}, see Figure~\ref{fig:double-cfi}):

\begin{theorem}	
	\label{thm:classification-2inj-sub-dihedral}
	Let $\group \leq  \DihedralGroup{n_1} \direct \DihedralGroup{n_2} \direct \DihedralGroup{n_3}$
	be a $2$-injective subdirect product.
	Then exactly one of following holds:
	\begin{enumerate}
		\item $n_i > 2$ for all $i \in [3]$ and $\group$ is a rotate-or-reflect group.
		\item $n_i = 4$ for all $i \in [3]$  and $\group$ is isomorphic to the double CFI group~$\DoubleCFIgroup$.
		\item $n_i \leq 2$, $n_j = n_k > 2$ for $\set{i,j,k} = [3]$, and $\outproj_i(\group)$ is a rotate-or-reflect group.
		\item $n_i  \leq 2$ for all $i \in [3]$   and $\group$ is abelian.
	\end{enumerate}
\end{theorem}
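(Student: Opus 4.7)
The plan is to partition the proof according to how many of the $n_i$ lie in the degenerate abelian range $n_i \le 2$. If all three $n_i \le 2$, then $\DihedralGroup{n_1} \direct \DihedralGroup{n_2} \direct \DihedralGroup{n_3}$ is itself abelian, giving case~4 immediately. The configuration with two abelian factors is incompatible with $2$-injectivity: if for instance $n_1, n_2 \le 2$ and $n_3 > 2$, then $\outproj_3$ identifies $\group$ with a subgroup of the abelian group $\DihedralGroup{n_1} \direct \DihedralGroup{n_2}$, so $\group$ is abelian, contradicting the surjection onto the non-abelian $\DihedralGroup{n_3}$. In the one-abelian case, say $n_1 \le 2$ with $n_2, n_3 > 2$, the injection $\outproj_1$ realises $\group$ as a subdirect product of only two dihedral factors; I would use this two-factor structure together with the additional surjection onto $\DihedralGroup{n_1}$ of order at most four to match the rotation subgroups $\rotsubgroup{\DihedralGroup{n_2}}$ and $\rotsubgroup{\DihedralGroup{n_3}}$ by an isomorphism, forcing $n_2 = n_3$ and yielding the rotate-or-reflect property on $\outproj_1(\group)$.

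The heart of the argument is the case where all $n_i > 2$. Here the rotations form a characteristic index-$2$ subgroup $\rotsubgroup{\DihedralGroup{n_i}} \iso \CyclicGroup{n_i}$ in each factor, giving a sign homomorphism $\phi_i \colon \DihedralGroup{n_i} \to \CyclicGroup{2}$ and a combined homomorphism $\phi \colon \group \to \CyclicGroup{2}^3$. By definition, $\group$ is rotate-or-reflect exactly when $\phi(\group) \subseteq \set{(0,0,0), (1,1,1)}$, which is case~1. Otherwise $\phi(\group)$ contains a mixed sign pattern and by symmetry we may assume there is some $g = (\reflA_1, \reflA_2, \rotA_3) \in \group$ with $\reflA_1, \reflA_2$ reflections and $\rotA_3$ a rotation.

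The central device is squaring: $g^2 = (1, 1, \rotA_3^2)$ lies in $\kernel{\outproj_3}$, so $2$-injectivity forces $\rotA_3^2 = 1$. Since $\rotsubgroup{\DihedralGroup{n_3}} \iso \CyclicGroup{n_3}$ contains a nontrivial involution only when $n_3$ is even, $n_3$ must be even with $\rotA_3$ either trivial or the unique central involution. I would then apply this squaring argument systematically to all mixed sign patterns produced by surjectivity of the three projections, together with the conjugation relations $\rotA\reflA = \reflA\inv{\rotA}$, to derive that every $n_i$ is even and that the rotation coordinate of every mixed element must be an involution. Products of mixed elements would otherwise trap rotations of higher order inside kernels of the $\outproj_i$, which $2$-injectivity does not tolerate; this pushes $n_i$ down to exactly~$4$ for all $i$.

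Once $n_i = 4$ for all $i$, I expect the main obstacle to be the final identification step: one must verify that $2$-injectivity, subdirectness, and the presence of genuinely mixed elements jointly pin down a unique isomorphism type. I would finish by enumerating the possible mixed patterns in $\phi(\group)$ that survive the squaring constraints, showing that the resulting group is determined up to isomorphism by this combinatorial data, and matching it against the definition of $\DoubleCFIgroup$ in Section~\ref{sec:CFI-and-double-CFI} to establish case~2.
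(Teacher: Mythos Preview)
Your outer case split on how many $n_i \le 2$ matches the paper's, and your argument ruling out the two-abelian-factor configuration is exactly Lemma~\ref{lem:no-2inj-sub-DCC-not2}. Beyond that the two approaches diverge, and your proposal has genuine gaps in the places that carry the weight.

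\textbf{Case 3 (one abelian factor).} Your plan is to use the injection $\outproj_1$ to realise $\group$ as a two-factor subdirect product and then ``match the rotation subgroups $\rotsubgroup{\DihedralGroup{n_2}}$ and $\rotsubgroup{\DihedralGroup{n_3}}$ by an isomorphism, forcing $n_2 = n_3$''. This does not work: the kernel $H_1$ only identifies \emph{normal subgroups} of $\DihedralGroup{n_2}$ and $\DihedralGroup{n_3}$, not the full rotation subgroups, and in fact $n_2 = n_3$ need not hold (take $\DihedralGroup{1}\times\DihedralGroup{3}\times\DihedralGroup{6}$ with $\group$ the graph of the isomorphism $\DihedralGroup{1}\times\DihedralGroup{3}\iso\DihedralGroup{6}$). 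The paper's Lemma~\ref{lem:2inj-sub-D1-or-D2-other} proves the rotate-or-reflect conclusion directly by an order argument, without ever comparing $n_2$ and $n_3$: assuming a mixed element $(g,\rotA,\reflA)$ it takes a rotation $\rotB$ of order $>2$ in the second factor, multiplies, and finds an element whose second coordinate has order $>2$ while the other two have order $\le 2$, contradicting Lemma~\ref{lem:2-inj-sub-basics}.

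\textbf{The main case (all $n_i>2$).} Your sign map $\phi$ and the squaring device are indeed the right tools, and the paper uses them too. But your sketch has three holes. First, squaring $(\reflA_1,\reflA_2,\rotA_3)$ only handles two-reflection patterns; a one-reflection pattern $(\reflA_1,\rotA_2,\rotA_3)$ squares into $H_1$, not into a trivial kernel, so no contradiction arises. The paper needs a separate reduction (Lemma~\ref{lem:one_refl_two_rot-implies-two_refl_one_rot}) to produce a two-reflection element from a one-reflection one. Second, knowing that the rotation coordinate of every two-reflection element has order $\le 2$ does not by itself force $n_i=4$; the paper's mechanism is different. It shows (Lemma~\ref{lem:kernel-no-reflection-124}) that whenever $n_i\notin\{1,2,4\}$ one can pick a rotation $\rotA$ with $\ord{\rotA^2}>2$, and combining the corresponding group element with a putative reflection in $H_i$ manufactures an order violation. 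This is why $4$ is special, and your proposal does not isolate this point. The paper then organises the remaining work by how many $n_i$ equal $4$ (Corollary~\ref{cor:reflect-or-rotate-not-124}, Lemmas~\ref{lem:2inj-sub-D4-other} and~\ref{lem:no-2-injs-D4-D4-Di}), rather than by a uniform squaring argument. Third, once all $n_i=4$, the identification with $\DoubleCFIgroup$ is not a short enumeration of sign patterns: Lemma~\ref{lem:2inj-sub-D4-D4-D4} is the longest argument in the section and proceeds by analysing which of the normal subgroups $\DihedralGroup{4},\DihedralGroup{2},\CyclicGroup{4},\CyclicGroup{2},\CyclicGroup{1}$ of $\DihedralGroup{4}$ each kernel $H_i$ can be, using Theorem~\ref{thm:two-inj-sub-kernels} to equate indices, and tracking which conjugacy class of reflections each $H_i$ meets. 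Your last paragraph gestures at this but does not supply an argument.
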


\begin{theorem}
	\label{thm:classification-2inj-sub-dihedral-cyclic}
	Let $\group \leq \CyclicGroup{n_1} \direct \DihedralGroup{n_2} \direct \DihedralGroup{n_3}$ be a $2$-injective subdirect product.
	Then exactly one of the following holds:
	\begin{enumerate}
		\item $n_1 \leq2$, $n_2, n_3 > 2$, and $\outproj_1(\group)$ is a rotate-or-reflect group.
		\item $n_1 = n_2 = n_3 = 4$ and $\group \iso \DoubleCFIgroup \cap (\rotsubgroup{\DihedralGroup{4}} \direct \DihedralGroup{4} \direct \DihedralGroup{4})$.
		\item $n_1,n_2, n_3 \leq 2$ and $\group$ is abelian.
	\end{enumerate}
	Furthermore, there are no $2$-injective subdirect products of 
	$\DihedralGroup{n} \direct G_2 \direct G_3$ 
	for $n>2$ if $G_2$ and~$G_3$ are abelian groups.
\end{theorem}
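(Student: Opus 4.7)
My plan is to deduce Theorem~\ref{thm:classification-2inj-sub-dihedral-cyclic} from Theorem~\ref{thm:classification-2inj-sub-dihedral} by a case distinction on the $n_i$, combined with an elementary commutator argument for the ``furthermore'' claim. For the furthermore claim, suppose $\group \leq \DihedralGroup{n} \direct G_2 \direct G_3$ is a $2$-injective subdirect product with $n > 2$ and $G_2, G_3$ abelian. I would pick a reflection $\refl$ and a rotation $\rot \in \DihedralGroup{n}$ of order at least $3$ (both exist since $n > 2$), lift them via subdirectness to $(\refl, x_2, x_3), (\rot, y_2, y_3) \in \group$, and compute the commutator coordinate-wise. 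Since $G_2, G_3$ are abelian, the last two coordinates vanish and the first is $[\refl, \rot] = \inv{\rot}^2 \neq 1$, yielding a nontrivial element of $\kernel{\outproj_1}$, which contradicts $2$-injectivity.

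For the main classification I split into subcases depending on which $n_i$ are $\leq 2$. If all $n_i \leq 2$, the ambient group is abelian, giving Case~3. If $n_1 \leq 2$ and $n_2, n_3 > 2$, then $\CyclicGroup{n_1}$ is trivial or isomorphic to $\DihedralGroup{1}$, so $\group$ can be viewed as a $2$-injective subdirect product of $\DihedralGroup{n_1'} \direct \DihedralGroup{n_2} \direct \DihedralGroup{n_3}$ with $n_1' \leq 2$; Case~3 of Theorem~\ref{thm:classification-2inj-sub-dihedral} then immediately gives our Case~1. If $n_1 > 2$ but $n_2, n_3 \leq 2$, then $\outproj_1$ embeds $\group$ into an exponent-$2$ group, forcing $\group$ itself to have exponent $2$ and contradicting the existence of an element of order $n_1 \geq 3$ in $\proj_1(\group) = \CyclicGroup{n_1}$. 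In every remaining configuration where one of $n_2, n_3$ exceeds $2$ while another factor among $\CyclicGroup{n_1}, \DihedralGroup{n_2}, \DihedralGroup{n_3}$ is abelian, the furthermore clause (applied after permuting coordinates so that the truly dihedral factor is first) rules out $\group$.

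The only case left is $n_1, n_2, n_3 > 2$, which I expect to force Case~2. Here I would embed $\group$ as a subgroup (not subdirect) of $\DihedralGroup{n_1} \direct \DihedralGroup{n_2} \direct \DihedralGroup{n_3}$ via $\CyclicGroup{n_1} \iso \rotsubgroup{\DihedralGroup{n_1}}$ and construct a $2$-injective subdirect extension $\tilde{\group}$ of index $2$ over $\group$ by adjoining a single coset $\group \cdot (\refl_1, b', c')$, where $\refl_1 \in \DihedralGroup{n_1}$ is a reflection and $(b', c')$ is a suitable pair of reflections (so that the new generator is an involution squaring into $\group$, and so that no nontrivial element of any $\kernel{\outproj_i}$ is created). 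Applying Theorem~\ref{thm:classification-2inj-sub-dihedral} to $\tilde{\group}$: its Cases~3 and~4 are excluded by $n_i > 2$, and its Case~1 (rotate-or-reflect) is excluded because $\group$ would then consist of those elements of $\tilde{\group}$ whose factor~$1$ is a rotation, forcing rotations on factors~$2$ and~$3$ as well and contradicting $\proj_2(\group) = \DihedralGroup{n_2}$. Hence Case~2 holds: $(n_1, n_2, n_3) = (4, 4, 4)$ and $\tilde{\group} \iso \DoubleCFIgroup$, so $\group = \tilde{\group} \cap (\CyclicGroup{4} \direct \DihedralGroup{4} \direct \DihedralGroup{4}) \iso \DoubleCFIgroup \cap (\rotsubgroup{\DihedralGroup{4}} \direct \DihedralGroup{4} \direct \DihedralGroup{4})$. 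The hard part will be the construction of $\tilde{\group}$: guaranteeing a pair $(b', c')$ outside $\outproj_1(\group)$ that keeps the extension $2$-injective requires a Goursat-style analysis of the subdirect structure of $\outproj_1(\group) \leq \DihedralGroup{n_2} \direct \DihedralGroup{n_3}$, since in principle the reflection pairs already hit by $\outproj_1(\group)$ could obstruct the obvious choices.
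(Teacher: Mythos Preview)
Your handling of the ``furthermore'' clause via the commutator is correct and is exactly Lemma~\ref{lem:no-2inj-sub-DCC-not2}, and your case split for configurations with some $n_i \leq 2$ parallels the paper (via Lemmas~\ref{lem:no-2inj-sub-DCC-not2} and~\ref{lem:2-inj-sub-C2DD-rot-refl-group}). The only wrinkle is $n_1 = 1$, where the first factor is trivial rather than $\DihedralGroup{1}$ so Theorem~\ref{thm:classification-2inj-sub-dihedral} does not literally apply, but then $\outproj_1(\group)$ is the graph of an isomorphism $\DihedralGroup{n_2} \iso \DihedralGroup{n_3}$ and the conclusion is immediate.

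The gap is in the case $n_1, n_2, n_3 > 2$. For $\tilde\group = \group \cup \group\cdot(\refl_1, b', c')$ to be a group you need $(\refl_1, b', c')$ to \emph{normalize} $\group$, a condition you do not mention and which is not automatic. With $b', c'$ reflections, take any $(r, s, \gamma) \in \group$ with $s$ a rotation and $\gamma$ a reflection: conjugation gives $(r^{-1}, s^{-1}, c'\gamma c')$, and comparing with the inverse $(r^{-1}, s^{-1}, \gamma) \in \group$ forces $(1, 1, (c'\gamma)^2) \in \group$, hence $(c'\gamma)^2 = 1$ by $2$-injectivity, pinning $\gamma$ to at most two possibilities determined by $c'$. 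Even when $\outproj_1(\group)$ is already rotate-or-reflect and no such mixed elements exist, the analogous constraint on reflection pairs $(\beta, h)$ reads $((b'\beta)^2, (c'h)^2) \in \outproj_1(H_1)$, simultaneously with $(b', c') \notin \outproj_1(\group)$. Showing that suitable $b', c'$ exist therefore requires proving strong structural facts about $\group$ first---which is the work you hoped to bypass. The paper avoids the extension entirely: Lemma~\ref{lem:no-2-inj-sub-CDD-not-124} rules out $n_1 \notin \{1,2,4\}$ by a direct order argument in the style of Lemma~\ref{lem:kernel-no-reflection-124}, and Lemma~\ref{lem:2-in-sub-C4D4D4-is-DCFI} handles $n_1 = 4$ by an explicit computation that forces $n_2 = n_3 = 4$ and identifies $\group$.
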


We also checked the classification with a computer program written in the computer algebra system GAP~\cite{GAP} up to $n_i = 20$.

\subsection{Dihedral Groups not of Order $2$, $4$, or $8$ }
We first consider the more general case
where all dihedral groups are $\DihedralGroup{i}$ for $i \notin \set{1,2,4}$.
For this, we collect some basic facts of $2$-injective subdirect products.
\begin{lemma}
	\label{lem:2-inj-sub-basics}
	Let $\groupA \leq G_1 \direct G_2 \direct G_3$
	be a $2$-injective group and $g_i \in G_i$ for $i \in [3]$. 
	If~$(g_1,g_2,g_3)\in \groupA$ then~$\ord{g_i}$ divides the least common multiple~$\lcm\{\ord{g_{i+1}},\ord{g_{i+2}}\}$ (indices modulo $3$).  In particular:
	\begin{enumerate}[label=\alph*)]
		\item If $g_1 \neq 1$ then $(g_1,1,1) \notin \groupA$.\label{part:nontrivial:order:lemma}
		\item If $\ord{g_1} > \ord{(g_2, g_3)}$ then $(g_1, g_2, g_3) \notin
		\groupA$.\label{part:greater:order:two:nontrivial:lemma}
	\end{enumerate}
\end{lemma}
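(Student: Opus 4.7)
The plan is to reduce both parts to the single divisibility statement, and to prove that statement by the standard trick of raising an element to a power that kills two coordinates. Given $(g_1,g_2,g_3)\in\groupA$, set $m := \lcm\{\ord{g_2},\ord{g_3}\}$. Then
\[
(g_1,g_2,g_3)^m = (g_1^m,\,g_2^m,\,g_3^m) = (g_1^m,1,1) \in \groupA.
\]
This element lies in $\kernel{\outproj_1}$, which is trivial by $2$-injectivity, so $(g_1^m,1,1) = (1,1,1)$. Hence $g_1^m = 1$, and consequently $\ord{g_1}$ divides $m = \lcm\{\ord{g_2},\ord{g_3}\}$. The argument is symmetric in the three coordinates (using $\kernel{\outproj_i}=\{1\}$ each time), so the general divisibility claim follows.

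For part (a), assume $(g_1,1,1)\in\groupA$. Then $\outproj_1((g_1,1,1))=(1,1)$, so $(g_1,1,1)\in\kernel{\outproj_1}=\{1\}$, forcing $g_1=1$. Equivalently, this is the case $g_2=g_3=1$ of the divisibility statement, which forces $\ord{g_1}\mid\lcm\{1,1\}=1$.

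For part (b), use the elementary fact that in a direct product the order of $(g_2,g_3)$ equals $\lcm\{\ord{g_2},\ord{g_3}\}$. If $\ord{g_1}>\ord{(g_2,g_3)}=\lcm\{\ord{g_2},\ord{g_3}\}$, then $\ord{g_1}$ strictly exceeds this lcm and in particular cannot divide it, so by the divisibility claim $(g_1,g_2,g_3)\notin\groupA$.

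There is no real obstacle here; the lemma is essentially an unpacking of the definition of a $2$-injective subdirect product combined with the basic behaviour of orders in direct products. The only thing to be careful about is keeping the cyclic indexing of $\outproj_i$ consistent with the factor being ``solo'' on the left.
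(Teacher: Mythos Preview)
Your proof is correct and follows essentially the same approach as the paper: raise $(g_1,g_2,g_3)$ to the power $\lcm\{\ord{g_2},\ord{g_3}\}$ to obtain an element of $\kernel{\outproj_1}$, which must be trivial by $2$-injectivity. The paper phrases this as a contradiction and then notes that (a) and (b) follow immediately, whereas you argue directly and spell out (a) and (b) more explicitly, but the content is the same.
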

\begin{proof}
	By symmetry we only consider~$i=1$.
	Suppose for~$(g_1,g_2,g_3)\in \groupA$ that~$\ord{g_1}$ does not divide~$\ell :=  \lcm\{\ord{g_{2}},\ord{g_{3}}\}$. Then~$(g_1^\ell ,g_2^\ell,g_3^\ell)= (g_1^\ell ,1,1)\in \groupA\setminus\{1\}$, which contradicts 2-injectivity. Items a) and b) follow immediately.
\end{proof}

For a $2$-injective subdirect product $\groupA$,
we define  $H^\group_i := \kernel{\proj^\group_i} = \setcondition{(g_1, g_2, g_3)}{g_i = 1}$.
Note that $H^\group_i$ defines an isomorphism between $\proj_{i+1}(H^\group_i)$ and $\proj_{i+2}(H^\group_i)$ (indices again modulo~3) since the entries for~$g_{i+1}$ and~$g_{i+2}$ are in one to one correspondence when~$g_i=1$.
We finally set $H^\group := \langle H^\group_1, H^\group_2, H^\group_3 \rangle$.
In the following we just write $H_i$ (respectively $H$) for $H^\group_i$ (respectively $H^\group$),
if the group $\group$ is clear from the context.

\begin{theorem}[\cite{NeuenSchweitzer2016}]
	\label{thm:two-inj-sub-kernels}
	Let $\groupA \leq G_1 \direct G_2 \direct G_3$ be a $2$-injective subdirect product.
	Then $[G_i : \proj_i(H)] = [\groupA : H]$.
\end{theorem}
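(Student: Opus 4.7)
The plan is to extract the result directly from the first isomorphism theorem applied to $\pi_i$ and its restriction to $H$, together with the definitional observation that $H_i \subseteq H$ for each $i$. Note that each $H_i = \ker(\pi_i|_\Gamma)$ is normal in $\Gamma$, so $H = \langle H_1, H_2, H_3 \rangle$ is a subgroup generated by normal subgroups and hence itself normal in $\Gamma$ (though this is not strictly needed for the index computation). By symmetry it suffices to treat one fixed index $i \in [3]$.

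First, surjectivity of $\pi_i$ (the subdirect product condition) and the first isomorphism theorem give $G_i \cong \Gamma / H_i$, so $|G_i| = |\Gamma|/|H_i|$. Next, consider the restriction $\pi_i|_H \colon H \to G_i$. Since $H_i \subseteq H$ by definition of $H$, the kernel of this restriction is $H \cap H_i = H_i$, whence $\pi_i(H) \cong H/H_i$ and therefore $|\pi_i(H)| = |H|/|H_i|$. Putting these two identities together,
\[
[G_i : \pi_i(H)] \;=\; \frac{|G_i|}{|\pi_i(H)|} \;=\; \frac{|\Gamma|/|H_i|}{|H|/|H_i|} \;=\; \frac{|\Gamma|}{|H|} \;=\; [\Gamma : H].
\]

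There is essentially no obstacle in the proof itself: once one recognises that $H_i$ is simultaneously the kernel of $\pi_i$ and a subgroup of $H$, the equality collapses to an index computation via the first isomorphism theorem. The actual content of the statement lies in its downstream use: it guarantees that the three ``defect'' indices $[G_i : \pi_i(H)]$ are equal, independent of $i$, and governed by the single quantity $[\Gamma : H]$. I would therefore not spend effort looking for a cleverer argument but rather record the short calculation above and move on to the applications.
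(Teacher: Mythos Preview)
The paper does not supply its own proof of this theorem; it is quoted from \cite{NeuenSchweitzer2016} and used as a black box. Your argument is correct: once one observes that $H_i \leq H$ so that $\ker(\pi_i|_H) = H_i$, the first isomorphism theorem applied to $\pi_i$ and to $\pi_i|_H$ gives $|G_i| = |\Gamma|/|H_i|$ and $|\pi_i(H)| = |H|/|H_i|$, and the claimed equality of indices is immediate.

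One remark worth making explicit: your proof nowhere invokes the $2$-injectivity hypothesis, only the subdirectness (surjectivity of each $\pi_i$). The statement therefore holds for arbitrary $3$-factor subdirect products, and your write-up already reflects this level of generality even though the theorem as stated in the paper includes the $2$-injective assumption.
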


Now, we turn to dihedral groups.

\begin{lemma}
	\label{lem:one_refl_two_rot-implies-two_refl_one_rot}
	Let $\group \leq \DihedralGroup{n_1} \direct \DihedralGroup{n_2} \direct \DihedralGroup{n_3}$ be a 2-injective subdirect product
	and $n_i \notin \set{1,2}$ for all $i \in [3]$.
	If $(\reflA_1,\rotA_2,\rotA_3) \in \group$ for a rotation~$\reflA_1$ and reflections~$\rotA_2$ and $\rotA_3$ then 
	there is an element $(\rotA_1',\reflA_2', \reflA_3') \in \group$
	where~$\rotA_1'$ is a rotation
	and~$\reflA_2'$ and~$\reflA_3'$ are reflections.
\end{lemma}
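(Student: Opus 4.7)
The plan is to use the sign homomorphism coming from the subgroup of rotations. For every $n \geq 3$, the rotation subgroup $\rotsubgroup{\DihedralGroup{n}}$ has index $2$ in $\DihedralGroup{n}$, so there is a well-defined homomorphism $\sigma_n \colon \DihedralGroup{n} \to \CyclicGroup{2}$ sending rotations to $0$ and reflections to $1$. Applying these componentwise gives a homomorphism $\Sigma \colon \group \to \CyclicGroup{2}^3$, whose image $I := \Sigma(\group)$ records, for each element of $\group$, which of its three components are rotations and which are reflections. The given element $(\reflA_1, \rotA_2, \rotA_3)$ contributes $(1,0,0) \in I$, and the conclusion is equivalent to $(0,1,1) \in I$: any preimage in $\group$ is then of the required form $(\rotA_1', \reflA_2', \reflA_3')$.

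To establish $(0,1,1) \in I$ I would use that $\group$ is a subdirect product, so $I$ surjects onto each coordinate of $\CyclicGroup{2}^3$ and therefore contains elements $u = (u_1, 1, u_3)$ and $v = (v_1, v_2, 1)$. A short case analysis in the vector space $\CyclicGroup{2}^3$ then finishes the argument: if $u_3 = 1$ (or symmetrically if $v_2 = 1$), the element $u$ (or $v$) has the form $(\ast, 1, 1)$, and adding $(1,0,0) \in I$ if necessary yields $(0,1,1) \in I$; otherwise $u = (u_1, 1, 0)$ and $v = (v_1, 0, 1)$, so $u + v = (u_1 + v_1, 1, 1) \in I$, and we conclude in the same way.

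I do not foresee a real obstacle here. The only subtlety is that the assumption $n_i \notin \set{1,2}$ enters exactly where needed, namely to make the sign map well-defined, since the rotation/reflection dichotomy degenerates for $\DihedralGroup{1}$ and $\DihedralGroup{2}$. It is also worth noting that $2$-injectivity of $\group$, while part of the standing setup of the section, is not actually used in this step; the lemma reduces to the purely ``linear'' observation that a subgroup of $\CyclicGroup{2}^3$ containing $(1,0,0)$ and surjecting onto every coordinate must also contain $(0,1,1)$.
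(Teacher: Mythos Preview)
Your proof is correct. The key observation---that the rotation/reflection type of an element of $\group$ factors through a homomorphism $\Sigma \colon \group \to \CyclicGroup{2}^3$, and that the lemma then reduces to an elementary statement about subgroups of $\CyclicGroup{2}^3$---is exactly right, and your remark that $2$-injectivity plays no role here is accurate.

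The paper's argument proceeds by the same underlying idea but stays at the level of concrete group elements rather than passing to the quotient $\CyclicGroup{2}^3$. It picks, via subdirectness, elements $h_2,h_3 \in \group$ whose second (respectively third) component is a reflection, and then performs a case distinction on the rotation/reflection types of the remaining components of $h_2$ and $h_3$, multiplying by the given element $(\reflA_1,\rotA_2,\rotA_3)$ as needed. Your approach is a cleaner abstraction of exactly this case analysis: the elements $h_2,h_3$ correspond to your $u,v \in I$, and the various products taken in the paper correspond to your additions in $\CyclicGroup{2}^3$. What you gain is brevity and transparency---the four-case argument in the paper collapses to a two-line linear-algebra computation---and a clear explanation of why only subdirectness (surjectivity onto each factor) and not $2$-injectivity is required.
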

\begin{proof}
	For each $i \in \set{2,3}$ we pick a reflection $\reflA_i$ in $\DihedralGroup{n_i}$.
	Then there are two elements $h_i = (g_{i,1}, g_{i,2}, g_{i,3}) \in \group$
	where $i \in \set{2,3}$ and $g_{i,i} = \refl_i$.
	We make the following case distinction:
	\begin{enumerate}
		\item One of the $h_i$ already has the desired form of having a reflection in the first component and reflections elsewhere. In this case we are done.
		\item If one $h_i = (\reflB_1,\reflB_2,\reflB_3) $ only consists of reflections, then 
		\[h_i(\reflA_1, \rotA_2, \rotA_3) = (\reflB_1\reflA_1,\reflB_2\rotA_2,\reflB_3\rotA_3) = (\rotA_1', \reflA_2', \reflA_3') \in \group\]
		has the desired form.
		\item If $h_2 = (\rotB_1, \reflB_2, \rotB_3)$, we make another case distinction on $h_3$.
		If $h_3 = (\rotB_1', \rotB_2', \reflB_3')$ then
		\[h_2h_3 = (\rotB_1\rotB_1',\reflB_2\rotB_2',\rotB_3\reflB_3') = 
		(\rotA_1', \reflA_2', \reflA_3') \in \group\]
		has the desired form.
		Otherwise $h_3 = (\reflB_1', \rotB_2',\reflB_3')$ and
		\[h_2h_3 = (\rotB_1\reflB_1',\reflB_2\rotB_2',\rotB_3\reflB_3')\]
		only consists of reflections. Thus, we reduced to Case~2.
		
		\item Otherwise $h_2 = (\reflB_1, \reflB_2, \rotB_3)$ and we perform the same case distinction on $h_3$. 
		If $h_3 = (\rotB_1', \rotB_2', \reflB_3')$
		then $h_2h_3$ only consists of reflections
		and we reduced to Case~2.
		If otherwise $h_3 = (\reflB_1', \rotB_2',\reflB_3')$,
		then $h_2h_3$ has the desired form.
		\qedhere
	\end{enumerate}
\end{proof}

\begin{lemma}
	\label{lem:rot-refl-from-kernels}
	Let $\group \leq \DihedralGroup{n_1} \times \DihedralGroup{n_2} \times \DihedralGroup{n_3}$ be a 2-injective subdirect product
	and $n_i \notin \set{1,2}$ for all $i \in [3]$.
	If $\outproj_1(H_1)$ contains no reflections for all $i \in [3]$,
	then $\group$ is a reflect-or-rotate group.
\end{lemma}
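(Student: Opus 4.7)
My plan is to prove the contrapositive: suppose $\group$ is not rotate-or-reflect and produce an element of some $H_i$ whose outer projection is a reflection. Let $\phi_i \colon \DihedralGroup{n_i} \to \CyclicGroup{2}$ be the parity map (rotations to $0$, reflections to $1$). By Lemma~\ref{lem:one_refl_two_rot-implies-two_refl_one_rot} (used to convert weight-$1$ mixed elements to weight-$2$ ones) and the symmetry of the hypothesis over the three positions, I may assume that $\group$ contains a weight-$2$ element $g = (\rot_1, \refl_2, \refl_3)$ with the single rotation in position~$1$. Squaring yields $g^2 = (\rot_1^2, 1, 1)$, so Lemma~\ref{lem:2-inj-sub-basics}(a) forces $\rot_1^2 = 1$. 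If $\rot_1 = 1$ then $g \in H_1$ and $\outproj_1(g)$ is already a reflection, done. So from here on I may assume $\rot_1$ is the unique order-$2$ rotation of $\DihedralGroup{n_1}$, which in particular forces $n_1$ to be even.

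Next I observe that $\proj_1(\rotsubgroup{\group})$ contains no rotation of order greater than~$2$: if $k \in \rotsubgroup{\group}$ had $\proj_1(k) = \rho$, then $gk$ would be weight-$2$ with first component $\rot_1 \rho$, and squaring would force $\rho^2 = 1$ via Lemma~\ref{lem:2-inj-sub-basics}. Hence $\proj_1(\rotsubgroup{\group}) \subseteq \{1, \rot_1\}$. If $\rot_1 \in \proj_1(\rotsubgroup{\group})$, pick such a $k$; then $gk^{-1} \in H_1$ has both outer components reflections, contradicting the hypothesis. The main obstacle of the proof is therefore the remaining subcase $\proj_1(\rotsubgroup{\group}) = \{1\}$. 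Under this assumption, for every rotation $\rho \in \DihedralGroup{n_1}$ of order greater than~$2$ and every preimage $h = (\rho, b_2, b_3) \in \group$: $h$ is not all-rotation (else $\rho \in \proj_1(\rotsubgroup{\group}) = \{1\}$), and $(b_2, b_3)$ are not both reflections (else squaring $h$ yields $\rho^2 = 1$). Consequently $\phi(h) \in \{(0,1,0), (0,0,1)\}$.

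The critical calculation is then $hgh$. With $h = (\rho, \refl_2^*, \rot_3^*)$ of parity $(0,1,0)$, using the dihedral identity $r \alpha = \alpha r^{-1}$ one gets $hgh = (\rot_1 \rho^2,\ \refl_2^* \refl_2 \refl_2^*,\ \refl_3)$, which is again weight-$2$; squaring forces $\rho^4 = 1$, so $\ord{\rho} = 4$. The analogous computation for parity $(0,0,1)$ yields the same conclusion. Hence every rotation of $\DihedralGroup{n_1}$ of order greater than~$2$ has order exactly~$4$, forcing $n_1 = 4$. With $n_1 = 4$, the only rotations of order greater than~$2$ are $r$ and $r^3$, where $r$ generates the rotation subgroup and $\rot_1 = r^2$. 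The hypothesis that $\outproj_1(H_1)$ contains no reflections means that all preimages of a given $\rho$ must share parity in $\{(0,1,0), (0,0,1)\}$ --- otherwise their quotient would be an element of $H_1$ of parity $(0,1,1)$ --- yielding a well-defined function $f \colon \{r, r^3\} \to \{(0,1,0), (0,0,1)\}$. If $f(r) = f(r^3)$, then $h_r h_{r^3}^{-1}$ is all-rotation with first coordinate $r \cdot r^{-3} = r^2 = \rot_1$, contradicting $\proj_1(\rotsubgroup{\group}) = \{1\}$. If $f(r) \neq f(r^3)$, then $h_r h_{r^3} = (1, \refl, \refl) \in H_1$ directly violates the hypothesis. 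Either way the sought contradiction is reached.
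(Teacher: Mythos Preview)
Your proof is correct, but it takes a substantially longer route than the paper's. Both arguments begin the same way: reduce via Lemma~\ref{lem:one_refl_two_rot-implies-two_refl_one_rot} to a weight-two element $g=(\rot_1,\refl_2,\refl_3)$ with $\rot_1^2=1$, and dispose of the case $\rot_1=1$ immediately. From there, however, the paper finishes in one stroke. Since $\ord{\rot_1}=2$ forces $n_1\geq 4$, one can choose a root $\rotB_1$ with $\rotB_1^{\,k}=\rot_1$ and $k>1$; taking any preimage $(\rotB_1,g_2,g_3)\in\group$, Lemma~\ref{lem:2-inj-sub-basics}\ref{part:greater:order:two:nontrivial:lemma} forces one of $g_2,g_3$ (say $g_2$) to be a rotation, and then
\[
(\rot_1,\refl_2,\refl_3)\,(\rotB_1,g_2,g_3)^{k}=(1,\ \refl_2 g_2^{\,k},\ \refl_3 g_3^{\,k})\in H_1
\]
already has a reflection in its second coordinate, contradicting the hypothesis.

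Your argument instead builds up structural constraints: first $\proj_1(\rotsubgroup{\group})=\{1\}$, then the conjugation $hgh$ to force every high-order rotation in the first factor to have order~$4$ (hence $n_1=4$), and finally a parity bookkeeping on preimages of $r$ and $r^3$. All of this is sound, and the $hgh$ trick is a nice observation, but it is considerably more machinery than needed: the single ``root-and-power'' multiplication above accomplishes the same thing without any case analysis or reduction to $n_1=4$. What your approach buys is that it never appeals to how the $H_1$-isomorphism treats reflections versus rotations, so it works verbatim under the strict tuple reading of ``reflection''; the paper's one-line finish relies (implicitly) on that isomorphism, though this is harmless in the cases where the lemma is applied.
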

\begin{proof}	
	If there is a violating element in $\group$,
	we can assume by Lemma~\ref{lem:one_refl_two_rot-implies-two_refl_one_rot}
	that (up to reordering of the factors) $(\rotA_1,\reflA_2,\reflA_3) \in \group$.
	In particular $\ord{\rotA_1} \leq 2$ by Lemma~\ref{lem:2-inj-sub-basics}\ref{part:greater:order:two:nontrivial:lemma}.
	If $\rotA_1 = 1$ is one, there is a reflection in $\outproj_1(H_1)$.
	Otherwise $\ord{\rotA_1}= 2$ and in particular $n_1 > 3$
	($\DihedralGroup{3}$ contains no rotation of order $2$).
	Consider a root $\rotB_1$ of $\rotA_1$ with $\rotB_1^k = \rotA_1$
	(it exists because $n_1 > 3$), $k>1$,
	and an element $(\rotB_1, g_2, g_3) \in \group$.
	Now $\ord{\rotB_1} > 2$ and thus one of $g_2$ and $g_3$
	is a rotation of order $>2$, say $g_2 = \rotB_2$.
	Then $(\rotA_1, \reflA_2, \reflA_2)(\rotB_1, \rotB_2, g_3)^k
	= (1, \reflA_2 \rotB_2^k, \reflA_3 g_3^k)$.
	Clearly $\reflA_2 \rotB_2^k$ is a reflection, hence $\reflA_3 g_3^k$ is a reflection and we found a reflection in $\outproj_1(H_1)$.
\end{proof}

\begin{lemma}
	\label{lem:kernel-no-reflection-124}
	Let $\group \leq \DihedralGroup{n_1} \times \DihedralGroup{n_2} \times \DihedralGroup{n_3}$ be a 2-injective subdirect product
	and $n_i \notin \set{1,2,4}$ for some $i \in [3]$
	and $n_j > 2$ for all $j \in [3]$.
	Then $H_i$ contains only rotations.
\end{lemma}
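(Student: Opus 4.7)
The plan is to argue by contradiction. Assume, without loss of generality, that $i = 1$ (the other cases are symmetric by reordering the factors) and that $H_1$ contains a non-rotation element $c = (1, g_2, g_3)$. Using Lemma~\ref{lem:2-inj-sub-basics} I first normalize the form of $c$: because $\ord{g_j}$ divides $\lcm(\ord{g_{j+1}}, \ord{g_{j+2}})$ and one of the other coordinates is $1$, we get $\ord{g_2} = \ord{g_3}$. Since $c$ is not a rotation, at least one of $g_2, g_3$ is a reflection, forcing $\ord{g_2} = \ord{g_3} = 2$. Hence each of $g_2, g_3$ is either a reflection or the unique rotation of order $2$ (the latter only exists when the respective $n_j$ is even). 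Up to symmetry, this yields three essential configurations for $c$: both components reflections (call this $X$), one reflection and one order-$2$ rotation ($Y$), and its mirror $Y'$.

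Next I fix a rotation $\rho_1 \in \DihedralGroup{n_1}$ of order exactly $n_1$ and take a lift $(\rho_1, a, b) \in \group$, which exists by subdirectness. Lemma~\ref{lem:2-inj-sub-basics} gives $n_1 \mid \lcm(\ord{a}, \ord{b})$. Since $n_1 \geq 3$ does not divide $2$, the pair $(a, b)$ cannot consist of two elements of order at most $2$: at least one of them must be a rotation of order strictly greater than $2$ (and, when $n_1$ is even and $n_1 \neq 2$, actually of order divisible by $n_1$ up to a factor of $2$).

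The core of the argument is then to multiply $(\rho_1, a, b)$ by $c$ and by its conjugates inside the normal subgroup $H_1$, and take suitable powers. The effect of multiplying by $c$ is to flip parities in precisely those coordinates of $c$ that are reflections, leaving the order-$2$ rotation coordinates (in cases $Y$, $Y'$) untouched. By combining the resulting elements in various orders and squaring, I aim to expose an element of $\group$ whose image under some $\outproj_j$ is trivial but which is itself non-trivial, violating $2$-injectivity. For example, in case $X$ with an all-rotation lift $(\rho_1, \rot_2, \rot_3)$, multiplication by $c$ yields $(\rho_1, \refl, \refl)$, and applying Lemma~\ref{lem:2-inj-sub-basics} to this element gives $n_1 \mid 2$, the desired contradiction; in mixed-lift sub-cases, a more delicate product-and-commutator argument inside $H_1$ (using that reflections conjugate rotations to their inverses) yields an analogous witness of non-$2$-injectivity.

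The main obstacle is the mixed-lift subcase for even $n_1$. There one must carefully track how $\rho_1^2$ and the conjugate $\rot_j \refl_j \rot_j^{-1} = \rot_j^2 \refl_j$ of a reflection interact, because for small even $n_1$ the powers $\rho_1^k$ can become trivial too early to produce the required contradiction via the projection whose kernel we probe. The hypothesis $n_1 \notin \{1,2,4\}$ is exactly what keeps a certain intermediate power of $\rho_1$ non-trivial in every sub-case; the sporadic configuration that survives the argument when $n_1 = 4$ is realized by the double CFI group of Theorem~\ref{thm:classification-2inj-sub-dihedral}, which is why the lemma must exclude this value. Together with the parallel exclusion of $n_1 \in \{1, 2\}$ (already incompatible with $n_1 > 2$), this completes the case analysis and forces $H_1$ to contain only rotations.
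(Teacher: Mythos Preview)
Your outline has the right skeleton --- assume a non-rotation in $H_1$, lift a high-order rotation from the first factor, multiply, and invoke Lemma~\ref{lem:2-inj-sub-basics} --- but the heart of the argument is missing. You only actually execute the easiest subcase ($c$ of type $X$ and an all-rotation lift); the mixed-lift subcase is dismissed as ``a more delicate product-and-commutator argument,'' and the cases $Y$, $Y'$ are listed but never treated at all. That is a genuine gap, not just missing detail: nothing you wrote forces a contradiction when the lift is, say, $(\rho_1,\rotA_2,\reflB_3)$ with $\rotA_2$ a rotation of order $>2$ and $\reflB_3$ a reflection.

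The paper closes both gaps with two short moves. First, it eliminates your cases $Y$, $Y'$ up front: since $H_1$ gives an isomorphism $\proj_2(H_1)\cong\proj_3(H_1)$ between \emph{normal} subgroups of $\DihedralGroup{n_2}$ and $\DihedralGroup{n_3}$ (both with $n_j>2$), if one projection contains a reflection then so does the other, and one may take $c=(1,\reflA_2,\reflA_3)$ with both coordinates reflections. Second, for the mixed lift $(\rotA_1,\rotA_2,\reflB_3)$ it does not use commutators at all; instead it chooses $\rotA_1$ so that $\ord{\rotA_1}>2$ \emph{and} $\ord{\rotA_1^2}>2$ --- this is precisely what $n_1\notin\{1,2,4\}$ buys --- then squares the lift to $(\rotA_1^2,\rotA_2^2,1)$ and only afterwards multiplies by $c$, obtaining $(\rotA_1^2,\rotA_2^2\reflA_2,\reflA_3)$. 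The last two entries are reflections of order~$2$ while the first has order $>2$, contradicting Lemma~\ref{lem:2-inj-sub-basics}\ref{part:greater:order:two:nontrivial:lemma}. Your remark that ``powers $\rho_1^k$ can become trivial too early for $n_1=4$'' is exactly this phenomenon, but the concrete step (square, then multiply) is what you need to write down.
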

\begin{proof}
	By symmetry we fix w.l.o.g.~$i=1$.
	For sake of contradiction
	let $(1, g_2,g_3) \in H_1$ be an element that is not a rotation. Then~$g_2$ or~$g_3$ is not a rotation.
	Because $H_1$ defines an isomorphism between $\proj_2(H_1)$ and $\proj_3(H_i)$,
	$g_2 = \reflA_2$ and $g_3 = \reflA_3$ must both be reflections.
	
 	Let $\rot_1 \in \DihedralGroup{n_1}$ be a rotation 
	such that
	$\ord{\rot_1} > 2$ and~$\ord{\rot_1^2} > 2$,
	which is possible because $n_1 \notin\set{1,2,4}$.
	Then there is an element $(\rot_1, h_2, h_3) \in \group$.
	The elements $h_2$ and $h_3$ cannot be both of order at most $2$
	by Lemma~\ref{lem:2-inj-sub-basics}\ref{part:greater:order:two:nontrivial:lemma}.
	So assume w.l.o.g.~that $\ord{h_2} > 2$ and
	hence $h_2 = \rot_2$ is a rotation.
	If $h_3 = \rotA_3$ is a rotation, too, then
	\[(\rot_1,\rot_2, \rot_3)(1, \reflA_2, \reflA_3) = (\rot_1, \rot_2\reflA_2, \rot_3 \reflA_3) \in \group\]
	yields a contradiction to Lemma~\ref{lem:2-inj-sub-basics}\ref{part:greater:order:two:nontrivial:lemma},
	because $\ord{\rotA_1} > 2$ and
	$\rotA_i\reflA_i$ are reflections of order~$2$ (for $i \in \set{2,3}$).

	So, finally $h_3 = \reflB_3$ must be a reflection.
	But then consider
	\[(\rot_1, \rot_2, \reflB_3)^2 (1, \reflA_2, \reflA_3) = ({\rot_1}^2, {\rot_2}^2\reflA_2, \reflA_3 )\in \group\]
	and note the last two components are reflections of order $2$
	but the first component has order $>2$.
	This again contradicts Lemma~\ref{lem:2-inj-sub-basics}\ref{part:greater:order:two:nontrivial:lemma}.
\end{proof}

\begin{corollary}
	\label{cor:reflect-or-rotate-not-124}
	Let $\group \leq \DihedralGroup{n_1} \times \DihedralGroup{n_2} \times \DihedralGroup{n_3}$ be a 2-injective subdirect product
	and $n_i \notin \set{1,2,4}$ for all $i \in [3]$.
	Then $\group$ is a reflect-or-rotate group.
\end{corollary}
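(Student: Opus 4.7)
The plan is to observe that this corollary follows as a near-immediate consequence of the two preceding results, Lemma~\ref{lem:kernel-no-reflection-124} and Lemma~\ref{lem:rot-refl-from-kernels}. The hypothesis $n_i\notin\{1,2,4\}$ for all $i\in[3]$ is stronger than what either lemma individually requires, so the main task is just to verify that the hypotheses line up and then chain the implications.

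First I would unpack the hypothesis: since $n_i\notin\{1,2,4\}$ for every $i\in[3]$, we certainly have $n_j>2$ for all $j\in[3]$, and moreover for each individual index $i$ the side condition $n_i\notin\{1,2,4\}$ of Lemma~\ref{lem:kernel-no-reflection-124} is met. Applying that lemma three times (once for each $i\in[3]$) gives that the kernels $H_1$, $H_2$, and $H_3$ each consist solely of rotations.

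Next I would push this through the projections. Since $H_i$ only contains rotations and a rotation is component-wise a rotation, both factors of $\outproj_i(H_i) \subseteq \DihedralGroup{n_{i+1}}\direct\DihedralGroup{n_{i+2}}$ consist of rotations in the respective dihedral groups; in particular $\outproj_i(H_i)$ contains no reflections, and this holds for every $i\in[3]$.

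This is exactly the hypothesis of Lemma~\ref{lem:rot-refl-from-kernels} (whose statement I read as quantifying over all $i\in[3]$), so the lemma immediately concludes that $\group$ is a rotate-or-reflect group. There is no real obstacle here; the content of the argument is entirely absorbed by the two preceding lemmas, and the corollary just records the clean case distinction where none of the dihedral factors are the small, exceptional groups $\DihedralGroup{1}$, $\DihedralGroup{2}$, or $\DihedralGroup{4}$.
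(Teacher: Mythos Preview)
Your proposal is correct and matches the paper's own proof, which simply states that the corollary follows from Lemmas~\ref{lem:rot-refl-from-kernels} and~\ref{lem:kernel-no-reflection-124}. Your reading of the hypothesis of Lemma~\ref{lem:rot-refl-from-kernels} as ranging over all $i\in[3]$ is the intended one (the ``$\outproj_1(H_1)$'' there is a typo for ``$\outproj_i(H_i)$''), and your intermediate step of passing from ``$H_i$ contains only rotations'' to ``$\outproj_i(H_i)$ contains no reflections'' is exactly the right bridge.
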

\begin{proof}
	This follows from Lemmas~\ref{lem:rot-refl-from-kernels} and~\ref{lem:kernel-no-reflection-124}.
\end{proof}

The previous lemma classifies all $2$-injective subdirect products of dihedral groups 
as reflect-or-rotate groups if $\DihedralGroup{1}$, $\DihedralGroup{2}$ and
$\DihedralGroup{4}$ are not involved as one of the factors.
We now analyse how rotations and reflections of the single factors
can be combined if these groups are involved.

\subsection{The CFI and the Double CFI Group}
\label{sec:CFI-and-double-CFI}
One exception where rotations can be combined with reflections
will be of particular interest.

\begin{definition}[CFI Groups]
	We call the following group $\CFIgroup < \DihedralGroup{1}^3$ the
	\defining{CFI group}:
	\[\CFIgroup := \setcondition{(g_1,g_2,g_3) \in \DihedralGroup{1}^3}{g_1g_2g_3 = 1} \]
	that is $\CFIgroup$ is the group consisting of those triples that contain an even number
	of the non-trivial elements of $\DihedralGroup{1}$.
	We call the wreath product $\DoubleCFIgroup := \CFIgroup \wreath \CyclicGroup{2}$
	the \defining{double CFI group}.
\end{definition}

Figure~\ref{fig:double-cfi} depicts two colored graphs
whose automorphism groups are isomorphic to~$\CFIgroup$ and~$\DoubleCFIgroup$, respectively.
It is straightforward to verify that the CFI group is a $2$-injective subdirect product. 
We now show how the double CFI group can be realized as a $2$-injective subdirect product $\DoubleCFIgroup < \DihedralGroup{4}^3$:
Let $\rot \in \DihedralGroup{4}$ be a rotation of order $4$ and $\reflA,\reflA' \in \DihedralGroup{4}$ be reflections such that $\reflA\reflA' = \rot^2$
(that is $\set{\reflA,\reflA'}$ is a conjugacy class), and $\reflB \in \DihedralGroup{4}$ be a reflection with $\reflB \notin\set{\reflA, \reflA'}$. Then
\begin{align*}
	\DoubleCFIgroup &\iso \langle (1, \refl, \refl), (\refl, 1, \refl), (1, \refl', \refl'), (\refl', 1, \refl'), (\reflB, \reflB, \reflB) \rangle\\
	& =\langle (1, \refl, \refl), (\refl, 1, \refl), (\rot, \rot, \beta) \rangle.
\end{align*}
The elements with the two reflections $\reflA$ and $\reflA'$ generate
the two independent CFI groups, respectively, 
and $(\reflB, \reflB, \reflB)$ exchanges the two CFI groups (swaps top with bottom in the figure).

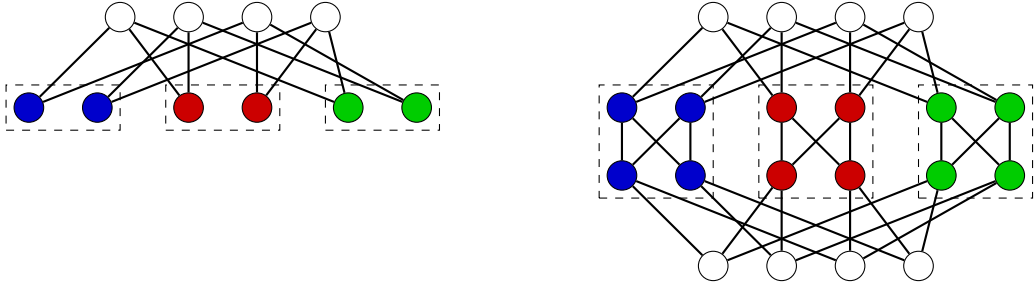
\begin{figure}
\centering	
\tikzstyle{normalvertex}=[circle, draw]

\begin{tikzpicture}[scale =0.6]
\begin{scope}[shift={(-13,0)}]

  \node[style=normalvertex,fill=red!80!black,label=below:{$$}] (a0) at (1,5) {};
  \node[style=normalvertex,fill=red!80!black,label=below:{$$}] (b0) at (-0.5,5) {};
  
  \node[style=normalvertex,fill=blue!80!black,label=above:{$$}] (a2) at (-2.5,5) {};
  \node[style=normalvertex,fill=blue!80!black,label=above:{$$}] (b2) at (-4,5) {};
  
  \node[style=normalvertex,fill=green!80!black,label=below:{$$}] (a1) at (4.5,5) {};
  \node[style=normalvertex,fill=green!80!black,label=below:{$$}] (b1) at (3,5) {};
  
  \node[style=normalvertex,label=180:{{\small $$}}] (v1) at (-2,7) {};
  \node[style=normalvertex,label=180:{{\small $$}}] (v2) at (-0.5,7) {};
  \node[style=normalvertex,label=180:{{\small $$}}] (v3) at (1,7) {};
  \node[style=normalvertex,label=180:{{\small $$}}] (v4) at (2.5,7) {};

   \path[thick]
   (v1) edge (b0)    (v1) edge (b2)    (v1) edge (b1)    (v2) edge (b0)
   (v2) edge (a2)    (v2) edge (a1)    (v3) edge (a0)    (v3) edge (b2)
   (v3) edge (a1)    (v4) edge (a0)    (v4) edge (a2)    (v4) edge (b1);

 \draw[dashed]  (-1,5.5) rectangle (1.5,4.5);
 \draw[dashed]  (-4.5,5.5) rectangle (-2,4.5);
 \draw[dashed]  (5,5.5) rectangle (2.5,4.5);
\end{scope}
  \node[style=normalvertex,fill=red!80!black,label=below:{$$}] (a0) at (1,5) {};
  \node[style=normalvertex,fill=red!80!black,label=below:{$$}] (b0) at (-0.5,5) {};
  
  \node[style=normalvertex,fill=blue!80!black,label=above:{$$}] (a2) at (-2.5,5) {};
  \node[style=normalvertex,fill=blue!80!black,label=above:{$$}] (b2) at (-4,5) {};
  
  \node[style=normalvertex,fill=green!80!black,label=below:{$$}] (a1) at (4.5,5) {};
  \node[style=normalvertex,fill=green!80!black,label=below:{$$}] (b1) at (3,5) {};
  
  \node[style=normalvertex,label=180:{{\small $$}}] (v1) at (-2,7) {};
  \node[style=normalvertex,label=180:{{\small $$}}] (v2) at (-0.5,7) {};
  \node[style=normalvertex,label=180:{{\small $$}}] (v3) at (1,7) {};
  \node[style=normalvertex,label=180:{{\small $$}}] (v4) at (2.5,7) {};
  
  \node[style=normalvertex,fill=red!80!black,label=below:{$$}] (as0) at (1,3.5) {};
  \node[style=normalvertex,fill=red!80!black,label=below:{$$}] (bs0) at (-0.5,3.5) {};  
  
  \node[style=normalvertex,fill=blue!80!black,label=above:{$$}] (as2) at (-2.5,3.5) {};
  \node[style=normalvertex,fill=blue!80!black,label=above:{$$}] (bs2) at (-4,3.5) {};

    \node[style=normalvertex,fill=green!80!black,label=below:{$$}] (as1) at (4.5,3.5) {};
  \node[style=normalvertex,fill=green!80!black,label=below:{$$}] (bs1) at (3,3.5) {};

    \node[style=normalvertex,label=180:{{\small $$}}] (vs1) at (-2,1.5) {};
  \node[style=normalvertex,label=180:{{\small $$}}] (vs2) at (-0.5,1.5) {};
  \node[style=normalvertex,label=180:{{\small $$}}] (vs3) at (1,1.5) {};
  \node[style=normalvertex,label=180:{{\small $$}}] (vs4) at (2.5,1.5) {};
  
   \path[thick]
   (v1) edge (b0)    (v1) edge (b2)    (v1) edge (b1)    (v2) edge (b0)
   (v2) edge (a2)    (v2) edge (a1)    (v3) edge (a0)    (v3) edge (b2)
   (v3) edge (a1)    (v4) edge (a0)    (v4) edge (a2)    (v4) edge (b1);
   
   \path[thick]
   (vs1) edge (bs0)   (vs1) edge (bs2)   (vs1) edge (bs1)   (vs2) edge (bs0)
   (vs2) edge (as2)   (vs2) edge (as1)   (vs3) edge (as0)   (vs3) edge (bs2)
   (vs3) edge (as1)   (vs4) edge (as0)   (vs4) edge (as2)   (vs4) edge (bs1);
   
   \foreach \x in {0,1,2}
   {
   \path[thick]
   (a\x) edge (as\x)
   (a\x) edge (bs\x)
   (b\x) edge (as\x) 
   (b\x) edge (bs\x);
   }
   
 \draw[dashed]  (-1,5.5) rectangle (1.5,3);
 \draw[dashed]  (-4.5,5.5) rectangle (-2,3);
 \draw[dashed]  (5,5.5) rectangle (2.5,3);
\end{tikzpicture}
\caption{Two vertex-colored graphs whose automorphism groups are the CFI group (left) and the double CFI group (right).}\label{fig:double-cfi}
\end{figure}

\begin{lemma}
	\label{lem:2inj-sub-D4-D4-D4}
	There is (up to isomorphism) exactly one $2$-injective subdirect product
	of $\DihedralGroup{4}^3$, which is not a rotate-or-reflect group,
	namely the double CFI group.
\end{lemma}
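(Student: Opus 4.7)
The plan is as follows. Start with an arbitrary $2$-injective subdirect product $\group \leq \DihedralGroup{4}^3$ that fails to be a rotate-or-reflect group, and show it must coincide (up to isomorphism) with the realization of $\DoubleCFIgroup$ inside $\DihedralGroup{4}^3$ exhibited before the lemma.

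First, since $\group$ is not rotate-or-reflect, the contrapositive of Lemma~\ref{lem:rot-refl-from-kernels} supplies some $i \in [3]$ for which $\outproj_i(H_i)$ contains a reflection. Up to reordering of factors, assume $(1, \refl_2, \refl_3) \in H_1$ with $\refl_2, \refl_3$ reflections. I would then propagate this to the other two kernels: by subdirectness pick an element $(\rot_1, g_2, g_3) \in \group$ whose first coordinate is a rotation of order~$4$. Lemma~\ref{lem:2-inj-sub-basics}\ref{part:greater:order:two:nontrivial:lemma} forces $\lcm(\ord{g_2}, \ord{g_3}) \geq 4$, so at least one of $g_2,g_3$ is a rotation of order~$4$. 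Multiplying this element by the kernel element above, and exploiting the identity $\rot\refl = \refl\inv{\rot}$, produces after a short case analysis elements $(\refl_1', 1, \refl_3') \in H_2$ and $(\refl_1'', \refl_2'', 1) \in H_3$, i.e.\ a reflection pair lying in each of the three kernels.

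Next, these three kernel elements generate a subgroup $K \leq \group$ isomorphic to $\CFIgroup$, sitting inside a cube $R_1 \direct R_2 \direct R_3$ where each $R_i = \langle \refl_i\rangle$ is an order-two reflection subgroup of the $i$-th factor; $2$-injectivity enforces consistency of the reflection classes across the three factors. Finally, subdirectness provides an element with a rotation of order~$4$ in every component, and combining it with $K$ forces (by another round of order constraints from Lemma~\ref{lem:2-inj-sub-basics}) the presence of a generator of the form $(\rot, \rot, \reflB)$ with $\reflB$ in the other reflection conjugacy class of $\DihedralGroup{4}$. These are precisely the generators of the $\DoubleCFIgroup$-realization recalled above, and a direct count $|\group| = |K|\cdot [\group : K] = 4 \cdot 8 = 32 = |\DoubleCFIgroup|$ yields the desired isomorphism.

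I expect the main obstacle to be the propagation step above: transporting the reflection pair from $H_1$ to $H_2$ and $H_3$. Unlike in the proof of Lemma~\ref{lem:kernel-no-reflection-124}, which rules out $n_i = 4$ because rotations of order~$2$ and reflections of order~$2$ can conspire, here we actively use that the central rotation $\rot^2 \in \DihedralGroup{4}$ has order exactly~$2$ to keep the resulting products compatible with Lemma~\ref{lem:2-inj-sub-basics}. Once the kernel structure is pinned down, the identification with $\DoubleCFIgroup$ is routine, but it requires careful bookkeeping of which reflection conjugacy class each relevant element lies in.
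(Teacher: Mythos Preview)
Your strategy mirrors the paper's: once some $H_i$ contains a reflection pair, pin down the kernel structure and then identify the remaining generator. The paper, however, does not carry out your propagation step by element-chasing. Instead it lists the normal subgroups of $\DihedralGroup{4}$ (namely $\DihedralGroup{4}, \CyclicGroup{4}, \DihedralGroup{2}, \CyclicGroup{2}, \CyclicGroup{1}$), observes that each $\proj_j(H_i)$ must be one of these, and then rules out every combination except $H_1\cong H_2\cong H_3\cong \DihedralGroup{2}$ by repeated use of Lemma~\ref{lem:2-inj-sub-basics} together with Theorem~\ref{thm:two-inj-sub-kernels}. Your ``short case analysis'' is plausible, but it is exactly this normal-subgroup classification in disguise; I do not see how to make it genuinely shorter than the paper's route, and the sketch as written does not yet produce the claimed elements of $H_2$ and $H_3$.

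There is a genuine gap at the end. Your count $|\group| = |K|\cdot [\group:K] = 4\cdot 8$ is circular: you have not established $[\group:K] = 8$, and the argument so far only exhibits generators witnessing $\DoubleCFIgroup \leq \group$, not the reverse inclusion. The paper closes this by first computing $|H| = |\langle H_1,H_2,H_3\rangle| = 16$ from the kernel structure, and then invoking Theorem~\ref{thm:two-inj-sub-kernels} to obtain $[\group:H] = [\DihedralGroup{4} : \proj_1(H)] = [\DihedralGroup{4}:\DihedralGroup{2}] = 2$, hence $|\group| = 32$. Without this index theorem (or some substitute giving an upper bound on $|\group|$), your proof does not conclude.
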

\begin{proof}
	The group $\DihedralGroup{4}$ has the following normal subgroups:
	$\DihedralGroup{4}, \DihedralGroup{2}, \CyclicGroup{4}, \CyclicGroup{2}, \CyclicGroup{1}$.
	Note that the normal subgroup $\CyclicGroup{2}$ of $\DihedralGroup{4}$
	is generated by the 180 degree rotation (and not by a reflection, because reflections do not generate normal subgroups).
	
	Now the $H_i$ must be isomorphic to one of these normal subgroups and $|\proj_i(H)| = |\proj_j(H)|$ by Theorem~\ref{thm:two-inj-sub-kernels}. 
	If all $H_i$ are cyclic groups, then by Lemma~\ref{lem:rot-refl-from-kernels}, $\group$ is a rotate-or-reflect group.
	
	We first show, that $H_1 \iso \DihedralGroup{4}$ is contradictory.
	Let
	$\rotA \in \DihedralGroup{4}$ be a rotation of order~$4$.
	Now $g = (1, \rotA, \rotB) \in H_1 \subseteq \group$
	where $\ord{\rotB} = 4$ because $H_1$ defines an isomorphism between its projections to second and third component. 
	Since $\proj_1(H) = \DihedralGroup{4}$
	there is another element $h = (\reflA, h_2, h_3) \in H_2 \cup H_3 \subseteq \group$.
	W.l.o.g.~assume $h = (\reflA, \reflB, 1) \in H_2$.
	Then $gh = (\reflA, \reflB \rotA, \rotB) \in \group$
	which is impossible by Lemma~\ref{lem:2-inj-sub-basics}\ref{part:greater:order:two:nontrivial:lemma}.
	
	Second, assume that $H_1\iso \DihedralGroup{2}$ but both $H_2$ and $H_3$
	are not isomorphic to~$\DihedralGroup{2}$.
	If $H_2 \iso \CyclicGroup{2}$ (respectively $\CyclicGroup{1}$), then
	$\proj_1(\langle H_1, H_2 \rangle) = \CyclicGroup{2}$
	(respectively $\CyclicGroup{1}$)
	and $\DihedralGroup{2} \leq \proj_i(H)$ 
	and in particular $|\proj_i(H)| \geq 4$ for $i \in \set{2,3}$.
	Then $|\proj_1(H)| \geq 4$ by Theorem~\ref{thm:two-inj-sub-kernels}
	and so $H_3 \iso \CyclicGroup{4}$
	($H_3 \not\iso \DihedralGroup{2}$ by assumption
	and $H_3 \iso \DihedralGroup{4}$ was already proved inconsistent).
	Then $|\proj_1(H)| = 4$ but
	$|\proj_2(H)| = |\DihedralGroup{4}| = 8$
	contradicting Theorem~\ref{thm:two-inj-sub-kernels}.
	By symmetry the case $H_2 \iso H_3 \iso \CyclicGroup{4}$ remains.
	Then $\proj_1(H) \iso \CyclicGroup{4}$ but $\proj_2(H) \iso \DihedralGroup{4}$,
	which contradicts Theorem~\ref{thm:two-inj-sub-kernels} again.
	
	Lastly, we assume that $H_1 \iso H_2 \iso \DihedralGroup{2}$.
	Then $\proj_3(\langle H_1, H_2 \rangle)$ is either $\DihedralGroup{2}$ or $\DihedralGroup{4}$ (depending on whether $H_1$ and $H_2$ contain reflections
	from the same conjugacy class or not).
	In the case that $\proj_3(H_1) \neq \proj_3(H_2)$, $H_1$ and $H_2$ use
	reflections from different conjugacy classes of~$\DihedralGroup{4}$.
	Let $\reflA$ and $\reflA'$ be these reflections such that
	$(1,\reflB,\reflA) \in H_1$ and $(\reflB', 1, \reflA') \in H_2$.
	Then $(1,\reflB,\reflA)(\reflB', 1, \reflA') = (\reflB', \reflB, \rotA) \in \group$ where $\rotA$ is the rotation of order $4$
	and thus contradicts Lemma~\ref{lem:2-inj-sub-basics}\ref{part:greater:order:two:nontrivial:lemma}.
	
	Finally, let $\proj_3(H_1) = \proj_3(H_2)$, that is both kernels use the same reflections of $\DihedralGroup{4}$.
	Let $\reflA \in \DihedralGroup{4}$ be one of these reflections.
	Then $(1,\reflB, \reflA) \in H_1$ and $(\reflB', 1, \reflA) \in H_2$
	and $(1,\reflB,\reflA)(\reflB', 1, \reflA)=(\reflB', \reflB, 1) \in H_3$.
	So $H_3$ contains reflections and
	cannot be isomorphic to $\DihedralGroup{4}$ by the prior reasoning,
	i.e.,~$H_3 \iso \DihedralGroup{2}$.

	So we have shown that unless $\group$ is a rotate-or-reflect group,
	we have $H_i \iso \DihedralGroup{2}$ and
	$\proj_i(H) = \DihedralGroup{2}$ for all $i \in [3]$.
	That is, all $H_i$ use reflections of the same conjugacy class of $\DihedralGroup{4}$ for each component
	(there are two embeddings of $\DihedralGroup{2}$ in $\DihedralGroup{4}$).
	We apply an isomorphism to $\group$,
	such that the embedding is the same for each component.
	
	Let $\reflA, \reflA' \in \DihedralGroup{4}$
	be the two reflections of this embedding of 
	$\DihedralGroup{2}$ into $\DihedralGroup{4}$
	and $\rotA \in \DihedralGroup{4}$ be a rotation of order $4$.
	We assume that $(1,\reflA, \reflA), (1,\reflA', \reflA') \in H_1$ and
	$(\reflA,1,\reflA), (\reflA', 1, \reflA') \in H_2$.
	In the case that $\reflA$ gets combined with $\reflA'$ we apply isomorphisms to the first and/or second factor exchanging $\reflA$ and $\reflA'$.
	Then $H_3$ also combines $\reflA$ with $\reflA$.
	
	Now $(\rotA, g, h) \in \group$, one of $g$ and $h$ has to be $\rotA$ or $\inv{\rotA}$, too, by Lemma~\ref{lem:2-inj-sub-basics}\ref{part:greater:order:two:nontrivial:lemma}
	and the other one cannot be $1$
	because $H_2, H_3 \not\iso \DihedralGroup{4}$. 
	Assume, w.l.o.g.,~that $g = \rotA$ (if it is $\inv{\rotA}$, we apply another
	isomorphism to the second factor exchanging $\rotA$ and $\inv{\rotA}$
	but which is constant on the reflections).
	Let $\reflB$ and $\reflB'$ be the reflections not in the conjugacy class of $\reflA$. Then $\reflA \reflB, \reflA\reflB', \reflA'\reflB, \reflA'\reflB' \in \set{\rotA, \inv{\rotA}}$.
	Assume that $h=\rotB$ is a rotation.
	Then $(\rotA, \rotA, \rotB)(1,\reflA, \reflA) = (\rotA, \rotA\reflA, \rotB\reflA) \in \group$ which contradicts
	Lemma~\ref{lem:2-inj-sub-basics}\ref{part:greater:order:two:nontrivial:lemma}.
	
	So $h$ is a reflection. If $h \in \{\reflA, \reflA'\}$, then 
	$(\rotA, \rotA, h)(1,h,h) = (\rotA, \rotA h, 1) \in H_3$.
	This is a contradiction
	because $H_3$ defines an isomorphism between $\proj_1(H_3)$ and $\proj_2(H_3)$ and no isomorphism can
	map a rotation $\rotA$ to a reflection $\rotA h$.
	Hence $h \in \set{\reflB, \reflB'}$, say $h = \reflB$,
	and finally $\groupB := \langle (\rotA, \rotA, \reflB), (\reflA, 1, \reflA), (1, \reflA, \reflA) \rangle$ is isomorphic to the double CFI group.
	Note that $|H| = |\langle H_1, H_2, H_3 \rangle| = 16$,
	that $|\groupB| = 32$, and that
	$[\group : H] = [\DihedralGroup{4} : \proj_1(H)] = [\DihedralGroup{4} : \DihedralGroup{2} ] = 2 = [\groupB : H]$
	by Theorem~\ref{thm:two-inj-sub-kernels}.
	Hence $|\groupB| = |\group|$ and $\group \iso \groupB$ is isomorphic to the double CFI group.
\end{proof}

\subsection{Products Involving $\DihedralGroup{4}$, $\DihedralGroup{2}$, or $\DihedralGroup{1}$}

\begin{lemma}
	\label{lem:2inj-sub-D1-or-D2-other}
	Let $\group \leq \DihedralGroup{i} \direct \DihedralGroup{j} \direct \DihedralGroup{k}$
	with $i \in [2]$ and $j, k > 2$ be a $2$-injective subdirect product.
	Then $\outproj_1(\group)$ is a rotate-or-reflect group.
\end{lemma}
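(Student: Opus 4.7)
The plan is to argue by contradiction, assuming $\outproj_1(\group)$ contains an element which is neither a rotation nor a reflection. Up to swapping the second and third factors, such an element lifts to some $(g_1, g_2, g_3) \in \group$ where $g_2$ is a rotation (of $\DihedralGroup{j}$) and $g_3$ is a reflection (of $\DihedralGroup{k}$). The key structural observation is that both $\DihedralGroup{1}$ and $\DihedralGroup{2}$ have exponent $2$, so every element of the first factor squares to the identity. Combined with $2$-injectivity (Lemma~\ref{lem:2-inj-sub-basics}\ref{part:nontrivial:order:lemma}), this severely restricts what can happen.

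First I would square $(g_1, g_2, g_3)$ directly. Since $g_1^2 = 1$ (exponent $2$ in the first factor) and $g_3^2 = 1$ (as $g_3$ is a reflection), we get $(1, g_2^2, 1) \in \group$. By $2$-injectivity this forces $g_2^2 = 1$. Next, since $j > 2$, the group $\DihedralGroup{j}$ contains a rotation $\rot$ of order at least $3$ (for instance, a generator of $\rotsubgroup{\DihedralGroup{j}}$). Surjectivity of $\proj_2$ gives some element $(h_1, \rot, h_3) \in \group$. Applying Lemma~\ref{lem:2-inj-sub-basics} yields that $\ord{\rot}$ divides $\lcm(\ord{h_1}, \ord{h_3})$; since $\ord{h_1} \leq 2$ and $\ord{\rot} \geq 3$, a reflection $h_3$ (of order $2$) is impossible, hence $h_3$ must be a rotation.

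Finally I would multiply these two elements to obtain $(g_1 h_1,\ g_2 \rot,\ g_3 h_3) \in \group$. Its second component is a product of two commuting rotations inside the cyclic rotation subgroup of $\DihedralGroup{j}$, hence a rotation; its third component is a reflection times a rotation, hence a reflection. Squaring this product and using $(g_1 h_1)^2 = 1$, $(g_3 h_3)^2 = 1$, and $g_2^2 = 1$ together with the commutativity of $g_2$ and $\rot$, we obtain $(1, \rot^2, 1) \in \group$. By $2$-injectivity we must have $\rot^2 = 1$, which contradicts $\ord{\rot} \geq 3$.

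I do not expect a serious obstacle here; the argument is a routine combination of the exponent-$2$ property of $\DihedralGroup{1}$ and $\DihedralGroup{2}$ with the order-divisibility consequence of $2$-injectivity. The only mildly delicate point is the step establishing that $h_3$ is a rotation rather than a reflection, which is exactly where the hypothesis $j > 2$ and the quantitative use of Lemma~\ref{lem:2-inj-sub-basics} are needed.
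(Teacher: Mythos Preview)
Your proof is correct and follows essentially the same route as the paper's: start from a mixed element $(g_1,g_2,g_3)$ with $g_2$ a rotation and $g_3$ a reflection, use that the first factor has exponent~$2$ to force $\ord{g_2}\leq 2$, pick a rotation of order $>2$ in the second factor whose lift must have a rotation in the third slot, multiply, and reach a contradiction via Lemma~\ref{lem:2-inj-sub-basics}. The only cosmetic difference is that you square explicitly at each stage where the paper invokes Lemma~\ref{lem:2-inj-sub-basics}\ref{part:greater:order:two:nontrivial:lemma}.
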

\begin{proof}
	Suppose $(g,\rotA,\reflA) \in \group$.
	By Lemma~\ref{lem:2-inj-sub-basics}\ref{part:greater:order:two:nontrivial:lemma}, $\rotA$ is of order~$2$
	because $g$ and $\reflA$ are of order at most~$2$.
	Let $\rotB \in \DihedralGroup{j}$ be a rotation not of order~$2$
	and $(h, \rotB, \rotB') \in \group$ an element whose second component is~$s$
	(the last component must be a rotation of order larger than~$2$).
	Then $(gh, \rotA\rotB, \reflA\rotB') \in \group$,
	$gh$ and $\reflA\rotB'$ are of order at most~$2$,
	and $\rotA\rotB$ is a rotation with $\ord{\rotA\rotB}>2$.
	This is a contradiction to Lemma~\ref{lem:2-inj-sub-basics}\ref{part:greater:order:two:nontrivial:lemma}.
\end{proof}

\begin{lemma}
	\label{lem:no-2inj-sub-DCC-not2}
	There are no $2$-injective subdirect products of
	$\DihedralGroup{i} \direct G_2 \direct G_3$
	for $i > 2$ if $G_2$ and $G_3$ are abelian.
\end{lemma}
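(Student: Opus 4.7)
The plan is to derive a contradiction from the non-abelianness of $\DihedralGroup{i}$ (for $i>2$) combined with the commutativity of $G_2$ and $G_3$, exploiting that commutators in an abelian factor vanish.

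First, since $i>2$, the group $\DihedralGroup{i}$ contains a rotation $\rotA$ of order $i\geq 3$ (so $\rotA^2\neq 1$) and at least one reflection $\reflA$. The standard dihedral relation gives $\reflA\rotA\reflA^{-1}=\rotA^{-1}$, hence the commutator $[\reflA,\rotA]=\reflA\rotA\reflA^{-1}\rotA^{-1}=\rotA^{-2}$, which is a nontrivial element of $\DihedralGroup{i}$.

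Now suppose for contradiction that $\group\leq \DihedralGroup{i}\direct G_2\direct G_3$ is a $2$-injective subdirect product with $G_2,G_3$ abelian. Because $\proj_1^\group$ is surjective, there exist elements $a=(\reflA,g_2,g_3)$ and $b=(\rotA,h_2,h_3)$ in $\group$. Compute the commutator $[a,b]=aba^{-1}b^{-1}$ componentwise. In the first component this yields $[\reflA,\rotA]=\rotA^{-2}$, while in the second and third components the commutators vanish since $G_2$ and $G_3$ are abelian. Thus $(\rotA^{-2},1,1)\in\group$.

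Since $\rotA^{-2}\neq 1$, this contradicts $2$-injectivity via Lemma~\ref{lem:2-inj-sub-basics}\ref{part:nontrivial:order:lemma} (a nontrivial element cannot have trivial projection on two coordinates). I expect no real obstacle here; the only thing to verify carefully is that a rotation of order at least $3$ actually exists in $\DihedralGroup{i}$ for every $i>2$, which is immediate from the fact that $\DihedralGroup{i}$ contains a cyclic rotation subgroup of order $i\geq 3$.
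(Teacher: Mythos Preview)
Your proof is correct and takes essentially the same approach as the paper: both arguments lift two non-commuting elements of $\DihedralGroup{i}$ to $\group$ and use the abelianness of $G_2,G_3$ to produce a violation of $2$-injectivity. The paper phrases this by directly comparing $gh$ and $hg$ (same image under $\outproj_1$ but distinct), whereas you equivalently compute their commutator and invoke Lemma~\ref{lem:2-inj-sub-basics}\ref{part:nontrivial:order:lemma}; the paper also uses arbitrary non-commuting $\refl,\rot$ rather than your specific choice, but this is a cosmetic difference.
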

\begin{proof}
	Let $\group \leq \DihedralGroup{i} \direct G_2 \direct G_3$ be a $2$-injective subdirect product and
	$\refl, \rot \in \DihedralGroup{i}$ such that $\refl\rot \neq \rot\refl$.
	Two such elements exist because $i > 2$ and $\DihedralGroup{i}$ is non-abelian.
	Consider two elements
	$g =(\refl, g_2, g_3),h= (\rot, h_2, h_3) \in \group$ for some $g_2,h_2 \in G_2$ and $g_3,h_3 \in G_3$.
	Now $gh = (\refl\rot, g_2h_2,g_3h_3)$ and
	$hg = (\rot\refl, g_2h_2,g_3h_3) \neq gh$ contradicting $2$-injectivity of $\group$.
\end{proof}

\begin{lemma}
	\label{lem:no-2-injs-D4-D4-Di}
	There are no $2$-injective subdirect products of
	$\DihedralGroup{4} \direct \DihedralGroup{4} \direct \DihedralGroup{i}$ for $i \notin \set{1,2,4}$.
\end{lemma}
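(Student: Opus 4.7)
The plan is to derive a contradiction directly from the order-divisibility constraint of Lemma~\ref{lem:2-inj-sub-basics}. The key observation is that every element of $\DihedralGroup{4}$ has order in $\set{1,2,4}$, so the least common multiple of two such orders always divides~$4$. Consequently, in any $2$-injective subdirect product $\group \leq \DihedralGroup{4} \direct \DihedralGroup{4} \direct \DihedralGroup{i}$, the third coordinate of every element must have order dividing~$4$.

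I would then argue as follows. Assume for contradiction that such a $\group$ exists with $i \notin \set{1,2,4}$. Since $i \geq 3$, the group $\DihedralGroup{i}$ contains a rotation $\rotA$ of order exactly~$i$ (the full rotation of order~$i$ if $i \geq 5$, and the rotation of order~$3$ if $i = 3$). By subdirectness there are $g_1, g_2 \in \DihedralGroup{4}$ with $(g_1, g_2, \rotA) \in \group$. Applying Lemma~\ref{lem:2-inj-sub-basics} to this triple, $i = \ord{\rotA}$ must divide $\lcm(\ord{g_1}, \ord{g_2})$, which itself divides~$4$. This forces $i \in \set{1,2,4}$, contradicting the choice of~$i$.

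I do not expect any serious obstacle here: the statement has a ``no-go'' flavour that complements Lemma~\ref{lem:2inj-sub-D4-D4-D4}, and the whole argument reduces to a one-line order count once Lemma~\ref{lem:2-inj-sub-basics} is invoked. The only point worth verifying is that $\DihedralGroup{i}$ really does contain an element whose order does not divide~$4$ whenever $i \notin \set{1,2,4}$, which is immediate from the structure of dihedral groups.
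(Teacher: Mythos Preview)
Your proposal is correct and follows essentially the same argument as the paper: pick a rotation of order~$i$ in $\DihedralGroup{i}$, lift it to an element of $\group$ by subdirectness, and observe that the orders of the two $\DihedralGroup{4}$-coordinates lie in $\set{1,2,4}$, contradicting Lemma~\ref{lem:2-inj-sub-basics}. The paper's proof is the same one-line order count; your parenthetical case split on $i=3$ versus $i\geq 5$ is unnecessary (a rotation of order~$i$ exists in $\DihedralGroup{i}$ for all $i\geq 1$), but harmless.
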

\begin{proof}
	Let $\group \leq \DihedralGroup{4} \direct \DihedralGroup{4} \direct \DihedralGroup{i}$ 
	be a $2$-injective subdirect product and $i \notin \set{1,2,4}$.
	Let $\rot \in \DihedralGroup{i}$
	be a rotation with $\ord{\rot} = i$
	and $(g,h,\rot) \in \group$ for some $g,h \in \DihedralGroup{4}$.
	For all $g,h \in \DihedralGroup{4}$ it holds that $\ord{g},\ord{h} \in \set{1,2,4}$
	contradicting Lemma~\ref{lem:2-inj-sub-basics}.
\end{proof}

\begin{lemma}
	\label{lem:2inj-sub-D4-other}
	Let $\group \leq \DihedralGroup{4} \direct \DihedralGroup{n_2} \direct \DihedralGroup{n_3}$ be a $2$-injective subdirect product and $n_2,n_3 \notin \{1,2,4\}$.
	Then $\group$ is a rotate-or-reflect group.
\end{lemma}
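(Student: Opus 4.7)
The plan is to apply Lemma~\ref{lem:rot-refl-from-kernels}, which reduces the claim to verifying that $\outproj_i(H_i)$ contains no reflection (that is, no tuple all of whose components are reflections) for each $i \in [3]$. For $i \in \{2, 3\}$ the hypothesis $n_i \notin \{1,2,4\}$ together with $n_j > 2$ for all $j \in [3]$ (which holds because $n_1 = 4 > 2$) is exactly what is needed to invoke Lemma~\ref{lem:kernel-no-reflection-124}; it yields that $H_i$ consists only of rotations, so certainly $\outproj_i(H_i)$ has no reflection.

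The case $i = 1$ is the only one needing genuinely new work, because $n_1 = 4$ is precisely the value excluded from Lemma~\ref{lem:kernel-no-reflection-124}: no rotation $\rot \in \DihedralGroup{4}$ satisfies simultaneously $\ord{\rot} > 2$ and $\ord{\rot^2} > 2$, which was the key ingredient of that proof. I would argue by contradiction, supposing $(1, \reflA_2, \reflA_3) \in H_1$ with $\reflA_2, \reflA_3$ reflections. Choose a rotation $\rot_1 \in \DihedralGroup{4}$ of order $4$ and, by subdirectness, an element $(\rot_1, g_2, g_3) \in \group$. Applying Lemma~\ref{lem:2-inj-sub-basics} first to $(\rot_1, g_2, g_3)$ (forcing $4 \mid \lcm(\ord{g_2}, \ord{g_3})$) and then to its product $(\rot_1, g_2 \reflA_2, g_3 \reflA_3)$ with $(1, \reflA_2, \reflA_3)$ (forcing $4 \mid \lcm(2, \ord{g_3 \reflA_3})$), a short divisibility chase shows that, up to swapping the last two factors, $g_2 = \rot_2$ is a rotation of order \emph{exactly} $4$ and $g_3 = \reflC_3$ is a reflection with $\reflC_3 \reflA_3$ also of order $4$. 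In particular $4 \mid n_2$ and $4 \mid n_3$, so both are at least $8$.

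The main obstacle is closing the contradiction from this structural information. My plan is to restrict $\group$ to the substructure sitting over $\DihedralGroup{4} \direct \langle \rot_2, \reflA_2 \rangle \direct \langle \reflC_3 \reflA_3, \reflA_3 \rangle$; by the preceding analysis each of the two inner subgroups is an isomorphic copy of $\DihedralGroup{4}$. Supplemented by a carefully chosen element of $\group$ with reflection in its first component---available by subdirectness of $\group$, after possibly multiplying by powers of the generators $(\rot_1, \rot_2, \reflC_3)$ and $(1, \reflA_2, \reflA_3)$ to nudge the other two coordinates into the prescribed $\DihedralGroup{4}$ subgroups---this restricted group becomes a $2$-injective subdirect product of $\DihedralGroup{4}^3$. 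By Lemma~\ref{lem:2inj-sub-D4-D4-D4} it must then be rotate-or-reflect or isomorphic to the double CFI group; however, the element $(\rot_1, \rot_2, \reflC_3)$ is neither a rotation nor a reflection and does not fit the double CFI pattern either, producing the desired contradiction. The genuine difficulty lies in arranging the supplementary element so that the restriction is honestly a subdirect product of $\DihedralGroup{4}^3$, as this is what enables invoking the classification.
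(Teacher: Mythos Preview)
Your opening is right: Lemma~\ref{lem:rot-refl-from-kernels} reduces everything to the kernels, and Lemma~\ref{lem:kernel-no-reflection-124} handles $H_2$ and $H_3$. The divisibility chase for $H_1$ is also correct up to the point where you conclude (after swapping factors) that $(\rot_1,\rot_2,\reflC_3)\in\group$ with $\ord{\rot_2}=\ord{\reflC_3\reflA_3}=4$.

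The argument breaks down at the ``nudging'' step. Set $K=\langle(\rot_1,\rot_2,\reflC_3),(1,\reflA_2,\reflA_3)\rangle$. Then $\proj_2(K)=\langle\rot_2,\reflA_2\rangle$ exactly, so for any $k\in K$ and any $(\reflB,h_2,h_3)\in\group$ the second coordinate of $(\reflB,h_2,h_3)\cdot k$ lies in the coset $h_2\langle\rot_2,\reflA_2\rangle$. This coset equals $\langle\rot_2,\reflA_2\rangle$ only if $h_2$ was already there. Since $\langle\rot_2,\reflA_2\rangle$ has index $n_2/4\ge 2$ in $\DihedralGroup{n_2}$, there is no reason a generic element with a reflection in the first slot should land in it, and multiplication by $K$ cannot fix this. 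So you cannot in general arrange the restricted group to be a subdirect product of $\DihedralGroup{4}^3$.

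There is a second problem. Even granting subdirectness, your mixed element $(\rot_1,\rot_2,\reflC_3)$ is \emph{not} incompatible with the double CFI group: in the presentation given after the definition of $\DoubleCFIgroup$, the generator $(\rot,\rot,\reflB)$ is precisely of this form (rotation, rotation, reflection). So Lemma~\ref{lem:2inj-sub-D4-D4-D4} would only tell you $\group'\cong\DoubleCFIgroup$, and you still need to explain why having a copy of $\DoubleCFIgroup$ inside $\group$ contradicts $n_2,n_3\notin\{1,2,4\}$.

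The paper's argument avoids both obstacles by staying at the level of the kernels $H_i$ and using Theorem~\ref{thm:two-inj-sub-kernels}: from $(1,\reflA_2,\reflA_3)\in H_1$ one deduces that $\proj_2(H_1)$ is a dihedral normal subgroup, and the equal-index condition forces $\proj_1(H)$ to be either all of $\DihedralGroup{4}$ or one of $\DihedralGroup{2},\CyclicGroup{4}$. Each case quickly produces an element of $\group$ with an order-$4$ rotation in one slot and reflections in the other two, contradicting Lemma~\ref{lem:2-inj-sub-basics}\ref{part:greater:order:two:nontrivial:lemma}. This bypasses any need to realise a $\DihedralGroup{4}^3$ subproduct.
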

\begin{proof}
	We apply Lemma~\ref{lem:rot-refl-from-kernels}.
	We have to show 
	that the $\outproj_i(H_i)$ contain no reflection.
	By Lemma~\ref{lem:kernel-no-reflection-124} this is the case for $H_2$ and $H_3$.
	Assume that $(1, \refl_2, \refl_3) \in H_1$
	and note that $\proj_i(H_1) \normal \DihedralGroup{n_i}$
	for $i \in \set{2,3}$.
	So $\proj_i(H_1) \in \set{\DihedralGroup{n_i}, \DihedralGroup{n_i/2}}$
	for $i \in \set{2,3}$
	(if $n_i$ is odd, $\DihedralGroup{n_i/2}$ of course does not exist).
	Recall that by Theorem~\ref{thm:two-inj-sub-kernels} we have
	$[\DihedralGroup{n_i} : \proj_i(H)] = [\group : H]$.
	If $\proj_2(H) = \DihedralGroup{n_2}$
	(which in particular holds if $\proj_2(H_1) = \DihedralGroup{n_2}$)
	then
	\[1 = [\DihedralGroup{n_2} : \proj_2(H)] = [\DihedralGroup{n_1} : \proj_1(H)]\]
	and so $\proj_1(H) = \DihedralGroup{4}$,
	i.e.,~one of $H_2$ and $H_3$ contains reflections,
	which is a contradiction.
	
	Otherwise $\proj_2(H) = \proj_2(H_1) = \DihedralGroup{n_2/2}$,
	by symmetry $\proj_3(H) = \proj_3(H_1) = \DihedralGroup{n_3/2}$,
	and $2 = [\DihedralGroup{n_1} : \proj_1 (H)]$.
	So $\proj_1(H) = \langle \proj_1(H_2) , \proj_1(H_3) \rangle \in \{\DihedralGroup{2}, \CyclicGroup{4}\}$.
	If $\proj_1(H) = \DihedralGroup{2}$ then
	$H_2$ or $H_3$ again contain reflections.
	Thus, $\proj_1(H) = \CyclicGroup{4}$.
	Let $\rot \in \CyclicGroup{4} < \DihedralGroup{4}$ be a rotation of order $4$.
	Now, there is an element $(\rot, \rot', 1) \in H_3$.
	But then $(\rot, \rot',1)(1, \refl_2, \refl_3) = (\rot, \rot'\refl_2,\refl_3) \in \group$,
	which is a contradiction to Lemma~\ref{lem:2-inj-sub-basics}\ref{part:greater:order:two:nontrivial:lemma}.
\end{proof}

\begin{proof}[Proof of Theorem~\ref{thm:classification-2inj-sub-dihedral}]
	Let $\group \leq \DihedralGroup{n_1} \direct \DihedralGroup{n_2} \direct \DihedralGroup{n_3}$ be a $2$-injective subdirect product.
	
	If $n_i \leq 2$ for all $i \in [3]$, then $\group$ is abelian.
	If $n_i \leq 2$ for exactly one $i \in [3]$,
	then $\outproj_i(\group)$ is a rotate-or-reflect group
	by Lemma~\ref{lem:2inj-sub-D1-or-D2-other}.
	The case that $n_i \leq 2$ for exactly two $i$
	is a contradiction to Lemma~\ref{lem:no-2inj-sub-DCC-not2}
	because $\DihedralGroup{1}$ and $\DihedralGroup{2}$
	are the only abelian dihedral groups.
	
	Lastly, consider the case that $n_i > 2$ for all $i \in [3]$.
	If $n_i = 4$ for all $i \in [3]$,
	then $\group$ is a rotate-or-reflect group or the double CFI group
	by Lemma~\ref{lem:2inj-sub-D4-D4-D4}.
	The case $n_i = n_j = 4$ and $n_k \notin\set{1,2,4}$ for $\set{i,j,k} =[3]$
	is impossible due to Lemma~\ref{lem:no-2-injs-D4-D4-Di}.
	If $n_i = 4$ for exactly one $i \in [3]$,
	then $n_j \notin\set{1,2,4}$ for every $j \neq i$.
	Consequently, $\group$ is a rotate-or-reflect group by
	Lemma~\ref{lem:2inj-sub-D4-other}.
	If $n_i \neq 4$ for all $i \in [3]$,
	then $\group$ is a rotate-or-reflect group by Corollary~\ref{cor:reflect-or-rotate-not-124}.
\end{proof}

\subsection{Combinations with Cyclic Groups}
As last step, we consider $2$-injective subdirect products
of a mixture of dihedral and cyclic groups.

\begin{lemma}
	\label{lem:no-2-inj-sub-CDD-not-124}
	There are no $2$-injective subdirect products of 
	$\CyclicGroup{i} \direct \DihedralGroup{j} \direct \DihedralGroup{k}$
	for $i \notin \set{1,2,4}$.
\end{lemma}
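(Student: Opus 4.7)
The plan is a proof by contradiction: suppose such a $2$-injective subdirect product $\group$ exists. First I handle the degenerate cases in which a dihedral factor is abelian. If both $j, k \leq 2$, choose a generator $c$ of $\CyclicGroup{i}$ and some $(c, g, h) \in \group$; then Lemma~\ref{lem:2-inj-sub-basics}\ref{part:greater:order:two:nontrivial:lemma} forces $i \leq \lcm(\ord{g}, \ord{h}) \leq 2$, contradicting $i \geq 3$. If exactly one of $j, k$ exceeds $2$, then after reordering the three factors the only non-abelian factor is some $\DihedralGroup{n}$ with $n > 2$ while the remaining two factors are abelian, and Lemma~\ref{lem:no-2inj-sub-DCC-not2} applies. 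Hence I may assume $j, k \geq 3$.

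The main tool is the homomorphism $\phi \colon \CyclicGroup{i} \to \DihedralGroup{j}/\proj_2(H_1)$ defined by $\phi(c') = g' \proj_2(H_1)$ for any $(c', g', h') \in \group$; this is well defined (two preimages of $c'$ differ by an element of $H_1$) and surjective (by subdirectness). Its image is a cyclic quotient of $\DihedralGroup{j}$ and so has order at most $2$, so $[\DihedralGroup{j} : \proj_2(H_1)] \leq 2$, and symmetrically $[\DihedralGroup{k} : \proj_3(H_1)] \leq 2$. By $2$-injectivity the restricted projections $\proj_2|_{H_1}$ and $\proj_3|_{H_1}$ are injective, so $\proj_2(H_1) \iso H_1 \iso \proj_3(H_1)$ as groups.

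I now case-analyse $\proj_2(H_1)$. If $\proj_2(H_1) = \DihedralGroup{j}$, fix the isomorphism $\theta$ with $H_1 = \set{(1, g, \theta(g)) : g \in \DihedralGroup{j}}$ and choose $(c, g_0, h_0) \in \group$ with $c$ a generator; normality of $H_1$ forces $\theta(g_0)^{-1} h_0$ to centralize $\proj_3(H_1)$ in $\DihedralGroup{k}$, a centralizer that sits inside the center of $\DihedralGroup{k}$ and hence has order at most $2$. If $h_0 = \theta(g_0)$, then $(c, g_0, h_0)(1, g_0^{-1}, \theta(g_0^{-1})) = (c, 1, 1) \in \group$ contradicts $2$-injectivity; otherwise $h_0 = \theta(g_0) z$ for a central involution $z$, so $(c, 1, z) \in H_2$, and since $c$ generates $\proj_1(H_2)$ we obtain $H_2 \iso \CyclicGroup{i}$, and the iso $\proj_1(H_2) \iso \proj_3(H_2)$ must map the generator $c$ to the generator $z$, forcing $\ord{z} = i$ against $\ord{z} = 2$. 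If instead $\proj_2(H_1) = \rotsubgroup{\DihedralGroup{j}}$, then $g_0$ lies in the reflection coset and symmetrically $h_0$ is a reflection, so $(c, g_0, h_0)^2 = (c^2, 1, 1) \in \group$, again contradicting $2$-injectivity since $c^2 \neq 1$.

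The remaining case is $\proj_2(H_1) = \DihedralGroup{j/2}^{(s)}$ (an index-$2$ dihedral subgroup, forcing $j$ even), and the isomorphism constraint with the symmetric analysis of $\proj_3(H_1)$ reduces to $\proj_3(H_1) = \DihedralGroup{k/2}^{(s')}$ with $j = k$. The non-trivial coset of $\proj_2(H_1)$ contains both odd-power rotations and reflections. A short case check on the rotation/reflection types of $g_0$ and $h_0$ shows that unless these types are opposite I can choose $(c, g, h) \in \group$ with both $g, h$ reflections, again getting $(c, g, h)^2 = (c^2, 1, 1) \in \group$ and a contradiction. In the remaining sub-case, whenever $g$ is a reflection in the coset, $h$ is forced to be an odd-power rotation; then applying $\kernel{\outproj_3} = \set{1}$ to the quotient of two such squares $(c^2, 1, h^2)$ forces $h^2$ to be constant as $g$ varies. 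Parametrising shows that $h$ ranges over all $k/2$ odd-power rotations of $\DihedralGroup{k}$ and that the squares $h^2$ are constant only when $k \mid 4$, hence $k = 4$ and symmetrically $j = 4$. The common value $z = h^2$ has order $2$, and $(c^2, 1, z) \in H_2$ together with the iso $\proj_1(H_2) \iso \proj_3(H_2)$ preserving orders forces $\ord{c^2} = 2$, i.e.\ $i = 4$, contradicting $i \neq 4$. The most delicate part is this last sub-case, where the constancy of $h^2$ extracted from $2$-injectivity is the key ingredient.
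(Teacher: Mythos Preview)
Your proof is correct, but it takes a very different route from the paper.  The paper's argument is a short direct element-chase in the style of the surrounding lemmas: it first produces an element $(\rotA,\reflA_2,\reflA_3)\in\group$ with $\reflA_2,\reflA_3$ reflections (so $\ord{\rotA}\leq 2$ by Lemma~\ref{lem:2-inj-sub-basics}\ref{part:greater:order:two:nontrivial:lemma}), then picks $\rotB\in\CyclicGroup{i}$ with $\ord{\rotB}\notin\{1,2,4\}$ and an element $(\rotB,g,h)\in\group$, and finishes by multiplying: either $h$ is a rotation and $(\rotA,\reflA_2,\reflA_3)(\rotB,g,h)$ contradicts Lemma~\ref{lem:2-inj-sub-basics}\ref{part:greater:order:two:nontrivial:lemma}, or $h$ is a reflection and $(\rotA,\reflA_2,\reflA_3)(\rotB,g,h)^2$ does.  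The whole thing is one paragraph.

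Your approach is structural: you analyse $H_1=\kernel{\proj_1}$ via the surjection $\CyclicGroup{i}\twoheadrightarrow \DihedralGroup{j}/\proj_2(H_1)$, deduce that $\proj_2(H_1)$ and $\proj_3(H_1)$ are index-$\leq 2$ normal subgroups, and then case-split on which of the (at most three) such subgroups each one is.  Each case is dispatched by its own mechanism (centralizer of a non-abelian dihedral subgroup, squaring a double-reflection element, or the delicate constancy-of-$h^2$ argument forcing $j=k=4$).  This buys a clearer picture of \emph{why} $i\in\{1,2,4\}$ are the only possibilities and would generalize differently, at the cost of being several times longer and requiring more bookkeeping (the ``symmetric analysis'' reductions and the $j=4$ corner of the case check are the places where a reader has to work to see completeness).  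You also explicitly cover the degenerate cases $j\leq 2$ or $k\leq 2$, which the paper tacitly assumes away.
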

\begin{proof}
	The proof is essentially a simpler version of the proof of Lemma~\ref{lem:kernel-no-reflection-124}.
	We show now, that the reflections in the dihedral group cannot be combined with the rotations in the cyclic group.
	
	We argue first that there is an element $(\rot, \refl_2, \refl_3) \in \group$.
	Let $\refl \in \DihedralGroup{j}$ be a reflection
	and $(\rot, \refl, g) \in \group$ for some $\rot \in \CyclicGroup{i}$ and $g \in \DihedralGroup{k}$.
	If $g$ is a reflection, we are done.
	Otherwise $g = \rot_3$
	and consider a reflection $\refl' \in \DihedralGroup{k}$
	and some element $(\rot', g', \refl') \in \group$.
	Again, if $g'$ is a reflection, we are done.
	Otherwise $g' = \rot_2$
	and $(\rot, \refl, \rot_3)(\rot', \rot_2, \refl') = (\rot\rot', \refl\rot_2, \rot_3\refl')$ is the desired group element.
	By Lemma~\ref{lem:2-inj-sub-basics}\ref{part:greater:order:two:nontrivial:lemma},
	$\ord{\rot} \leq 2$.
	
	Let $\rotB \in \CyclicGroup{i}$ be a rotation with $\ord{\rotB} \notin \set{1,2,4}$
	and $(\rotB, g, h) \in \group$ be some group element.
	By Lemma~\ref{lem:2-inj-sub-basics}\ref{part:greater:order:two:nontrivial:lemma}
	one of $g$ and $h$ must have order $> 2$, say w.l.o.g.~$g= \rot_2$.
	If $h = \rot_3$ is a rotation, too, then
	$(\rot, \refl_2, \refl_3) (\rotB, \rot_2, \rot_3) = (\rot\rotB, \refl_2\rot_2, \refl_3\rot_3) \in \group$.
	This contradicts
	Lemma~\ref{lem:2-inj-sub-basics}\ref{part:greater:order:two:nontrivial:lemma},
	 because $\ord{\rot\rotB} > 2$.
	
	If $h = \reflB_3$ is another reflection,
	then $(\rot, \refl_2, \refl_3)(\rotB, \rot_2, \reflB_3)^2 =
	(\rot\rotB^2, \refl_2\rot_2^2, \refl_3) \in \group$.
	As before, $\ord{\rot\rotB^2} > 2$ but the other components are reflections contradicting
	Lemma~\ref{lem:2-inj-sub-basics}\ref{part:greater:order:two:nontrivial:lemma}.
\end{proof}

\begin{lemma}
	\label{lem:2-inj-sub-C2DD-rot-refl-group}
	Let $\group \leq \CyclicGroup{i} \direct \DihedralGroup{j} \direct \DihedralGroup{k}$ be a $2$-injective subdirect product, $i \leq 2$ and $j, k > 2$.
	Then $\outproj_1(\group)$ is a rotate-or-reflect group.
\end{lemma}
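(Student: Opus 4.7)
The plan is to adapt the argument of Lemma~\ref{lem:2inj-sub-D1-or-D2-other} essentially verbatim. The only property of $\DihedralGroup{i}$ for $i \in [2]$ that proof exploits is that every element has order at most~$2$, and this property also holds for $\CyclicGroup{i}$ when $i \leq 2$. The overall strategy is to suppose for contradiction that a mixed element exists in $\outproj_1(\group)$, combine it with a witness of a high-order rotation in the second factor, and invoke Lemma~\ref{lem:2-inj-sub-basics}\ref{part:greater:order:two:nontrivial:lemma} twice, once to tame the rotation component in the mixed element and once more to derive a violation of $2$-injectivity.

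Concretely, I would first suppose, by symmetry between the second and third factors, that some element $(g, \rotA, \reflA) \in \group$ has $\rotA \in \DihedralGroup{j}$ a rotation and $\reflA \in \DihedralGroup{k}$ a reflection. Since $\ord{g} \leq 2$ and $\ord{\reflA} = 2$, Lemma~\ref{lem:2-inj-sub-basics}\ref{part:greater:order:two:nontrivial:lemma} immediately forces $\ord{\rotA} \leq 2$. Next, using $j > 2$, I would pick a rotation $\rotB \in \rotsubgroup{\DihedralGroup{j}}$ of order strictly greater than~$2$ and, by subdirectness of $\proj_2$, lift it to some $(h, \rotB, g_3) \in \group$. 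Applying Lemma~\ref{lem:2-inj-sub-basics}\ref{part:greater:order:two:nontrivial:lemma} with $\ord{h} \leq 2$ shows that $\ord{g_3} \geq \ord{\rotB}/2 > 1$ is incompatible with $g_3$ being a reflection, so $g_3 = \rotB'$ must be a rotation. Then the product $(gh, \rotA\rotB, \reflA\rotB') \in \group$ is the candidate from which the contradiction will be extracted.

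To finish, I would apply Lemma~\ref{lem:2-inj-sub-basics}\ref{part:greater:order:two:nontrivial:lemma} to this last element: $gh$ has order at most~$2$ because $\CyclicGroup{i}$ with $i \leq 2$ is abelian of exponent at most~$2$, and $\reflA\rotB'$ is a reflection of order~$2$, so the outer pair has order at most~$2$. The main (mild) obstacle is the remaining verification that $\ord{\rotA\rotB} > 2$, which was asserted without comment in Lemma~\ref{lem:2inj-sub-D1-or-D2-other} but deserves one line: the elements of order at most~$2$ inside the cyclic group $\rotsubgroup{\DihedralGroup{j}}$ form a subgroup of size at most~$2$, so if $\ord{\rotA\rotB} \leq 2$ we could write $\rotB = \rotA \cdot (\rotA\rotB)$ (using $\rotA^{-1} = \rotA$) as a product of two such elements, forcing $\ord{\rotB} \leq 2$ and contradicting the choice of $\rotB$. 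Hence $\ord{\rotA\rotB} > 2$ exceeds the order of the outer pair, the desired contradiction to Lemma~\ref{lem:2-inj-sub-basics}\ref{part:greater:order:two:nontrivial:lemma} is achieved, and $\outproj_1(\group)$ is a rotate-or-reflect group.
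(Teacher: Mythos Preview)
Your proof is correct. You uniformly rerun the argument of Lemma~\ref{lem:2inj-sub-D1-or-D2-other}, observing that only the exponent-$2$ property of the first factor is needed; the added line justifying $\ord{\rotA\rotB} > 2$ is a nice bit of care. The paper's proof is slightly different in structure: for $i=2$ it simply notes $\CyclicGroup{2} = \DihedralGroup{1}$ and cites Lemma~\ref{lem:2inj-sub-D1-or-D2-other} directly, while for $i=1$ it takes a shortcut, observing that the first factor being trivial forces $\group = H_1$, and $H_1$ always induces an isomorphism between $\proj_2(H_1)$ and $\proj_3(H_1)$, so rotations pair with rotations and reflections with reflections. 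Your approach trades this case split for a single uniform argument; both are short, and neither is clearly preferable.
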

\begin{proof}
	The case $i = 2$ follows immediately from Lemma~\ref{lem:2inj-sub-D1-or-D2-other}
	because $\CyclicGroup{2} = \DihedralGroup{1}$.
	The case $i = 1$ implies that $H = H_1 = \group$ (Theorem~\ref{thm:two-inj-sub-kernels}).
	Because $H_1$ defines an isomorphism
	between $\proj_2(H_1)$ and $\proj_3(H_1)$,
	$\outproj_1(\group)$ is a rotate-or-reflect-group.
\end{proof}

\begin{lemma}
	\label{lem:2-in-sub-C4D4D4-is-DCFI}
	Let $\group \leq \CyclicGroup{4} \direct \DihedralGroup{n_2} \direct \DihedralGroup{n_3}$ be a $2$-injective subdirect product and $n_i > 2$ for $i \in \set{2,3}$.
	Then $n_2 = n_3 = 4 $ and
	$\group \iso \DoubleCFIgroup \cap (\rotsubgroup{\DihedralGroup{4}} \direct \DihedralGroup{4} \direct \DihedralGroup{4})$.
\end{lemma}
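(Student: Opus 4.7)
The plan is to reduce to Lemma~\ref{lem:2inj-sub-D4-D4-D4} by embedding $\group$ into $\DihedralGroup{4} \direct \DihedralGroup{n_2} \direct \DihedralGroup{n_3}$ via the identification $\CyclicGroup{4} \iso \rotsubgroup{\DihedralGroup{4}} < \DihedralGroup{4}$, and then extending $\group$ to a $2$-injective subdirect product $\group^+$ of this ambient group of index $2$ over $\group$. Concretely, I would fix a reflection $\reflA \in \DihedralGroup{4} \setminus \rotsubgroup{\DihedralGroup{4}}$ and exhibit an element $\sigma = (\reflA, h_2, h_3)$ that normalizes $\group$ in the ambient group and satisfies $\sigma^2 \in \group$, so that $\group^+ := \group \cup \sigma \group$ is a group. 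Since conjugation by $\reflA$ acts on $\CyclicGroup{4}$ as $\rot \mapsto \inv{\rot}$, the pair $(h_2, h_3)$ must be chosen so that the induced involutive action on $\outproj_1(\group)$ is compatible with $\group$.

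Once $\group^+$ is built, I would apply Theorem~\ref{thm:classification-2inj-sub-dihedral} to classify it. Since $n_2, n_3 > 2$, only Cases~1 (rotate-or-reflect) and~2 (Double CFI with all three indices equal to $4$) are possible. The rotate-or-reflect alternative can be ruled out: in such a group every element with a rotation in any one coordinate has rotations in all three coordinates, so $\group \leq \rotsubgroup{\group^+}$, forcing $\proj_2(\group) \leq \rotsubgroup{\DihedralGroup{n_2}}$, which contradicts subdirectness $\proj_2(\group) = \DihedralGroup{n_2}$ (as $\DihedralGroup{n_2}$ contains reflections for $n_2 > 2$). Hence $\group^+ \iso \DoubleCFIgroup$ and $n_2 = n_3 = 4$. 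Then $\group$, being the subgroup of $\group^+$ whose first coordinate is a rotation, is precisely $\DoubleCFIgroup \intersect (\rotsubgroup{\DihedralGroup{4}} \direct \DihedralGroup{4} \direct \DihedralGroup{4})$, as claimed.

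The main obstacle is the construction of $\sigma$, which reduces to lifting the involution $\rot \mapsto \inv{\rot}$ of $\CyclicGroup{4}$ to a compatible involution on $\outproj_1(\group) \leq \DihedralGroup{n_2} \direct \DihedralGroup{n_3}$. I expect this to be arrangeable by selecting $(h_2, h_3)$ from elements of $\outproj_1(\group)$ whose squares lie in $\outproj_1(H_1)$, exploiting that $\group / H_1 \iso \CyclicGroup{4}$ is cyclic and therefore admits an inversion automorphism. If this construction fails, a fallback is a direct case analysis in the spirit of Lemma~\ref{lem:2inj-sub-D4-D4-D4}: applying Lemma~\ref{lem:2-inj-sub-basics} to elements of $\group$ with an order-$4$ generator in the first coordinate to constrain $n_2$ and $n_3$, and then using Theorem~\ref{thm:two-inj-sub-kernels} to pin down the normal subgroups $\proj_i(H_i)$ and the generators of $\group$ explicitly.
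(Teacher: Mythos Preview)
Your strategy---embed $\group$ into $\DihedralGroup{4}\direct\DihedralGroup{n_2}\direct\DihedralGroup{n_3}$, extend it by some $\sigma=(\reflA,h_2,h_3)$ to an index-$2$ overgroup $\group^+$, then invoke Theorem~\ref{thm:classification-2inj-sub-dihedral}---is a genuinely different route from the paper's proof, which proceeds by a direct element-by-element case analysis (first showing $[\group:H]>1$, then producing an element $(1,\reflA_2',\reflA_3')\in H_1$, then forcing $n_2=n_3=4$ via order considerations, and finally matching generators). Your downstream argument is sound: once $\group^+$ exists and is a $2$-injective subdirect product with all $n_i>2$, only Cases~1 and~2 of the theorem apply, and the rotate-or-reflect case is ruled out exactly as you say.

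The gap is the construction of $\sigma$, and your concrete suggestion for it is wrong. You propose choosing $(h_2,h_3)\in\outproj_1(\group)$. But then there is some $r_0\in\CyclicGroup{4}$ with $(r_0,h_2,h_3)\in\group$, and hence
\[
\sigma\cdot (r_0,h_2,h_3)^{-1}=(\reflA r_0^{-1},1,1)\in\group^+,
\]
which is a nontrivial element of $\kernel{\outproj_1^{\group^+}}$. So $\group^+$ fails to be $2$-injective precisely at the coordinate where you need it. In other words, you must pick $(h_2,h_3)\notin\outproj_1(\group)$, the opposite of what you wrote. Finding such a pair that also normalizes $\group$ and has $(h_2^2,h_3^2)\in\outproj_1(H_1)$ is not automatic; the observation that $\group/H_1\iso\CyclicGroup{4}$ admits an inversion automorphism does not by itself produce a lift to a conjugation in the ambient dihedral factors. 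In the end, pinning down such an element seems to require understanding $H_1$ (in particular that it contains reflections) at a level of detail comparable to the paper's direct analysis---which is exactly your stated fallback, and is what the paper actually does.
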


\begin{proof}
	We first show that $[\group : H] > 1$.
	Assume that $[\group: H] = 1$.
	By Theorem~\ref{thm:two-inj-sub-kernels}
	it must hold that $\proj_i(H) = \DihedralGroup{n_i}$ for $i \in \set{2,3}$
	and hence that $\proj_i(H_1) = \DihedralGroup{n_i}$
	because $H_2$ and $H_3$ can only use rotations
	(they are normal subgroups and define isomorphisms to subgroups of~$\CyclicGroup{4}$).
	Furthermore, $\proj_1(H_i) = \CyclicGroup{4}$ for some $i \in \set{2,3}$,
	say w.l.o.g.~for $i = 2$.
	So there is an element $g = (1, \refl_2, \refl_3) \in H_1$
	and	an element $h = (\rot_1, \rot_2, 1) \in H_3$
	with $\ord{\rot_1} > 2$.
	But then $gh = (\rot_1, \refl_2\rot_2, \refl_3) \in \group$
	contradicting
	Lemma~\ref{lem:2-inj-sub-basics}\ref{part:greater:order:two:nontrivial:lemma}. 
	So $[\group : H] > 1$ and in particular $\proj_1(H) \in \set{\CyclicGroup{2}, \CyclicGroup{1}}$.
	
	Next, we want to show that there is an element $(1,\reflA_2', \reflA_3') \in \group$ and $n_2 \geq 4$.
	We know there is an element $g = (\rotA_1, \reflA_2, g_3) \in \group$.
	\begin{itemize}
		\item 
	Assume $g_3 = \reflA_3$ and then by
	Lemma~\ref{lem:2-inj-sub-basics}\ref{part:greater:order:two:nontrivial:lemma}
	$\ord{\rotA_1} = 2$.
	Let $\rotB_1 \in \CyclicGroup{4}$ be of order $4$
	and consider the element $h = (\rotB_1,h_2, h_3)$.
	Then $gh^2 = (\rotA_1\rotB_1^2, \reflA_2h_2^2, \reflA_3h_3^2) = (1, \reflA_2', \reflA_3').$
	
	One of $h_2$ and $h_3$ must be of order $>2$, so a rotation.
	Up to reordering of the factors $h_2 = \rotB_2$.
	Now $gh = (\rotA_1\rotB_1, \reflA_2\rotB_2, \reflA_3h_3) = (\inv{\rotB}, \reflA_2\rotB_2, \reflA_3h_3)$.
	Because $\ord{\inv{\rotB_1}} = 4$ and $\ord{\reflA_2\rotB_2} = 2$,
	we have $\ord{\reflA_3h_3} = 4$,
	in particular $h_3 = \reflA'_3$ is a reflection
	and hence $\rotB_2$ is of order $4$.
	So we have $n_2 \geq 4$.
	
	\item Assume $g_3 = \rotA_3$.
	Let $\rotB_2 \in \DihedralGroup{n_2}$ be a rotation of maximal order
	$\ord{\rotB_2} = n_2$.
	Then  $\rotB_2\reflA_2 \neq \reflA_2 \rotB_2$
	and there is an element $h = (\rotB_1, \rotB_2, h_3) \in \group$.
	If $h_3 = \rotB_3$, then
	$gh = (\rotA_1\rotB_1, \reflA_1\rotB_2, \rotA_3\rotB_3)$ and
	$hg = (\rotA_1\rotB_1, \rotB_2\reflA_1, \rotA_3\rotB_3)$
	which is a contradiction to $2$-injectivity.
	
	So $h_3 = \reflB_3$ must be a reflection.
	By Lemma~\ref{lem:2-inj-sub-basics}\ref{part:greater:order:two:nontrivial:lemma}
	$\ord{\rotB_1} = 4$ 
	because $\ord{\rotB_2 } > 2$ (and~$\CyclicGroup{4}$ contains only rotations of order $1,2,4$).
	Then also $\ord{\rotB_2} = 4$ implying $n_2 \geq 4$.
	Considering $gh = (\rotA_1\rotB_1, \reflA_1\rotB_2, \rotA_3\reflB_3)$
	and Lemma~\ref{lem:2-inj-sub-basics}\ref{part:greater:order:two:nontrivial:lemma} we obtain
	$\ord{\rotA_1\rotB_1}  \in \set{1,2}$.
	So either $\rotA_1\rotB_1 = 1$ or $\rotA_1\inv{\rotB_1} = 1$
	and one of $gh$ and $g\inv{h}$ is equal to $(1,\reflA_2', \reflA_3') \in \group$.
\end{itemize}
	It follows that $\proj_2(H_1) = \proj_2(H) = \DihedralGroup{n_2/2}$
	(because it contains reflections)
	and by Theorem~\ref{thm:two-inj-sub-kernels}
	then also $\proj_3(H_1) = \DihedralGroup{n_3/2}$
	and so $n_2 = n_3$ ($H_1$ defines an isomorphism)
	and $\proj_1(H) = \CyclicGroup{2}$. 
	
	We now show that $n_2 = n_3 = 4$.
	Assume $n_2 > 4$
	and let $h = (\rotB_1, \rotB_2, \reflB_3)$
	be the element as in the case $g_2 = \rotA_3$.
	Recall that $\ord{\rotB_2} = n_2$.
	There is a rotation $\rotB_2' \in \DihedralGroup{n_2}$
	such that $\ord{\rotB_2\rotB_2'} \notin\set{1,2,4}$.
	If $n_2 \neq 8$, we pick $\rotB_2' := \rotB_2$.
	Then $\ord{\rotB_2\rotB_2'} \in \set{n_2, n/2}$.
	If $n_2 = 8$, we pick $\rotB_2' := \rotB_2^2$
	and then $\ord{\rotB_2\rotB_2'} = 8$.
	We consider an element $g' = (\rotB_1', \rotB_2', g_3') \in \group$
	and $g_2' = \rotB_3'$ by 
	Lemma~\ref{lem:2-inj-sub-basics}
	because $\ord{\rotB_1'} \in \set{1,2,4}$.

	Finally, $hg' =  (\rotB_1\rotB_1', \rotB_2\rotB_2', \reflB_3\rotB_3')$,
	$\ord{\rotB_2\rotB_2'} \notin \set{1,2,4}$,
	and $\ord{\rotB_1\rotB_1'},\ord{\reflB_3\rotB_3'} \in \set{1,2,4}$
	contradicting
	Lemma~\ref{lem:2-inj-sub-basics}.
	The case $n_3 > 4$ follows by symmetry.

	Now, $\group$ contains elements generating a group isomorphic to
	$\DoubleCFIgroup \cap (\rotsubgroup{\DihedralGroup{4}} \direct \DihedralGroup{4} \direct \DihedralGroup{4})$.
	One can show this using the same argument
	with conjugacy classes of reflections
	as in the proof of Lemma~\ref{lem:2inj-sub-D4-D4-D4}.
	The order of this group is 16, the order of $H$ is $8$, so $\group$ cannot contain any other elements.
\end{proof}

\begin{proof}[Proof of Theorem~\ref{thm:classification-2inj-sub-dihedral-cyclic}]
	From Lemma~\ref{lem:no-2-inj-sub-CDD-not-124}  it follows that $n_1 \in \set{1,2,4}$.
	Assume $n_1 \leq 2$.
	If $n_2, n_3 >2$, then $\group$ is a rotate-or-reflect group
	due to Lemma~\ref{lem:2-inj-sub-C2DD-rot-refl-group}.
	If $n_2 \leq 2$, then $n_3 \leq 2$ by Lemma~\ref{lem:no-2inj-sub-DCC-not2} 
	and $\group$ is abelian.
	
	Assume otherwise $n_1 = 4$.
	If $n_2,n_3 \leq2$, then there is an element $(g_1,g_2,g_3) \in \group$
	and $\ord{g_1} = 4$, but $\ord{g_2}, \ord{g_3} \leq 2$
	contradicting Lemma~\ref{lem:2-inj-sub-basics}\ref{part:greater:order:two:nontrivial:lemma}.
	Otherwise $n_2,n_3 > 2$ by Lemma~\ref{lem:no-2inj-sub-DCC-not2}.
	Then $n_2=n_3=4$ and $\group$ is the claimed group by Lemma~\ref{lem:2-in-sub-C4D4D4-is-DCFI}.
	
	$2$-injective subdirect products of a non-abelian dihedral groups
	and two abelian groups do not exist
	by Lemma~\ref{lem:no-2inj-sub-DCC-not2}.
\end{proof}

\section{Normal Forms for Structures}
\label{sec:normal-forms}

In this section we describe a normal form for relational structures.
This normal form may alter the signature of the structure.
However, important for us is that the normal form is definable in choiceless polynomial time (CPT).
Also important for us is
that we have means within CPT to translate a canonical form of the normal form
back into a canonical form of the original structure. 

\begin{definition}
A \defining{canonization-preserving CPT-reduction} 
from a class of structures~$\mathcal{A}$ to a class of structures~$\mathcal{B}$
is a pair of CPT-interpretations $(\Phi, \Psi)$ with the following properties:
\begin{itemize}
	\item 
	$\Phi$ is a CPT-interpretation from $\mathcal{A}$-structures to $\mathcal{B}$-structures.
	\item 
	$\Psi$ is a CPT-interpretation from pairs of an $\mathcal{A}$-structure 
	and an ordered $\mathcal{B}$-structure to ordered $\mathcal{A}$-structures.
	\item Given a CPT-interpretation $\Theta$ from
	$\mathcal{B}$-structures to ordered $\mathcal{B}$ structures,
	i.e.,~a CPT-definable canonization procedure,
	then $\Psi((\Struct,\Theta(\Phi(\Struct))))$ is an ordered copy of $\Struct$
	for every $\mathcal{A}$-structure $\Struct$. 
\end{itemize}
We also say that $\mathcal{A}$ can be reduced canonization preservingly in CPT to $\mathcal{B}$
if there is a canonization-preserving CPT-reduction
from $\mathcal{A}$ to $\mathcal{B}$.
\end{definition}

That is, given a canonization-preserving CPT-reduction
from $\mathcal{A}$-structures to $\mathcal{B}$-structures
and a CPT-definable canonization procedure for $\mathcal{B}$-structures,
we obtain a CPT-definable canonization procedure for $\mathcal{A}$-structures.
Note that the reduction $\Psi$ not only takes the canonized $\mathcal{B}$-structure
but also the original $\mathcal{A}$-structure as input. This in particular allows for the possibility that $\Phi$ is not injective.
For example,
$\Phi$ can remove some parts of the input which can be canonized by $\Psi$
and thus produce the same $\mathcal{B}$-structure for different $\mathcal{A}$-structures.
For a definition of interpretation see e.g.~\cite{Grohe2017}
(interpretations are called transductions there).
We omit its definition here, because it is not relevant for this paper.\footnote{For the reader familiar with the concept, we note that all the reductions in this paper will be equivalence-free.}

The concept of a canonization-preserving reduction is akin to the concept of Levin reductions between problems in NP that allow us to pull certificates back. These reductions are thus certificate preserving.

We need to add a note on the signatures of the $\mathcal{A}$ and $\mathcal{B}$-structures:
in what follows, we fill fix a signature $\tau$ of the $\mathcal{A}$ structures.
However, we would like that the signature of $\Phi(\Struct)$ not only depends on $\tau$ but also on $\Struct$.
For example, assume we want to insert a relation $\rel_\colclass$ for every color class of $\Struct$ yielding $|\colorclasses{\Struct}|$ many new relations.
But this would make a definition of $\Phi$ impossible, because a CPT interpretation has a fixed signature for its input and output structure.
A solution is to define a single relation $\rel$
containing the tuples of all desired relations $\rel_\colclass$,
but whose arity is one larger than of the $\rel_\colclass$.
This additional entry is used to color the tuples,
where the colors are encoded by a directed path of length $|\colorclasses{\Struct}|$,
which is added to the structure.
To avoid that the arity increases with each further reduction,
we have to reuse this additional entry and refine the colors by increasing the length of the path.
Executing this approach formally would lead to unnecessarily complicated definitions and proofs.
Hence, we just define the relations $\rel_\colclass$.

\subsection{Preprocessing: Transitivity and Typed Relations}

As a first goal, we want to define a normal form
such that the induced substructure induced by $s$ arbitrarily chosen color classes 
is transitive one every color class.
Additionally, the relations should ``respect the color classes'',
that is, tuples
containing vertices of different color classes at the same position
cannot be in the same relation.
We then show that this normal form
is CPT-definable in a canonization preserving way.
The reduction is split into different steps.

\begin{definition}
	We call a relational structure $\Struct$
	\defining{transitive on $s$ color classes}
	if for every $I \subseteq \colorclasses{\Struct}$ satisfying $|I| \leq s$
	the group $\restrictGroup{\autgroup{\Struct[I]}}{\colclass}$
	is transitive for every $\colclass \in I$.
\end{definition}

\begin{lemma}
	\label{lem:compute-and-order-orbits}
	For every constant $q \in \nat$
	there is a CPT term that for 
	every relational structure $\Struct$
	of size $|\Struct| \leq q$ and $k\leq q$
	defines and orders the $k$-orbits of $\autgroup{\Struct}$.
\end{lemma}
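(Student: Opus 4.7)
The plan is to exploit that $|\Struct|\leq q$ is a constant bound, so the entire automorphism group can be materialised explicitly as a hereditarily finite set and the orbit partition can then be read off by elementary set operations. First I would build the set of all bijections $\StructP\to\StructP$: encoding a bijection as its graph in $\StructP^2$, the power set $\mathcal{P}(\StructP^2)$ has size at most $2^{q^2}$ and is definable by a comprehension of depth $q^2$ (equivalently by a $q^2$-step iteration doubling the current family by an element of $\StructP^2$). Filtering this family to those subsets that are functional, bijective, and preserve every relation $\rel_i^\Struct$ is a first-order condition with a bounded number of quantifiers and yields $\autgroup{\Struct}$ as an explicit set of permutations. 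The $k$-orbit of a tuple $\bar v\in\StructP^k$ is then the comprehension $\{\sigma(\bar v)\mid\sigma\in\autgroup{\Struct}\}$, and the orbit partition $\orbitpart{\autgroup{\Struct}}$ on $\StructP^k$ is obtained by collecting these orbits; since $k\leq q$, the set $\StructP^k$ is itself CPT-definable of constant size.

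The main obstacle is turning this set of orbits into an \emph{ordered} set. For this I would consider the set $\orderings{\StructP}$ of all total orderings of $\StructP$ that refine the preorder $\spleq$; it has size at most $q!$ and is CPT-definable by the same power-set technique. Each $<\in\orderings{\StructP}$ induces a bijection $b_<\colon\StructP\to[|\StructP|]$ and thus an encoding $E(<)$ of $\Struct$ as an ordered relational structure on $[|\StructP|]$, which can be linearised into a bit-string in a fixed way. Let $\mathcal{O}^*$ be the subset of $\orderings{\StructP}$ whose encoding $E(<)$ is lexicographically smallest; this is definable without making a choice of a single ordering. Two orderings lie in $\mathcal{O}^*$ precisely when they differ by precomposition with an element of $\autgroup{\Struct}$, so for every $k$-orbit $O$ and every $<\in\mathcal{O}^*$ the lex-extension of $<$ to $\StructP^k$ assigns to $O$ the same image of its lex-least tuple. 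Using this common value as the rank of $O$ yields a canonical total order on $\orbitpart{\autgroup{\Struct}}$.

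Every construction above consists of comprehensions and first-order checks over sets of size bounded by a fixed function of $q$, so a polynomial iteration bound in $|\Struct|$ suffices and all intermediate hereditarily finite sets have transitive closure of size polynomial in $|\Struct|$. The only conceptual subtlety is the ordering step, and the argument that $\mathcal{O}^*$ is a single $\autgroup{\Struct}$-orbit on orderings — and hence that the orbit ranks are well defined — is what makes the choiceless order extraction go through on bounded structures.
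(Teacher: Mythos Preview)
Your proof is correct but takes a different route from the paper. The paper's argument is a two-line appeal to known facts: on a structure of size $n$, the $k$-orbits coincide with the equivalence classes of $k$-tuples under first-order logic with $n$ variables, and these classes admit a canonical total order (the paper cites Otto for this); since $|\Struct|\leq q$, FO with $q$ variables is subsumed by CPT, so both the orbit partition and its order are CPT-definable.

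You instead materialise the automorphism group explicitly by brute force (legitimate because everything has size bounded by a function of $q$), compute orbits directly, and obtain the order by the canonical-labelling trick: the set $\mathcal{O}^*$ of orderings of $\StructP$ yielding the lex-smallest encoding is a single $\autgroup{\Struct}$-orbit, so the image in $[n]^k$ of the lex-least tuple of each $k$-orbit is independent of the chosen ordering in $\mathcal{O}^*$ and gives a well-defined rank. This is more self-contained and makes the CPT constructions explicit without relying on the external ordering result, at the cost of being longer; the paper's version is terser but leans on the Otto reference for the ordering step. Both arguments hinge on the same underlying reason the lemma holds, namely that the constant size bound makes everything effectively finitary.
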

\begin{proof}
The $k$-orbits are exactly the
equivalence classes on $k$-tuples
induced by first order logic with $|\Struct|$ many variables.
These classes can be ordered (see e.g.~\cite{Otto1997})
and said logic is captured by CPT because $|\Struct|$ is bounded.
\end{proof}

In many cases it is important that
a group is given as the automorphism group of an explicitly defined structure
and not just as an abstract group.
For example, we can of course always define the trivial group
acting on an arbitrary vertex set,
but we cannot necessarily define a structure with the trivial automorphism group. Indeed, if we could always define such a structure, then we could always order 
the vertices according to
Lemma~\ref{lem:compute-and-order-orbits} and thus define a canonical order on them.
However, if the automorphism group of the original structure was non-trivial on the vertices then defining such an order is impossible.
With the following lemma we can preserve the automorphism groups of
small structures when only considering some of its orbit:
\begin{samepage}
\begin{lemma}
	\label{lem:define-aut-group-restriction}
	For every $q\in\nat$ there is a CPT term 
	that 
	given a relational $\sig$-structure $\Struct = (\StructP, \rel_1^\Struct, \dots, \rel_k^\Struct, \spleq)$
	of size $|\Struct| \leq q$ 
	and a set $V \subseteq \StructP$,
	which is a union of orbits of $\autgroup{\Struct}$,
	defines relations $\rel_{k+1}^\Struct, \dots, \rel_j^\Struct$ over $V$
	of arity at most $|V|$,
	such that for the structure
	$\Struct' = (\StructP, \rel_1^\Struct, \dots, \rel_j^\Struct, \spleq)$
	the following holds:
	$\autgroup{\Struct} = \autgroup{\Struct'}$
	and $\autgroup{\Struct'[V]} = \restrictGroup{\autgroup{\Struct}}{V}$.
\end{lemma}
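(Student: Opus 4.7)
The plan is to use the orbits of $\autgroup{\Struct}$ acting on $V^{|V|}$ as the additional relations. Intuitively, an automorphism of $\Struct'[V]$ must preserve every such orbit, and this is exactly enough to force it to extend to an automorphism of the full structure $\Struct$. Since $|\Struct| \leq q$ is bounded by a constant, the number of orbits on $V^{|V|}$ is also bounded by a constant (at most $q!$), so we can afford to introduce one relation per orbit.

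First I would invoke Lemma~\ref{lem:compute-and-order-orbits} with $k := |V| \leq q$ to CPT-define and linearly order the $k$-orbits of $\autgroup{\Struct}$, and then select those orbits that are contained in $V^k$; this is possible because $V$ is a union of $1$-orbits of $\autgroup{\Struct}$ and is therefore itself definable. Call the resulting ordered list of orbits $O_1, \dots, O_m$ and add them as new relations $\rel_{k+i}^\Struct := O_i$ on $V$, all of arity $|V|$. Because the list comes with a canonical order from Lemma~\ref{lem:compute-and-order-orbits}, the assignment of orbits to relation symbols is well-defined.

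Next I would verify the two group equalities. For $\autgroup{\Struct} = \autgroup{\Struct'}$, the inclusion $\supseteq$ is immediate since $\Struct'$ extends $\Struct$, and the inclusion $\subseteq$ holds because every automorphism of $\Struct$ preserves $V$ setwise (as $V$ is a union of orbits) and maps each orbit on $V^k$ to itself. For $\autgroup{\Struct'[V]} = \restrictGroup{\autgroup{\Struct}}{V}$, the direction $\supseteq$ is again routine. The interesting direction is $\subseteq$: fix any enumeration $\bar{v} = (v_1, \dots, v_k)$ of $V$. Given $\pi \in \autgroup{\Struct'[V]}$, the tuple $(\pi(v_1), \dots, \pi(v_k))$ must lie in the same new relation as $\bar{v}$, i.e.\ in the same orbit of $\autgroup{\Struct}$ on $V^k$. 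Hence there exists $\sigma \in \autgroup{\Struct}$ with $\sigma(v_i) = \pi(v_i)$ for every $i$, and so $\restrictVect{\sigma}{V} = \pi$.

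There is no real obstacle beyond correctly isolating which orbits live in $V^k$ and making sure the bookkeeping fits the claim that the new relations are obtained by a single CPT term parametrised by $\Struct$ and $V$; this is exactly what Lemma~\ref{lem:compute-and-order-orbits} is designed for, since boundedness of $|\Struct|$ collapses the required logical expressivity to something CPT can handle uniformly. The only subtlety worth flagging is that we truly need the \emph{full} arity $k = |V|$ for the added relations: arity below $|V|$ would only capture proper tuple-orbits and could fail to determine the restriction of $\autgroup{\Struct}$ to $V$, since a permutation of $V$ may preserve all $(k{-}1)$-orbits of $\autgroup{\Struct}$ without extending to an automorphism of $\Struct$.
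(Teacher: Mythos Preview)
Your proposal is correct and follows essentially the same approach as the paper: add the $|V|$-orbits of $\autgroup{\Struct}$ (restricted to $V^{|V|}$) as new relations, using Lemma~\ref{lem:compute-and-order-orbits} to compute and order them. The paper's proof is terser and omits the verification of the two group equalities that you spell out; one inconsequential slip is your bound ``at most $q!$'' on the number of orbits, which should be $q^q$ (the group could be trivial), but this does not affect the argument since any constant depending only on $q$ suffices.
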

\end{samepage}
\begin{proof}
	We use Lemma~\ref{lem:compute-and-order-orbits}
	to compute the $|V|$-orbits of $\Struct$
	(note that $|V| \leq |\Struct| \leq q$).
	For every orbit, we define a separate relation
	containing the tuples of the orbit.
	This can be done, because the orbits can be ordered.
\end{proof}
Note that the restriction that $V$ is a union of orbits
is no restriction for CPT terms at all,
because no CPT term can split an orbit of $\autgroup{\Struct}$
and thus every CPT definable set is a union of orbits.

\begin{lemma}
	\label{lem:reduce-to-transitive-on-s-color-classes}
	For every constant $s\in \nat$
	$q$-bounded $\sig$-structures of arity~$r$
	can be reduced canonization preservingly in CPT
	to $q'$-bounded $\sig'$-structures of arity~$r$
	that are transitive on~$s$ color classes
	and have the following additional property:
	Let $\Struct$ be the input structure and
	$\Struct'$ the output structure of the reduction.
	Then for every color class $\colclass' \in \colorclasses{\Struct'}$ 
	there is a color class $\colclass \in \colorclasses{\Struct}$ such that
	the group $\autgroup{\colstruct'}$ is 
	a section (i.e., a quotient group of a subgroup) of $\autgroup{\colstruct}$.
\end{lemma}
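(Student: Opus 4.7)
The plan is to iteratively refine the color classes of $\Struct$ until transitivity on $s$ color classes is achieved, maintaining throughout the invariant that for every current color class $\colclass' \in \colorclasses{\Struct'}$ contained in an original color class $\colclass \in \colorclasses{\Struct}$, the group $\autgroup{\Struct'[\colclass']}$ is a section of $\autgroup{\Struct[\colclass]}$. This holds trivially at the start with $\Struct' := \Struct$.

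In each iteration, we use Lemma~\ref{lem:compute-and-order-orbits} to enumerate and linearly order in CPT all candidate pairs $(I, \colclass')$ with $I \subseteq \colorclasses{\Struct'}$, $|I| \leq s$, and $\colclass' \in I$ for which the orbits of $\autgroup{\Struct'[I]}$ on $\colclass'$ partition $\colclass'$ nontrivially; this is feasible because $|\Struct'[I]| \leq sq$ is a constant. If no such pair exists, we halt: $\Struct'$ is now transitive on $s$ color classes. Otherwise we pick the lex-first such pair and refine $\colclass'$ by the ordered orbits $\colclass''_1, \ldots, \colclass''_k$ of $\autgroup{\Struct'[I]}$, refining $\spleq'$ accordingly. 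For each $\colclass''_i$, we then apply Lemma~\ref{lem:define-aut-group-restriction} to $\Struct'[I]$ with $V = \colclass''_i$; since $\colclass''_i$ is by construction an orbit of $\autgroup{\Struct'[I]}$, the hypothesis is satisfied, and we obtain homogeneous relations on $\colclass''_i$ of arity at most $|\colclass''_i| \leq q$ that force $\autgroup{\Struct'[\colclass''_i]} = \restrictGroup{\autgroup{\Struct'[I]}}{\colclass''_i}$. This group is a quotient of the subgroup $\restrictGroup{\autgroup{\Struct'[I]}}{\colclass'} \leq \autgroup{\Struct'[\colclass']}$, hence a section of $\autgroup{\Struct'[\colclass']}$, which by the inductive invariant is a section of $\autgroup{\Struct[\colclass]}$; since sections of sections are sections, the invariant is preserved.

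Each nontrivial iteration strictly increases $|\colorclasses{\Struct'}|$, so there are at most $|\StructP|$ iterations and the whole procedure fits into a CPT iteration term of polynomial length. At termination, $\Struct'$ is transitive on $s$ color classes; color classes only shrink, giving $q' = q$; and since only homogeneous relations of arity at most $q$ are added, the inter-color-arity remains $r$. For canonization preservation, $\Psi$ simply discards from the canonically ordered output of $\Theta \circ \Phi$ the relations introduced by refinement, retaining the underlying $\sig$-structure together with the given total order, which refines $\spleq'$ and hence refines $\spleq$. The main technical obstacle is bookkeeping the growing signature: relations accumulate from many applications of Lemma~\ref{lem:define-aut-group-restriction} across color classes and iterations, so they must all be packed into a fixed output signature $\sig'$ via the tagged-tuple encoding described in the discussion preceding this section.
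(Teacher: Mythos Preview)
Your proof is correct and relies on the same two building blocks as the paper (Lemmas~\ref{lem:compute-and-order-orbits} and~\ref{lem:define-aut-group-restriction}), but the overall organisation differs. The paper performs a single global refinement: for each vertex $v$ it forms the tuple $\orbit{v} = (\orbitGroup{I}{v})_{I \in J}$ over \emph{all} subsets $I$ of at most $s$ old color classes at once, refines $\spleq$ by the lexicographic order on these tuples, and then invokes Lemma~\ref{lem:define-aut-group-restriction} on the split classes. You instead iterate: locate one violating pair $(I,\colclass')$, split $\colclass'$ into its $\autgroup{\Struct'[I]}$-orbits, repair the induced automorphism group of each piece via Lemma~\ref{lem:define-aut-group-restriction}, and repeat until no violation remains.

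What each approach buys: the paper's one-pass description is more compact. Your iterative version, however, makes correctness more transparent, since termination \emph{directly} certifies transitivity on $s$ color classes of the output. In the paper's one-pass, the refinement is only with respect to $s$-subsets of the \emph{original} color classes; a set of $s$ of the resulting \emph{new} (finer) classes can still witness a failure of transitivity, so the paper's construction is really to be read as iterated to a fixpoint---a point you make explicit. Your handling of the section invariant (via ``section of a section is a section'') is likewise more explicit than the paper's. The signature-bookkeeping caveat you flag is shared by both approaches.
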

\begin{proof}
	Let $\Struct = (\StructP, \rel_1^\Struct, \dots , \rel_k^\Struct, \spleq)$
	be a $q$-bounded $\sig$-structures of arity $r$.
	Let $J := \setcondition{I \subseteq \colorclasses{\Struct}}{|I| \leq s}$
	be the set of all possible choices of $s$ color classes.
	The total order of the color classes
	extends to a total order on $J = \set{I_1, \dots , I_m}$.
	For a set $I \in J$ we denote with $\orbitpart{\Struct[I]}$
	the partition of $\Struct[I]$ into orbits.
	For a vertex $v \in \Struct$
	we write $\orbitGroup{I}{v}$ for the orbit of $v$
	in $\Struct[I]$
	if $v \in \Struct[I]$ and $\emptyset$ otherwise.
	Finally, we define $\orbit{v} = (\orbitGroup{I_1}{v}, \dots ,\orbitGroup{I_m}{v})$.
	Note that there is a lexicographical order $\preceq_\text{lex}$ on $\bigotimes_{i \in [m]} (\orbitpart{\Struct[I_i]} \cup \set{\emptyset})$
	given by the order on $J$ and the orders on the $\orbitpart{\Struct[I_i]}$
	from Lemma~\ref{lem:compute-and-order-orbits}.
	We refine the preorder on $\StructP$ by the preorder
	$u \spleq' v := \orbit{u} \preceq_\text{lex} \orbit{v}$.
	
	Finally, we exploit Lemma~\ref{lem:define-aut-group-restriction}
	to define new relations $\rel_{k+1}^\Struct, \dots, \rel_j^\Struct$
	for all color classes that are split.
	Then the structure $\Struct' := (\StructP, \rel_1^\Struct, \dots , \rel_j^\Struct, \spleq')$ is transitive on $s$ color classes
	by construction.
	We turn a canonical copy of $\Struct'$
	into a canonical copy of $\Struct$
	by replacing the order with the coarser one
	and removing the additional relations.
	This is clearly CPT-definable.

	Let $\colclass'$ be an $\Struct'$-color class.
	By construction there is an $\Struct$-color class $\colclass$
	with $\colclass' \subseteq \colclass$.
	Because of the application of Lemma~\ref{lem:define-aut-group-restriction},
	it holds that 
	$\autgroup{\Struct'[\colclass']} = \restrictGroup{\autgroup{\Struct[\colclass]}}{\colclass'}$. Thus $\autgroup{\Struct'[\colclass']}$
	is a quotient of the setwise stabilizer  $\stab{\autgroup{\Struct'[\colclass']}}{\colclass'} \leq \autgroup{\Struct[\colclass]}$.
\end{proof}

In the previous lemma we added homogeneous relations of possibly larger arity
than present in the original structure
(with Lemma~\ref{lem:define-aut-group-restriction}).
This was necessary to ensure that color classes induce automorphism groups 
that are sections of automorphism groups originally induced by the color classes.
Recall however, that our notion of arity of a structure only takes 
heterogeneous relations into account, and for these the arity does not increase in the reduction.

In contrast to our technique, in the canonization procedure for abelian color classes
\cite[Theorem~6.8]{pakusa2015}
it is not necessary to augment the structure to maintain the automorphism groups.
This is the case because abelian automorphism groups can be ordered,
the order transfers to quotient groups,
and this is sufficient for the rest of the procedure.

The next step is to modify a structure
so that relations respect the color classes. More precisely, we want that within a relation
all tuples
come from the same color classes, i.e.,~%
$\rel_i^\Struct \subseteq \colclass_{i,1} \times \dots \times \colclass_{i, \ell_i}$
for all relations.
Additionally, we want to require that
a tuple in a heterogeneous relation contains at most one vertex of each color class,
so $\colclass_{i,j} \neq \colclass_{i,k}$ for $j \neq k$.
We formalize the notion as follows.
\begin{definition} [Typed Relations]
	Let $\Struct = (\StructP, \rel_1^\Struct, \dots , \rel_k^\Struct, \spleq)$
	be a structure of arity $r$
	and $t=(u_1, \dots, u_\ell) \in \StructP$.
	We call $\type = (\colclass_1, \dots, \colclass_\ell)$
	the \defining{type} of $t$ if $u_i \in \colclass_i$ for all $i \in [\ell]$
	and $\colclass_i \neq \colclass_j$ for all $i \neq j$.
	We denote the set of all occuring types with $\types_\Struct$,
	that is
	\[\types_\Struct = \setcondition{(\colclass_1, \dots, \colclass_\ell)}{
		(\colclass_1, \dots, \colclass_\ell) \text{ is the type of some }(u_1, \dots, u_\ell) \in \rel_j^\Struct}.\]
	We omit the subscript if not ambiguous.
	Let $\type$
	be a type.
	We say that a non-empty relation $\rel_j^\Struct$ \defining{has type $T$},
	if all $t \in \rel_j^\Struct$ have type $T$.
	We say that $\Struct$ \defining{has typed relations}
	if every relation is either homogeneous or has a type.
\end{definition}

Tuples of different types can never be mapped onto each other by
automorphisms
and we can easily separate them.

\begin{lemma}
	\label{lem:reduce-to-typed-relations}
	Relational $q$-bounded $\sig$-structures of arity $r$
	can be reduced canonization preservingly in CPT
	to $q$-bounded $\sig'$-structures of arity $r$ 	with typed relations.
	The reduction preserves transitivity on $s$ color classes
	and the automorphism groups of the color classes
	in the following sense:
	Let $\Struct$ be the input and $\Struct'$
	be the output structure of the reduction.
	Then for every $\colclass' \in \colorclasses{\Struct'}$
	there is a $\colclass \in \colorclasses{\Struct}$
	such that $\Struct[\colclass] \iso \Struct'[\colclass']$.
\end{lemma}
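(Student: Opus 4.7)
The plan is a lossy reduction. I would split each relation of $\Struct$ by the color pattern of its tuples, keep only the homogeneous and the typed sub-relations, and drop every ``mixed'' sub-relation in which some positions share a color class while others do not. Because the $\Psi$-component of a canonization-preserving reduction is allowed to read $\Struct$ itself, $\Phi$ may safely throw information away.

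For each relation $\rel_i^\Struct$ of arity $\ell_i$ and each color pattern $P = (\colclass_{f(1)}, \dots, \colclass_{f(\ell_i)})$, I would set $\rel_{i,P} := \{t \in \rel_i^\Struct \mid t_j \in \colclass_{f(j)} \text{ for all } j \in [\ell_i]\}$. The splitting is a single CPT comprehension and the number of patterns is polynomial. The output $\Struct'$ has the same domain and preorder as $\Struct$ and keeps precisely those $\rel_{i,P}$ for which $P$ is constant (then $\rel_{i,P}$ is homogeneous) or injective (then $\rel_{i,P}$ has type $P$). By construction $\Struct'$ has only homogeneous and typed relations, is $q$-bounded, and has arity $r$. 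Since every homogeneous sub-relation on a color class $\colclass$ is retained, $\Struct'[\colclass]$ and $\Struct[\colclass]$ consist of the same tuples (modulo relation names), which gives the required isomorphism $\Struct'[\colclass'] \iso \Struct[\colclass]$ with $\colclass' = \colclass$.

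Transitivity on $s$ color classes is preserved because every relation of $\Struct'$ is a subrelation of the corresponding relation of $\Struct$, whence $\autgroup{\Struct[I]} \leq \autgroup{\Struct'[I]}$ for every $I \subseteq \colorclasses{\Struct} = \colorclasses{\Struct'}$. Passing to restrictions gives $\restrictGroup{\autgroup{\Struct[I]}}{\colclass} \leq \restrictGroup{\autgroup{\Struct'[I]}}{\colclass}$, and transitivity of the smaller group entails transitivity of the larger one. For canonization-preservation, $\Psi$ reads a canonical ordered copy of $\Struct'$, extracts the induced total order on the shared vertex set, and returns $(\Struct,<)$. Isomorphism invariance of CPT guarantees that $\Psi((\Struct,\Theta(\Phi(\Struct))))$ yields identical ordered outputs on isomorphic inputs, so the composition is a CPT canonization of $\Struct$.

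The main point that requires care is the lossiness of $\Phi$: two $\sig$-structures that differ only in a mixed-pattern relation are mapped by $\Phi$ to the same $\sig'$-structure and therefore receive the same canonical form via $\Theta$. This is not actually an obstacle, because $\Psi$ embeds its first argument $\Struct$ into the output, so non-isomorphic inputs still produce non-isomorphic ordered outputs; the relational information absent from $\Phi(\Struct)$ is reintroduced by $\Psi$ directly from $\Struct$ together with the order supplied by $\Theta$. What one really has to convince oneself of is that this use of $\Struct$ inside $\Psi$ does not violate the canonicity of the output, which follows because $\Psi$ is itself a CPT-interpretation and hence isomorphism-invariant.
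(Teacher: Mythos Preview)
Your argument has a genuine gap in the construction of~$\Psi$. You write that ``$\Psi$ reads a canonical ordered copy of $\Struct'$, extracts the induced total order on the shared vertex set, and returns $(\Struct,<)$.'' But the ordered copy $\Theta(\Phi(\Struct))$ does not live on~$\StructP$: by definition an ordered copy is a pair $(\Struct'',<)$ with $\Struct''\cong\Struct'$ on some fresh universe (typically~$[n]$). There is no shared vertex set, and transferring the order back to~$\StructP$ would require singling out one isomorphism $\Struct'\to\can{\Struct'}$ among the $|\autgroup{\Struct'}|$ canonical labellings. Because you dropped the mixed-pattern sub-relations, $\autgroup{\Struct'}$ can be strictly larger than $\autgroup{\Struct}$: take $\colclass_1=\{a,b\}$, $\colclass_2=\{c\}$, and a single ternary relation $R=\{(a,a,c)\}$; your $\Phi$ discards~$R$, so $\Struct'$ no longer distinguishes $a$ from~$b$, and the two canonical labellings place $R$ at $\{(1,1,3)\}$ respectively $\{(2,2,3)\}$ on~$[3]$. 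Choosing between these in CPT is exactly the canonization problem for~$\Struct$ that the reduction was supposed to simplify. Having $\Struct$ available as the first argument of~$\Psi$ does not help: it lets you \emph{see} the mixed relation, but not place it canonically on the ordered universe of $\can{\Struct'}$.

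The paper sidesteps this by making $\Phi$ information-preserving rather than lossy. It creates $r$ position-indexed copies $\colclass_p$ of each color class, linked by perfect matchings, and routes position~$p$ of every tuple into the $p$-th copy; a tuple with pattern $(\colclass_1,\colclass_1,\colclass_2)$ thus lands in $\colclass_{1,1}\times\colclass_{1,2}\times\colclass_{2,3}$, which is typed because the position indices differ. Since no information is lost, $\Psi$ can operate purely on the ordered copy of~$\Struct'$---contracting the matchings and merging the split relations---without needing to consult~$\Struct$ at all, and the correspondence problem above never arises.
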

\begin{proof}
	Let $\Struct = (\StructP, \rel_1^\Struct, \dots , \rel_k^\Struct, \spleq)$
	be a $q$-bounded $\sig$-structure of arity $r$.
	We define for every color class $\colclass \in \colorclasses{\Struct}$
	and every $p \in [r]$
	a copy of $\colclass$ 
	\[\colclass_{p} = \setcondition{u_p}{u \in \colclass}\]
	where $u_p$ denotes a new atom.
	The intention is that $\colclass_{p}$ only occurs at position $p$ in 
	tuples of the heterogeneous relations.
	We set \[\StructP' := \bigcup_{\colclass\in \colorclasses{\Struct}, p \in [r]}{\colclass_{p}}\]
	and define a new order $\spleq'$ such that
	$\colclass_{p} \spless' \colclass_{p'}'$
	if and only if $(C, p)$ is lexicographically smaller than $(C',p')$.
	
	We next define relations connecting the copies of the color classes 
	with a perfect matching.
	For every $\colclass \in \colorclasses{\Struct}$ and $p \in [r-1]$ we set
	\begin{align*}
	\rel_{=, \colclass, p}^{\Struct'} :=  \setcondition{(u_p, u_{p+1}) \in \colclass_p \times \colclass_{p+1}}
	{u \in \colclass}.
	\end{align*}
	Clearly, $\rel_{=,\colclass,p}^{\Struct'}$ has type $(\colclass_p, \colclass_{p+1})$.
	To ensure that the color classes $\colclass_{p}$ are isomorphic to 
	$\colclass$,
	we not only have to copy the vertices,
	but the whole substructure $\colstruct$.
	We set $M_{\colclass,j,\ell} := \setcondition{t|_{\colclass} \subseteq \colclass^\ell}{t \in \rel_j^\Struct}$
	to be the restrictions of all tuples in $\rel_j^\Struct$ to $\colclass$
	of length $\ell$.
	
	We define the copies of the relations restricted to the color classes:
	\begin{align*}
	\rel_{\colclass, p, j,\ell}^{\Struct'} := \setcondition{(u^1_p, \dots , u^\ell_p) \in \colclass_p^\ell}{(u^1, \dots, u^\ell) \in M_{\colclass, j, \ell}}
	\end{align*}
	for all $\colclass \in \colorclasses{\Struct}, p \in [r], j \in [k]$ and $\ell \in [r]$.
	The relation $\rel_{\colclass, p, j,\ell}^{\Struct'}$ is homogeneous.
	
	Finally, we split every relation $\rel_j^\Struct$ by defining for every 
	possible combination of color classes $ (\colclass_1, \dots, \colclass_\ell) \in \colorclasses{\Struct}^{\leq r}$
	\begin{align*}
	\rel_{j, (\colclass_1, \dots, \colclass_\ell)}^{\Struct'} := \setcondition{(u^1_1, \dots , u^\ell_\ell)}{
	(u^1, \dots, u^\ell) \in \rel_j^\Struct, u^i \in \colclass_i \text { for all } i \in [\ell]}.
	\end{align*}
	As intended, $\rel_{j,(\colclass_1, \dots, \colclass_\ell)}^{\Struct'}$ has type $(\colclass_{1,1}, \dots, \colclass_{\ell, \ell})$ if it is non-empty.
	Then the structure~$\Struct'$
	obtained by all the defined non-empty relations
	(which can easily be ordered)
	has the desired property.
	
	The reduction is canonization preserving
	because by contracting the perfect matching relations~$\rel_{=, \colclass, p}^{\Struct'}$
	and undoing the splitting by type
	we recover $\Struct$.
	This operation can be defined in CPT.
	Assume that $\Struct$ is transitive on $s$-color classes
	and let $I' \subseteq \colorclasses{\Struct'}$ be a set of $|I'| \leq s$
	color classes.
	We consider the substructure $\Struct'[I']$
	and contract all perfect matching relations.
	For the obtained structure $\Struct''$
	there is an $I \subseteq \colorclasses{\Struct}$
	with $|I| \leq s$ such that $\Struct'' \subseteq \Struct[I]$.
	$\Struct''$ is not necessarily induced, because some tuples in relations may be missing.
	But all missing relations are of different type,
	hence an automorphism of $\Struct[I]$ is always an automorphism of~$\Struct''$. 
	So $\Struct''$ is transitive on $s$-color classes
	because $\Struct[I]$ is so
	and thus also $\Struct'[I']$ and $\Struct'$.
\end{proof}

\subsection{$2$-Injective Subdirect Products and Quotients}

In the end, want to achieve that the automorphism group
of three color classes is always a $2$-injective subdirect product.
Before we can proceed to that,
we need to modify the color classes to allow further operations on them.

A color class is called \defining{regular},
if its automorphism group is regular.
We now show that we can replace in
a structure
every color class $\colclass$ by a regular color class $\colclass'$
satisfying $\autgroup{\colstruct} \iso \autgroup{\colstruct'}$.
Note that a permutation group $\group \leq \SymSetGroup{\Omega}$ is regular
if $\group$ is transitive and $|\Omega| = |\group|$.

\begin{lemma}
	\label{lem:regular-and-faithful-k-orbit}
	Let $\group \leq \SymSetGroup{\Omega}$ be a group
	with domain $\Omega$.
	Then there is an $\ell \leq |\group|$
	and an $\ell$-orbit of $\group$
	on which $\group$ acts regularly and faithfully.
\end{lemma}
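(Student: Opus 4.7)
My plan is to construct the tuple by iteratively shrinking stabilizers. The goal is to build a tuple $(v_1, \dots, v_\ell) \in \Omega^\ell$ such that its pointwise stabilizer in $\group$ is trivial, since this immediately yields a regular action of $\group$ on the $\ell$-orbit of this tuple (and regular transitive actions are automatically faithful).

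Concretely, I set $\group_0 := \group$ and iterate as follows: as long as $\group_i \neq \{1\}$, pick any non-identity element $g \in \group_i$. Since $\group \leq \SymSetGroup{\Omega}$ is a faithful permutation group by definition, there must exist some $v_{i+1} \in \Omega$ with $g(v_{i+1}) \neq v_{i+1}$. Set $\group_{i+1} := \stab{\group_i}{v_{i+1}}$. Then $g \in \group_i \setminus \group_{i+1}$, so $\group_{i+1}$ is a proper subgroup of $\group_i$, whence by Lagrange $|\group_{i+1}|$ divides $|\group_i|$ and thus $|\group_{i+1}| \leq |\group_i|/2$. The process therefore terminates after some $\ell \leq \log_2 |\group| \leq |\group|$ steps with $\group_\ell = \{1\}$.

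Now consider the $\ell$-orbit $O := \orbitGroup{\group}{(v_1, \dots, v_\ell)}$ under the componentwise action. By construction, the stabilizer of $(v_1, \dots, v_\ell)$ in $\group$ under this action is exactly $\group_\ell = \{1\}$. By the orbit-stabilizer theorem, $|O| = |\group|/|\group_\ell| = |\group|$, so $\group$ acts transitively on $O$ with domain size $|\group|$; this is the definition of a regular action. Faithfulness is immediate since the kernel of the action on $O$ is contained in the stabilizer of any single element of $O$, which is trivial.

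The argument is essentially the standard construction of a base for a permutation group, so no substantial obstacle is expected; the only point that merits care is observing that faithfulness of $\group$ as a subgroup of $\SymSetGroup{\Omega}$ is what guarantees the existence of the moved point $v_{i+1}$ at each step. The bound $\ell \leq |\group|$ is very loose (the construction in fact gives $\ell \leq \log_2 |\group|$), but it suffices for the statement.
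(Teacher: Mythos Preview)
Your proof is correct and takes essentially the same approach as the paper: both use a base of the permutation group and observe that $\group$ acts regularly and faithfully on the orbit of the base tuple. The paper simply invokes the existence of a minimum base (stating $\ell \leq \log |\group|$ without further justification), whereas you spell out the standard iterative stabilizer-chain construction and verify the regularity via orbit-stabilizer; the content is the same.
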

\begin{proof}
Let~$(g_1,\ldots,g_\ell)$ be a  base of minimum order of $\group$,
that is the only element of~$\group$ fixing all~$g_i$ is~$1$ and~$\ell$ is minimal.
Then~$\ell \leq \log |\group|$.
The group acts regularly and faithfully on the orbit of~$(g_1,\ldots,g_\ell)$.
\end{proof}

Note that a construction taking $\group$ as new domain
cannot succeed in CPT,
because the elements in $\group$ are distinguishable,
e.g.~$1\in \group$ is different from all others.

\begin{lemma}
	\label{lem:reduce-to-regular-color-classes}
	Relational $q$-bounded $\sig$-structures of arity $r$
	with typed relations
	can be reduced canonization preservingly in CPT
	to $q'$-bounded $\sig'$-structures of arity $r$
	with typed relations and regular color classes.
	The reduction preserves transitivity on $s$ color classes
	and the automorphism groups
	(in the sense of Lemma~\ref{lem:reduce-to-typed-relations}).	
\end{lemma}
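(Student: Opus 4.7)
The plan is to replace each color class by a canonically chosen regular orbit of its automorphism group, coupled to the other color classes through equivariant preimage relations.

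For each color class $\colclass \in \colorclasses{\Struct}$ separately, I would compute $\autgroup{\colstruct}$ implicitly via its $k$-orbits using Lemma~\ref{lem:compute-and-order-orbits}, which applies because $|\colclass| \leq q$ is a constant. Lemma~\ref{lem:regular-and-faithful-k-orbit} guarantees the existence of an $\ell \leq \log_2|\autgroup{\colstruct}|$ and an $\ell$-orbit on which $\autgroup{\colstruct}$ acts regularly and faithfully. Taking the minimum such $\ell$, and then the canonically smallest such $\ell$-orbit $O_\colclass \subseteq \colclass^\ell$ in the order provided by Lemma~\ref{lem:compute-and-order-orbits}, yields a canonical choice. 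Note that $\ell$ and $|O_\colclass| \leq q!$ are bounded by a constant depending only on $q$.

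Next I build $\Struct'$ by taking $\colclass'$ to be a set of fresh atoms (constructed in $\HF{\StructP}$) in bijection with the tuples of $O_\colclass$, with the color-class preorder inherited from $\Struct$. To get the homogeneous relations on each $\colclass'$, I apply Lemma~\ref{lem:define-aut-group-restriction} to the small structure obtained by augmenting $\colstruct$ with the tuples of $O_\colclass$ viewed as vertices, choosing $V := \colclass'$. This produces homogeneous relations on $\colclass'$, of arity at most $|\colclass'| \leq q!$, such that $\autgroup{\Struct'[\colclass']}$ equals the action of $\autgroup{\colstruct}$ on $O_\colclass$, which is regular and faithful by choice. For each heterogeneous relation $\rel^\Struct$ of type $(\colclass_1, \dots, \colclass_k)$, I define $\rel^{\Struct'}$ of type $(\colclass'_1, \dots, \colclass'_k)$ as the preimage under first-coordinate projection: $(\bar{t}_1, \dots, \bar{t}_k) \in \rel^{\Struct'}$ iff $(\bar{t}_1|_1, \dots, \bar{t}_k|_1) \in \rel^\Struct$. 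Since $\bar{t}_i|_1$ lies in $\colclass_i$ by construction, this respects the type, and the arity $r$ is preserved.

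To verify the claim, I would check that $\Struct'$ is $q'$-bounded for $q' := q!$, has typed relations of inter-color-arity $r$, and has regular color classes. Transitivity on $s$ color classes is inherited because first-coordinate projection $\colclass' \to \colclass$ is $\autgroup{\colstruct}$-equivariant and surjective, so any set of $s$ color classes in $\Struct'$ maps onto a corresponding set in $\Struct$ and the induced action on each $\colclass'$ remains transitive. Canonization is preserved: given an ordered copy of $\Struct'$, I recover an ordered copy of $\Struct$ by ranking each $v \in \colclass$ according to the least $\bar{t} \in \colclass'$ with $\bar{t}|_1 = v$, which gives a total order refining the original preorder.

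The main obstacle is to argue that $\autgroup{\Struct'[\colclass']}$ is \emph{exactly} the regular action of $\autgroup{\colstruct}$ on $O_\colclass$ and that no spurious global automorphisms appear: automorphisms of $\Struct'$ lift to automorphisms of $\Struct$ via coordinate-1 projection. The first point follows from Lemma~\ref{lem:define-aut-group-restriction} applied as above. The second relies on the $\autgroup{\colstruct}$-equivariance of coordinate-1 projection together with the preimage definition of the heterogeneous relations, which together force the component-wise actions on the various $\colclass'$ to be mutually compatible exactly when the induced actions on the $\colclass$ are so.
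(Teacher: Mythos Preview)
Your approach is essentially the paper's: pick the canonically least regular $\ell$-orbit $O_\colclass$ of $\autgroup{\colstruct}$ (via Lemmas~\ref{lem:regular-and-faithful-k-orbit} and~\ref{lem:compute-and-order-orbits}), replace $\colclass$ by fresh atoms indexed by $O_\colclass$, fix the local automorphism group with Lemma~\ref{lem:define-aut-group-restriction}, and couple to the rest of the structure through first-coordinate projection. The paper does exactly this, processing one color class at a time.

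There is, however, a real gap in your canonization recovery. You propose to rank each $v \in \colclass$ by the least $\bar t \in \colclass'$ with $\bar t|_1 = v$, but a canonical copy of $\Struct'$ in general lives on fresh atoms (ordinals, say), not on the HF-encoded tuples; the map $\bar t \mapsto \bar t|_1$ is therefore unavailable in the ordered copy. Nothing in your $\Struct'$ records the first-coordinate fibres: the homogeneous relations produced by Lemma~\ref{lem:define-aut-group-restriction} pin down the regular action of $\autgroup{\colstruct}$ on $O_\colclass$, but that action admits several block systems, and the first-coordinate partition is not singled out by the relational data you keep. The paper closes this gap by adding one extra homogeneous relation $\rel_{k'}^{\Struct'} = \{(u_t,u_{t'}) : t(1)=t'(1)\}$ on $\colclass'$, a disjoint union of cliques in bijection with $\colclass$. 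This relation survives in any isomorphic copy, so one simply contracts the cliques in the ordered copy to obtain an ordered copy of $\Struct$. Adding this relation (or, equivalently, carrying the original homogeneous relations across via first-coordinate preimage as the paper also does) repairs your argument with no further change.
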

\begin{proof}
	
	Let $\Struct = (\StructP, \rel_1^\Struct, \dots, \rel_k^\Struct, \spleq)$ be a $q$-bounded structure of arity $r$ with typed relations
	and $\colclass \in \colorclasses{\Struct}$.
	We show that there is a CPT term which on input $\Struct$ and $\colclass$  replaces~$\colclass$ by a regular color class~$\colclass'$
	such that the structure 
	$\Struct' = ((\StructP\setminus \colclass) \cup \colclass', \rel_1^{\Struct'}, \dots, \rel_{k'}^{\Struct'}, \spleq')$ 
	has typed relations,
	$\Struct[\StructP \setminus\colclass] = \Struct'[\StructP'\setminus \colclass']$,
	$\colclass'$ is regular, 
	$\autgroup{\Struct[\colclass]} \iso \autgroup{\Struct'[\colclass']}$,
	and transitivity on $s$ color classes is preserved.
	By applying this reduction subsequently to all color classes of $\Struct$
	we obtain the claimed reduction.

	We set $\group := \autgroup{\Struct[\colclass]}$.
	Let $\ell \leq |\group|$
	be the smallest number such that
	there is an $\ell$\nobreakdash-orbit $O$ of $\group$
	on which $\group$ acts regularly and faithfully
	(we pick the minimal orbit $O$ with Lemma~\ref{lem:compute-and-order-orbits}).
	Such an $\ell$ exists by Lemma~\ref{lem:regular-and-faithful-k-orbit}.
	We now replace $\colclass$ with $O$.
	The new color class $\colstruct'$ is given by
	\[\colclass' := \setcondition{u_t}{t \in O},\]
	where the $u_t$ are fresh atoms.
	As a first step, we need to ensure that the automorphism group of $\colclass'$ is isomorphic to the one of $\colclass$.
	This is done as in Lemma~\ref{lem:define-aut-group-restriction}.
	We define relations $\rel^{\Struct'}_{k+1}, \dots, \rel^{\Struct'}_{k'-1}$
	for the $|O|$-orbits of the permutation group on $\colclass'$
	induced by $\group$.
	Next, we add another relation $\rel^{\Struct'}_{k'}$
	that identifies vertices of $\colclass'$ with a single vertex of $\colclass$:
	\[\rel^{\Struct'}_{k'} := \setcondition{(u_t, u_{t'}) \in \colclass'^2}{t(1) = t'(1)}.\]
	We group vertices of $\colclass'$
	by the first position of the tuples $t$,
	with the intention,
	that we use all vertices $u_t$ with first position $t(1) = u \in \colclass$
	as replacement for $u$ in $\Struct'$.
	The relation $\rel^{\Struct'}_{k'}$ is a disjoint union of cliques
	and the cliques correspond one-to-one to the elements of $\colclass$.
	
	Last, we need to adapt the existing relations.
	Let $\rel_j^\Struct$ be a relation of type $(\colclass_1, \dots, \colclass_\ell)$ such that $\colclass_i = \colclass$.
	We replace it with the following relation:
	\[\rel_j^{\Struct'} := \setcondition{(v_1,\dots, v_{i -1}, u_t,v_{i+1},\dots, v_\ell)}{t(1) = v_i, (v_1,\dots,v_\ell) \in \rel_j^\Struct}.\]
	That is, we relate vertices in other color classes
	to all new vertices with first position $v_i$.
	The relation $\rel_j^{\Struct'}$ has the same type as $\rel_j^{\Struct}$
	except that $\colclass$ is replaced with $\colclass'$.

	Assume that $\rel_j^\Struct \subseteq \colclass^m$ is a homogeneous relation.
	We replace the vertices in the relation in the same manner as before:
	\[\rel_j^{\Struct'} := \setcondition{(u_{t_1} ,\dots, u_{t_m}) \in \colclass'^m}{(u_1,\dots, u_m) \in \rel_j^{\Struct}, t_i(1) = u_i, \text{ for all } i \in [m]}.\]
	All remaining relations (those whose type does not involve $\colclass$) are just copied.
	We obtain the structure $\Struct'$
	by the definitions above 
	and the preorder $\spleq'$ defined by
	replacing~$\colclass$ with~$\colclass'$ in $\spleq$.
	By construction we have 
	$\autgroup{\Struct'[\colclass']} \iso \group = \autgroup{\Struct[\colclass]}$
	and that~$\colclass'$ is regular.
 	One easily checks that the reduction is CPT-definable
 	and preserves transitivity on $s$ color classes.
 	It is canonization preserving,
 	because we can contract the cliques in $\rel_{k'}^{\Struct'}$
 	in a canonical copy of $\Struct'$
 	to obtain a canonical copy of $\Struct$.
\end{proof}
Next, we exploit regularity of the color classes to construct quotient groups.
They will be important to reduce to $2$-injective subdirect products.

\begin{definition}[Quotient Color Class]
	\label{def:quotient-construction}
	Let  $\Struct = (\StructP, \rel_1^\Struct, \dots, \rel_k^\Struct, \spleq)$
	be a structure
	and the automorphism group of $\colclass \in \colorclasses{\Struct}$ be regular.
	Let $N \normal \autgroup{\colstruct}$.
	We say that another color class~$\colclass'$
	is an \defining{$N$-quotient of $\colclass$}
	if $\autgroup{\colstruct'} \iso \autgroup{\colstruct}/N$
	and there is a function 
	$\rel_j^\Struct \subseteq \colclass \times \colclass'$
	determining the orbit partition of $N$ acting on $\colclass$,
	i.e., a vertex in $\colclass'$ corresponds to an $N$ orbit
	and the vertices of $\colclass$ are adjacent to its orbit vertices
	via $\rel_j^\Struct$.
	The relation $\rel_j^\Struct$ is called the 
	\defining{orbit-map} (of $\colclass$).
\end{definition}

The following lemma states that quotient groups
of regular permutation groups can be defined using the
orbits of the normal subgroup.

\begin{lemma}
	\label{lem:regular-perm-group-quotient}
	Let $\group \leq \SymSetGroup{\Omega}$ be a regular permutation group and
	$N \normal \group$.
	Then $\group$
	acting on the set $\orbitpart{N}$
	forms a regular permutation group isomorphic to $\group / N$.
\end{lemma}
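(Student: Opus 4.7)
The plan is to use regularity of $\group$ to identify $\Omega$ with the underlying set of $\group$ and then translate everything into statements about cosets.

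First, I would fix a base point $u_0 \in \Omega$. By regularity, the map $g \mapsto g(u_0)$ is a bijection from $\group$ to $\Omega$, and under this identification the action of $\group$ on $\Omega$ becomes left multiplication. The $N$-orbit of $g(u_0)$ is then $\{ng(u_0) : n \in N\}$, which corresponds to the right coset $Ng$. Since $N \normal \group$, right cosets coincide with left cosets, so the set of orbits $\orbitpart{N}$ is in canonical bijection with $\group / N$.

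Next, I would analyse the induced action of $\group$ on $\orbitpart{N}$. Translated to cosets, an element $h \in \group$ sends the orbit corresponding to $Ng$ to the orbit corresponding to $Nhg$. Let $\phi \colon \group \to \SymSetGroup{\orbitpart{N}}$ be the resulting homomorphism. Transitivity is immediate, because for any two cosets $Ng_1, Ng_2$ the element $h := g_2 g_1^{-1}$ satisfies $Nhg_1 = Ng_2$. For the kernel, $h \in \kernel{\phi}$ iff $Nhg = Ng$ for all $g$, which by taking $g = 1$ forces $h \in N$; conversely every $h \in N$ visibly lies in the kernel. Hence $\kernel{\phi} = N$ and $\img{\phi} \iso \group/N$ by the first isomorphism theorem.

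Finally I would verify regularity of the image. Since $\group$ acts regularly on $\Omega$, the stabilizer of any point is trivial, so the restricted action of $N$ on $\Omega$ is semiregular, and each $N$-orbit has size exactly $|N|$. Therefore $|\orbitpart{N}| = |\Omega|/|N| = |\group|/|N| = |\group / N| = |\img{\phi}|$. Combined with transitivity, this shows that $\img{\phi}$ acts regularly on $\orbitpart{N}$. No step presents a real obstacle; the only point requiring mild care is keeping left/right conventions straight, which is why normality of $N$ is used explicitly when identifying the two kinds of cosets.
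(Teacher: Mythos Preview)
Your proposal is correct and follows essentially the same approach as the paper: both establish that the induced action on $N$-orbits has image isomorphic to $\group/N$, verify transitivity, and then count $|\orbitpart{N}| = |\Omega|/|N| = |\group/N|$ via semiregularity of $N$ to conclude regularity. The only difference is presentational: you fix a base point and translate everything into coset language before invoking the first isomorphism theorem, whereas the paper works directly with the map $\perm N \mapsto \perm_N$ and simply asserts it is an isomorphism.
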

\begin{proof}
	Since~$N$ is normal in~$\Gamma$, every permutation $\perm \in \group$
		induces a permutation $\perm_N$ of the $N$-orbits.	
	Thus $\group$ induces a permutation group $\group'$ on
	$\orbitpart{N}$.
	One easily checks that the map $\psi \colon \group / N \to \group'$
	defined by $\perm N \mapsto \perm_N$
	is an isomorphism.
	The group $\group'$ is transitive because $\group$ is transitive.
	Let there be an $N$-orbit of size $k$.
	Then in particular $|N| = k$,
	because otherwise $\group$ was not regular.
	But so all $N$-orbits have size~$k$
	and $|\orbitpart{N}| = |\Omega|/k = |\group| / k = |\group / N| = |\group'|$. Hence, $\group'$ is regular.
\end{proof}

We exploit the previous lemma to construct quotient color classes.

\begin{lemma}
	\label{lem:quotient-color-classes}
	There is a canonization-preserving CPT-reduction
	that given a
	$q$\nobreakdash-bounded $\sig$\nobreakdash-structure
	$\Struct = (\StructP, \rel_1^\Struct, \dots, \rel_k^\Struct)$ of arity $r$,
	a regular color class $\colclass \in \colorclasses{\Struct}$,
	and $N \normal \autgroup{\colstruct}$,
	outputs a $q$-bounded $\sig'$-structure
	$\Struct' = (\StructP \disunion  \colclass', \rel_1^\Struct, \dots, \rel_k^\Struct, \rel_\orbsym, \rel'_1,
	\dots, \rel'_\ell, \spleq')$ for some~$\ell$
	such that $\Struct'[\StructP] = \Struct$,%
	~$\colclass'$ is
	an $N$-quotient of~$\colclass$,~$\colclass'$ is only connected to~$\colclass$ and only
	via~$\rel_\orbsym$,
	and all other new relations $\rel'_i$ are homogeneous (and thus in $\colclass'$)%
	\footnote{So formally this reduction is from triples consisting of a structure, a color class, and a normal subgroup to structures, each satisfying the condition stated in the lemma.}.
\end{lemma}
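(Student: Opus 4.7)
The plan is to realize $\colclass'$ directly as the set of $N$-orbits on $\colclass$, exploiting regularity and boundedness. Since $|\colclass| \leq q$, the group $\autgroup{\colstruct}$ has size at most $q!$ and so does $N$; hence the $N$-orbits $N(v) := \setcondition{g(v)}{g \in N}$ for $v \in \colclass$ form a bounded family of bounded HF sets that is readily CPT-definable from the input triple $(\Struct, \colclass, N)$. I take the new vertices $\colclass'$ to be (copies of) these orbits, which are automatically disjoint from $\StructP \subseteq \Atoms$ since they are non-atomic HF sets. The preorder $\spleq'$ extends $\spleq$ by placing $\colclass'$ as a fresh color class immediately after $\colclass$, and the orbit-map is the incidence relation $\rel_\orbsym := \setcondition{(v, O) \in \colclass \times \colclass'}{v \in O}$.

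To obtain the correct automorphism group, invoke Lemma~\ref{lem:regular-perm-group-quotient}: the induced action of $\autgroup{\colstruct}$ on $\orbitpart{N}$ is regular and isomorphic to $\autgroup{\colstruct}/N$. The relation $\rel_\orbsym$ forces every automorphism of $\colstruct$ to extend uniquely to an automorphism of $\Struct'[\colclass \cup \colclass']$ and admits no further automorphisms of that substructure. To guarantee that $\colstruct' := \Struct'[\colclass']$ in isolation already has the correct automorphism group, I apply Lemma~\ref{lem:define-aut-group-restriction} to the bounded substructure $\Struct'[\colclass \cup \colclass']$ (of size at most $2q$) with $V := \colclass'$, yielding homogeneous relations $\rel'_1, \dots, \rel'_\ell$ on $\colclass'$ such that $\autgroup{\colstruct'} = \restrictGroup{\autgroup{\Struct'[\colclass \cup \colclass']}}{\colclass'} \iso \autgroup{\colstruct}/N$. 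By construction the only heterogeneous relation touching $\colclass'$ is $\rel_\orbsym$, and every other relation added on $\colclass'$ is homogeneous, as demanded. Since $|\colclass'| \leq |\colclass| \leq q$ and the added homogeneous relations have arity at most $|\colclass'| \leq q$, the output remains $q$-bounded.

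Definability in CPT is routine: enumerating the elements of $N$, computing the $N$-orbits, forming the new atoms, defining $\rel_\orbsym$, and invoking Lemma~\ref{lem:define-aut-group-restriction} all operate on HF sets of bounded size. For canonization preservation, observe that $\Struct'[\StructP] = \Struct$ and that $\colclass'$ is the distinguished new color class of $\spleq'$: given a canonical ordered copy $(\Struct'', <'')$ of $\Struct'$, the CPT term that discards the vertices lying in the color class corresponding to $\colclass'$ and restricts $<''$ to the remaining universe produces an ordered copy of $\Struct$. The only point requiring care is ensuring that the new atoms for $\colclass'$ are built in a canonical, choiceless way and that Lemma~\ref{lem:define-aut-group-restriction} contributes only relations confined to $\colclass'$; both are immediate from the construction, and the reduction slots cleanly into the normal-form pipeline of the preceding lemmas.
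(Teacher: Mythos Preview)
Your proof is correct and follows essentially the same route as the paper: define $\colclass'$ as (indexed by) the $N$-orbits, add the incidence relation $\rel_\orbsym$, invoke Lemma~\ref{lem:regular-perm-group-quotient} to identify the induced action with $\autgroup{\colstruct}/N$, then apply Lemma~\ref{lem:define-aut-group-restriction} to pin down $\autgroup{\colstruct'}$ via homogeneous relations, and finally observe that deleting $\colclass'$ recovers $\Struct$. The only cosmetic difference is that you take the orbits themselves as HF-set vertices rather than introducing fresh atoms $u_O$ indexed by orbits, which is immaterial.
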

\begin{proof}
	Let $\Struct$, $\colclass$, and $N$ be as stated in the lemma.
	We set $\group := \autgroup{\colstruct}$.
	We define new atoms $\colclass' := \setcondition{u_O}{O \in \orbitpart{N}}$
	and realize the orbit-map by the relation
	\[\rel_{\orbsym} := \setcondition{(u,u_O) \in \colclass \times \colclass'}{u \in O}.\]
	
	Let $\Struct_\colclass := (\colclass\cup \colclass', \rel_{\orbsym}, \spleq_\colclass) \cup \colstruct$ be the structure 
	consisting of $\colclass$ and the attached new vertices
	and $\spleq_\colclass$ be defined such that $\colclass \spless_\colclass \colclass'$.
	Let $\groupB := \autgroup{\Struct_\colclass} \leq \group \times \SymSetGroup{\colclass'}$.
	Let $(\phi, \psi) \in \groupB$.
	By Lemma~\ref{lem:regular-perm-group-quotient}
	the automorphism $\phi$ permutes the $N$-orbits.
	Because every orbit vertex $u_O \in \colclass'$
	is adjacent to all vertices contained in $O$,
	$\psi$ has to be the permutation of $N$-orbits
	corresponding to $\phi$.
	So $\group' := \restrictGroup{\groupB}{\colclass'}$
	is the permutation group of $N$-orbits given by~$\group$
	and from Lemma~\ref{lem:regular-perm-group-quotient}
	follows that $\group' \iso  \group / N$.
	
	We apply Lemma~\ref{lem:define-aut-group-restriction}
	and define homogeneous relations $\rel'_1, \dots, \rel'_\ell$ on $\colclass'$
	such that $\autgroup{(\colclass', \rel'_1, \dots, \rel'_\ell)} \iso \group'$.
	Then $\colclass'$ is an $N$-quotient of $\colclass$.
	Finally, $\Struct'$ is obtained
	by defining $\spleq'$ to be $\spleq$
	with $\colclass'$
	as immediate successor of $\colclass$.
	The reduction is canonization preserving,
	because we just need to remove $\colclass'$.
\end{proof}

Now we are prepared to turn to structures with $2$-injective subdirect products
as local automorphism groups:

\begin{definition}
	Let $\Struct = (\groupvertices \disunion \extensionvertices, \rel_1^\Struct, \dots , \rel_k^\Struct, \spleq)$
	be a structure,
	where $\groupvertices$ and $\extensionvertices$ are unions of color classes.
	We call a color class $\colclass \subseteq \groupvertices$
	(respectively $\colclass \subseteq \extensionvertices$)
	a \defining{group color class} (respectively an \defining{extension color class}).
	We define the \defining{group types} $\grouptypes^\Struct \subseteq \types^\Struct$ to be the set of all types only consisting of group color classes.
	For $\type = (\colclass_1, \dots, \colclass_j) \in \grouptypes^\Struct$.
	we set $\typegroupStruct{\Struct}{\type} := \autgroup{\Struct[\bigcup_{i \in [j]}\colclass_i]}\leq \bigotimes_{i \in [j]} \autgroup{\colstruct_i}$.
	
	Finally, we call~$\Struct$
	an \defining{$(r-1)$-injective quotient structure}
	if it satisfies the following:
	\begin{itemize}
		\item $\Struct$ is of arity $r$, has typed relations, and all color classes are regular.
		\item Every group color class $\colclass \subseteq \groupvertices$
		is an $N$-quotient of exactly one extension color class
		$\colclass' \subseteq \extensionvertices$,
		where $N \normal \autgroup{\colstruct'}$,
		and not related to any other extension color class apart from $\colclass'$.
		Moreover, $\colclass$ is only related by the orbit-map to $\colclass'$.
		\item 
		All relations only between group color classes are of arity exactly $r$
		and every group color class occurs in only one group type.
		\item
		For every $\type \in \grouptypes^\Struct$
		the group $\typegroupStruct{\Struct}{\type}$
		is an $(r-1)$-injective subdirect product.
	\end{itemize}
	
\end{definition}
We leave out the superscripts if the structure $\Struct$ is clear from the context.

\tikzstyle{colorclass} = [circle, draw=black, minimum width=8mm]
\tikzstyle{quotient} = [circle, draw=black, inner sep=0.2mm, minimum width=5mm]
\tikzstyle{qvertex} = [circle, fill=black, inner sep=0.3mm]
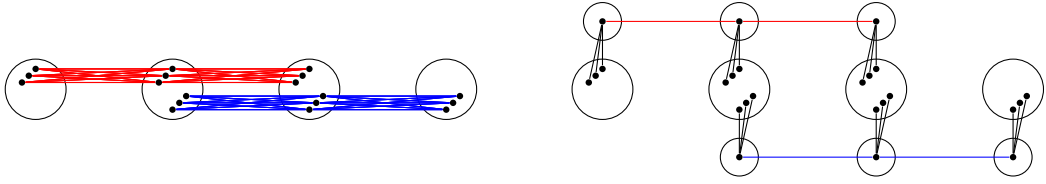
\begin{figure}
	\centering
	
		\begin{tikzpicture}[>=stealth, scale=0.9]

	\def\quotientY {1}
	\def\groupY{0}
	
	\draw[opacity = 0, use as bounding box] (-0.5,{\quotientY +0.5}) rectangle (6.5,{-\quotientY -0.5});

	\node(C1)[colorclass] at (0,\groupY) {};
	\node(C2)[colorclass] at (2,\groupY) {};
	\node(C3)[colorclass] at (4,\groupY) {};
	\node(C4)[colorclass] at (6,\groupY) {};
	
	\foreach \i in {1,...,3} {
		\node (R1\i)[qvertex] at ($(C1.center) + \i*(0.1,0.1) - (0.3,0)$) {};
		\node (R2\i)[qvertex] at ($(C2.center) + \i*(0.1,0.1) - (0.3,0)$) {};
		\node (R3\i)[qvertex] at ($(C3.center) + \i*(0.1,0.1) - (0.3,0)$) {};	
	}
	
	\foreach \x in {1,...,3} {
		\foreach \y in {1,...,3} {
			\foreach \z in {1,...,3} {
				\path[red,-]
				(R1\x) edge 
				(R2\y)  
				(R2\y) edge
				(R3\z)  ;
			}
		}
	}
	
	\foreach \i in {1,...,3} {
		\node (B2\i)[qvertex] at ($(C2.center) - \i*(0.1,0.1)+ (0.3,0)$) {};
		\node (B3\i)[qvertex] at ($(C3.center) - \i*(0.1,0.1)+ (0.3,0)$) {};
		\node (B4\i)[qvertex] at ($(C4.center) - \i*(0.1,0.1)+ (0.3,0)$) {};	
	}
	
	\foreach \x in {1,...,3} {
		\foreach \y in {1,...,3} {
			\foreach \z in {1,...,3} {
				\path[blue,-]
				(B2\x) edge 
				(B3\y)  
				(B3\y) edge
				(B4\z)  ;
			}
		}
	}

	\end{tikzpicture} %
	\hspace{1cm}%
	\begin{tikzpicture}[>=stealth, scale=0.9]
	
	\def\quotientY {1}
	\def\groupY{0}
	
	\draw[opacity = 0, use as bounding box] (-0.5,{\quotientY +0.5}) rectangle (6.5,{-\quotientY -0.5});
	
	\node(C1)[colorclass] at (0,\groupY) {};
	\node(C2)[colorclass] at (2,\groupY) {};
	\node(C3)[colorclass] at (4,\groupY) {};
	\node(C4)[colorclass] at (6,\groupY) {};

	\node(Q11)[quotient] at (0,\quotientY) {};
	\node(Q12)[quotient] at (2,\quotientY) {};
	\node(Q13)[quotient] at (4,\quotientY) {};
	\node(Q22)[quotient] at (2,-\quotientY) {};
	\node(Q23)[quotient] at (4,-\quotientY) {};
	\node(Q24)[quotient] at (6,-\quotientY) {};
	
	\node(Q11c)[qvertex] at (Q11.center) {};
	\node(Q12c)[qvertex] at (Q12.center) {};
	\node(Q13c)[qvertex] at (Q13.center) {};
	\node(Q22c)[qvertex] at (Q22.center) {};
	\node(Q23c)[qvertex] at (Q23.center) {};
	\node(Q24c)[qvertex] at (Q24.center) {};

	\foreach \i in {1,...,3} {
		\node (QR1\i)[qvertex] at ($(C1.center) + \i*(0.1,0.1) - (0.3,0)$) {};
		\node (QR2\i)[qvertex] at ($(C2.center) + \i*(0.1,0.1) - (0.3,0)$) {};
		\node (QR3\i)[qvertex] at ($(C3.center) + \i*(0.1,0.1) - (0.3,0)$) {};
		\node (QB2\i)[qvertex] at ($(C2.center) - \i*(0.1,0.1) + (0.3,0)$) {};
		\node (QB3\i)[qvertex] at ($(C3.center) - \i*(0.1,0.1)+ (0.3,0)$) {};
		\node (QB4\i)[qvertex] at ($(C4.center) - \i*(0.1,0.1)+ (0.3,0)$) {};		
	}

	\foreach \x in {1,...,3} {
		\path[black,-]
		(Q11c) edge (QR1\x)
		(Q12c) edge (QR2\x)
		(Q13c) edge (QR3\x)  ;
		\path[black,-]
		(Q22c) edge (QB2\x)
		(Q23c) edge (QB3\x)
		(Q24c) edge (QB4\x)  ;
	}	
	
	\path[red]
	(Q11c) edge (Q12c)
	(Q12c) edge (Q13c);
	\draw[blue]
	(Q22c) edge (Q23c)
	(Q23c) edge (Q24c);
	
	\end{tikzpicture} 
	\caption{The situation of Lemma~\ref{lem:reduce-2-inj}:
	On the left the input structure.
	Each circle represents one color class
	with drawn tuples of two relations (red and blue)
	between two orbits of each color class
	(there can of course be more orbits and tuples of the red and blue relations).
	On the right the altered structure:
	For the types of the red and blue relations 
	there are new group color classes (on the top for red and on the bottom for blue),
	where the orbits are contracted to a single vertex. 
	The ``old'' color classes became extension color classes.
	}
	\label{fig:reduce-2-inj}
\end{figure}

\begin{lemma}
	\label{lem:reduce-2-inj}
	Relational $q$-bounded $\sig$-structures of arity at most $3$,
	which are transitive on $3$ color classes
	and have typed relations,
	can be canonization preservingly reduced in CPT
	to $2$-injective quotient structures. 
	Moreover, the reduction satisfies the following:
	If the input structure~$\Struct$ was of arity $2$
	and $\colclass_1, \colclass_2, \colclass_3$ are related group color classes of the output structure $\Struct'$,
	then (up to permutation of the $\colclass_i$) $\outproj_{\colclass_1}(\autgroup{\Struct'[\bigcup_{i \in [3]}\colclass_i]} \subseteq \autgroup{\colstruct_2} \direct \autgroup{\colstruct_3}$ is a diagonal subgroup
	via a canonical isomorphism $\autgroup{\colstruct_2} \to \autgroup{\colstruct_3}$.
\end{lemma}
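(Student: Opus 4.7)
The plan is to construct $\Struct'$ from $\Struct$ after first applying Lemma~\ref{lem:reduce-to-regular-color-classes} so that every color class of $\Struct$ is regular (while preserving typed relations and transitivity on three color classes). A key observation is that regularity together with transitivity on three color classes makes, for every $I \subseteq \colorclasses{\Struct}$ with $|I| \leq 3$ and every $\colclass \in I$, the projection $\autgroup{\Struct[I]} \to \autgroup{\colstruct}$ surjective: its image is transitive on $\colclass$ by assumption, and every transitive subgroup of a regular group coincides with the whole group. Consequently, for each heterogeneous relation type $T = (\colclass_1,\ldots,\colclass_k)$ occurring in $\Struct$, writing $G_T := \autgroup{\Struct[\bigcup T]}$ and $\outproj_i \colon G_T \to \bigotimes_{j \neq i} \autgroup{\colstruct_j}$, the subgroup $N_i^T := \proj_i(\kernel{\outproj_i})$ is normal in $\autgroup{\colstruct_i}$.

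For each such $T$ and each $i \in [k]$ I would invoke Lemma~\ref{lem:quotient-color-classes} to adjoin a new group color class $\colclass_i^T$ as an $N_i^T$-quotient of $\colclass_i$, connected to $\colclass_i$ only via its orbit-map; the original color classes become extension color classes. For every heterogeneous relation $R$ of type $T$ I would introduce a corresponding relation $R^T$ between the new group color classes by mapping each tuple entry to its $N_i^T$-orbit representative; the original heterogeneous relations are then removed, while the homogeneous relations inside extension classes are retained. The resulting $\Struct'$ has typed relations, regular color classes (by Lemma~\ref{lem:regular-perm-group-quotient}), and the correct orbit-map skeleton. Its $2$-injectivity is inherited from the construction: automorphisms of $\Struct'[\bigcup_i \colclass_i^T]$ correspond, via the orbit-maps and the $R^T$, to cosets in $G_T$ of $K := \langle \kernel{\outproj_i} : i \in [k] \rangle$, so $\typegroupStruct{\Struct'}{T} \iso G_T/K$. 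A short calculation identifies $K$ with $\bigcap_i \proj_i^{-1}(N_i^T)$: the inclusion ``$\subseteq$'' is immediate, while an element $g$ of the intersection equals $h_1 \cdots h_k$ for suitable $h_i \in \kernel{\outproj_i}$ because the factors share all coordinates with $g$. This identity is precisely what makes every projection $\outproj_i$ on the quotient injective. Canonization preservation holds because a canonical copy of $\Struct'$ yields one of $\Struct$: discard the group color classes and reconstruct each original heterogeneous relation $R$ as the preimage of $R^T$ under the orbit-maps, which is CPT-definable.

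For the moreover clause, when the input $\Struct$ has arity $2$ every group type of $\Struct'$ has size two and $\typegroupStruct{\Struct'}{T}$ is $1$-injective for each such $T$, meaning it forms a diagonal subgroup of $\autgroup{\colstruct_1} \direct \autgroup{\colstruct_2}$ inducing a canonical isomorphism between the two factors. Given three pairwise related group color classes $\colclass_1, \colclass_2, \colclass_3$ of $\Struct'$, composing the canonical pairwise isomorphisms (suitably arranged by permuting the $\colclass_i$) produces a canonical isomorphism $\autgroup{\colstruct_2} \to \autgroup{\colstruct_3}$, and the joint automorphism group of all three color classes must project, when $\colclass_1$ is omitted, into the corresponding diagonal of $\autgroup{\colstruct_2} \direct \autgroup{\colstruct_3}$. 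The hardest step will be verifying that $\typegroupStruct{\Struct'}{T}$ equals $G_T/K$ exactly rather than merely containing it: one has to argue that every automorphism of $\Struct'[\bigcup_i \colclass_i^T]$ preserving the $R^T$ lifts to an automorphism of $\Struct[\bigcup T]$, which is where transitivity on three color classes is used again, together with the fact that the $R^T$ and the orbit-maps together encode the original relational information modulo the $N_i^T$-orbits.
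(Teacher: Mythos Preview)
Your main construction is essentially the paper's: for each heterogeneous type $T$ form the quotient color classes $\colclass_i^T := \colclass_i/N_i^T$ via Lemma~\ref{lem:quotient-color-classes}, push the relations down to these quotients, and recover $\Struct$ from a canonical copy of $\Struct'$ using the orbit-maps. Your characterisation $K = \bigcap_i \proj_i^{-1}(N_i^T)$ and the resulting $2$-injectivity argument are correct and in fact a bit more explicit than the paper's (which simply asserts $\group_Q \iso \typegroup{T}/(N_1^T N_2^T N_3^T)$ is $2$-injective). Making regularity explicit by first applying Lemma~\ref{lem:reduce-to-regular-color-classes} is also appropriate; the paper tacitly assumes it since Lemma~\ref{lem:quotient-color-classes} requires a regular color class.

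The gap is in the arity-$2$ case. A $2$-injective quotient structure, by definition, has all relations between group color classes of arity exactly~$3$, and every group color class lies in exactly one ternary group type. Your construction applied verbatim to a binary type $(\colclass_1,\colclass_2)$ produces a \emph{binary} group type, so the output is not a $2$-injective quotient structure at all. Moreover, in your output each group color class is related only to the one other group color class in its binary type, so ``three related group color classes $\colclass_1,\colclass_2,\colclass_3$'' never occur, and your argument about composing pairwise canonical isomorphisms is vacuous.

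The paper handles this differently: before running the quotient construction it \emph{raises} every binary heterogeneous relation of type $(\colclass_1,\colclass_2)$ to a ternary one by adjoining a fresh copy $\colclass_2'$ of $\colclass_2$, replacing $R$ by $\{(u_1,u_2,u_2') : (u_1,u_2)\in R\}$, and adding a perfect-matching relation $\{(u_1,u_2,u_2') : u_1\in\colclass_1,\, u_2\in\colclass_2\}$. The matching forces $\outproj_{\colclass_1}\bigl(\autgroup{\Struct[\colclass_1\cup\colclass_2\cup\colclass_2']}\bigr)$ to lie in the diagonal of $\autgroup{\colstruct_2}\times\autgroup{\colstruct_2'}$ via the canonical isomorphism $\perm_2\mapsto\perm_2'$ given by the copy. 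After this preprocessing every heterogeneous type is ternary, the quotient construction proceeds uniformly, and the moreover clause follows directly from the matching rather than from any composition of pairwise isomorphisms.
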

\begin{proof}
	Let $\Struct = (\Struct, \rel_1^{\Struct},\dots, \rel_k^{\Struct}, \spleq)$
	be a $q$-bounded $\sig$-structure as required by the lemma.
	We call a type $\type \in \types$
	order-compatible, if $\type = (\colclass_1, \dots, \colclass_\ell)$
	and $\colclass_1 \spleq \cdots \spleq \colclass_\ell$.
	We first assume that the type of every relation is order-compatible.
	If this is not the case, we just reorder the positions of the tuples,
	which is of course canonization preserving.
	This assumption is only required to simplify formulas.
	We second assume that all heterogeneous relations are of arity~$3$
	and discuss the arity~$2$ case in the end.

	Let
	$\type = (\colclass_1, \colclass_2, \colclass_3) \in \types$.
	We set $\typegroup{\type}$ to be the automorphism group
	of the substructure given by all relations of type $\type$.
	The group $\typegroup{\type}$ is subdirect,
	because $\Struct$ is transitive on $3$ color classes.
	To make $\typegroup{\type}$ a $2$-injective subdirect product,
	we want to factor out the kernels
	$N^\type_i := \kernel{\outproj_i^{\typegroup{\type}}}$
	for all $i \in [3]$
	and use quotient color classes for that.
	Note that naturally $N^\type_i \normal \autgroup{\colstruct_i}$.
	Now, a fixed color class $\colclass \in \colorclasses{\Struct}$
	can be contained in different types
	yielding different kernels and thus different quotient color classes.
	Hence, we construct for every type containing $\colclass$
	a new quotient color class of $\colclass$.
	
	We use Lemma~\ref{lem:quotient-color-classes}
	to define $N^\type_i$-quotient color classes $Q^\type_i$
	and orbits maps $\rel^{\Struct'}_{\orbsym,\type,i}$
	for every $\type \in \types$
	and $i \in [3]$.
	We set
	\[\groupvertices' := \bigcup_{
		\substack{\type \in \types, \\i \in [3]}}
		Q^T_i\]
	and
	\[\extensionvertices' := \StructP.\]
	Next, we have to ``move'' the relations from $\StructP$ to $\groupvertices$. We set
	\[\rel^{\Struct'}_{\type,j} := \setcondition{(v_{O_1}, v_{O_2} , v_{O_3})}{(u_1, u_2, u_\ell) \in \rel_j^\Struct, O_i = \orbitGroup{{N^\type_i}}{u_i}\text { for all } i \in [3]},\]
	that is, we just relate the orbits instead of the vertices
	(see Figure~\ref{fig:reduce-2-inj}).

	We obtain $\Struct'$ by using
	all homogeneous relations of $\Struct$,
	the additional homogeneous relations for the quotient color classes,
	as well as the orbit-maps given by Lemma~\ref{lem:quotient-color-classes},
	and the~$\rel^{\Struct'}_{\type,j}$
	for all heterogeneous relations $\rel_j^\Struct$.
	The order $\spleq'$ is given by $\spleq$,
	the quotient color classes ordered behind the original ones,
	and
	$Q^\type_i \spleq' Q^{\type'}_{i'}$
	if and only if $(\type,i)$ is lexicographically smaller than $(\type',i')$.
	The reduction is clearly CPT-definable.
	
	Let $Q$ be a set of three related group color classes.
	By construction, there is a type $\type=(\colclass_1, \colclass_2, \colclass_3)$
	such that $Q = \set{Q^\type_1, Q^\type_2, Q^\type_3}$
	and $Q^\type_i$ is a quotient color class of $\colclass_i$.
	Let $\group_Q := \autgroup{\Struct'[Q]}$.
	We argue that $\group_Q \iso \typegroup{\group} / N^\type_1 / N^\type_2 / N^\type_3$
	and hence $\group_Q$ is a $2$-injective subdirect product.
	
	First note that $\typegroup{\type}/N^\type_1/N^\type_2/N^\type_3 = \typegroup{\type} / (N^\type_1 N^\type_2 N^\type_3)$
	and that $N := N^\type_1 N^\type_2 N^\type_3 \normal \typegroup{\type}$.
	Hence, $\typegroup{\type}/N$ defines a permutation group on the $N$-orbits of $\typegroup{\type}$,
	which are precisely the~$N^\type_i$ orbits on $\colclass_i$ for all $i \in [3]$.
	We now study these orbits:
	Let $\rel_j^\Struct$ be a relation of type $\type$,
	$ (u_1, u_2, u_3) \in \rel_j^\Struct$,
	and $\perm_i \in N^\type_i$ for every $i \in [3]$.
	Then $\perm_1\perm_2\perm_3((u_1, u_2, u_3)) \in \rel_j^\Struct$
	(because $N^\type_i \leq \group_T$).
	Note that for $\ell \neq i$
	we have $\perm_i(u_{\ell}) = u_{\ell}$.
	It follows that ${(w_1,w_2,w_3) \in \rel_j^\Struct}$
	whenever $w_i \in \orbitGroup{N^\type_i}{u_i}$ for all $i \in [3]$,
	that is
	$\orbitGroup{N^\type_1}{u_1} \times \orbitGroup{N^\type_2}{u_2} \times \orbitGroup{N^\type_3}{u_3} \subseteq \rel_j^\Struct$.
	
	We define a map $\phi \colon \typegroup{\type}/N \to \group_Q$ as follows:
	Let $\perm N \in \typegroup{\type}/N$.
	This defines a permutation on the $N$-orbits $\perm_N$,
	which translates to a permutation of the orbit vertices of the quotient color classes $Q^T_i$.
	Then $\perm_N \in \group_Q$
	because whenever $(v_{O_1}, v_{O_2}, v_{O_3}) \in \rel^{\Struct'}_{\type,j}$,
	there is a tuple $(u_1,u_2,u_3) \in \rel^\Struct_j$ such that $O_i = \orbitGroup{N^\type_i}{u_i}$ for all $i \in [3]$
	and clearly $\perm((u_1, u_2, u_3)) \in \rel^\Struct_j$
	and $\perm_N(O_i) = \orbitGroup{N^\type_i}{u_i}$ for all $i \in [3]$.
	By construction of the $\rel^{\Struct'}_{\type,j}$
	we also have $(v_{\perm_N(O_1)}, v_{\perm_N(O_2)}, v_{\perm_N(O_3)})
	\in \rel^{\Struct'}_{\type,j}$
	and so $\perm_N$ is an isomorphism.
	In particular, 
	the prior reasoning also holds in the other direction
	and thus $\phi$ is an isomorphism
	and thus $\group_Q$ is a $2$-injective subdirect product.	
	
	Lastly, we show that the reduction is canonization preserving.
	Given a canonization of~$\Struct'$,
	we delete the quotient color classes
	and recover the heterogeneous relations as follows:
	By the argument above,
	we have that
	$(u_1,u_2,u_3) \in \rel_j^{\Struct}$
	if and only if
	$(v_{O_1},v_{O_2},v_{O_3}) \in \rel^{\Struct'} _{\type,j}$
	where~$T$ is the type of $\rel_j^\Struct$
	and $O_i = \orbitGroup{N^\type_i}{u_i}$.
	By the orbit maps, we have that
	$O_i = \orbitGroup{N^\type_i}{u_i}$
	if and only if
	$(u_i, v_{O_i}) \in \rel^{\Struct'}_{\orbsym, \type,i}$.
	Both conditions can easily be checked in CPT.
	
	It remains to discuss relations of arity $2$.
	Let $\rel^\Struct_j$ be of type $(\colclass_1, \colclass_2)$.
	We extend $\rel^\Struct_j$ to arity $3$
	by creating a copy $\colclass_2'$ of $\colclass_2$,
	set $\rel_{j,1}^\Struct := \setcondition{(u_1,u_2,u_2')}{(u_1,u_2) \in \rel^\Struct_j}$
	where~$u_2'$ is the copy of $u_2$ in $\colclass_2'$,
	and introduce another relation that connects $\colclass_2$ and $\colclass_2'$ by a perfect matching
	$\rel^\Struct_{j,2} := \setcondition{(u_1,u_2,u_2') }{u_1\in\colclass_1, u_2 \in \colclass_2}$.
	The first component is just to have arity $3$.
	By this matching relation, we have that
	$\group := \autgroup{\Struct[\colclass_1\cup \colclass_2 \cup \colclass_2']}
	= \setcondition{(\perm_1,\perm_2,\perm_2')}{(\perm1,\perm_2) \in \autgroup{\Struct[\colclass_1\cup \colclass_2}}$
	where $\perm_2'$ is the action of $\perm_2$ on the copy $\colclass_2'$,
	that is $\outproj^\group_{\colclass_1} \subseteq \autgroup{\colstruct_2} \direct \autgroup{\colstruct_2'}$
	is a diagonal subgroup via the canonical isomorphism
	$\autgroup{\colstruct_2} \to \autgroup{\colstruct_2'}$
	defined by $\perm_2 \mapsto \perm_2'$. 
\end{proof}

The construction extends to structures of arity $r$
and then yields $(r-1)$ injective quotient structures of arity $r$
for a straightforward generalization of $2$-injective subdirect products.
Figure~\ref{fig:2-inj-dih-structure} in Section~\ref{sec:canonization-dihedral}
shows (apart additional properties from this section)
a sketch of a $2$-injective quotient structure.
We conclude the section on normal forms:
\begin{theorem}
	\label{thm:convert-to-normal-form}
	Relational $q$-bounded $\sig$-structures of arity $r$
	can canonization preservingly in CPT be reduced to
	\begin{enumerate}[label=\alph*)]
		\item $q'$-bounded $2$-injective
		quotient $\sig'$-structures and to
		\item $q'$-bounded $(r-1)$-injective
		quotient $\sig'$-structures
		such that 
		for every group color class $\colclass' \in \colorclasses{\Struct'}$
		of the output structure $\Struct'$
		the group $\autgroup{\Struct'[\colclass']}$
		is a section of 
		$\autgroup{\Struct[\colclass]}$ for some
		color class
		$\colclass \in \colorclasses{\Struct}$
		of the input structure $\Struct$.
	\end{enumerate}
\end{theorem}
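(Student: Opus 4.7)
The plan is to chain, in order, the canonization-preserving CPT-reductions established in this section: Lemmas~\ref{lem:reduce-to-transitive-on-s-color-classes}, \ref{lem:reduce-to-typed-relations}, \ref{lem:reduce-to-regular-color-classes}, and~\ref{lem:reduce-2-inj}. A direct verification shows that canonization-preserving CPT-reductions compose: the forward interpretations~$\Phi_i$ compose in sequence, while the inverses~$\Psi_i$ are applied in reverse order, each receiving the original structure at its stage together with the ordered structure produced by the previous~$\Psi$. Because every intermediate structure is itself CPT-definable from the original input, the combined inverse interpretation is definable in CPT.

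For part~a), if the input arity $r$ exceeds~$3$, we would first apply a standard incidence-gadget reduction: every tuple in a heterogeneous relation of arity $k > 3$ is represented by a fresh ``edge vertex'' in a new color class, attached via binary position-labelled relations to its entries. This is clearly canonization-preserving in CPT and yields a structure of arity at most~$3$. We then invoke Lemma~\ref{lem:reduce-to-transitive-on-s-color-classes} with $s=3$, followed by Lemma~\ref{lem:reduce-to-typed-relations}, then Lemma~\ref{lem:reduce-to-regular-color-classes}, and finally Lemma~\ref{lem:reduce-2-inj}. Each step has been shown to preserve the properties (typed relations, regularity, transitivity on three color classes) required by its successor, so the final output is a $2$-injective quotient structure.

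For part~b), the same chain is applied, but with $s = r$ in Lemma~\ref{lem:reduce-to-transitive-on-s-color-classes} and using the arity-$r$ generalization of Lemma~\ref{lem:reduce-2-inj} indicated after its proof, producing an $(r-1)$-injective quotient structure. To establish the section property for every group color class $\colclass'$ of the output, we track automorphism groups through the chain. Lemma~\ref{lem:reduce-to-transitive-on-s-color-classes} already guarantees that every color class of its output has an automorphism group that is a section of some input color class. Lemmas~\ref{lem:reduce-to-typed-relations} and~\ref{lem:reduce-to-regular-color-classes} preserve the automorphism group of each color class up to isomorphism by construction. Finally, each group color class of the output of Lemma~\ref{lem:reduce-2-inj} is, by Lemma~\ref{lem:quotient-color-classes}, a quotient of an extension color class whose automorphism group is that of a color class from the previous step. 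Since any quotient of a section is itself a section, the property propagates to~$\colclass'$.

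The main obstacle is not any single step but the careful bookkeeping of which hypotheses each reduction needs and provides: for instance, Lemma~\ref{lem:reduce-2-inj} simultaneously demands transitivity on three color classes, typed relations, and regular color classes, which is precisely why the three preceding reductions must be applied in this order. For part~a), the additional arity-reduction must itself be canonization-preserving and leave the structure amenable to the subsequent lemmas; this is why the parameters pass from $(q, \sig)$ to a larger $(q', \sig')$.
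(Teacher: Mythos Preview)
Your proof is correct and follows essentially the same approach as the paper: chain Lemmas~\ref{lem:reduce-to-transitive-on-s-color-classes}, \ref{lem:reduce-to-typed-relations}, \ref{lem:reduce-to-regular-color-classes}, and~\ref{lem:reduce-2-inj} (with~$s=r$ and the arity-$r$ generalization for part~b)), and for part~a) first reduce the arity to at most~$3$. The only cosmetic difference is the arity-reduction gadget: you introduce one incidence vertex per tuple with binary position edges, whereas the paper first types the relations and then iteratively splits an arity-$r$ relation by introducing a vertex for each occurring pair in the last two coordinates, yielding relations of arity $r-1$ and~$3$; both are standard and canonization-preserving, and since part~a) does not assert the section property the choice is immaterial.
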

\begin{proof}
	Case b) follows by Lemmas~\ref{lem:reduce-to-transitive-on-s-color-classes},~%
	\ref{lem:reduce-to-typed-relations},~
	\ref{lem:reduce-to-regular-color-classes}, and~
	\ref{lem:reduce-2-inj}.
	For a), we first need to reduce the arity of the structure to~$3$,
	and then apply b).
	We sketch how one can restrict the arity:
	For simplicity, we first apply Lemma~\ref{lem:reduce-to-typed-relations}
	to obtain typed relations.
	Let $\rel_j^\Struct$ be a relation of arity $r>3$ of type $(\colclass_1,\dots, \colclass_r)$.
	We add for every pair $(u,v) \in \restrictVect{\rel_j^\Struct}{\colclass_{r-1}\cup \colclass_{r}}$
	a new vertex and split $\rel_j^\Struct$
	into two relations $\rel_{j,1}^\Struct$ and $\rel_{j,2}^\Struct$ of arity $r-1$ and $3$ as follows.
	We define them such that 
	$(u_1, \dots, u_{r-2},u_t) \in \rel_{j,1}^\Struct$ and
	$(u_t, u_{r-1}, u_r) \in \rel_{j,2}^\Struct$
	if and only if $(u_1,\dots,u_r) \in \rel_j^\Struct$,
	where $u_t$ is the new vertex for the tuple $(u_{r-1}, u_r)$.
	The new vertices form a new color classes.
	By iteratively applying this reduction,
	one can reduce the arity down to $3$.
	For $r=3$ the relation $\rel_{j,1}^\Struct$ has arity~$2$,
	but $\rel_{j,2}^\Struct$ as still arity $3$. 
\end{proof}
Note that Theorem~\ref{thm:convert-to-normal-form-light}
is just case~a) of the previous theorem.
Of course, one could reduce the arity of the structures to~$2$
by encoding every tuple in a relation by a new vertex adjacent to
the vertices in the tuple.
Obviously, the automorphism group of these new vertices
must be the automorphism group of the tuples in the relations,
so we have indeed encoded the structure in a graph,
but are still faced with exactly the same groups.
The new color classes obtained by our construction can be much simpler,
depending on the actual group.
In the following, we will consider structures of arity at most $3$.

\section{Structures with Dihedral Colors}
\label{sec:dihedral-colors}

We now consider structures whose color classes
have dihedral automorphism groups.
\begin{definition}[Dihedral Colors]
	A structure $\Struct = (\StructP, \rel_1^\Struct, \dots, \rel_k^\Struct, \spleq)$ \defining{has dihedral colors}
	if for every $\Struct$-color class $\colclass$
	the group $\autgroup{\colstruct}$
	is a dihedral or cyclic group.
\end{definition}
Here, we include cyclic automorphisms groups in the definition
of dihedral colors. We do so because we want that the class of groups admissible for the color classes 
is closed under taking subgroups and under taking quotient groups.
We now show that the automorphism group of a regular and dihedral color class
can be made explicit in the following sense:

\begin{definition} [Color Class in Standard Form]
	Let  $\Struct$ be a structure and $\colclass \in \colorclasses{\Struct}$.
	We say that 
	\defining{a color class $\colclass \in \colorclasses{\Struct}$
	is in standard form} if the following holds: 
	\begin{itemize}
		\item If $\autgroup{\colstruct} \iso \CyclicGroup{|C|}$
			then there are relations
			$\rel_i^\Struct,\rel_j^\Struct \subseteq \colclass^2$ of arity $2$ 
			each forming a directed cycle of length $|C|$ on $\colclass$.
		\item Otherwise $\autgroup{\colstruct} \iso \DihedralGroup{|C|/2}$
			and there are two relations
			$\rel_i^\Struct, \rel_j^\Struct \subseteq \colstruct^2$
			such that $\rel_j^\Struct$ defines two directed and disjoint cycles
			of length $|C|/2$
			and $\rel_i^\Struct$ connects them by a perfect matching
			such that the two cycles are directed into opposite directions
			(cf.~Figure~\ref{fig:regular-drawing}).\footnote{The two relations form the Cayley graph generated by a reflection and a rotation of maximum order, but we will not need this fact.}
	\end{itemize}
	We say that the relations $\rel_i^\Struct$ and $\rel_j^\Struct$
	\defining{induce the standard form of $\colstruct$}.
	The \defining{color classes of $\Struct$ are in standard form},
	if every color class is in standard form.
\end{definition} 
\begin{figure}
	\centering
	\begin{tikzpicture}[>=stealth,every node/.style={vertex}, thick,
	]
		\def \n {5}
		\def \nminusone {4} 
				
		\def \radius {2cm}
		\def \innerradius {1cm}
		\def \degoffset {90}
		\foreach \s in {0,...,\nminusone}
		{
			\node(a\s) at ({cos(360*\s/\n + \degoffset)*\radius},{sin(360/\n*\s+\degoffset)*\radius}) {};
			
			\node(b\s) at ({cos(360/\n*\s+\degoffset)*\innerradius},{sin(360/\n*\s+\degoffset)*\innerradius}) {};
		}
		\foreach \s [evaluate=\s as \t using {int(mod(\s+1,\n))}] in {0,...,\nminusone} 
		{
			\path[->, black, draw] (a\s) edge(a\t) ;
			\path[->, black, draw] (b\t) edge (b\s);
			\path[black, draw, dashed] (a\s) edge (b\s);
		}
	
		\begin{scope}[xshift = 4cm, yshift = -\innerradius ]
			\node (c1) at (0,0) {};
			\node (c2) at (0, \radius){};
			\node (c3) at (\radius,\radius){};
			\node (c4) at (\radius, 0){};
			\path[-, black, draw]
			(c1) edge (c2)
			(c3) edge (c4);
			\path[black, draw, dashed]
			(c1) edge(c4)
			(c2) edge(c3);
		\end{scope}
	\end{tikzpicture}
	\caption{The dihedral group $\DihedralGroup{5}$ on $10$ vertices
		and the $\DihedralGroup{2}$ on $4$ vertices in standard form.
	The two relations are drawn in different line styles.}\label{fig:regular-drawing}
\end{figure}
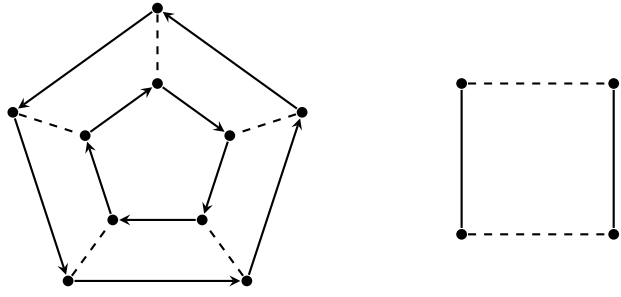
For cyclic groups we require two relations only for technical reasons: we do so
to allow for a uniform treatment of the cyclic and the dihedral case.
One can always pick $\rel_i^\Struct = \rel_j^\Struct$
as we see in the following.
For dihedral groups we really need two relations for the group $\DihedralGroup{2}$,
because for $\DihedralGroup{2}$ the directed cycles are just two undirected edges
and they cannot not be distinguished from the perfect matching.

\begin{corollary}
	Let  $\Struct = (\StructP, \rel_1^\Struct, \dots, \rel_k^\Struct, \spleq)$
	be a structure with dihedral colors.
	Suppose $\colclass \in \colorclasses{\Struct}$ is in standard form
	induced by the relations $\rel_i^\Struct$ and $\rel_j^\Struct$.
	Then $\autgroup{\colstruct} = \autgroup{(\colclass, \rel_i^\Struct, \rel_j^\Struct)}$.
\end{corollary}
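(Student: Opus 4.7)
The plan is to establish equality of the two groups by combining a trivial containment with a counting argument. One inclusion is immediate: since $\rel_i^\Struct$ and $\rel_j^\Struct$ are relations of $\Struct$ that happen to be homogeneous in $\colclass$, they are part of the signature defining $\colstruct$. Hence any automorphism of $\colstruct$ preserves both of them and therefore belongs to $\autgroup{(\colclass, \rel_i^\Struct, \rel_j^\Struct)}$. Once the reverse containment is reduced to an inequality of orders, the two finite groups must coincide.

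For the reverse direction, the strategy is to show $|\autgroup{(\colclass, \rel_i^\Struct, \rel_j^\Struct)}| \leq |\autgroup{\colstruct}|$. By the standing hypothesis on dihedral colors together with the standard form assumption, $\autgroup{\colstruct}$ is isomorphic to $\CyclicGroup{|\colclass|}$ or to $\DihedralGroup{|\colclass|/2}$, and in both cases has order exactly $|\colclass|$. The remaining task is to bound $|\autgroup{(\colclass, \rel_i^\Struct, \rel_j^\Struct)}|$ by $|\colclass|$ by a direct inspection of the two-relation graph.

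In the cyclic case this is immediate: $\rel_j^\Struct$ alone is a directed cycle of length $|\colclass|$, whose automorphism group is already $\CyclicGroup{|\colclass|}$, so the full two-relation automorphism group lies in a group of order $|\colclass|$. In the dihedral case I would argue as follows. The relation $\rel_j^\Struct$ decomposes $\colclass$ into two directed cycles of length $|\colclass|/2$, call them the top and bottom cycle, and an automorphism must preserve this unordered pair of cycles. If it stabilizes each cycle setwise, then it acts as a rotation on, say, the top cycle; the perfect matching $\rel_i^\Struct$ then determines its action on the bottom cycle, giving at most $|\colclass|/2$ such automorphisms. If it swaps the two cycles, then the fact that the two cycles are oriented oppositely is exactly what is needed for the swap to preserve $\rel_j^\Struct$, and once the image of a single vertex on the top cycle is chosen, the rest of the map is forced by $\rel_i^\Struct$ and $\rel_j^\Struct$; this yields another $|\colclass|/2$ automorphisms. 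Totalling to at most $|\colclass|$, equality of orders follows.

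There is no real obstacle here, only a small bookkeeping point in the dihedral case: one should verify carefully that the condition ``the two cycles are directed into opposite directions'' in the standard form definition is precisely what makes the cycle-swapping automorphisms well-defined, while simultaneously ruling out independent rotations of the two cycles. Once this is spelled out, the counts above are immediate, and the corollary follows.
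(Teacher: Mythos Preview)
Your argument is correct; the trivial inclusion together with the order count in each of the two cases is sound, and your remark about the opposite orientation of the two cycles being exactly what is needed for the swap case is on point. The paper states this result without proof, treating it as an immediate corollary of the standard-form definition, so your counting argument simply spells out what the paper leaves implicit.
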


We show that every regular and dihedral color class can be converted to standard form.

\begin{lemma}
	\label{lem:dihedral-transitive-colors-draw-automorphism-group}
	There is a CPT term that given a $q$-bounded structure $\Struct$ and
	a regular and dihedral color class $\colclass \in \colorclasses{\Struct}$
	defines homogeneous relations $\rel_1'$ and $\rel_2'$ on $\colclass$
	that induce the standard form of $\colclass$.
\end{lemma}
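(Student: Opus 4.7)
The plan is to exploit that $\colstruct$ is a bounded substructure of size $|\colclass|\leq q$, so that Lemma~\ref{lem:compute-and-order-orbits} furnishes a CPT-definable total order on the $2$-orbits of $\autgroup{\colstruct}$ on $\colclass^2$. Regularity gives a bijection between group elements and $2$-orbits: viewing $\colclass$ as a $\group$-torsor with $\group := \autgroup{\colstruct}$, each $g\in\group$ corresponds to the $2$-orbit $\{(h, h\cdot g) : h \in \colclass\}$, and as a directed graph on $\colclass$ this orbit is a disjoint union of directed cycles all of length $\ord{g}$.

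First the procedure determines whether $\group$ is cyclic or properly dihedral, which is a first-order property of a bounded structure. In the cyclic case $\group\iso\CyclicGroup{|C|}$, the $2$-orbits forming a single directed cycle on all of $\colclass$ are exactly those corresponding to generators of $\group$; the procedure picks the least such $2$-orbit (with respect to the order from Lemma~\ref{lem:compute-and-order-orbits}) and uses it for both $\rel_1'$ and $\rel_2'$. In the dihedral case $\group\iso\DihedralGroup{|C|/2}$ with $|C|/2>2$, the procedure picks as $\rel_1'$ the least $2$-orbit that forms exactly two vertex-disjoint directed cycles of length $|C|/2$ (these come from rotations of maximum order, of which there are $\varphi(|C|/2)$), and as $\rel_2'$ the least $2$-orbit that is a perfect matching (these come from reflections, each of order~$2$). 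Any such reflection works, because the identity $rs=sr^{-1}$ valid in every dihedral group guarantees that the right translation by a reflection sends the two $\langle r\rangle$-cosets to each other while reversing the induced orientation. Thus the orientation condition required by the standard form is automatic, regardless of which reflection-orbit is selected.

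The main subtle point, but not a hard one, is the degenerate case $\group\iso\DihedralGroup{2}$, i.e.~$|\colclass|=4$. Here every non-identity element has order~$2$, so all three non-identity $2$-orbits are perfect matchings, and a ``directed cycle of length $2$'' degenerates to an undirected edge. Any two of the three matchings give a valid standard form, since the resulting $4$-cycle with alternating edge colors has Klein-four automorphism group; the procedure simply selects the two minimum non-identity $2$-orbits.

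Every step is CPT-definable: computing and ordering the $2$-orbits is handled by Lemma~\ref{lem:compute-and-order-orbits}; determining the isomorphism type of $\group$, checking whether a $2$-orbit forms a single cycle, two disjoint cycles of a given length, or a perfect matching, and selecting the minimum orbit with a given property are all first-order properties of the bounded substructure $\colstruct$ and are therefore expressible in CPT. No real obstacle arises beyond verifying that the canonical selections described above actually induce the standard form, which reduces to the elementary group-theoretic observation about right translations in $\DihedralGroup{n}$ noted above.
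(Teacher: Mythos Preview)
Your approach is the same as the paper's---compute and order the $2$-orbits via Lemma~\ref{lem:compute-and-order-orbits}, then select suitable ones---and your treatment of the cyclic case, the $\DihedralGroup{2}$ case, and the opposite-orientation argument via $rs = sr^{-1}$ is correct. However, there is a genuine gap in the dihedral case when $m := |C|/2 > 2$ is even.

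You select $\rel_2'$ as ``the least $2$-orbit that is a perfect matching'' and justify this with the parenthetical ``these come from reflections, each of order~$2$.'' That claim is false when $m$ is even: the central rotation $r^{m/2}$ also has order~$2$, and its $2$-orbit $\{(h, h\cdot r^{m/2}) : h \in \colclass\}$ is a perfect matching as well. But this matching pairs each vertex with its antipode \emph{within the same directed cycle} of $\rel_1'$, so it does not connect the two cycles as the standard form requires. If this rotation-orbit happens to be the least perfect matching in the ordering, your procedure outputs a pair $(\rel_1', \rel_2')$ whose automorphism group is $\CyclicGroup{m} \wr \CyclicGroup{2}$ rather than $\DihedralGroup{m}$, so the standard form is not induced.

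The fix is exactly what the paper does: after fixing $\rel_1'$, choose $\rel_2'$ as the least $2$-orbit that is a perfect matching \emph{between the two cycles of $\rel_1'$} (equivalently, containing a pair $(u,v)$ with $u,v$ in different cycles). This extra condition is first-order over the bounded substructure $\colstruct$, so CPT-definability is unaffected, and your argument about $rs = sr^{-1}$ then applies verbatim since any such orbit necessarily comes from a reflection.
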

\begin{proof}
	Let $\Struct = (\StructP, \rel_1^\Struct, \dots \rel_k^\Struct, \spleq)$
	be a $q$-bounded structure,
	$\colclass \in \colorclasses{\Struct}$
	be a regular and dihedral color class,
	$\group := \autgroup{\colstruct}$,
	and $n := |\colclass|$.
	We compute the $2$-orbits of $\group$.
	There are only two cases because $\colclass$ is regular:
	\begin{enumerate}
		\item Assume that $\group \iso \CyclicGroup{n}$.
		Let $\rot$ be a rotation of order $n$.
		Let $O$ be an orbit such that $(u,\rot(u)) \in O$ for some
		$u \in \colclass$.
		By iterating $\rot$, $(\rot^k(u), \rot^{k+1}(u)) \in O$
		and thus~$O$ contains a directed cycle of length $n$
		because $\group$ is transitive. 
		If $|O| > n$, then $n < |\group| = n$.
		So $O$ forms a directed cycle.
		To define such an orbit in CPT,
		we just pick the smallest with the mentioned properties
		(Lemma~\ref{lem:compute-and-order-orbits}).
		We define 
		$\rel_1' = \rel_2' := O$.
		
		\item Assume $\group \iso \DihedralGroup{n / 2}$.
		By the reasoning for cyclic groups above,
		there is a $2$-orbit $O_1$ containing a directed cycle $O' \subseteq O_1$
		of length $n/2$.
		Because $\colclass$ is regular
		all rotations $\rot \in \group$ map 
		a vertex contained in $O'$ to a vertex contained in $O'$
		and all reflections map vertices in $O'$ to vertices in
		$O_1\setminus O'$.
		Hence $O_1$ contains two disjoint directed cycles of length $n/2$.
		Because $|\group| = n$, $O_1$ is the union of these two cycles.
		
		Let $u$ and $v$ be vertices not in the same cycle.
		Then (by the argument above),
		there is a reflection $\refl \in \group$
		with $\refl(u) = v$ and $\refl(v) = u$.
		We pick the orbit $O_2$ with $(u,v), (v,u) \in O_2$.
		Let $\rot \in \group$ be a rotation of order $n/2$.
		Then $(\rot^k(u), \rot^k(v)), (\rot^k(v), \rot^k(u)) \in O_2$.
		So $|O_2| \geq n$,
		that is, $O_2$ exactly contains these elements
		and thus is a perfect matching between the two directed cycles in $O_1$.
		
		In CPT we first pick the minimal orbit $O_1$
		satisfying the required conditions,
		and then the minimal orbit $O_2$
		satisfying the conditions with respect to~$O_1$.
		We then set $\rel_i' := O_i$ for $i \in [2]$.
		\qedhere
	\end{enumerate}
\end{proof}

By iteratively applying the previous lemma,
we can assume that up to CPT-definable preprocessing
the color classes of a structure are in standard form.
The preprocessing is clearly canonization preserving,
because we just have to remove the added relations.

\newcommand{\coeffc}{a}
\newcommand{\coeffmat}{A}

\newcommand{\constL}{b}
\newcommand{\constR}{d}

\newcommand{\cvectL}{y}
\newcommand{\cvectR}{z}
\newcommand{\vvect}{x}

\section{Cyclic Linear Equation Systems in CPT}
\label{sec:cyclic-linear-equations-systems}

Before we begin to canonize structures with dihedral colors,
we need to discuss a special class of linear equation systems.
These linear equation systems are later used in the canonization procedure
to encode the canonical labellings.
We start with the definition of cyclic linear equations systems
from~\cite{AbuZaidGraedelGrohePakusa2014}.
\begin{definition} [Cyclic Linear Equation System]
	Let $\Vars$ be a set of variables.
	A \defining{cyclic constraint} on $W \subseteq \Vars$ is a consistent
	set of linear equations
	containing for each pair of variables $\varA,\varB \in W$
	an equation of the form $\varA - \varB = \constR$ for $\constR \in \ZZ_q$.
	A \defining{cyclic linear equation system (CES)} over $\ZZ_q$
	(for $q$ a prime power)
	is a triple $(\Vars,S,\spleq)$
	where $\spleq$ is a total preorder and
	$S$ is a linear equation system over $\Vars$
	that contains a cyclic constraint on each
	$\spleq$-equivalence class.
\end{definition}
Like for structures, the total preorder $\spleq$
induces a total order on the $\spleq$-equivalence classes,
which we call \defining{variable classes}.
In CPT, the linear equation system $S$ is represented by a set of constraints.
Constraints itself are encoded as sets, too.
The usual encoding using a matrix is not possible in general,
because a matrix induces a total order on both,
the variables and the constraints.

\begin{theorem}[\cite{AbuZaidGraedelGrohePakusa2014}]
	Solvability of CESs over $\ZZ_q$ can be defined in CPT.
\end{theorem}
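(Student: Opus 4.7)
The plan is to reduce the given CES to a linear equation system over $\ZZ_q$ whose variables are the $\spleq$-equivalence classes themselves, and then to decide consistency of this reduced system by canonical Gaussian elimination, which is available in CPT once the variables carry a canonical order. First, observe that on each variable class $W$ the cyclic constraint collapses the solution space to one dimension: once any $\varA_0 \in W$ is fixed, every other $\varA \in W$ is determined as $\varA = \varA_0 + \constR_{\varA,\varA_0}$. Thus for each class $W$ I would introduce a single formal parameter $\bar{\varA}_W$ and rewrite every equation $\sum_{\varA} \coeffc_\varA \varA = \constL$ as $\sum_W \bigl(\sum_{\varA \in W} \coeffc_\varA\bigr) \bar{\varA}_W + \gamma_W = \constL$, where $\gamma_W = \sum_{\varA \in W} \coeffc_\varA \constR_{\varA,\varA_0}$ gathers the offsets coming from the cyclic constraint.

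The delicate point is that $\varA_0$ cannot be chosen choicelessly. When $\sum_{\varA \in W} \coeffc_\varA = 0$ in $\ZZ_q$, the offset $\gamma_W$ is manifestly invariant under shifting $\varA_0$ inside $W$, so it is a well-defined, CPT-definable constant. When this total coefficient is nonzero, two choices of $\varA_0$ produce $\gamma_W$'s that differ by an element in the image of multiplication by $\sum_{\varA \in W} \coeffc_\varA$; since $\bar{\varA}_W$ is a free parameter, this ambiguity is absorbed into $\bar{\varA}_W$ and is invisible for solvability. Either way, consistency of the reduced system coincides with consistency of the original CES.

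The next step is to implement this reduction in CPT. The total preorder~$\spleq$ induces a canonical order on the variable classes, so CPT comprehension is enough to form, for each pair of an equation and a class, the coefficient sum and the symmetric offset described above; the equations of $S$ are represented as hereditarily finite sets, which is unproblematic for the bookkeeping. The output is a linear equation system over $\ZZ_q$ on a canonically ordered set of class variables. On such an ordered system, Gaussian elimination over $\ZZ_q$ with $q$ a prime power is IFP-definable, and therefore CPT-definable; a CPT iteration term then decides solvability in a polynomial number of rounds.

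The main technical obstacle is verifying in full detail that this collapse is choicelessly sound: one must show that the reduced equation system really depends only on the input CES (and not on any hidden per-class reference), that every solution of the reduced system lifts to a solution of the CES by a CPT-definable rule, and that all intermediate hereditarily finite sets remain polynomially bounded so that the polynomial guard on the iteration term is respected. This is precisely the symmetric bookkeeping developed in~\cite{DawarRicherbyRossman2008} and~\cite{PakusaSchalthoeferSelman2016} for CFI-type systems and extended in~\cite{AbuZaidGraedelGrohePakusa2014} to the general $\ZZ_q$ setting, from which the theorem follows.
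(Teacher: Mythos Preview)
The theorem is stated in the paper as a citation without its own proof, but the paper does describe in detail the underlying machinery (hyperterms, Hermite normal form over $\ZZ_q$) because it needs to extend it to TCESs. Comparing your sketch against that machinery, there is a genuine gap in your reduction step.

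Your plan is to collapse each variable class $W$ to a single parameter $\bar{\varA}_W$ and then run ordinary Gaussian elimination on a system with ordered variables. The coefficient $\sum_{\varA\in W}\coeffc_\varA$ is indeed a symmetric, CPT-definable quantity. The offset $\gamma_W$, however, is not: you yourself note that when the coefficient sum is nonzero, different choices of reference point $\varA_0$ produce different values of $\gamma_W$. You argue this ambiguity is ``invisible for solvability'', which is true abstractly, but it does not give you a CPT-definable equation system to hand to Gaussian elimination. In CPT you must produce a single, isomorphism-invariant hereditarily finite object; an object that is only well-defined up to a shift is not such a thing. Your later remark that ``the reduced equation system really depends only on the input CES'' is exactly what fails here, and deferring to the cited references does not close the gap, because those references do \emph{not} proceed by collapsing to a single representative.

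The approach actually taken in~\cite{AbuZaidGraedelGrohePakusa2014,pakusa2015}, and summarized in the paper in the section on hyperterms, is different in kind: one never chooses a representative. Instead one builds \emph{hyperterms}, hereditarily finite sets that simultaneously encode the linear term arising from every possible choice of $\varA_0$ in every class. Addition and scalar multiplication are defined on hyperterms so that the coefficient of each class is well-defined, and Gaussian elimination is carried out directly on hyperequations. Consistency is then tested by checking whether certain hyperterms are \emph{constant} (i.e., evaluate identically under all reasonable assignments) and have the right value; only at that final stage does one extract a number, and there the symmetry is genuine. A second point you gloss over is that for $q$ a non-prime prime power, $\ZZ_q$ is not a field, so naive Gaussian elimination does not suffice; the actual argument uses the Hermite normal form and the divisibility order on coefficients, as the paper explains in its discussion of Gaussian elimination for rings.
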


\newcommand{\assignment}{\xi}
\newcommand{\hyperterm}{T}
\newcommand{\coeff}[2]{\mathsf{c}_{#1}(#2)}

We relax the requirement on the order on the variables being total:

\begin{definition}[Tree-Like Cyclic Linear Equation System]
	A \defining{tree-like cyclic linear equation system (TCES)}
	over $\ZZ_q$ ($q$ a prime power)
	is a tuple $(\Vars, S, \spleq)$ with the following properties:
	\begin{itemize}
		\item The variable classes form a rooted tree
				with respect to being a direct successor in $\spleq$. (I.e., 
		$\spleq$ is a preorder such that for all 
		$\varA,\varB,\varC \in \Vars$ with $\varA \spless \varB$, $\varA \spless \varC$,
		and $\varB$ and $\varC$ incomparable ($\varB \not\spleq \varC$ and $\varC \not\spleq \varB$)
		there is no $\varA' \in \Vars$ with $\varB \spleq \varA'$ and $\varC \spleq \varA'$.)
		
		\item $S$ is a linear equation system on $\Vars$
		containing for every variable class a cyclic constraint.
		\item For every constraint $\sum_i \coeffc_i\var_i = \constR$ 
		with $\var_i \in \Vars$ and $\coeffc_i,\constR \in \ZZ_q$
		in $S$
		every pair of variables $\var_i,\var_j$ is $\spleq$-comparable. 
		(That is, a constraint can only use the variables from the classes
		on a root-to-leaf path of the tree.)
	\end{itemize}
\end{definition}
Note that TCESs are a strict generalization of CESs.
We show that solvability of a certain subclass of TCESs can be defined in CPT.
In principle, we follow the same strategy of \cite{pakusa2015, AbuZaidGraedelGrohePakusa2014}
to solve CESs.
Thus, before we turn to solving TCESs, we sketch the method to solve CESs.
As a first step, we preprocess a (T)CES,
such that every variable class
contains exactly $q$ variables~\cite[Lemma~5.3]{pakusa2015}
and that $\varA-\varB = \constR \neq 0$ for all equations with $\varA \neq \varB$
in the cyclic contraints.
This step is necessary to define hyperterms.

\subsection{Hyperterms}
\newcommand{\hyper}[1]{\text{Hyp}(#1)}
\newcommand{\assigns}{\mathbb{L}}

Let $\ces = (\Vars, S, \spleq)$ be a CES 
with variable classes $\colclass_1 \spless \cdots \spless\colclass_n$.
For every two variables $\varA,\varB\in\colclass_i$ in the same variable class
there is a constraint $\varA-\varB = \constR$ for some $\constR \in \ZZ_q$.
Hence, we can pick any variable $\varA \in \colclass_i$ and
substitute all other variables of the class $\colclass_i$ with a term of the form $\varA -\constR$.
Of course, we cannot do this in CPT, because we cannot choose~$\varA$ canonically.
Anyway, assume for the moment that we picked for every variable class one variable
and substituted all other variables in that fashion.
Then we obtain a linear system of equations $\ces'$
with a total order on the variables,
which can be extended to a total order on the equations.
Hence we can write the system in matrix form $\coeffmat\vvect + \cvectL = \cvectR$,
where $\coeffmat$ is the $m\times n$ coefficient matrix and $\cvectL, \cvectR \in \ZZ_q^m$. Here~$\cvectR$ is the original right hand side of the equation system $S$ and~$\cvectL$ collects the coefficients that arose from substituting the variables.
Now, consider another choice when picking the variables, yielding another system $\ces''$.
Then we can write $\ces''$ as $\coeffmat\vvect + \cvectL' = \cvectR$
with the same coefficient matrix $\coeffmat$ and the same right-hand side $\cvectR$.
The only change is the vector $\cvectL'$, the difference coming from the different constants arising when substituting two variables of the same class.

Hyperterms are hereditarily finite sets built from variables and constants.
They constitute a succinct, simultaneous encoding of all possible ways to pick the variables
and the resulting vectors $\cvectL'$. They are invariant under permuting variables of a class.
The encoding is performed in a way that allows us to use hyperterms to mimic algebraic operations involving
linear terms over the variables in polynomial time.
We summarize properties and capabilities of hyperterms,
without explicitly describing how hyperterms are constructed and without providing proofs (see \cite{pakusa2015, AbuZaidGraedelGrohePakusa2014}).

Clearly, only assignments $\Vars \to \ZZ_q$ satisfying the cyclic constraints of a (T)CES
are reasonable choices when checking whether a (T)CES is consistent.
We call these assignment \textbf{reasonable assignments},
write $\assigns_\ces$ for the reasonable assignments of $\ces$,
and leave out the subscript if $\ces$ is clear from the context.
Let $\hyperterm, \hyperterm_1,$ and  $\hyperterm_2$ be hyperterms.
\begin{enumerate}
	\item For each variable class $\colclass_i$ there is a CPT term defining the coefficient $\coeff{i}{\hyperterm}$
	of $\colclass_i$ in~$\hyperterm$.
	\item Under an assignment $\assignment \in \assigns$
	hyperterms can be evaluated to $\hyperterm[\assignment] \in \ZZ_q$.
	Given~$\assignment$ this can be done in CPT (but we usually do not have access to a single assigment~$\assignment$).
	\item Given a choice of a variable $\varA_i \in \colclass_i$ for each $i \in [n]$
	there is an equivalent linear term $t := \sum_{i \in [n]} \coeffc_i \varA_i + \constL$
	with $\coeffc_i := \coeff{i}{\hyperterm}$
	for $\hyperterm$, that is
	$T[\assignment] = t[\assignment] := \sum_{i \in [n]} \coeffc_i \assignment(\varA_i) + \constL$ for all assignments $\assignment$.
	\item Every constant $\constL \in \ZZ_q$ and every variable $\varA \in \Vars$
	is a hyperterm.
	\item There are CPT terms realizing
	addition and scalar multiplication on hyperterms
	that behave as expected with respect to evaluation and coefficients.
	That is,
	$\hyperterm_1[\assignment] + \hyperterm_2[\assignment] = (\hyperterm_1 + \hyperterm_2)[\assignment]$,
	$\constL \cdot \hyperterm[\assignment] = (\constL \cdot \hyperterm)[\assignment]$,
	in particular
	$\coeff{i}{\hyperterm_1 + \hyperterm_2} = \coeff{i}{\hyperterm_1} + \coeff{i}{\hyperterm_2}$ and
	$\coeff{i}{\constL \cdot \hyperterm} = \constL  \cdot \coeff{i}{\hyperterm}$
	for all $i \in [n]$, $\constL \in \ZZ_q$, and $\assignment \in \assigns$.
	Addition on hyperterms is \emph{not} associative and we stipulate evaluation to be from left to right.
	\item A hyperterm $\hyperterm$ is called \defining{constant}
	if $\hyperterm[\assignment] = \constR$ for some constant $\constR \in \ZZ_q$
	and all reasonable assignments $\assignment \in \assigns$.
	We call $\constR$ the \defining{value} of $\hyperterm$.
	There is a CPT term that defines the value of constant hyperterms.
	Necessarily $\coeff{i}{\hyperterm} = 0$ for all $i \in [n]$
	if $\hyperterm$ is constant.
	\item There is a CPT term that extends the preorder $\spleq$
	to all linear equations $t = \constR \in S$ and
	translates a linear term $t$ to an equivalent hyperterm~$\hyperterm_t$
	(and hence a linear equation $t = \constR \in S$ into
	an equivalent hyperequation $\hyperterm_t = \constR$)
	such that $\spleq$\nobreakdash-equivalent linear equations
	are translated into the same hyperequation,
	in particular the linear equations have the same solutions.
	So we obtain
	an equivalent system of hyperequations $\hyper{\ces}$
	with a total order on the hyperequations.
	\item The operations preserves set-theoretical membership.
	If the transitive closure of $\hyperterm_1 + \hyperterm_2$ (respectively of $\constR \cdot \hyperterm$)
	contains a variable $\varA$,
	then the transitive closure of $\hyperterm_1$ or $\hyperterm_2$ (respectively of $\hyperterm$)
	contains $\varA$.
\end{enumerate} 
Only the points~6.\ and~7.\ depend on the totality of the preorder $\spleq$.
Because $\hyper{\ces}$ is equivalent to $\ces$ (i.e.,~ they have the same satisfying assignments)
we can focus on checking ordered systems of hyperequations for consistency. 

Because $\hyper{\ces}$ is ordered
(both the variable classes and the hyperequation are),
we can define the coefficient matrix of $\ces$ as 
$\coeffmat:= \left(\coeff{i}{T_j}\right)_{j \in [m],i \in [n]}$.
We now use $m$ for the number of hyperequations in $H(\ces)$
and denote with $T_j$ the hyperterm of the $j$-th hyperequation $T_j = \constR_j$.

\subsection{Gaussian Elimination for Rings $\ZZ_q$}
\label{sec:gaussian-for-rings}

We want to apply a variant of Gaussian elimination
adapted for the ring $\ZZ_q$ with $q$ a prime power
(which is in particular only a field if $q$ is a prime).
Recall that Gaussian elimination uses elementary row operations
to convert the coefficient matrix to an upper triangular matrix.
Then consistency can be tested by
checking whether all \defining{atomic} equations,
i.e., equations with all coefficients $0$,
are consistent.
From a CPT perspective, it is important to have a total order on the variable classes
to pick the next class whose coefficient should be eliminated.
We also need the total order on the constraints
to pick the unique constraint, in which the coefficient is not to be eliminated.
We now sketch 
how the consistency check needs to be adapted for rings $\ZZ_q$,
in particular which additional constraints must be satisfied by the coefficient matrix such that we can check consistency similarly to fields.
For simplicity,
consider an ordinary linear system of equations in matrix form $\coeffmat\vvect + \cvectL= \cvectR$
with upper triangular coefficient matrix 
\[\coeffmat = \left( \begin{matrix}
\coeffc_{1,1}  & & &&\cdots &\coeffc_{1,n} \\
0 & \coeffc_{2,2} &&&\cdots&\coeffc_{2,n}  \\
\vdots &\ddots & \ddots &&& \vdots\\
0  & \cdots & 0 & \coeffc_{k,k} & \cdots & \coeffc_{k,n}\\
0 & \cdots &  & 0 &\cdots  & 0 \\
\vdots & &&&\ddots& \vdots \\
0 &\cdots &&&& 0
\end{matrix}   \right)\]
such that $\coeffc_{i,i} \neq 0$ for $i \leq k$
and variables $\vvect = (\var_1, \dots , \var_n)$.
Assume we found an assignment $\assignment$
for the variables $\var_{j+1}, \dots, \var_n$ ($j+1 \leq k$)
satisfying all equations that use only these variables.
We consider the $j$-th equations
$\sum_{j \leq i \leq n} \coeffc_{j,i} \var_i + \cvectL_j = \cvectR_{j}$.
We solve for the variable $\var_{j}$
and obtain
$\coeffc_{j,j}\var_{j} = \cvectR_{j} - \cvectL_j - \sum_{j < i \leq n} \coeffc_{j,i} \assignment(\var_i)$.
In a field we can divide both sides by $\coeffc_{j,j}$,
but in the ring $\ZZ_q$ this is not possible in general.
To overcome this problem,~\cite{pakusa2015}~makes use of the fact
that divisibility in $\ZZ_q$ for prime powers $q$ is a total preorder
and rearranges the order of the variables as follows:
$\varA$ is considered smaller than~$\varB$, 
if there is some coefficient of $\varA$ in some equation
that divides every coefficient of $\varB$.
If that is the case for both $\varA$ and $\varB$,
they are ordered according to the given order.
Then the variables are eliminated in this order.
When eliminating variable $\varA$,
the equation containing this minimal coefficient
(with respect to divisibility) is picked to remain.
In all other equations $\varA$ is eliminated
(if there are multiple such equations,
we pick one using the total order on the equations).
Consequently, it holds that $\coeffc_{j,j} | \coeffc_{j,i}$ for all $i \in [n]$
and that $\coeffc_{j,j} | \coeffc_{j+1, j+1}$.
Hence, once we ensure that $\coeffc_{j,j}$ also divides $(\cvectR_j- \cvectL_j)$,
we can divide by $\coeffc_{j,j}$.
Coefficient matrices satisfying this property
are said to be in Hermite normal form:

\begin{definition}(Hermite Normal Form, \cite{pakusa2015})
	A system of hyperequations is in \defining{Hermite normal form},
	if for some total ordering of its variable classes
	and some order of its equations
	its coefficient matrix $\coeffmat$ satisfies the following conditions:
	\begin{itemize}
		\item $\coeffmat$ is upper triangular.
		\item $\coeffc_{j,j} | \coeffc_{i,i}$ for all $j < i \leq k$,
		where $k$ is maximal such that $\coeffc_{k,k} \neq 0$.
		\item $\coeffc_{j,j} | \coeffc_{i,j}$ for all $j \in [k]$ and $i \in [n]$.
	\end{itemize}
\end{definition}

To check a system of hyperequations for consistency, it is not sufficient to check the atomic equations.
Recall from the case of linear equations above,
that we have to ensure that $\coeffc_{j,j} | (\cvectR_j - \cvectL_j)$
and that for a hyperequation $T_j = \cvectR_j$,
there is always an equivalent linear equation
$\sum_{i \in [n]} \coeffc_{j,i}\var_i + \cvectL_j = \cvectR_j$.
Also recall that we do not have access to $\cvectL_j$ directly,
because it depends on the choice of the $\var_i$.

It turns out, that a system of hyperequation in Hermite normal form
is consistent if and only if the following two conditions hold \cite[Lemma~5.21]{pakusa2015}:
\begin{enumerate}
	\item Every atomic hyperequation is consistent.
	\item For every non-atomic hyperequation $T = \constR$ the following holds:
	Let $q$ be a power of $p$ and $p^\ell$ be the smallest power of $p$
	annihilating all coefficients of $T$, that is $p^\ell T$ is constant.
	Then $p^\ell T = p^\ell \constR$ is consistent.
\end{enumerate}
From the second condition
we can deduce for non-atomic hyperequations that
$0 = p^\ell(\cvectR_j - \cvectL_j)$
and that $\coeffc_{j,j} | (\cvectR_j - \cvectL_j)$.
We show this in more detail later in our extension of the algorithm for TCESs.
It is clear that the former conditions can be checked in CPT,
because we can define the value of constant hyperterms.

\subsection{Solving TCESs in CPT}

Let $\tces = (\Vars, S, \spleq)$ be a TCES over $\ZZ_q$.
We now write $\colorclasses{\tces}$ for the set of all variable classes
and $\coeff{\colclass}{\hyperterm}$
for the coefficient of the variable class $\colclass$
in a hyperterm $\hyperterm$.
We call the directed tree on the variable classes induced by $\spleq$
the \defining{variable tree},
where there is an edge $(\colclass, \colclass')$
if and only if  $\colclass \spless \colclass'$
and there is no other variable class with
$\colclass \spless \colclass'' \spless \colclass'$.
After converting $\tces$ to a system of hyperequations,
we can transform its coefficient matrix to be upper triangular in CPT.
This can be achieved by first eliminating the variable classes
that are leaves in the variable tree, as explained below.
However, as argued in the previous section,
to obtain an upper triangular coefficient matrix it not sufficient,
since~$\ZZ_q$ may not be a field.
We could try to eliminate first the variable class with minimal coefficient
with respect to divisibility as it was done for CESs.
But then it may occur that a variable class, which is to be eliminated,
is not a leaf.
For non-leaf variable classes, we are not able to pick a unique equation to retain (while all others are eliminated).
In other words, reordering variables with regard to divisibility of the coefficients may not be compatible with the tree-like structure of the equation system.

Overall, we are unsure how to check consistency of a TCES in CPT.
Hence, in the following, we will only look at a certain restricted class of TCES,
where divisibility of the coefficients of ``critical'' variable classes
is not important.
As we argue later, this is sufficient for our overall goal of canonizing structures with dihedral colors.
Before defining this class of TCESs,
we need to introduce some terminology for the structure of the variable tree.

\begin{definition}
Let $\tces = (\Vars, S, \spleq)$ be a TCES.
A set $L \subseteq \colorclasses{\tces}$ of variable classes is called
a \defining{local component} if it satisfies the following
(cf.~Figure~\ref{fig:local-components}):
\begin{itemize}
	\item The induced graph of $L$ in the variable tree is a path.
	\item If $\colclass \in L$ is of degree at least $2$ in the variable tree,
	then the children of $\colclass$ are not in $L$.
	\item $L$ is maximal with respect to set inclusion.
\end{itemize}
A variable occurs in an equation if its coefficient in the equation is non-zero.
A variable is \defining{local} if 
in every equation in which it occurs,
only variables of the same local component occur, too.
Other variables are called \defining{global}.
An equation is \defining{local} if it contains at least one local variable and \defining{global} otherwise.
\end{definition}

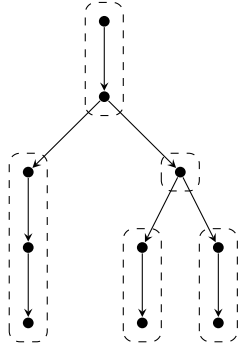
\begin{figure}
	\centering
	\begin{tikzpicture}[>=stealth,every node/.style={vertex},
	]
		\node (R) at  (0,0) {};
		\node (V1) at (0,-1) {};
		\node (V21) at (-1, -2) {};
		\node (V22) at (1 , -2) {};
		\node (V31) at (-1, -3) {};
		\node (V32) at (0.5,-3){};
		\node (V33) at (1.5,-3) {};
		\node (V41) at (-1, -4) {};
		\node (V42) at (0.5,-4){};
		\node (V43) at (1.5,-4) {};
		
		\path[->]
			(R) edge (V1)
			(V1) edge (V21)
			(V1) edge (V22)
			(V21) edge (V31)
			(V22) edge (V32)
			(V22) edge (V33)
			(V31) edge (V41)
			(V32) edge (V42)
			(V33) edge (V43);
			
		\draw[rounded corners, dashed] (-0.25, -1.25) rectangle (0.25, 0.25) {};
		\draw[rounded corners, dashed] (-1.25, -4.25) rectangle (-0.75, -1.75) {};
		\draw[rounded corners, dashed] (0.75, -2.25) rectangle (1.25, -1.75) {};
		\draw[rounded corners, dashed] (0.25, -4.25) rectangle (0.75, -2.75) {};
		\draw[rounded corners, dashed] (1.25, -4.25) rectangle (1.75, -2.75) {};
	\end{tikzpicture}
	\caption{The variable tree (a vertex represents a variable class)
		of a TCES with all local components.}
	\label{fig:local-components}
\end{figure}

On the local components the preorder $\spleq$ induces a tree
in which every local component has degree at least $2$ or is a leaf.
Note that local equations only use variables of the same local component,
but a global equation may also be restricted to just one local component
(if all occurring variables are global).
Furthermore, in a local equation
the coefficient of global variables of the same local component
can be non-zero.

We extend the definition to variable classes:
A variable class is called global, if it contains some global variable.
Likewise, the other variable classes are called local.
Note that a global variable class can contain both,
global and local variables
and that we cannot simply split the class
because we need to maintain that every class contains $q$ variables.
However, when we are working with hyperterms,
it will not be important, whether there is a local variable in a global variable class,
but only that the class occurs in global equations.

\begin{definition}
	A TCES $\tces = (\Vars, S, \spleq)$ over $\ZZ_q$ is called
	\defining{weakly global}, if
	\begin{itemize}
		\item $q$ is a power of an odd prime and
		every equation (equivalently every variable) is local  or
		\item $q=2^\ell$ is a power of $2$
		and for every global variable $\var \in \Vars$
		there is an equation $2\var = 0 \in S$.
	\end{itemize}
\end{definition}
Note that every CES is an weakly global TCES, because a CES only has one local component.
We first note that weakly global TCESs for odd prime powers are easy:
\begin{corollary}
	\label{cor:solve-tces-odd-prime-powers}
	Solvability of weakly global TCESs over $\ZZ_q$ for odd prime powers $q$ is CPT definable.
\end{corollary}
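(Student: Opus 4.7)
The plan is to exploit the fact that, for odd prime powers~$q$, every equation of a weakly global TCES is local by definition, which causes the system to decompose into a family of independent CESs indexed by local components; solvability then follows from the CPT-definable CES solver of~\cite{AbuZaidGraedelGrohePakusa2014}.

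First I would compute the local components of the input $\tces = (\Vars, S, \spleq)$. The variable tree is directly CPT-definable from $\spleq$, and a simple iteration term then assigns each variable class to its unique local component $L$ and collects $\Vars_L := \bigcup_{\colclass \in L} \colclass$ together with the induced ordering $\spleq|_{\Vars_L}$.

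Second, by the odd prime power case of weak globalness, every equation of $S$ contains only variables from a single local component, and likewise for the cyclic constraint on each variable class. So $S$ partitions as $S = \bigsqcup_L S_L$ where $S_L$ consists of the equations supported on $\Vars_L$. Since the classes of $L$ form a descending path in the variable tree, $\spleq$ restricted to $\Vars_L$ is a \emph{total} preorder, making $\tces_L := (\Vars_L, S_L, \spleq|_{\Vars_L})$ a genuine CES over $\ZZ_q$. Distinct components share no variables, so assignments can be chosen independently on each component, and $\tces$ is solvable if and only if every $\tces_L$ is solvable: a satisfying assignment of $\tces$ restricts to each component, and conversely any family of componentwise satisfying assignments glues to a reasonable assignment of $\tces$.

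Finally I would invoke the CPT-definable CES solvability test in parallel via a comprehension term ranging over all local components $L$: the TCES is solvable precisely when each $\tces_L$ is, which by~\cite{AbuZaidGraedelGrohePakusa2014} is a CPT-definable condition on each. No additional algebraic machinery is required at this point, because the full difficulty --- reconciling Hermite normal form over $\ZZ_{2^\ell}$ with a non-total variable ordering --- only arises in the $q = 2^\ell$ case, which lies outside the scope of this corollary.
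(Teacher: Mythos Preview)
Your proposal is correct and follows essentially the same approach as the paper: for odd prime powers the definition of weakly global forces all equations (and variables) to be local, so the TCES splits into independent CESs, one per local component, each of which is solved in parallel by the CES procedure of~\cite{AbuZaidGraedelGrohePakusa2014}. Your write-up simply spells out in more detail why each local component yields a genuine CES (the path structure making $\spleq$ total there) and why solvability decomposes componentwise, but the argument is the same.
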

\begin{proof}
	A weakly global TCESs for odd prime powers consists only of local equations,
	so is just a disjoint union of CESs,
	one for each local component.
	Hence we can solve each CES
	by the methods of~\cite{AbuZaidGraedelGrohePakusa2014, pakusa2015}
	in parallel.
\end{proof}

In the case that $q = 2^\ell$ is a power of $2$,
the constraints connecting these CESs are formed by variables
that can be either $0$ or $2^{\ell-1}$.
Hence, these variables are actually over~$\ZZ_2$, but embedded in $\ZZ_{2^\ell}$.
In $\ZZ_2$, divisibility of coefficients does not become an issue,
as there is only one non-zero coefficient.

We use the notion of local and global equations for hyperequations, too,
and adapt the definitions for
tree-like and weakly global systems of hyperequations:
Let $\tces = (\Vars, S, \spleq)$ be a system of hyperequations.
We call $\tces$ tree-like, if $\tces$ satisfies the following:
The preorder~$\spleq$ induces a tree on the variable classes
and
for every hypereqution $\hyperterm = \constR$
the transitive closure of $\hyperterm$ (set-theoretically) only contains variables which are $\spleq$\nobreakdash-comparable\footnote{
	This is a stronger condition than requiring that all variable classes
	with non-zero coefficient are $\spleq$\nobreakdash-comparable,
	because some hyperterm operations may set the coefficient of a variable class to zero,
	but the hyperterm as a set only grows,
	so its transitive closure still contains the variables of that class.}.
Additionally,
there is a CPT-definable preorder on the hyperequations,
such that for every path from the root to some leaf in the variable tree,
the preorder is a total order on the subset $S' \subseteq S$ of hyperequations
in which only variable classes contained in said path have non-zero coefficients.
Finally, $\tces$ is called weakly global,
if it satisfies the same conditions a weakly global TCES does
but for variable classes rather than for variables.
Note that we can define what the global variables are for $\tces$:
Let $T=0$ be a hyperequation with coefficient $2$ for a global variable class
and coefficient $0$ otherwise.
Then there is only one equivalent linear term $2\var = 0$
(all others are $2\var + \constL = 0$ with $\constL\neq 0$)
and the variable $\var$ is global.

When converting a weakly global TCES to a system of hyperequations,
then the resulting system is tree-like  and weakly global.
The order on the hyperequations
for a path from the root to a leaf in the variable tree
is given by the same construction
as the order on all hyperequations in the case of a CES
because the restriction of $\tces$ to these equations is indeed a CES.

\subsection{Systems of Hyperequations over $\ZZ_{2^\ell}$}

As explained before, we can solve weakly global TCESs over $\ZZ_q$
if $q$ is an odd prime power.
So fix $q := 2^\ell$ to be a power of $2$.
We first make the following simple observation:
\begin{lemma}
	\label{lem:zz2-in-zz-d}
	Let $\tces$ be a weakly global TCES.
	If $\tces$ is consistent, then the following conditions hold
	for all global variable classes $\colclass$:
	\begin{enumerate}[label=\alph*)]
		\item There are at most two global variables $\varA,\varB \in \colclass$.
		\item Let $\varA,\varB \in \colclass$  be global variables with $\varA \neq \varB$.
		Then the equation
		$\varA-\varB=2^{\ell-1}$
		is contained in the cyclic constraint for $\colclass$.
	\end{enumerate}
\end{lemma}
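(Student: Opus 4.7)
The starting observation is that the definition of a weakly global TCES over $\ZZ_{2^\ell}$ forces every global variable $\var$ to satisfy the equation $2\var = 0$ in $S$. Hence, in any satisfying assignment $\assignment$, we have $\assignment(\var) \in \{0, 2^{\ell-1}\}$, since these are exactly the elements of $\ZZ_{2^\ell}$ annihilated by $2$. So on the set of global variables, the consistency problem reduces to a $\ZZ_2$-problem sitting inside $\ZZ_{2^\ell}$ via the embedding $0 \mapsto 0$, $1 \mapsto 2^{\ell-1}$.

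For part b), pick two distinct global variables $\varA, \varB \in \colclass$. The cyclic constraint on $\colclass$ contains an equation $\varA - \varB = \constR$ for some $\constR \in \ZZ_{2^\ell}$, and by the preprocessing step mentioned earlier (which ensures $\constR \neq 0$ for any equation between distinct variables within a cyclic constraint) we know $\constR \neq 0$. On the other hand, for any satisfying assignment $\assignment$ we have
\[
\constR \;=\; \assignment(\varA) - \assignment(\varB) \;\in\; \{0, 2^{\ell-1}, -2^{\ell-1}\} \;=\; \{0, 2^{\ell-1}\},
\]
because $-2^{\ell-1} = 2^{\ell-1}$ in $\ZZ_{2^\ell}$. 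Combined with $\constR \neq 0$, this forces $\constR = 2^{\ell-1}$.

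For part a), suppose for contradiction that $\colclass$ contained three pairwise distinct global variables $\varA_1, \varA_2, \varA_3$. By part b) the cyclic constraint on $\colclass$ contains $\varA_i - \varA_j = 2^{\ell-1}$ for every ordered pair $i \neq j$. Adding the equations for the pairs $(1,2)$ and $(2,3)$ yields
\[
\varA_1 - \varA_3 \;=\; 2^{\ell-1} + 2^{\ell-1} \;=\; 2^{\ell} \;=\; 0
\]
in $\ZZ_{2^\ell}$, contradicting the equation $\varA_1 - \varA_3 = 2^{\ell-1}$ guaranteed by part b) (and the fact that cyclic constraints are required to be consistent). Hence $\colclass$ contains at most two global variables.

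There is no real obstacle here: the entire argument is a direct exploitation of the fact that the constraint $2\var = 0$ pins global variables to the unique order-$2$ subgroup of $\ZZ_{2^\ell}$, together with the consistency of cyclic constraints. The only subtlety is being explicit about using the preprocessing convention $\constR \neq 0$ for distinct variables, which is what rules out degenerate cases in part b).
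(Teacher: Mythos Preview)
Your proof is correct and uses essentially the same idea as the paper: the constraint $2\var=0$ forces every global variable to take a value in $\{0,2^{\ell-1}\}$, and this is incompatible with the cyclic constraints unless the class has at most two global variables and their difference is $2^{\ell-1}$. The only cosmetic difference is that the paper proves a) directly by pigeonhole (three variables, two admissible values, but cyclic constraints force distinct values), whereas you first establish b) and then deduce a) from it; both routes are equally short and valid.
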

\begin{proof}
	For a), assume there are $3$ global variables $\varA_1, \varA_2, \varA_3 \in \colclass$
	and hence there are equations $2\varA_i = 0$ for all $i \in [3]$,
	so $\assignment(\varA_i) \in \set{0, 2^{\ell-1}}$ for all satisfying assignments $\assignment$ and $i \in [3]$.
	But by the cyclic constraints we have that $\assignment(\varA_i) \neq \assignment(\varA_j)$ for all $i \neq j$.
	
	For b), assume that the equation $\varA - \varB = \constR$ for $ \constR \neq 2^{\ell-1}$
	is contained in the cyclic constraints.
	Then all reasonable assignments necessarily violate
	$\assignment(\varA), \assignment(\varB) \in \set{0, 2^{\ell-1}}$.
\end{proof}
The conditions express formally that we really embed $\ZZ_2$ in $\ZZ_q$
and they can surely be checked in CPT.
So we assume in the following that all global variable classes satisfy the conditions.
We adapt the required normal form for the coefficient matrix:
\begin{definition}[Local Hermite normal form]
	Let $\tces = (\Vars, S, \spleq)$ be a tree-like system of hyperequations
	over $\ZZ_{2^\ell}$.
	We say that $\tces$ is in \defining{local Hermite normal form}
	if there are (not necessarily CPT-definable) total orders
	on the variable classes
	$\colclass_1 \spless \cdots \spless \colclass_n$
	and on the equations
	such that the coefficient matrix $\coeffmat$ satisfies the following:
	
	$\coeffmat$ is upper triangular
	(whereby the coefficients of global variables are taken modulo~$2$)
	and for every local component $L$
	the submatrix $\coeffmat_L$
	of all local equations of $L$ has the following properites:
	\begin{itemize}
		\item All global variable classes of $L$ are at the end of the linear order.
		\item $\coeffmat_L$ has the following shape:
		\[\coeffmat_L = \left( \begin{matrix}
		\coeffmat_L^\text{local} & \coeffmat^\text{global}_L
		\end{matrix}   \right)\]
		where $\coeffmat_L^\text{local}$ is the submatrix
		of all columns of the local variable classes.
		\item $\coeffmat_L^\text{local}$ is in Hermite normal form.
	\end{itemize}
	In that case we also say that $L$ is in local Hermite normal form.
\end{definition}
Note that $\coeffmat_L^\text{local}$ does not contain zero rows,
because the equation encoded by a zero row
would be either global or atomic -- both not local equations of $L$.
We are only interested in the coefficients for global variable classes
modulo $2$,
because the values of these variables are~$0$ or of order~$2$
and hence only the pairity of coefficients is important.
Note that we cannot simply replace the coefficients
by the pairity,
because we do not have such an operation on hyperterms available.

Before we address solvability of weakly global TCESs in local Hermite normal form,
we revisit the problem of defining the value of a hyperterm.
Now, we are not interested anymore in constant hyperterms $\hyperterm$,
which are those for which 
$\hyperterm[\assignment]$ is constant for all reasonable assignments $\assignment \in \assigns$.
Rather we will consider an even more restricted set of assignments.
In the case of global variables only assignments are of interest
that assign global variables values of order at most $2$,
because otherwise the $2\varA =0$ constraints are violated.
We refine our definition of reasonable assignments for weakly global TCESs.

\begin{definition}
	For a weakly global TCES $\tces= (\Vars, S, \spleq)$ over $\ZZ^{2^\ell}$
	an assignment $\Vars \to \ZZ_{2^\ell}$ is called \defining{reasonable},
	if it satisfies all cyclic constraints
	and all $2\varA = 0$ constraints for all global variables $\varA$.
\end{definition}
Now, we can keep our definitions of constant hyperterms.
Regarding the coefficient of constant hyperterms,
we now obtain the following characterization:
\begin{lemma}
	\label{lem:order-2-constant-hyperterm}
	Let $\tces$ be a weakly global TCES.
	A hyperterm $T$ 
	is constant
	if and only if
	$\coeff{\colclass}{T} = 0$ for all local variable classes $\colclass$
	and
	$\coeff{\colclass}{T} \equiv 0~(\text{mod}\  2)$
	for all global color classes~$\colclass$.
\end{lemma}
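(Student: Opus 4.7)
The plan is to reduce the question to understanding reasonable assignments class by class, exploiting that there are no reasonable constraints coupling variables across distinct classes. Concretely, I would first fix, for each variable class $\colclass \in \colorclasses{\tces}$, a reference variable $v_\colclass \in \colclass$. The cyclic constraint then pins down $\xi(u) = \xi(v_\colclass) + c_u$ for every $u \in \colclass$ with a constant $c_u \in \ZZ_{2^\ell}$ depending only on $\tces$. Combining with the $2\varA = 0$ constraints and Lemma~\ref{lem:zz2-in-zz-d}, I obtain a clean parameterization: if $\colclass$ is local then $\xi(v_\colclass)$ ranges freely over $\ZZ_{2^\ell}$, and if $\colclass$ is global then $\xi(v_\colclass)$ ranges over a two-element set of the form $\{r_\colclass, r_\colclass + 2^{\ell-1}\}$. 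The crucial point is that these parameters are mutually independent across classes.

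Next I would translate the hyperterm $T$ to an equivalent linear expression using the fact that for any choice of $u_\colclass \in \colclass$ we have
\[ T[\xi] \;=\; \sum_{\colclass \in \colorclasses{\tces}} \coeff{\colclass}{T} \cdot \xi(u_\colclass) + b \]
for some constant $b$. Substituting $\xi(u_\colclass) = \xi(v_\colclass) + c_{u_\colclass}$ and absorbing the constants into a new constant $b'$, we obtain
\[ T[\xi] \;=\; \sum_{\colclass} \coeff{\colclass}{T} \cdot \xi(v_\colclass) + b'. \]

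For the ``only if'' direction, assume $T$ is constant and fix one reasonable assignment $\xi_0$ (existence is immediate, e.g.\ set every $\xi_0(v_\colclass) = 0$ in the local case and $\xi_0(v_\colclass) = r_\colclass$ in the global case). For a single local class $\colclass_0$, modifying $\xi_0$ by adding any $d \in \ZZ_{2^\ell}$ to $\xi_0(v_{\colclass_0})$ and propagating via the cyclic constraints yields another reasonable assignment (no $2\varA = 0$ constraint applies). Subtracting from $T[\xi_0]$ gives $\coeff{\colclass_0}{T} \cdot d = 0$ for all $d$, hence $\coeff{\colclass_0}{T} = 0$. For a global class $\colclass_0$, shifting $\xi_0(v_{\colclass_0})$ by $2^{\ell-1}$ preserves membership in $\{r_{\colclass_0}, r_{\colclass_0}+2^{\ell-1}\}$ and preserves every $2\varA = 0$ constraint, so the shifted assignment is reasonable; constancy forces $\coeff{\colclass_0}{T} \cdot 2^{\ell-1} = 0$ in $\ZZ_{2^\ell}$, i.e.\ $\coeff{\colclass_0}{T} \equiv 0 \pmod{2}$.

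The ``if'' direction runs the same computation backwards: if $\coeff{\colclass}{T} = 0$ on local classes and $\coeff{\colclass}{T}$ is even on global classes, then the local summands in $\sum_\colclass \coeff{\colclass}{T} \xi(v_\colclass)$ vanish and each global summand equals $\coeff{\colclass}{T} \cdot r_\colclass$, independent of which of the two values $\xi(v_\colclass)$ takes, because $\coeff{\colclass}{T} \cdot 2^{\ell-1} = 0$. Thus $T[\xi]$ equals the same constant for every reasonable $\xi$. There is no real obstacle here; the only point requiring care is verifying that the perturbations proposed above really do yield reasonable assignments, which is exactly what Lemma~\ref{lem:zz2-in-zz-d} and the definition of ``weakly global'' are set up to guarantee.
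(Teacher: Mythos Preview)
Your proof is correct and follows essentially the same approach as the paper: both directions hinge on perturbing a reasonable assignment in a single variable class (by an arbitrary $d$ for local classes, by $2^{\ell-1}$ for global classes) and reading off the coefficient condition from constancy. The paper defers the local case to the known CES argument and, for the ``if'' direction, tacitly chooses the reference variable in each global class to be a global variable so that its value has order at most~$2$; your explicit parameterization $\{r_\colclass, r_\colclass + 2^{\ell-1}\}$ avoids that choice but is otherwise the same computation.
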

\begin{proof}
	The condition on the local variable classes is the same
	as for CES
	(recall that, restricted on a local component, a TCES is just a CES).
	So we can assume that $\coeff{\colclass}{T} = 0$ for all local variable classes $\colclass$.
	
	Consider an equivalent linear term $t = \sum_{\colclass \in \colorclasses{\Vars}} \coeffc_\colclass\var_\colclass + \constL$ of $T$.
	Assume $\coeffc_\colclass = \coeff{\colclass}{T} \equiv 1~(\text{mod}\  2)$
	for some global color class $\colclass$.
	Let $\assignment$ be a reasonable assignment
	(which always exists due to the assumption
	that the conditions of Lemma~\ref{lem:zz2-in-zz-d} hold).
	Let $\assignment'$ be the assignment that assigns the same values as $\assignment$
	but $\assignment'(\var) = \assignment(\var) + 2^{\ell-1}$
	for all variables $\var\in\colclass$.
	Then~$\assignment'$ is still reasonable,
	but evaluates differently for $t$:
	$t[\assignment] = t[\assignment'] + 2^{\ell-1}$.
	
	For the other direction, assume that
	$\coeff{\colclass}{T} \equiv 0~(\text{mod}\  2)$ for all global color classes~$\colclass$.
	Let~$\assignment$ be a reasonable assignment.
	Then $t[\assignment] = \sum_{\colclass \in \colorclasses{\Vars}} \coeffc_\colclass\assignment(\var_\colclass) + \constL = \constL$
	because the coefficients of all local variable classes are~$0$
	and $\coeff{\colclass}{T}$ annihilates values of order at most~$2$
	for all global color classes.
\end{proof}

\begin{lemma}
	\label{lem:values-for-order-k-constant-hyperterms}
	Let $\tces$ be a weakly global system of hyperequations over $\ZZ_{2^\ell}$
	and $\hyperterm$ be constant hyperterm of $\tces$
	such that its transitive closure (set-theoretically) contains only variables from a single root-to-leaf path
	in the variable tree.
	There is a CPT term that on given such a hyperterm $\hyperterm$ defines the value of $\hyperterm$.
\end{lemma}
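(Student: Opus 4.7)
The plan is to generalize the CPT term defining the value of a constant hyperterm in a CES from \cite{AbuZaidGraedelGrohePakusa2014, pakusa2015} by exploiting the totally ordered substructure that $T$ ``sees''. First I would observe that although the variable tree as a whole is not totally ordered, the variable classes on the root-to-leaf path $P$ containing $\text{TC}(T)$ are totally ordered by $\spleq$. Moreover, by the definition of a tree-like system of hyperequations, the CPT-definable preorder on hyperequations restricts to a total order on those hyperequations whose transitive closure lies entirely on $P$. Hence restricting the system to $P$ gives a totally ordered substructure behaving like a CES over $\ZZ_{2^\ell}$ augmented with some global variable classes.

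Next I would canonically construct a single reasonable assignment $\assignment_P$ on the variables of $P$. Proceeding along $P$ from root to leaf, for each variable class $\colclass$ the cyclic constraint determines all variables of $\colclass$ up to an additive shift in $\ZZ_{2^\ell}$. For a local variable class I pick the canonically first variable (using the internal ordering of the class provided by the CES preprocessing) and assign it~$0$; the cyclic constraint then determines the rest. For a global variable class I invoke Lemma~\ref{lem:zz2-in-zz-d}: at most two global variables exist in $\colclass$, and if two exist they differ by $2^{\ell-1}$. I assign $0$ to the canonically first global variable (and thus $2^{\ell-1}$ to the second, if present); the cyclic constraint determines the remaining variables, and the constraints $2\varA = 0$ are satisfied by construction. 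I then extend $\assignment_P$ arbitrarily to a reasonable assignment $\assignment$ on all of $\Vars$; an extension exists because we have already assumed the conditions of Lemma~\ref{lem:zz2-in-zz-d} hold, and its particular choice is irrelevant since $T$ mentions no variable outside $P$.

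To finish, I evaluate $T[\assignment]$, which is CPT-definable by the general evaluation fact for hyperterms. Since $T$ is constant and $\assignment$ is a reasonable assignment, $T[\assignment]$ equals the desired value of $T$. The main obstacle is verifying that the canonical construction of $\assignment_P$ actually goes through in CPT, particularly that ``picking the first variable of a class'' and ``propagating through the cyclic constraint'' interact correctly with the hyperterm encoding, and that the global case does not interfere with the local propagation on the same path. This essentially reuses the canonical selection already performed inside the CES value-extraction procedure, transported along the path $P$, with the only new ingredient being the treatment of global variable classes described above, which is controlled by Lemma~\ref{lem:zz2-in-zz-d} and the characterization of constant hyperterms in Lemma~\ref{lem:order-2-constant-hyperterm}.
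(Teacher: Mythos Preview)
Your proposal has a genuine gap at the step where you ``pick the canonically first variable'' of a variable class. The CES preprocessing you invoke does \emph{not} provide an internal total order on the variables of a class: it only ensures each class has exactly $q$ variables related by a cyclic constraint, and a cyclic constraint equips the class with a cyclic (torsor) structure, not a linear one. There is no distinguished element. Indeed, the inability to canonically choose a representative from a variable class is precisely the reason hyperterms were introduced in the first place; if such a choice were available, one could bypass the entire hyperterm machinery and work with ordinary linear terms. The same objection applies to your treatment of global classes: Lemma~\ref{lem:zz2-in-zz-d} tells you there are at most two global variables differing by $2^{\ell-1}$, but it does not single out one of them.

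The paper avoids this problem by never choosing. For the smallest variable class $\colclass$ appearing in the transitive closure of $\hyperterm$, it forms the \emph{set} of all reasonable partial assignments on $\colclass$ (a CPT-definable set of bounded size), substitutes each of them syntactically into $\hyperterm$, and then uses constancy of $\hyperterm$ together with Lemma~\ref{lem:order-2-constant-hyperterm} to argue that all these substitutions produce the \emph{same} hyperterm. That common result is therefore CPT-definable via $\Unique$, and one iterates along the path. Your reduction to a single root-to-leaf path is correct and is exactly what the paper does as well; the missing idea is replacing ``pick one assignment'' by ``try all assignments and observe they agree''.
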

\begin{proof}
	Recall that a hyperterm is a nested set of variables and constants,
	in particular it can set-theoretically contain (at some level)
	a variable, even if its variable class has coefficient $0$.
	
	A TCES restricted to one root-to-leaf path is a CES.
	We apply the same procedure to define the value of $\hyperterm$
	as in~\cite[Lemma~5.11]{pakusa2015},
	but we have to argue that it works with the global variable classes
	(which only have coefficient $0$ modulo $2$).

	The procedure follows inductively the order of the CES.
	Let $\colclass$ be the smallest variable class (with respect to the order)
	of which a variable is set-theoretically contained in $\hyperterm$.
	We consider all possible reasonable assignments of all variables in these class.
	For each such assignment,
	we syntactically replace all occurrences of the variables
	according to the assignments in $\hyperterm$.
	Because $\hyperterm$ was constant,
	the result is the same for all assignments:
	By Lemma~\ref{lem:order-2-constant-hyperterm}
	the values for global variables are always annihilated,
	because the assignment only uses value of order at most $2$
	and the coefficient is $0$ modulo $2$.
	We then proceed with the next class.
	In the end, we are left with a variable free hyperterm
	whose value can be defined easily as shown in \cite[Lemma~5.11]{pakusa2015}.
\end{proof}

We are now ready to characterize consistent systems of tree-like
hyperequations in local Hermite normal form:

\begin{lemma}
	A weakly global tree-like system of hyperequations $\tces = (\Vars, S, \spleq)$ over $\ZZ_{2^\ell}$
	in local Hermite normal form
	is consistent if and only if the following holds:
	\begin{enumerate}[label=\alph*)]
		\item Every atomic equation is consistent.
		\item For every non-atomic equation $\hyperterm = \constR \in S$ the following holds:
		Let $2^k$ be the minimal power of $2$ that annihilates
		the coefficients of every local variable class in $\hyperterm$.
		If all local variables classes have coefficient zero, we set $k = 1$.
		Then  $2^k \hyperterm$ is constant and
		$2^k \hyperterm = 2^k \constR$ is consistent.
	\end{enumerate}
\end{lemma}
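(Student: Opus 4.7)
The forward direction is formal. Suppose $\assignment$ is a reasonable satisfying assignment. Condition~(a) is immediate. For~(b), the choice of $k$ ensures that every local variable class has coefficient~$0$ in $2^k \hyperterm$, while $k \geq 1$ forces the coefficient of every global variable class to be even in $2^k \hyperterm$. By Lemma~\ref{lem:order-2-constant-hyperterm} applied to the root-to-leaf path supporting $\hyperterm$, the hyperterm $2^k \hyperterm$ is therefore constant, and its value is $2^k \hyperterm[\assignment] = 2^k \constR$.

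For the converse I would construct a reasonable satisfying assignment in two stages, exploiting the local Hermite normal form. \emph{Stage~1 (global subsystem):} Since local variables appear only in local equations, global equations involve only global variables, whose values in any reasonable assignment lie in $\{0, 2^{\ell-1}\}$. Encoding each global variable as $2^{\ell-1} \bar{\varA}$ with $\bar{\varA}\in\ZZ_2$, a global equation $\sum \coeffc_i \varA_i + \constL = \constR$ collapses to $\sum (\coeffc_i \bmod 2)\, \bar{\varA}_i = (\constR-\constL)/2^{\ell-1}$. This is well-defined precisely when $2(\constR-\constL)=0$, which is exactly the content of condition~(b) with $k=1$, since $2\hyperterm = 2\constL$ for any global hyperterm (as $2\varA=0$ for every global variable). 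Local Hermite normal form is engineered so that this $\ZZ_2$-subsystem is upper triangular with an odd pivot in each row that is not atomic modulo~$2$; hence, $\ZZ_2$ being a field, Gaussian elimination (respecting the tree order on hyperequations along each root-to-leaf path) yields an assignment to all global variable classes, with consistency of the atomic $\ZZ_2$-equations supplied by condition~(a). The cyclic constraints then fix the remaining variables sitting in global variable classes.

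\emph{Stage~2 (local components):} Substituting the Stage~1 values, the local equations of each local component $L$ become a CES over the local variables of $L$ whose coefficient matrix is exactly $\coeffmat_L^{\text{local}}$, which is in Hermite normal form by hypothesis. Conditions~(a) and~(b) descend to this CES: atomic equations are inherited directly, and for a non-atomic local equation with parameter $k \geq 1$, multiplication by $2^k$ annihilates the contribution of every fixed global variable (because global variable values are of order~$2$), so the CES-condition $2^k \hyperterm' = 2^k \constR'$ after substitution is literally $2^k \hyperterm = 2^k \constR$, which holds by~(b). The CES solvability result of~\cite{AbuZaidGraedelGrohePakusa2014, pakusa2015} (their Lemma~5.21) then yields a satisfying assignment for the local variables of $L$. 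Gluing the global assignment from Stage~1 with the per-component assignments from Stage~2 produces a reasonable assignment of $\tces$ that satisfies every equation: atomic ones by~(a), global ones by Stage~1, and local equations of each $L$ by Stage~2.

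The main obstacle is Stage~1: one must verify that local Hermite normal form is strong enough to prevent the degenerate situation of a non-atomic global equation whose global coefficients all collapse to zero modulo~$2$ — for such an equation condition~(b) would only pin down $\constR$ modulo $2^{\ell-1}$ and could fail to force consistency. The local Hermite normal form rules out this case because any non-atomic global-pivoting row keeps an odd pivot in a global column. One also has to carefully track how the substitution constant $\constL$ is produced from the hyperterm machinery in order to justify the identity $2\hyperterm = 2\constL$ for global $\hyperterm$; Stage~2 is then essentially bookkeeping on top of the existing CES theorem.
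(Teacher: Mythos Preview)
Your forward direction matches the paper's exactly.

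For the reverse direction you take a genuinely different route. The paper does \emph{not} separate into a global stage followed by per-component CES stages; instead it fixes a single total order on the variable classes witnessing local Hermite normality, picks one representative per class to obtain a linear system $\coeffmat\vvect+\cvectL=\cvectR$, and runs a minimal-counterexample back-substitution: assuming some reasonable $\assignment$ satisfies all rows with index $>j$ but not row $j$, it shows $\assignment(\var_j)$ can be reset to fix row $j$, using condition~(b) directly in a short case split on whether row $j$ is global (so $\coeffc_{j,j}\equiv 1\pmod 2$ and the new value has order at most $2$) or local (so the Hermite divisibility $\coeffc_{j,j}\mid\coeffc_{j,i}$ combined with $2^k(\cvectR_j-\cvectL_j)=0$ from~(b) gives $\coeffc_{j,j}\mid(\cvectR_j-\cvectL_j)$). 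Your decomposition into a $\ZZ_2$-system on global classes followed by an appeal to the black-box CES lemma on each local component is cleaner in that it reuses the CES result wholesale rather than reproving it inline; the paper's argument is more self-contained but repeats the Hermite-normal-form computation that already underlies the CES lemma. Your observation that the CES condition on a local equation after substituting global values is literally the TCES condition~(b) (because $k\geq 1$ kills the order-$2$ global contributions) is the key point, and it is correct.

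One caveat worth flagging: the ``main obstacle'' you identify --- a non-atomic global equation whose coefficients are all even --- is also glossed over in the paper's proof, which simply asserts $\coeffc_{j,j}\equiv 1\pmod 2$ for non-atomic global rows without deriving it from the stated definition of local Hermite normal form. Your handling of this point is no weaker than the paper's; both implicitly read ``upper triangular modulo $2$'' as including that non-zero-mod-$2$ rows have their pivot on the diagonal. You are right to single it out as the place where the argument needs the most care.
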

\begin{proof}
	If $\tces$ is consistent, then surely Condition~a) is satisfied.
	Condition b) follows from Lemma~\ref{lem:order-2-constant-hyperterm}
	because  $k\geq 1$ and $\tces$ is weakly global and consistent.
	
	For the other direction,
	assume that Conditions~a) and~b) hold.
	Recall that changing the order of the variable classes or equations
	does not affect consistency.
	We order the variable classes
	$\colclass_1 \spless \cdots \spless \colclass_n$
	as given by local Hermite normality
	and pick for every variable class~$\colclass_i$ one variable $\var_i \in \colclass_i$.
	Let $\coeffmat\vvect + \cvectL = \cvectR$ be the equivalent linear system,
	where $\coeffmat$ is the $m \times n$ coefficient matrix
	as granted by local Hermite normality,
	$\cvectR$ contains the same values as the right hand side of $\tces$,
	and $\vvect = (\var_1, \dots, \var_n)$.
	Such a system exists by the properties of hyperterms.
	Note that we do not want to construct a satisfying assignment in CPT
	but only show that it exists
	and hence we can order the variable classes and pick the variables.
	
	Let $\assignment \in \assigns$
	be a reasonable assignment
	such that $\sum_{i \in [n]}\coeffc_{k,i}\assignment(\var_k) + \cvectL_k = \cvectR_k$
	for all $j < k \leq n$ but not for $k = j$
	(that is, all rows not among the the first $j$ rows in $\coeffmat$ are satisfied)
	and $j$ is minimal.
	If $j = 0$, then $\tces$ is consistent.
	So assume that $j \geq 1$.
	We show that this contradicts minimality of $j$.

	By Condition~a) all atomic equations are consistent
	and so the $j$-th equation is not atomic.	
	Because $\coeffmat$ is upper triangular,
	$\assignment$ satisfies all equations
	in which only $\var_{j+1}, \dots, \var_n$ occur.
	We consider the $j$-th equation $\sum_{i \in [n]} \coeffc_{j,i} \var_i + \cvectL_j = \cvectR_j$,
	which additionally uses the variable $\var_j$.
	Note that we can change $\assignment(\var_j)$ without affecting
	the equations with index~$>j$ because $\coeffmat$ is upper triangular.
	We show that we can find a value $\assignment(\var_j)$
	of appropriate order 
	also satisfying the $j$-th equation.
	We make the following case distinction:
	
	\begin{itemize}
		\item The $j$-th equation is global.
		Then in particular $2T_j = 2 \cvectL_j =  2\cvectR_j$ is consistent
		for the corresponding hyperterm $T_j$ by Condition~b)
		and $2(\cvectR_j-\cvectL_j) = 0$.
		So $\cvectR_j-\cvectL_j$ and
		$\constL := (\cvectR_j-\cvectL_j) - \sum_{i \in [n], i \neq j} \coeffc_{j,i}\assignment(\var_i)$
		are of order at most $2$
		because $\assignment$ is reasonable by assumption
		and all $\varA_i$ with non-zero coefficient must be global.
		Because $\coeffmat$ is in local Hermite normal form,
		$\coeffc_{j,j}\equiv 1\ (\text{mod}\ 2)$
		and we set $\assignment(\var_i) := \constL$.
		
		\item The $j$-th equation is local.
		Then by local Hermite normality the variable $\var_j$ is local
		and $\coeffc_{j,j} | \coeffc_{j,i}$
		for all local variable classes $\colclass_i$.
		
		Let $2^k$ be the smallest power annihilating all these $\coeffc_{j,i}$.
		In particular $k \geq 1$ and $2^k T_j = 2^k\cvectL_j$.
		So $2^k(\cvectR_j - \cvectL_j) = 0$ by Condition~b).
		We show $\coeffc_{j,j} | (\cvectR_j - \cvectL_j)$.
		Every $\constL \in \ZZ_{2^\ell}$ can be written as
		$\constL = 2^{k'} \cdot \constL'$,
		for a unit $\constL' \in \ZZ_{2^\ell}$ of order $2^\ell$
		and $k'$ is independent of the choice of $\constL'$.
		The minimal power of $2$ annihilating $\constL$
		is $2^{\ell -k'}$.
		In particular $\coeffc_{j,j} = 2^{\ell-k} \constR$
		for some unit $\constR$ and $k \geq 0$
		because $\coeffc_{j,j}|\coeffc_{j,i}$
		and $2^\ell$ is the smallest power of $2$ annihilating the~$\coeffc_{j,i}$.
		Next,
		$\cvectR_j-\cvectL_j = 2^{\ell-k'} \constR'$ for $0 \leq k' \leq k$
		for a unit $\constR'$
		because $2^k$ annihilates $\cvectR_j-\cvectL_j$.
		This means that $\coeffc_{j,j} | (\cvectR_j-\cvectL_j)$
		because $2^{\ell-k} | 2^{\ell-k'}$ and $\constR|\constR'$
		(both are units).
		
		Lastly, $\coeffc_{j,j} | \coeffc_{j,i} \assignment(v_i)$
		for all global variables $\var_i$
		because $\coeffc_{j,j} \neq 0 $ and
		$\assignment(\var_i)$ and thus $\coeffc_{j,i} \assignment(\var_i)$
		are of order at most $2$.
		We conclude that 
		$\coeffc_{j,j} | (\cvectR_j - \cvectL_j) - \sum_{i \in [n],i \neq j} \coeffc_{j,i} \assignment(\var_i) =: \constL$
		and thus there is a value $\assignment(\var_j)$
		satisfying $\coeffc_{j,j} \assignment(\var_j) = \constL$.
		\qedhere
	\end{itemize}
\end{proof}
\begin{corollary}
	\label{cor:solvability-hermite-normal-form}
	Solvability of weakly global tree-like systems of hyperequations 
	in local Hermite normal form is CPT-definable.
\end{corollary}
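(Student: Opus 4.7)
The plan is to invoke the preceding lemma and turn its two conditions into CPT-definable tests. Since the input is already assumed to be in local Hermite normal form, no choice of total orders on variable classes or equations has to be produced in CPT; only the two consistency conditions have to be checked. Both conditions are quantified over individual hyperequations, so I would organize the CPT formula as a universally quantified conjunction over the finite set of equations of the given system, each conjunct being a local test on one hyperequation.

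For Condition~a) (all atomic equations consistent), I would first pick out the atomic hyperequations: a hyperequation $T = \constR$ is atomic iff $\coeff{\colclass}{T} = 0$ for every variable class $\colclass$, which is directly expressible using the CPT coefficient term. For such a $T$, Lemma~\ref{lem:order-2-constant-hyperterm} implies that $T$ is constant, and its value can be computed by Lemma~\ref{lem:values-for-order-k-constant-hyperterms} and compared to $\constR$.

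For Condition~b) (non-atomic equations), the plan is to compute, for each non-atomic $T = \constR$, the smallest exponent $k$ such that $2^k$ annihilates all local coefficients of $T$, defaulting to $k = 1$ when every local coefficient is already zero. Since the range of exponents is bounded by $\ell$ and the local variable classes form a CPT-definable subset of all variable classes, this $k$ is CPT-definable. I would then use hyperterm scalar multiplication to form $2^k T$, verify via Lemma~\ref{lem:order-2-constant-hyperterm} that $2^k T$ is constant (all local coefficients vanish by the choice of $k$, and the criterion reduces to checking parity of every global-variable-class coefficient), and finally compare the value of $2^k T$ with $2^k \constR$ using the CPT term from Lemma~\ref{lem:values-for-order-k-constant-hyperterms}.

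The main obstacle I anticipate is verifying the side condition of Lemma~\ref{lem:values-for-order-k-constant-hyperterms}, which demands that the transitive closure of the constant hyperterm involve only variables lying on a single root-to-leaf path of the variable tree. By the definition of a tree-like system of hyperequations, the transitive closure of each $T$ already consists of pairwise $\spleq$-comparable variables, hence variables along one root-to-leaf path; and property~8 of hyperterms states that scalar multiplication by $2^k$ cannot introduce new variables into the transitive closure. Consequently the hypothesis is automatically met for every $2^k T$ we form, and conjoining the Condition~a) and Condition~b) tests over all hyperequations yields the desired CPT formula defining solvability.
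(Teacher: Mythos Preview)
Your proposal is correct and follows essentially the same approach as the paper's proof: invoke the preceding characterization lemma and verify its two conditions by computing the values of the constant hyperterms $2^k T$ via Lemma~\ref{lem:values-for-order-k-constant-hyperterms}. You are in fact more explicit than the paper about why the root-to-leaf-path hypothesis of Lemma~\ref{lem:values-for-order-k-constant-hyperterms} is met (tree-likeness plus property~8 of hyperterms), and the paper's proof additionally reminds the reader to first perform the check from Lemma~\ref{lem:zz2-in-zz-d}, which you omit but which is already a standing assumption at this point in the text.
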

\begin{proof}
	Recall that we first check the conditions of Lemma~\ref{lem:zz2-in-zz-d}.
	If that check is successfull,
	we know that the terms $2^k T$ are constant
	by Lemma~\ref{lem:order-2-constant-hyperterm},
	easiness and $k\geq 1$.
	So we can use
	Lemma~\ref{lem:values-for-order-k-constant-hyperterms}
	to define the value of $2^k T$
	and check for consistency.
\end{proof}

To define solvability of weakly global TCESs in CPT,
it only remains to convert a weakly global tree-like system of hyperequations
into local Hermite normal form.
We treat local and global hyperequations differently.
For the local ones we use Gaussian elimination for rings
as described in Section~\ref{sec:gaussian-for-rings}.
For the global ones Gaussian elimination for fields suffices.

\begin{lemma}
	\label{lem:tces-to-hermite-normal-form}
	There is a CPT term
	that converts a 
	weakly global tree-like system of hyperequations over $\ZZ_{2^\ell}$
	into an equivalent one in local Hermite normal form.
\end{lemma}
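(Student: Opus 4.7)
The plan is to process the tree of local components in a bottom-up traversal and, for each local component $L$, bring $\coeffmat_L^\text{local}$ into Hermite normal form using CES techniques; finally, the remaining purely global equations are reduced via Gaussian elimination over the ``field'' $\ZZ_2$ that global variables effectively inhabit.

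For each local component $L$ I proceed as follows. Restricted to the equations and variables of $L$, the system behaves like a CES: $L$ is a path in the variable tree, the CPT-definable preorder on hyperequations restricts to a total order on the local hyperequations of $L$, and the local variable classes of $L$ are pairwise $\spleq$-comparable. I rank the local variable classes of $L$ by divisibility of their coefficients in local equations of $L$ (breaking ties with $\spleq$) exactly as in the CES procedure of \cite{pakusa2015, AbuZaidGraedelGrohePakusa2014}; global variable classes of $L$ are left in positions after all local ones. Using this ordering, I select pivot equations and perform Gaussian elimination for rings (cf.~Section~\ref{sec:gaussian-for-rings}), producing Hermite normal form on $\coeffmat_L^\text{local}$. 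I then use the pivot equations to eliminate the chosen local pivot variables from every other hyperequation that mentions them: the non-pivot local equations of $L$, the global equations, and the equations of ancestor local components. All operations use hyperterm addition and scalar multiplication, which are CPT-definable. Tree-likeness is preserved: the transitive closure of each Hermite pivot hyperterm lies on the single root-to-leaf path through $L$, so adding a scalar multiple of it to a hyperequation whose support is on that same path leaves the combined transitive closure on that path. Weak globality is untouched since the $2\varA=0$ constraints are not modified.

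After all local components are handled, the remaining non-atomic hyperequations have non-zero coefficients only on global variable classes. By Lemma~\ref{lem:zz2-in-zz-d} (and since the system is weakly global), the relevant information in these coefficients is exactly their pairity: the equations effectively live over the field $\ZZ_2$. I traverse root-to-leaf paths in the variable tree, using the CPT-definable preorder on hyperequations to totally order those global hyperequations whose support lies on the current path, and perform Gaussian elimination over $\ZZ_2$ by choosing the smallest global variable class with odd coefficient as pivot.

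The main obstacle is the final step: a global equation may have non-zero coefficients on global variable classes from several branches of the tree, so the CPT-definable preorder on hyperequations need not totally order the global equations, and it is not immediate which equation to pivot first. The resolution will rely on the observation that after bottom-up local elimination the global subsystem inherits enough of the tree structure from the variable tree to allow a canonical path-by-path elimination, together with the established CPT-definable techniques for $\ZZ_2$-linear systems used in \cite{DawarRicherbyRossman2008, PakusaSchalthoeferSelman2016, AbuZaidGraedelGrohePakusa2014}. No part of the output needs a CPT-definable total order on variable classes or equations; only the existence of such orders is required for local Hermite normal form, which is guaranteed by construction.
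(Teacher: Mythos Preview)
Your high-level structure (local Hermite on each component, then handle global equations over $\ZZ_2$) matches the paper, but the global step has a genuine gap. The ``main obstacle'' you name is in fact a non-issue: by the definition of tree-likeness, the transitive closure of every hyperterm contains only $\spleq$-comparable variables, so no hyperequation can have non-zero coefficients in several branches. The real problem with your path-by-path approach is the converse one: distinct root-to-leaf paths share a prefix, and a global equation whose support lies in that prefix belongs to all such paths, so per-path elimination gives no canonical pivot choice; moreover there is no CPT-definable enumeration of root-to-leaf paths to begin with. The paper resolves this by processing local components bottom-up, level by level in parallel. At a component $L$ with global variable classes $G$, it takes the set $S_G$ of global equations whose deepest non-zero class lies in $G$ (i.e., some class in $G$ has non-zero coefficient while every class strictly below $L$ has coefficient zero). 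All equations in $S_G$ then live on the single path from $L$ to the root, so the hyperequation preorder totally orders $S_G$ and ordinary $\ZZ_2$-Gaussian elimination applies; equations zeroed on $G$ are absorbed when the parent of $L$ is processed.

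A minor point: your step ``eliminate the chosen local pivot variables from \ldots the global equations, and the equations of ancestor local components'' is vacuous. By definition, a local variable of $L$ occurs only in equations all of whose variables lie in $L$; such equations are precisely the local equations of $L$. Hence local variable classes of $L$ already have coefficient zero in every global equation and in every equation attached to another local component, and there is nothing to eliminate there.
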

\begin{proof}
	We first define the global variable classes
	and rearrange the orders of every local component
	such that the global variables are at the end.
	
	Second, we transform the local variables
	and local hyperequations of each
	local component into Hermite normal form.
	This can be done as described in~\cite{pakusa2015}
	and outlined above
	where we ignore the coefficients of global variables
	(of course, they are manipulated when adding equations,
	but we do not care about divisibility for them and may do so
	as stated by local Hermite normality).
	The local components can be processed in parallel
	because they are independent of each other.

	Third, we need to process the remaining global equations.
	The local components are processed inductively following
	the tree on the local components induced by $\spleq$.
	We now write $\colclass \spleq L$
	if  $\colclass \spleq \colclass'$ for some
	$\colclass' \in L$
	and  $\colclass \spless \colclass'$
	if additionally $\colclass \notin L$.
	Let~$L$ be a local component.
	Assume that for every local component~$L'$
	that is a direct successor of~$L$ 
	the equation system of all equations
	with non-zero coefficient of at least one variable class
	$\colclass \spleq L'$
	is in local Hermite normal form.
	
	We can simply combine these systems (i.e., form the union of all equations)
	for all direct successors of $L$
	and still obtain an equation system in local Hermite normal form.
	We cannot define an order between these successors,
	but any order of them yields an upper triangular coefficient matrix
	because a branch of the variable tree
	cannot use variables of another one.
	
	Let $G$ be the global variable classes of $L$.
	Let $S_G \subseteq S$ be the set of global equations
	in which some variable classes in $G$ have non-zero coefficient
	but all variable classes $\colclass \spless L$ have coefficient zero
	(the others with non-zero coefficient are already in local Hermite normal form).

	The variable classes $G$ are the largest ones
	(with respect to $\spleq$)
	ocurring with non-zero coefficent in $S_G$
	and thus $S_G$ only uses variable classes
	on the path from $L$ to the root of the variable tree.
	Because $\tces$ is tree-like,
	the order on the hyperequations is total on $S_G$.
	Also, the variable classes occuring with non-zero coefficients in $S_G$
	are ordered.
	Thus, we can apply Gaussian elimination on $S_G$
	(recall we are working modulo $2$ in a field)
	to bring the equations in $S_G$ into upper triangular form.
	Note that we only add hyperterms with variables of the same
	root-to-leaf path here, and hence the resulting equation system is tree-like again.
	
	Finally, the coefficient matrix enlarged by the variable classes in $L$
	is in local Hermite normal form.
	Of course, we cannot process the local components one-by-one,
	but we have to process the local components of the
	same level in the tree in parallel.
	This is, as discussed before, possible because they are independent of each other.
\end{proof}

\begin{theorem}
	\label{thm:solvability-tces}
	Solvability of weakly global TCESs over $\ZZ_q$ for prime powers $q$ is CPT-definable.
\end{theorem}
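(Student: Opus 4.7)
The plan is to dispatch the theorem as a composition of the machinery built up in this section, with a case split on the parity of the prime. For odd prime powers $q$, the statement is already Corollary~\ref{cor:solve-tces-odd-prime-powers}: a weakly global TCES over such $\ZZ_q$ contains no global equations, so it decomposes into one CES per local component, and each CES can be solved in CPT by~\cite{AbuZaidGraedelGrohePakusa2014, pakusa2015}. The local components are CPT-definable from the variable tree (they are maximal paths avoiding branching children), so running the CES solver on each in parallel and taking the conjunction of the verdicts is CPT-definable.

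The substantive case is $q = 2^\ell$. First, I would verify the necessary conditions of Lemma~\ref{lem:zz2-in-zz-d} on every global variable class (at most two global variables per class, with difference $2^{\ell-1}$); if any fails, output ``inconsistent.'' This check is syntactic and CPT-definable. Next, I would convert the TCES into an equivalent tree-like system of hyperequations, using the hyperterm construction summarised in Section~\ref{sec:cyclic-linear-equations-systems}: after the preprocessing that makes every variable class have exactly~$q$ variables, the CES-like hyperterm apparatus applies verbatim along each root-to-leaf path of the variable tree (which is all a tree-like hyperterm ever ``sees,'' by the definition of tree-likeness), and the preorder on hyperequations is obtained by the usual construction on each such path. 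Weak globality transfers from variables to variable classes in the obvious way, since the equations $2\varA = 0$ become hyperequations whose only nonzero coefficient, modulo~$2$, belongs to a global class.

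With a weakly global tree-like system of hyperequations in hand, I would invoke Lemma~\ref{lem:tces-to-hermite-normal-form} to transform it into local Hermite normal form, and then invoke Corollary~\ref{cor:solvability-hermite-normal-form} to test solvability. Both are CPT-definable, and the overall reduction is a straight composition of CPT terms, which is closed under composition. Correctness is immediate: each step preserves the solution set of the original TCES (Hermite normalization is a sequence of elementary row operations and reorderings of variable classes, none of which alter satisfiability), and the Corollary exactly characterises solvability of the normalised system.

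I do not anticipate a real obstacle here: all the conceptual work has already been done in Lemmas~\ref{lem:order-2-constant-hyperterm}--\ref{lem:tces-to-hermite-normal-form} and Corollaries~\ref{cor:solve-tces-odd-prime-powers}~and~\ref{cor:solvability-hermite-normal-form}. The only point that merits a sentence of care is verifying that the $2^\ell$ case really does end up with a system to which Corollary~\ref{cor:solvability-hermite-normal-form} applies, i.e.\ that tree-likeness and weak globality are preserved by the normalization procedure of Lemma~\ref{lem:tces-to-hermite-normal-form}; but this was already established in that lemma's proof, since Gaussian elimination on the global equations only combines hyperterms whose variables lie along a common root-to-leaf path and never introduces new $2\varA = 0$ constraints.
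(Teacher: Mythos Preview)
Your proposal is correct and follows essentially the same approach as the paper: case-split on the parity of the prime, invoke Corollary~\ref{cor:solve-tces-odd-prime-powers} for odd prime powers, and for $q=2^\ell$ translate to hyperequations, apply Lemma~\ref{lem:tces-to-hermite-normal-form}, then Corollary~\ref{cor:solvability-hermite-normal-form}. Your extra remarks (the explicit Lemma~\ref{lem:zz2-in-zz-d} check and the preservation of tree-likeness/weak globality under normalization) are details that the paper's terse proof leaves implicit or delegates to the proofs of the invoked results.
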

\begin{proof}
	Let $\tces$ be a TCES over $\ZZ_q$.
	If $q$ is a power of an odd prime,
	it can be solved with the procedure for CESs
	(Corollary~\ref{cor:solve-tces-odd-prime-powers}).
	If $q$ is a power of two,
	we first translate $\tces$ into a system of hyperequations,
	apply Lemma~\ref{lem:tces-to-hermite-normal-form}
	to convert it to local Hermite normal form,
	and then solve it with Corollary~\ref{cor:solvability-hermite-normal-form}. 
\end{proof}

We want to note that solving a linear equation system over $\ZZ_{p^\ell}$
can be reduced to solving multiple linear equation systems over $\ZZ_p$
for example by using Hensel's Lemma.
But this technique requires not only to check an equation system
for consistency,
but also to compute a solution.
This cannot be done in CPT in general,
because there are TCESs for which every solution is equivalent to exponentially many other solutions (under automorphisms of the system).

\subsection{Combining Equations Systems}

Given a sequence of CESs, creating a combined CESs to describe solutions satisfying all equations simultaneously is easy. Indeed, two CESs on the same variable set
can just be combined to implement
intersection on their solution spaces
(where we use the order of one CES,
assuming that we can distinguish the two CESs in CPT).
For TCESs the situation is more complicated.
In particular, the tree structures might not be compatible
(so we cannot just take to order of one TCES).
We will now devise a strategy to combine TCESs under certain conditions.
Our solution allows that the variables (and their orders) disagree in the different TCES to a certain extend. 
In the following we write $\solutions{\tces}$ for the solution space of a
TCES.

\begin{definition}
	Let $\tces = (\Vars, S, \spleq)$ be a TCES.
	We say that $\Vars'$ are the \defining{topmost variables} of $\tces$
	if $\Vars'$ is the set of all variables
	of the local component containing the root class of the variable tree~$L_r$
	(formally $\Vars' = \bigcup L_r$).
\end{definition}

\begin{definition}
	\label{def:compatible-vars}
	Let $\tces_i = (\Vars_i, S_i, \spleq_i)$ be two TCESs over $\ZZ_q$ for $i \in [2]$
	using variables $\Vars_i$ and with topmost variables $\Vars'_i$.
	We call $\tces_1$ and $\tces_2$ \defining{compatible},
	if
	$\Vars_1 \cap \Vars_2 = \Vars'_1 \cap \Vars'_2$
	and $\Vars'_1 \cap \Vars'_2$
	is a union of variable classes of $\tces_i$ for all $i\in[2]$.
\end{definition}
The common topmost variables are required to be a union of color classes
of both TCES,
because they can define different orders on them.
Let $\tces_i$ be two TCESs with topmost variables $\Vars'_i$ for all $i \in [2]$
as above and assume that we can distinguish the TCESs in CPT.
That is, there is a CPT term defining the ordered tuple $(\tces_1, \tces_2)$ or equivalently an order $\tces_1 < \tces_2$
(e.g.~the two TCESs are defined by different CPT terms
or can be ordered by their structure).
The union of $\tces_1$ and $\tces_2$ is defined as
\[\tcesunion{\tces_1}{\tces_2} := 
(\Vars_1 \cup \Vars_2, S_1 \cup S_2,
\spleq_1 \cup \spleq_2' \cup \spleq'),\]
where $\spleq_2' = \spleq_2[\Vars_2 \setminus (\Vars_2' \cap \Vars_1')]$
and
$\spleq'$ is defined by 
$\Vars_1' \spless' \Vars_2' \setminus \Vars_1'$ and 
$\Vars'_i \spless' \Vars_j \setminus \Vars'_j$ for
all $i,j \in [2]$.

\begin{lemma}
	\label{lem:union-tces}
	Let $\tces_i = (\Vars_i, S_i, \spleq_i)$
	be two CPT distinguishable TCESs over $\ZZ_q$ for $i \in [2]$
	using variables $\Vars_i$ and with topmost variables $\Vars'_i$.
	If $\tces_1$ and $\tces_2$ are compatible,
	then $\tcesunion{\tces_1}{\tces_2}$
	is again a TCES with topmost variables $\Vars'_1 \cup \Vars'_2$
	and it satisfies
	$\solutions{\tcesunion{\tces_1}{\tces_2}}  = 
	\bigcap_{i \in [2]}	\extExpl{\Vars_1 \cup \Vars_2}{\solutions{\tces_i}}$.
	If both $\tces_1$ and $\tces_2$ are weakly global then
	$\tcesunion{\tces_1}{\tces_2}$ is weakly global, too.
\end{lemma}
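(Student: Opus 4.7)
Write $\tces := \tcesunion{\tces_1}{\tces_2}$ and $\Vars'_\cup := \Vars'_1 \cup \Vars'_2$. The proof decomposes into four independent verifications that mirror the four claims of the lemma.

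\emph{Step 1: Validity as a TCES.} The preorder $\spleq := \spleq_1 \cup \spleq_2' \cup \spleq'$ is well-defined: since $\spleq_2'$ is obtained by removing the intersection $\Vars'_2 \cap \Vars'_1$ from the domain of $\spleq_2$, only $\spleq_1$ dictates the order inside the intersection, and no conflict arises between the two pieces. Compatibility guarantees that $\Vars'_1 \cap \Vars'_2$ is a union of complete variable classes on both sides, so the variable classes coincide in the shared part, and each class of $\tces$ carries a cyclic constraint inherited from $S_1$ or $S_2$ (if both contribute, the solution-space equality in Step 3 forces them to agree, or else $\tces$ is inconsistent, which is also the intended behaviour). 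Every equation of $S_1$ uses variables on a $\tces_1$-path, hence pairwise $\spleq_1$-comparable, and $\spleq$ extends $\spleq_1$ on $\Vars_1$, so the pair remains $\spleq$-comparable. The analogous argument for $S_2$ needs the observation that although $\spleq$ reorders $\Vars'_1 \cap \Vars'_2$ according to $\spleq_1$ rather than $\spleq_2$, pairwise comparability of any two classes in $\Vars_2$ under $\spleq$ still holds, since $\spleq'$ places all of $\Vars'_\cup$ below $\Vars_2 \setminus \Vars'_2$ and $\spleq_2'$ orders $\Vars_2 \setminus (\Vars'_2 \cap \Vars'_1)$ as in $\tces_2$.

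\emph{Step 2: Tree property and topmost.} For the tree property of $\spleq$, I argue that no class lies $\spleq$-above both an element of $\Vars_1 \setminus \Vars'_1$ and an element of $\Vars_2 \setminus \Vars'_2$: any such common descendant would lie in $\Vars_1 \cap \Vars_2 = \Vars'_1 \cap \Vars'_2 \subseteq \Vars'_\cup$, but $\spleq'$ places $\Vars'_\cup$ strictly below $\Vars_1 \setminus \Vars'_1$ and $\Vars_2 \setminus \Vars'_2$, a contradiction. In particular, the two sets $\Vars_1 \setminus \Vars'_1$ and $\Vars_2 \setminus \Vars'_2$ are $\spleq$-incomparable, and the rest of the tree-property check reduces to the tree properties of $\spleq_1$ and $\spleq_2$. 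The combined preorder linearly orders $\Vars'_\cup$ (first $\Vars'_1$ via $\spleq_1$, then $\Vars'_2 \setminus \Vars'_1$ via $\spleq_2$), and every interior class of this chain has precisely one $\spleq$-successor inside $\Vars'_\cup$, with no class from $\Vars_1 \setminus \Vars'_1$ or $\Vars_2 \setminus \Vars'_2$ sitting between any two chain classes (again by $\spleq'$). Hence no branching occurs inside the chain, and every subtree from either $\tces_i$ attaches below the bottom of the chain, making $\Vars'_\cup$ the topmost local component of $\tces$.

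\emph{Step 3: Solution equality and weak globalness.} Since $S_i \subseteq S_1 \cup S_2$, any $\tces$-solution restricts to a $\tces_i$-solution for $i \in [2]$; conversely, any assignment to $\Vars_1 \cup \Vars_2$ whose restriction to $\Vars_i$ satisfies $\tces_i$ simultaneously satisfies all equations of $S_1 \cup S_2$, hence lies in $\solutions{\tces}$. This gives $\solutions{\tces} = \bigcap_{i \in [2]} \extExpl{\Vars_1 \cup \Vars_2}{\solutions{\tces_i}}$. For weak globalness, if $q$ is an odd prime power each equation of $\tces_i$ is local, and its variables remain in a single local component of $\tces$ because every local component of $\tces_i$ either lies entirely inside the merged topmost $\Vars'_\cup$ or is structurally preserved as a subtree component. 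For $q = 2^\ell$, any variable that is global in $\tces$ is global in some $\tces_i$ (equations are only inherited, never created), so the required $2\var = 0$ constraint is present in $S_1 \cup S_2$.

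\emph{Main obstacle.} The principal bookkeeping challenge is Step 2: the reordering of $\Vars'_1 \cap \Vars'_2$ from $\spleq_2$-order to $\spleq_1$-order displaces the $\tces_2$-subtrees that in $\tces_2$ hung beneath intermediate classes of $\Vars'_2$; in $\tces$ they must reattach, via $\spleq'$, below the bottom of $\Vars'_\cup$. Verifying that this rearrangement preserves the TCES conditions requires exploiting that the constraints only demand pairwise comparability of their variables rather than any specific ordering, so that pathwise membership of variables of every equation is preserved even when the tree is restructured.
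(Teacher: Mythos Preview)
Your proof is correct and follows essentially the same route as the paper's own argument: verify that the combined preorder is a tree with root path $\Vars'_1\cup\Vars'_2$, observe that solutions intersect because the equation set is a plain union, and deduce weak globalness from the fact that every local component of $\tces_i$ sits inside a local component of the union. One small inaccuracy worth noting: your ``main obstacle'' describes $\tces_2$-subtrees as hanging beneath \emph{intermediate} classes of $\Vars'_2$, but by the definition of a local component all such subtrees attach at the bottom class of $\Vars'_2$ (the branching node); the reattachment issue you correctly resolve via $\spleq'$ is therefore simpler than you suggest, though your resolution remains valid.
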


\begin{proof}
	The order $\spleq := \spleq_1 \cup \spleq_2' \cup \spleq'$
	as defined above forms a tree on the variable classes:
	by the condition $\Vars_1 \cap \Vars_2 = \Vars'_1 \cap \Vars'_2$
	the only common variables of the two TCESs are the common topmost ones.
	On the common variables, we use the order $\spleq_1$.
	With $\spleq'$ we then order the topmost variables of $\tces_2$
	not common with $\tces_1$ after the topmost variables of $\tces_1$.
	That is, considering $\tces_2$, we just reorder the variable classes of the root local component
	because $\Vars'_1 \cap \Vars'_2$ is a union of $\tces_2$ variable classes.
	Reordering the variable classes of the root local component of $\tces_2$
	does not change its tree structure (i.e., its local components). 
	By construction $\tcesunion{\tces_1}{\tces_2}$ has the topmost variables $\Vars'_1 \cup \Vars'_2$.
	Because the set of equations of $\tcesunion{\tces_1}{\tces_2}$
	is the union of the equations of both TCESs,
	$\solutions{\tcesunion{\tces_1}{\tces_2}}  = 
	\bigcap_{i \in [2]}	\extExpl{\Vars_1 \cup \Vars_2}{\solutions{\tces_i}}$
	follows immediately.
	
	Assume that $\tces_1$ and $\tces_2$ are weakly global.
	Because every local component of $\tces_i$
	is contained in a local component of $\tcesunion{\tces_1}{\tces_2}$
	for all $i \in [2]$,
	every global equation of $\tcesunion{\tces_1}{\tces_2}$
	is a global equation of $\tces_1$ or $\tces_2$.
	It follows that every global variable of $\tcesunion{\tces_1}{\tces_2}$
	is a global variable of $\tces_1$ or $\tces_2$
	and thus that the required constraints of the form $2 \var = 0$
	are present for all global variables.
\end{proof}

We generalize our notation and
write $\tces = (\tces_{p_1}, \dots, \tces_{p_k})$
for a sequence of TCESs over pairwise coprime  prime powers $p_i$
and $\solutions{\tces}$ for the solution space of $\tces$.
A series of TCESs $(\tces_1, \dots , \tces_k)$ with pairwise disjoint variables
has the topmost variables $\Vars' = \Vars'_1 \disunion \cdots \disunion \Vars'_k$
if~$\tces_i$ has the topmost variables $\Vars'_i$ for all $i \in [k]$.

We now want to form the union of two series of TCESs:
Let
$\tces = (\tces_{p_1}, \dots , \tces_{p_{\ell}})$
and $\tces' = (\tces'_{q_1}, \dots, \tces'_{q_{\ell'}})$
be two series of TCESs.
We first ensure that if $p_i$ and $q_j$
or prime powers of the same prime
that then actually $p_i = q_j$.
Assume w.l.o.g.~that $p_i < q_j$.
We turn $\tces_{p_i}$ into a TCES over $\ZZ_{q_j}$
by adding $p_i \cdot \var = 0$ constraints for all variables of $\tces_{p_i}$ embedding $\ZZ_{p_i}$ into~$\ZZ_{q_j}$.

We write $\tcesunion{\tces }{\tces'}$
for the series of TCESs obtained by
forming the union of all~$\tces_{p_i}$  and~$\tces'_{q_j}$ 
whenever $p_i = q_j$.
The remaining TCESs are just copied.
Lemma~\ref{lem:union-tces} generalizes to series of TCESs
by making the assumptions of the lemma for all TCESs
$\tces_{p_i}$  and $\tces'_{q_j}$ for which $p_i$ and $q_i$
are powers of the same prime.

\section{Canonization of Structures with Dihedral Color Classes}
\label{sec:canonization-dihedral}
Recall that for our canonization problem the reduction to normal forms (Theorem~\ref{thm:convert-to-normal-form}) shows that we can assume the input structure to be a dihedral $2$-injective quotient structure.
Our further strategy is as follows: we want to reduce canonization of dihedral $2$-injective quotient structures
to that of structures with abelian color classes
and then apply the canonization procedure
for abelian color classes.
The main idea is to artificially prohibit reflections
in one color class
and then hope that this prohibits reflections in other color classes as well.
For this, we want to
exploit
the classification of $2$-injective subdirect products of dihedral groups
(Theorems~\ref{thm:classification-2inj-sub-dihedral}
and~\ref{thm:classification-2inj-sub-dihedral-cyclic}) saying that
 most $2$-injective subdirect products
are rotate-or-reflect groups.
In particular, if we prohibit reflections in one color class of a 
rotate-or-reflect group
then reflections in the other color classes
are prohibited, too.
This effect of prohibiting reflection continues through
most $2$-injective subdirect products and quotient color classes. However, it does not have to reach all color classes since some $2$-injective subdirect product are not rotate-or-reflect groups (for example if one factor is abelian).
We call the parts of the structure
in which reflections are linked in this way and can only occur simultaneously
\defining{reflection components}.
We analyse how reflection components can depend on each other.
It will turn out, that different reflection components
can indeed only be connected through abelian color classes.
We call these color classes \defining{border color classes}.
Overall, we will follow a two-levelled approach:
on the top level,
we deal with the dependencies between the border (and all other abelian) color classes,
and on the second level
we consider each reflection component on its own 
and how it is embedded in its border color classes.

To ensure that the border color classes are indeed all abelian
we have to forbid the single exception
in Theorem~\ref{thm:classification-2inj-sub-dihedral},
which is not a rotate-or-reflect group, namely the double CFI group.

\begin{definition} [Double-CFI-Free Structure]
	We call a $2$-injective dihedral quotient structure
	\defining{double-CFI-free},
	if for every $\type \in \grouptypes$ the group  $\typegroup{\type}$
	is neither isomorphic to the double CFI group
	$\DoubleCFIgroup$
	nor to
	$\DoubleCFIgroup \cap (\rotsubgroup{\DihedralGroup{4}} \direct \DihedralGroup{4} \direct \DihedralGroup{4})$.
	
\end{definition}

We consider two natural classes of structures 
that are double-CFI-free after applying the preprocessing.
We call a structure $\Struct$ \defining{odd dihedral}
if $\Struct$ is dihedral and for every non-abelian
$\colclass \in \colorclasses{\Struct}$
there is an odd $k$ such that  $\autgroup{\colstruct} \iso \DihedralGroup{k}$.
\begin{lemma}
	\label{lem:odd-dihedral-and-graphs-are-double-CFI-free}
	Let $\Struct$ be a dihedral $q$-bounded structure of arity at most $3$.
	If $\Struct$ satisfies one of the following conditions
	then the $2$-injective quotient structure $\Struct'$
	obtained after applying Theorem~\ref{thm:convert-to-normal-form}
	is double-CFI-free:
	\begin{enumerate}
		\item $\Struct$ is odd dihedral.
		\item $\Struct$ is a graph.
	\end{enumerate}
\end{lemma}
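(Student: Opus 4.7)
The plan is to trace how the automorphism group of each color class changes under the normal-form reduction, and then to invoke the classification theorems of Section~\ref{sec:classification-2-inj-products}. In both cases the input $\Struct$ has arity at most~$3$, so the arity-reduction step inside the proof of Theorem~\ref{thm:convert-to-normal-form}a is vacuous, and it suffices to compose Lemmas~\ref{lem:reduce-to-transitive-on-s-color-classes}, \ref{lem:reduce-to-typed-relations}, \ref{lem:reduce-to-regular-color-classes}, and~\ref{lem:reduce-2-inj}. The key bookkeeping observation is that every color class $\colclass'$ of $\Struct'$ has an automorphism group isomorphic to a section of $\autgroup{\Struct[\colclass]}$ for some color class $\colclass$ of $\Struct$: Lemma~\ref{lem:reduce-to-transitive-on-s-color-classes} yields sections, the next two lemmas preserve automorphism groups up to isomorphism, and Lemma~\ref{lem:reduce-2-inj} attaches group color classes as quotients of extension color classes (themselves isomorphic to input color classes), and a quotient of a section is again a section.

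For case~1 the input color classes have automorphism groups that are either abelian (cyclic, $\DihedralGroup{1} \iso \CyclicGroup{2}$, or $\DihedralGroup{2}$) or isomorphic to $\DihedralGroup{k}$ for some odd $k \geq 3$. This class of groups is closed under sections: subgroups of $\DihedralGroup{k}$ for odd $k$ are cyclic of order dividing $k$ (hence odd) or of the form $\DihedralGroup{m}$ for $m \mid k$ (hence $m$ odd), and the normal subgroups of $\DihedralGroup{k}$ for odd $k$ are exactly its cyclic rotation subgroups, so the same holds for quotients. Consequently no color class of $\Struct'$ has $\DihedralGroup{4}$ as its automorphism group. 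Theorems~\ref{thm:classification-2inj-sub-dihedral} and~\ref{thm:classification-2inj-sub-dihedral-cyclic} show that both $\DoubleCFIgroup$ and $\DoubleCFIgroup \cap (\rotsubgroup{\DihedralGroup{4}} \direct \DihedralGroup{4} \direct \DihedralGroup{4})$ arise as $2$-injective subdirect products only when a factor is $\DihedralGroup{4}$, so no group type $\typegroup{\type}$ of $\Struct'$ can be isomorphic to either.

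For case~2 I would instead exploit the arity-$2$ refinement of Lemma~\ref{lem:reduce-2-inj}: for any triple of pairwise related group color classes $\colclass_1, \colclass_2, \colclass_3$ in $\Struct'$, up to reordering the projection $\outproj_{\colclass_1}(\typegroup{\type}) \leq \autgroup{\Struct'[\colclass_2]} \direct \autgroup{\Struct'[\colclass_3]}$ is a diagonal subgroup via a canonical isomorphism $\autgroup{\Struct'[\colclass_2]} \to \autgroup{\Struct'[\colclass_3]}$. By $2$-injectivity of $\typegroup{\type}$, this forces $|\typegroup{\type}| = |\outproj_{\colclass_1}(\typegroup{\type})| = |\autgroup{\Struct'[\colclass_2]}|$ and in particular $\autgroup{\Struct'[\colclass_2]} \iso \autgroup{\Struct'[\colclass_3]}$. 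For $\DoubleCFIgroup \leq \DihedralGroup{4}^3$ any such permutation would demand $32 = 8$, while for $\DoubleCFIgroup \cap (\rotsubgroup{\DihedralGroup{4}} \direct \DihedralGroup{4} \direct \DihedralGroup{4}) \leq \CyclicGroup{4} \direct \DihedralGroup{4} \direct \DihedralGroup{4}$ the only permutation compatible with isomorphic remaining factors is $\outproj_{\colclass_1}$ with $\colclass_1$ the $\CyclicGroup{4}$-factor, demanding $16 = 8$. Both cardinality conditions fail, ruling out double CFI group types.

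The main obstacle is the section-tracking in the first paragraph: one must carefully check that each step in the reduction chain, in particular the quotient construction of Lemma~\ref{lem:reduce-2-inj}, only produces groups that are sections of input color classes and cannot accidentally create $\DihedralGroup{4}$ factors along the way. Once this bookkeeping is in place, case~1 is a direct consequence of the classification and case~2 is a short cardinality argument enabled by the diagonal-subgroup refinement.
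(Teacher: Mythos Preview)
Your proposal is correct and follows essentially the same approach as the paper. Case~1 matches the paper's argument exactly (tracing sections through the reduction and noting that $\DihedralGroup{4}$ cannot arise as a section of an odd-dihedral or abelian group), and for Case~2 you spell out via a cardinality computation precisely what the paper asserts in one sentence, namely that the diagonal-subgroup projection guaranteed by the arity-$2$ clause of Lemma~\ref{lem:reduce-2-inj} is incompatible with both forbidden groups.
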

\begin{proof}
	\ 
	\begin{enumerate}
		\item
	Let $\Struct'$
	be the $2$-injective quotient structure
	obtained by the preprocessing steps
	and
	let $\colclass' \in \colorclasses{\Struct'}$
	have a dihedral automorphism group.
	By Theorem~\ref{thm:convert-to-normal-form}
	and Lemma~\ref{lem:reduce-2-inj}
	there is a color class $\colclass \in \colorclasses{\Struct}$
	such that $\autgroup{\Struct'[\colclass']}$ is a section
	of $\autgroup{\Struct[\colclass]}$.
	By assumption $\autgroup{\Struct[\colclass]} \iso \DihedralGroup{k}$
	for an odd $k$ and thus 
	$\autgroup{\Struct'[\colclass']} \iso \DihedralGroup{k'}$
	for an odd $k'$.
	So in $\Struct'$ no color class
	has an automorphism group isomorphic to $\DihedralGroup{4}$
	and thus $\Struct'$ is double-CFI-free.
	\item For every $2$-injective subdirect product occurring in $\Struct$,
	there are two factors
	such that the projection on them is a diagonal subgroup
	by Lemma \ref{lem:reduce-2-inj}.
	This is not the case for the two groups that are forbidden, so they cannot occur.
	\qedhere
\end{enumerate}
\end{proof}

\subsection{Reflection Components}

Let $\Struct = (\groupvertices \disunion \extensionvertices, \rel_1^\Struct, \dots, \rel_k^\Struct, \spleq)$ be an arbitrary
dihedral $2$-injective double-CFI-free quotient structure,
which we fix throughout this section.
Whenever we construct a CPT term in the following,
it does not depend on $\Struct$ but possibly gets $\Struct$ as input
and in particular satisfies the claimed properties for all dihedral $2$-injective double-CFI-free quotient structures.

\newcommand{\reflcomp}{D}
\newcommand{\oriG}[1]{\overline{#1}}
\newcommand{\orientations}{\mathbb{O}}
\newcommand{\other}[1]{\overline{#1}}
\newcommand{\oriASym}{\shortuparrow}
\newcommand{\oriBSym}{\shortdownarrow}
\newcommand{\oriA}[1]{#1^{\oriASym}}
\newcommand{\oriB}[1]{#1^{\oriBSym}}
\newcommand{\ori}{o}
\newcommand{\oriO}[1]{#1^\ori}
\newcommand{\oriOt}[1]{#1^{\other{\ori}}}
\newcommand{\remOrientation}[1]{\tilde{#1}}

We introduce notation:
we use the set $\orientations := \set{\oriASym, \oriBSym}$
to denote orientations
(e.g.,~think of the two possible ways to convert
an undirected cycle to a directed cycle).
For an orientation $\ori \in \orientations$ we set $\other{\ori} := \ori'$  as the reverse orientation, so that $\orientations = \set{\ori, \ori'}$.

\begin{definition}[Orientation]
	We say that a structure \[\Struct' = (\groupvertices \disunion \extensionvertices, \rel_1^\Struct, \dots, \rel_k^\Struct, \spleq')\]
	is an \defining{orientation} of $\Struct$
	if $\spleq'$  refines $\spleq$ with the following property:
	Let $\colclass \in \colorclasses{\Struct}$ be a  color class
	that is split by $\spleq'$,
	then $\autgroup{\Struct[\colclass]}$ 
	is a non-abelian dihedral group
	and $\colclass$ is split into two color classes
	$\oriA{\colclass}$and $\oriB{\colclass}$,
	such that each of the two classes contains one of the two oriented cycles inducing the standard form in $\colclass$.
	We say that $\Struct'$ \defining{orients} $\colclass$.
\end{definition}
By splitting the color class $\colclass$ in the above manner,
we precisely prohibit the reflections in~$\colclass$.
Note that an orientation modifies only the preorder of the structure.
Hence, defining an orientation of $\Struct$ is always canonization
preserving, because we can easily undo the changes in CPT.

For a color class $\colclass$
with dihedral automorphism group
we can define in CPT two orientations $\oriO{\Struct}_\colclass$
for $\ori \in \orientations$
that only orient $\colclass$
(by the two possible orders $\oriO{\colclass} \spless' \oriOt{\colclass}$).
Of course, we cannot choose one orientation canonically.
But the orientation of $\colclass$ 
can canonically be propagated to other color classes
in the following cases:
\begin{enumerate}[label=\alph*)]
	\item Whenever $\colclass$
	is part of a rotate-or-reflect
	group
	(because once we cannot reflect in one component,
	we cannot do so in the others), and 
	\item
	whenever $\colclass'$ is a quotient of $\colclass$
	(because once we remove
	reflections from $\colclass$
	we can also remove remaining reflections from quotient groups).
\end{enumerate}
We now formalize the propagation of orientations.

\newcommand{\simRefl}{\equiv}
\newcommand{\bothRefl}{\parallel}
\begin{definition}
	\label{def:sim-refl}
	We define the relation $\bothRefl$ on the color classes of $\Struct$ as follows: $\colclass_1 \bothRefl \colclass_2$ if and if one of the following conditions hold:
	\begin{enumerate}[label=\alph*)]
		\item  $\colclass_1,\colclass_2 \subseteq \groupvertices$,
		there is (up to reordering of the color classes)
		a group type $\type = (\colclass_1, \colclass_2, \colclass_3) \in \grouptypes$,
		and $\colclass_1$ and $\colclass_2$ are non-abelian.
		\item $\colclass_i \subseteq \groupvertices$ and $\colclass_j \subseteq \extensionvertices$
		and $\colclass_i$ is a non-abelian quotient of $\colclass_j$
		for $\set{i,j} = [2]$.
	\end{enumerate}
	The equivalence relation $\simRefl$ is the transitive closure of $\bothRefl$.
\end{definition}

Note that if $\colclass_1 \simRefl \colclass_2$
and $\colclass_1 \neq \colclass_2$,
then both $\colclass_1$ and $\colclass_2$ are non-abelian.
First, we show that if $\colclass_1 \simRefl \colclass_2$
and given an orientation of $\colclass_1$,
we can define an orientation of both~$\colclass_1$ and~$\colclass_2$ in CPT.
Second, we analyse how the structure $\Struct$
decomposes into $\simRefl$-equivalence classes.

\subsubsection{Propagation of Orientations}

To analyse the effect of orienting one component of
a $2$-injective subdirect product,
we use the classification of
$2$-injective subdirect products of dihedral groups
(Theorems~\ref{thm:classification-2inj-sub-dihedral}
and \ref{thm:classification-2inj-sub-dihedral-cyclic}).

\begin{lemma}
	\label{lem:2inj-sub-aut-graphs}
	Let $\type = (\colclass_1, \colclass_2, \colclass_3) \in \grouptypes$
	be a group type.
	Then one of the following holds:
	\begin{enumerate}
		\item $\typegroup{\type}$ is abelian, in particular $\colclass_i$ is abelian for all $i \in [3]$.
		\item $\typegroup{\type}$ is a rotate-or-reflect group, in particular $\colclass_i$ is non-abelian for all $i \in [3]$.
		\item Up to permutation of the color classes,
		$\colclass_1$ and $\colclass_2$ are non-abelian,
		$\autgroup{\colstruct_3}$ is isomorphic to one of
		$\set{\DihedralGroup{2}, \CyclicGroup{2}, \CyclicGroup{1}}$,
		and $\outproj_{\colclass_3}(\typegroup{\type})$
		is a rotate-or-reflect group.
	\end{enumerate} 
\end{lemma}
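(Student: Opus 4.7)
The plan is to reduce this to direct casework using Theorems~\ref{thm:classification-2inj-sub-dihedral} and~\ref{thm:classification-2inj-sub-dihedral-cyclic}, organized by the number of $\colclass_i$ that are non-abelian. A color class $\colclass_i$ is non-abelian precisely when $\autgroup{\colstruct_i} \iso \DihedralGroup{n_i}$ with $n_i>2$, since $\DihedralGroup{1}\iso \CyclicGroup{2}$, $\DihedralGroup{2}\iso \CyclicGroup{2}^2$, and all $\CyclicGroup{n}$ are abelian. I claim the case of exactly one non-abelian class cannot occur, which will immediately leave only the three cases stated in the lemma.

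First I would rule out the "exactly one non-abelian" configuration. Suppose up to permutation that $\colclass_1$ is non-abelian and $\colclass_2, \colclass_3$ are abelian. If both abelian factors are dihedral ($\DihedralGroup{1}$ or $\DihedralGroup{2}$), this contradicts Lemma~\ref{lem:no-2inj-sub-DCC-not2} (no such $2$-injective subdirect product exists). If at least one of them is cyclic, then $\typegroup{\type}\leq \DihedralGroup{n_1}\direct G_2\direct G_3$ with $n_1>2$ and $G_2,G_3$ abelian, contradicting the "furthermore" clause of Theorem~\ref{thm:classification-2inj-sub-dihedral-cyclic}. So this case is empty.

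Next, for the case of zero non-abelian classes, $\typegroup{\type}$ sits inside a direct product of abelian groups and is therefore abelian, giving Case~1. For three non-abelian classes, all $\autgroup{\colstruct_i}$ are dihedral $\DihedralGroup{n_i}$ with $n_i>2$; Theorem~\ref{thm:classification-2inj-sub-dihedral} lists four possibilities, of which items 3 and 4 require some $n_i\leq 2$ and are excluded, and item 2 (the double CFI group) is excluded by the double-CFI-free hypothesis. What remains is item 1, so $\typegroup{\type}$ is a rotate-or-reflect group, giving Case~2.

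Finally, for exactly two non-abelian classes, permute so that $\colclass_1,\colclass_2$ are non-abelian and $\colclass_3$ is abelian. If $\autgroup{\colstruct_3}$ is dihedral, it must be $\DihedralGroup{1}$ or $\DihedralGroup{2}$, and Theorem~\ref{thm:classification-2inj-sub-dihedral} applies with $n_3\leq 2$ and $n_1,n_2>2$; the only matching item is item 3, yielding that $\outproj_{\colclass_3}(\typegroup{\type})$ is a rotate-or-reflect group. If $\autgroup{\colstruct_3}\iso \CyclicGroup{m}$, then Theorem~\ref{thm:classification-2inj-sub-dihedral-cyclic} applies (with the cyclic factor listed first); item 3 of that theorem requires the dihedral factors to be abelian, which fails, and item 2 is excluded by double-CFI-freeness, leaving item 1, which forces $m\leq 2$ and states that $\outproj_{\colclass_3}(\typegroup{\type})$ is a rotate-or-reflect group. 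In both subcases $\autgroup{\colstruct_3}\in\{\DihedralGroup{2}, \CyclicGroup{2}, \CyclicGroup{1}\}$, so Case~3 holds. The only real work is the bookkeeping to match the factor ordering expected by the two classification theorems; no new group-theoretic argument is required beyond those already established.
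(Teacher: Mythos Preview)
Your proposal is correct and follows essentially the same approach as the paper: both reduce the lemma to casework on how many of the $\colclass_i$ are non-abelian and then invoke Theorems~\ref{thm:classification-2inj-sub-dihedral} and~\ref{thm:classification-2inj-sub-dihedral-cyclic} together with double-CFI-freeness. The only difference is organizational---you enumerate the cases by the count of non-abelian factors, whereas the paper starts by assuming Case~1 fails and then branches---and your treatment of the $\DihedralGroup{2}$ subcase in the two-non-abelian situation is slightly more explicit than the paper's (which only notes $\autgroup{\colstruct_3}\iso\DihedralGroup{2}$ and leaves the rotate-or-reflect conclusion implicit via Theorem~\ref{thm:classification-2inj-sub-dihedral}).
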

\begin{proof}
	If Conclusion 1~does not hold then,
	w.l.o.g.,~$\colclass_1$ is non-abelian.
	By the properties of $2$-injective quotient structures,
	$\typegroup{\type}$ is a $2$-injective subdirect product.
	If $\colclass_i$ is non-abelian for each $i\in[3]$,
	by Theorem~\ref{thm:classification-2inj-sub-dihedral}
	the group $\typegroup{\type}$
	is either a rotate-or-reflect group
	or the double CFI group.
	The later one is impossible 
	because $\Struct$ is double-CFI-free.
	
	Assume $\colclass_2$ is non-abelian and $\colclass_3$ is abelian.
	If $\autgroup{\colstruct_3}$ is a cyclic group,
	then
	$\outproj_{\colclass_3}(\typegroup{\type})$
	is a rotate-or-reflect group
	by Theorem~\ref{thm:classification-2inj-sub-dihedral-cyclic}
	and $\autgroup{\colstruct_3} \in \set{\CyclicGroup{2}, \CyclicGroup{1}}$.
	The double CFI case is not possible again,
	because $\Struct$ is double-CFI-free.
	If $\autgroup{\colstruct_3}$ is not a cyclic group,
	then it is isomorphic to $\DihedralGroup{2}$
	because $\DihedralGroup{2}$ is the only non-cyclic abelian dihedral group.
	
	Finally, it cannot be the case
	that both $\colclass_2$ and $\colclass_3$
	are abelian by Theorem~\ref{thm:classification-2inj-sub-dihedral-cyclic}.
\end{proof}

\begin{lemma}
	\label{lem:orient-2-inj-sub}
	There is a CPT term that, given 
	$\type = (\colclass_1, \colclass_2, \colclass_3) \in \grouptypes$
	and an orientation~$\Struct'$ that orients
	at least one of $\colclass_1$, $\colclass_2$, and $\colclass_3$,
	outputs another orientation $\Struct''$
	orienting all non-abelian color classes of
	$\colclass_1$, $\colclass_2$, and $\colclass_3$.
\end{lemma}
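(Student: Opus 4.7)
My plan is to reduce to the rotate-or-reflect case via Lemma~\ref{lem:2inj-sub-aut-graphs} and then to use the rotation subgroup to propagate the orientation. Without loss of generality suppose $\colclass_1$ is the (or one of the) oriented class(es); since only non-abelian classes can be oriented, $\colclass_1$ is non-abelian. Apply Lemma~\ref{lem:2inj-sub-aut-graphs} to $\typegroup{\type}$: Case~1 is impossible (it would force $\colclass_1$ to be abelian); in Cases~2 and~3 either $\typegroup{\type}$ or the projection $\outproj_{\colclass_i}(\typegroup{\type})$ onto the two non-abelian classes (with $\colclass_i$ the unique abelian one) is a rotate-or-reflect group. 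In both cases we only need to orient each non-abelian $\colclass_j \in \{\colclass_2, \colclass_3\}$; an abelian class among them is left untouched.

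The key observation for the propagation is that in a rotate-or-reflect group, once we forbid reflections on one non-abelian factor (by splitting its two rotation orbits into separate color classes), we force the whole group to become its rotation subgroup, which then has exactly two orbits on every non-abelian factor, and these two orbits are precisely the two directed cycles inducing the standard form (Lemma~\ref{lem:dihedral-transitive-colors-draw-automorphism-group}). Concretely, in $\Struct'$ the substructure $\Struct'[\colclass_1 \cup \colclass_2 \cup \colclass_3]$ has automorphism group equal to the setwise stabilizer of $\oriA{\colclass_1}$ in $\typegroup{\type}$; because $\typegroup{\type}$ restricted to the non-abelian factors is rotate-or-reflect, this stabilizer restricted to the non-abelian classes is precisely the rotation subgroup.

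To make the construction CPT-definable, invoke Lemma~\ref{lem:compute-and-order-orbits} on the bounded-size structure $\Struct'[\colclass_1 \cup \colclass_2 \cup \colclass_3]$ to compute and canonically order the orbits on $\colclass_j$ for each non-abelian $\colclass_j \in \{\colclass_2,\colclass_3\}$ under the automorphism group of $\Struct'[\colclass_1 \cup \colclass_2 \cup \colclass_3]$. By the preceding paragraph there are exactly two such orbits on each non-abelian $\colclass_j$, and they coincide with the two oriented cycles of the standard form on $\colclass_j$. Define $\Struct''$ by refining $\spleq'$ so that on each such $\colclass_j$ the two orbits become two consecutive color classes, ordered according to the canonical ordering supplied by Lemma~\ref{lem:compute-and-order-orbits}; the result is by construction an orientation of $\Struct$ refining $\Struct'$.

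The main obstacle I expect is verifying the structural claim that the stabilizer of $\oriA{\colclass_1}$ restricted to each non-abelian $\colclass_j$ is the rotation subgroup and has exactly the two rotation orbits as its orbits on $\colclass_j$. This needs a short group-theoretic argument: in a rotate-or-reflect group, an element acts as a rotation on $\colclass_1$ iff it acts as a rotation on every non-abelian factor, so the stabilizer of the split of $\colclass_1$ is exactly $\rotsubgroup{\typegroup{\type}}$ (respectively $\rotsubgroup{\outproj_{\colclass_i}(\typegroup{\type})}$ lifted back in Case~3); subdirectness of the projection then yields transitivity on each of the two rotation orbits of $\colclass_j$, and these are precisely the orbits picked out by the standard form of $\colclass_j$.
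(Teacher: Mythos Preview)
Your proposal is correct and follows essentially the same route as the paper: invoke Lemma~\ref{lem:2inj-sub-aut-graphs} to see that the projection onto the non-abelian factors is rotate-or-reflect, observe that orienting one non-abelian class kills all reflections there, and then use Lemma~\ref{lem:compute-and-order-orbits} on the bounded substructure $\Struct'[\colclass_1\cup\colclass_2\cup\colclass_3]$ to canonically split the remaining non-abelian classes. The paper phrases the last step via $2$-orbits rather than vertex orbits and is terser about why the two directed cycles end up in different orbits, but the argument is the same; your extra paragraph verifying that the stabilizer is exactly the rotation subgroup and that subdirectness gives transitivity on each cycle simply spells out what the paper leaves implicit.
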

\begin{proof}
	If $\Struct'$ orients all of the non-abelian color classes,
	we just set $\Struct'' := \Struct'$.
	Otherwise 
	let $V := \bigcup_{i\in[3], \colclass_i \text{ not abelian}} \colclass_i$
	be the non-abelian color-classes (in $\Struct$).
	Then $\restrictGroup{\typegroupStruct{\Struct}{\type}}{V}$ 
	is a rotate-or-reflect group
	by Lemma~\ref{lem:2inj-sub-aut-graphs}
	and thus
	$\restrictGroup{\typegroupStruct{\Struct'}{\type}}{V}$ cannot contain any reflections.
	
	We compute and order the $2$-orbits of $\bigcup_{i \in [3]} \colclass_i$ under $\typegroupStruct{\Struct'}{\type}$.
	This is CPT-definable by Lemma~\ref{lem:compute-and-order-orbits}.
	Then, for each color class with dihedral automorphism group,
	the two directed cycles belong to different orbits,
	and we define $\spleq''$ accordingly.
\end{proof}

It remains to consider quotient color classes:

\begin{lemma}
	\label{lem:orienate-quotients}
	There is a CPT term that given
	an extension color class with dihedral automorphism group
	$\colclass \subseteq \extensionvertices$,
	a non-abelian $N$-quotient $\colclass' \subseteq \groupvertices$ of $\colclass$ (where $N \normal \autgroup{\colstruct}$),
	and an orientation $\Struct'$ of one of $\colclass$ and $\colclass'$
	outputs an orientation $\Struct''$ of both $\colclass$ and $\colclass'$.
\end{lemma}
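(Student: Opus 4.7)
The plan is to exploit the following key structural observation: since $\colclass'$ is a non-abelian $N$-quotient of $\colclass$, the normal subgroup $N \normal \autgroup{\colstruct}$ must consist only of rotations. Indeed, the normal subgroups of a dihedral group $\DihedralGroup{n}$ that contain a reflection are $\DihedralGroup{n}$ itself and (for even $n$) the two index-$2$ subgroups isomorphic to $\DihedralGroup{n/2}$; in all of these cases the quotient is abelian (trivial or $\CyclicGroup{2}$), contradicting the hypothesis that $\autgroup{\colstruct'}$ is non-abelian. Hence $N \leq \rotsubgroup{\autgroup{\colstruct}}$.

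Because $N$ contains only rotations, every $N$-orbit on $\colclass$ is contained entirely in one of the two directed rotation cycles that induce the standard form of $\colclass$. On the quotient side, the rotation and reflection structure of $\autgroup{\colstruct'}\iso \autgroup{\colstruct}/N$ arises by projection from $\autgroup{\colstruct}$: rotations go to rotations, reflections go to reflections, and the two rotation cycles inducing the standard form of $\colclass'$ are precisely the projections of the two rotation cycles of $\colclass$ under the orbit-map. The orbit-map $\rel_\orbsym$ therefore induces a canonical bijection between the set of two cycles of $\colclass$ and the set of two cycles of $\colclass'$.

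Given this canonical correspondence, the propagation of an orientation becomes straightforward and CPT-definable. If $\Struct'$ already orients $\colclass$ into classes $\oriA{\colclass}$ and $\oriB{\colclass}$, then I define
\[
  \oriA{\colclass'} := \setcondition{u_O \in \colclass'}{O \subseteq \oriA{\colclass}}, \qquad
  \oriB{\colclass'} := \setcondition{u_O \in \colclass'}{O \subseteq \oriB{\colclass}},
\]
where $O$ is read off using $\rel_\orbsym$, and refine $\spleq'$ by placing $\oriA{\colclass'}$ before $\oriB{\colclass'}$ in the same way $\oriA{\colclass}$ is placed before $\oriB{\colclass}$. Symmetrically, if $\Struct'$ orients $\colclass'$, then I pull the split back along the orbit-map: a vertex $u \in \colclass$ is placed into $\oriA{\colclass}$ iff the unique $u_O$ with $(u,u_O) \in \rel_\orbsym$ lies in $\oriA{\colclass'}$. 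In both cases the two resulting pieces of $\colclass$ (resp.\ $\colclass'$) are exactly the two directed cycles of the standard form, so $\Struct''$ is a valid orientation of both color classes.

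The only nontrivial step is the preliminary group-theoretic observation that $N$ is a subgroup of the rotation subgroup; everything afterwards is a direct CPT-definable translation between $\colclass$ and $\colclass'$ via the orbit-map, exactly as in the construction of quotient color classes.
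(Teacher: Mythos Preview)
Your proposal is correct and follows essentially the same approach as the paper: both arguments hinge on the observation that $N \leq \rotsubgroup{\autgroup{\colstruct}}$ (since a non-abelian quotient forces $N$ to avoid reflections), so each $N$-orbit lies within a single standard-form cycle of $\colclass$, and the orbit-map therefore canonically pairs the two cycles of $\colclass$ with the two cycles of $\colclass'$. You spell out the normal-subgroup classification and the explicit CPT definitions in more detail than the paper does, but the underlying idea is identical.
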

\begin{proof}
	Let $\colclass$,~$\colclass'$,~$N$, and~$\Struct'$ be as above.
	Note that $N \leq \rotsubgroup{\autgroup{\colstruct}}$
	because $\autgroup{\colstruct'}$ is non-abelian.
	Because $N$ does not contain reflections,
	a vertex in one standard form cycle of $\colclass$ is never in the same orbit
	as a vertex of the other cycle.
	In particular, the vertices of one cycle of $\colclass$
	are adjacent via the orbit-map relation
	with the vertices in only one cycle of $\colclass'$
	(and the some for the other cycle).
	
	Hence, if there is an order on the two cycles of $\colclass$
	(or $\colclass'$ respectively),
	we can lift the order to the two cycles of $\colclass'$
	(or $\colclass$ respectively) via the orbit-map.
	This can clearly be defined in CPT.
\end{proof}

\begin{corollary}
	\label{cor:orientate-sim-refl}
	There is a CPT term that
	given an orientation $\Struct'$ of a color class $\colclass_1$
	and another color class with $\colclass_2 \simRefl \colclass_1$
	outputs an orientation $\Struct''$ of $\colclass_1$ and $\colclass_2$.
\end{corollary}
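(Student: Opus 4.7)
The plan is to propagate the given orientation of $\colclass_1$ along the relation $\simRefl$ by iteratively applying Lemmas~\ref{lem:orient-2-inj-sub} and~\ref{lem:orienate-quotients} in parallel until a fixed point is reached. Since $\colclass_2 \simRefl \colclass_1$, the iteration will eventually orient $\colclass_2$ as well (in fact every color class in the $\simRefl$-equivalence class of $\colclass_1$).

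More concretely, inside a CPT iteration term I would define a sequence of orientations $\Struct_0, \Struct_1, \dots$ with $\Struct_0 := \Struct'$. To go from $\Struct_i$ to $\Struct_{i+1}$, I form the common refinement of $\Struct_i$ together with: (a)~for every group type $\type = (D_1, D_2, D_3) \in \grouptypes$ such that at least one non-abelian $D_j$ is already oriented in $\Struct_i$, the orientation produced by Lemma~\ref{lem:orient-2-inj-sub} on input $(\type, \Struct_i)$; and (b)~for every extension color class $D \subseteq \extensionvertices$ together with a non-abelian quotient $D' \subseteq \groupvertices$ of $D$ such that one of $D, D'$ is oriented in $\Struct_i$, the orientation produced by Lemma~\ref{lem:orienate-quotients}. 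Since each step only refines the preorder and each original color class can be split at most once (into its two standard-form cycles), the sequence stabilizes within $|\colorclasses{\Struct}|$ steps; let $\Struct''$ denote the fixed point.

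Correctness is by induction along a $\bothRefl$-chain $\colclass_1 = D_0 \bothRefl D_1 \bothRefl \dots \bothRefl D_k = \colclass_2$ witnessing $\colclass_1 \simRefl \colclass_2$. If $D_j$ is oriented in $\Struct_j$, then the $\bothRefl$-relation between $D_j$ and $D_{j+1}$ instantiates either the premise of Lemma~\ref{lem:orient-2-inj-sub} (via a shared group type) or of Lemma~\ref{lem:orienate-quotients} (via the quotient relation) in step $j+1$, and $D_{j+1}$ is oriented in $\Struct_{j+1}$. Hence $\colclass_2 = D_k$ is oriented in $\Struct_k$ and a fortiori in $\Struct''$, while $\colclass_1$ is oriented already in $\Struct_0$.

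The step requiring most care is showing that the parallel common refinement at each iteration remains a valid orientation, i.e., that different propagation routes agree on the orientation of any class they jointly reach. This is where the canonicity of the two lemmas enters: in Lemma~\ref{lem:orient-2-inj-sub}, once one non-abelian factor of a rotate-or-reflect group $\typegroup{\type}$ is oriented, the $2$-orbits of $\typegroup{\type}$ partition the standard-form cycles of each other non-abelian factor into two sets in a uniquely determined way; in Lemma~\ref{lem:orienate-quotients}, the orbit-map canonically identifies the two cycles of $\colclass$ with those of its non-abelian quotient. Hence all propagation routes produce the same refinement on shared classes, so the common refinement is itself a valid orientation. The whole construction is CPT-definable since it is a bounded-length iteration of CPT-definable operations.
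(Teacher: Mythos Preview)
Your proposal is correct and follows the same approach as the paper: reduce to the relation $\bothRefl$ and invoke Lemmas~\ref{lem:orient-2-inj-sub} and~\ref{lem:orienate-quotients} for the two cases of Definition~\ref{def:sim-refl}. The paper's proof dispatches this in three sentences, essentially just saying ``it suffices to prove the claim for $\bothRefl$,'' whereas you spell out the CPT iteration explicitly and, in particular, address why the parallel refinement is well-defined (different propagation routes agree). That consistency point is a genuine detail the paper glosses over; your justification via the canonicity of the two lemmas is sound.
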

\begin{proof}
	It suffices to show that the claim holds for the relation $\bothRefl$.
	If $\colclass_1$ and $\colclass_2$
	are non-abelian group color classes
	and up to reordering of the color classes
	$(\colclass_1, \colclass_2, \colclass_3) \in \grouptypes$
	is a group type,
	the claim follows by Lemma~\ref{lem:orient-2-inj-sub}.
	Otherwise, w.l.o.g., $\colclass_1$ is a non-abelian quotient of~$\colclass_2$
	and the claim follows by Lemma~\ref{lem:orienate-quotients}.
\end{proof}

\subsubsection{Reflection Components and Border Color Classes}

\tikzstyle{polysix} = [regular polygon, regular polygon sides=6]
\tikzstyle{extClass} = [polysix,thick,  draw=black, fill=none, inner sep=2mm]
\tikzstyle{groupClass} = [polysix, thick, draw=black, fill=none, inner sep=1.5mm]
\tikzstyle{abelian} = [circle]
\tikzstyle{border} =[fill=gray]

\newcommand{\orbitmap}[3]{\path[#3, thick] #1 edge #2;}
\newcommand{\grouptype}[3]{
	\path[-] 
	(#1.center) edge (#3.center)
	($(#1.center) + (0,0.1)$) edge ($(#3.center) + (0,0.1)$)
	($(#1.center) - (0,0.1)$) edge ($(#3.center) - (0,0.1)$);}
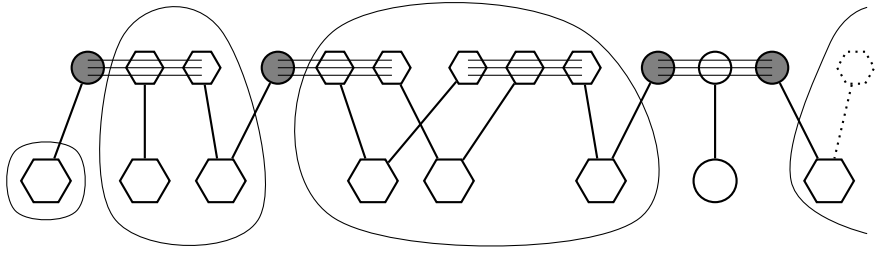
\begin{figure}
	\centering
	\begin{tikzpicture}[>=stealth,
	]
	
		\node[extClass] (E1) at (-0.3,0){};
		\node[extClass] (E2) at (1,0){};
		\node[extClass] (E3) at (2,0){};
		\node[extClass] (E4) at (4,0){};
		\node[extClass] (E5) at (5,0){};
		\node[extClass] (E6) at (7,0){};
		\node[extClass, abelian] (E7) at (8.5,0){};
		\node[extClass] (E8) at (10,0){};
		
		\def \gY{1.5}

		\node[groupClass, abelian, border] (G3) at (0.25,\gY){};
		\node[groupClass] (G4) at (1,\gY){};
		\node[groupClass] (G5) at (1.75,\gY){};
		\node[groupClass,abelian, border] (G6) at (2.75,\gY){};
		\node[groupClass] (G7) at (3.5,\gY){};
		\node[groupClass] (G8) at (4.25,\gY){};
		\node[groupClass] (G9) at (5.25,\gY){};
		\node[groupClass] (G10) at (6.0,\gY){};
		\node[groupClass] (G11) at (6.75,\gY){};
		\node[groupClass, abelian, border] (G12) at (7.75,\gY){};
		\node[groupClass, abelian] (G13) at (8.5,\gY){};
		\node[groupClass, abelian, border] (G14) at (9.25,\gY){};
		\node[groupClass, dotted] (G15) at (10.35,\gY){};
		
		\orbitmap{(G3)}{(E1)}{-}
		\orbitmap{(G4)}{(E2)}{-}
		\orbitmap{(G5)}{(E3)}{-}
		\orbitmap{(G6)}{(E3)}{-}
		\orbitmap{(G7)}{(E4)}{-}
		\orbitmap{(G8)}{(E5)}{-}
		\orbitmap{(G9)}{(E4)}{-}
		\orbitmap{(G10)}{(E5)}{-}
		\orbitmap{(G11)}{(E6)}{-}
		\orbitmap{(G12)}{(E6)}{-}
		\orbitmap{(G13)}{(E7)}{-}
		\orbitmap{(G14)}{(E8)}{-}
		\orbitmap{(G15)}{(E8)}{dotted}
		
		\grouptype{G3}{G4}{G5}
		\grouptype{G6}{G7}{G8}
		\grouptype{G9}{G10}{G11}
		\grouptype{G12}{G13}{G14}
		
		\draw  plot[smooth cycle, tension=.7, thin] coordinates {(0.7,1.9) (2,2) (2.5, -0.5) (0.6,-0.5)};
		
		\draw  plot[smooth cycle, tension=.7] coordinates {(3.5, 2) (6.75,2) (7.5,-0.5) (3.5,-0.5) };
		
		\draw  plot[smooth cycle, tension=.7] coordinates {(-0.7, 0.4) (0.1,0.4) (0.1,-0.4) (-0.7,-0.4) };
		
		\draw  plot[smooth, tension=.7] coordinates {(10.5, -0.7) (9.5, 0) (10,1.8) (10.5, 2.3) };
		
	\end{tikzpicture}
	
	\caption{A 2-injective quotient structure with dihedral colors:
		an abelian color class is drawn as circle, a non-abelian one as hexagon.
		The group color classes are at the top, the extension classes at the bottom.
		An edge between a group class~$\colclass$ and an extension class~$\colclass'$ denotes an orbit-map and~$\colclass$ is a quotient of~$\colclass'$.
		Edges between group color classes indicate relations of arity $3$.
		The reflection components are encircled and border color classes are gray.}
	\label{fig:2-inj-dih-structure}
\end{figure}

Corollary~\ref{cor:orientate-sim-refl} shows that
given an orientation of one color class $\colclass$,
we can orient the $\simRefl$\nobreakdash-equivalence class of $\colclass$ in CPT.
This observation gives rise to the following definition:

\begin{definition}[Reflection Component]
	\label{def:reflection-component}
	We call a $\simRefl$\nobreakdash-equivalence class $\reflcomp \subseteq \colorclasses{\Struct}$
	containing only non-abelian color classes
	a \defining{reflection component}.
\end{definition}
Figure~\ref{fig:2-inj-dih-structure} shows an example with its reflection components.
Recall that by definition of $\simRefl$,
the abelian color classes are all in singleton equivalence classes.
Because all color classes of a reflection component
can be oriented by orienting only a single color class,
we can speak of the two orientations
of a reflection component $\reflcomp$.
They can be defined in CPT by first orienting the smallest
color class (with respect to $\spleq$) contained in $\reflcomp$ and then applying Corollary~\ref{cor:orientate-sim-refl}.
We make some simple observations:

\begin{corollary}
	There is a CPT term defining the set of all reflection components of $\Struct$.
\end{corollary}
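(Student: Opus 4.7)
The plan is to first show that the relation $\bothRefl$ is CPT-definable on pairs of color classes of $\Struct$, and then take its transitive closure via an iteration term in order to obtain $\simRefl$; the reflection components are then the resulting equivalence classes that additionally pass a ``non-abelian'' test. All ingredients rely only on operations already available in CPT and on the bounded color class size.

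First, the set $\colorclasses{\Struct}$ is CPT-definable as the set of equivalence classes of $\spleq$, and the partition of $\colorclasses{\Struct}$ into $\groupvertices$-classes and $\extensionvertices$-classes is determined by $\spleq$ as well. For each color class $\colclass$, since $|\colclass|\leq q$, we can apply Lemma~\ref{lem:compute-and-order-orbits} on $\colstruct$ to decide in CPT whether $\autgroup{\colstruct}$ is abelian: for example, by computing all $(|\colclass|+1)$-orbits and checking commutativity of the induced permutations, or more simply by counting orbits of a suitable action. (Alternatively one could use the standard form given by Lemma~\ref{lem:dihedral-transitive-colors-draw-automorphism-group} together with the classification of dihedral groups.)

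Next, condition (a) in Definition~\ref{def:sim-refl} is CPT-definable: iterating over unordered triples $\{\colclass_1,\colclass_2,\colclass_3\}$ of group color classes and over the relation symbols of $\Struct$, we check whether some relation $\rel_j^\Struct$ has a tuple of type $(\colclass_{i_1},\colclass_{i_2},\colclass_{i_3})$ for some ordering, and whether two of the three classes are non-abelian by the test above. Condition (b) is CPT-definable as well: by the definition of a $2$-injective quotient structure, every group color class $\colclass \subseteq \groupvertices$ is related to exactly one extension color class via exactly one orbit-map relation, so we can read off pairs $(\colclass,\colclass')$ where $\colclass$ is a quotient of $\colclass'$ directly from the structure, and then check non-abelianness of $\colclass$. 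Taking the union of the pairs generated by (a) and (b) yields a CPT-definable relation $\bothRefl \subseteq \colorclasses{\Struct}^2$.

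Finally, we close $\bothRefl$ under reflexivity, symmetry, and transitivity using a standard BGS iteration term: starting from $R_0 := \bothRefl \cup \{(\colclass,\colclass) \mid \colclass \in \colorclasses{\Struct}\}$, the term $R_{i+1} := R_i \cup \{(\colclass,\colclass') \mid \exists \colclass''.\ (\colclass,\colclass'') \in R_i \wedge (\colclass'',\colclass') \in R_i\}$ stabilizes after at most $\log_2|\colorclasses{\Struct}|$ steps, well within the polynomial bound required by CPT. The $\simRefl$-equivalence classes are then the sets $\{\colclass' \mid (\colclass,\colclass')\in R\}$ for $\colclass \in \colorclasses{\Struct}$, and the reflection components are precisely those among them that consist entirely of non-abelian color classes, as required by Definition~\ref{def:reflection-component}. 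There is no real obstacle here; the only subtlety is making sure the abelian/non-abelian test is available in CPT, which it is because each color class has size at most $q$ so that its full automorphism group and hence its commutator structure is a CPT-computable hereditarily finite object.
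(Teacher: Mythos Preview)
Your proposal is correct and is the natural argument; the paper in fact states this corollary without proof, listing it among ``simple observations'' following Definition~\ref{def:reflection-component}. One small imprecision: you write that the partition into group and extension color classes is determined by $\spleq$, but that is not quite how it is recovered---rather, the orbit-map relations in the $2$-injective quotient structure identify which color classes are quotients of which, and from that one reads off $\groupvertices$ versus $\extensionvertices$. This does not affect the validity of your argument.
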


\begin{corollary}
	The two orientations $\oriO{\Struct[\reflcomp]}$ of a reflection component $\reflcomp$
	have abelian colors.
\end{corollary}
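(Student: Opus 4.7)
The plan is to observe that the orientation of $\reflcomp$ implicitly restricts each dihedral automorphism group to its rotation subgroup, which is cyclic and therefore abelian. The only real ingredient is the standard form of the color classes installed by the preprocessing of Section~\ref{sec:dihedral-colors}.

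First I would invoke Lemma~\ref{lem:dihedral-transitive-colors-draw-automorphism-group} (applied during preprocessing) to ensure that every non-abelian color class $\colclass$ of $\Struct$ carries two distinguished homogeneous relations $\rel_i^\Struct, \rel_j^\Struct$ on $\colclass$ inducing its standard form. Crucially, $\rel_i^\Struct|_\colclass$ is a disjoint union of two directed cycles of length $n := |\colclass|/2$.

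Now I fix an arbitrary $\colclass \in \reflcomp$. By Definition~\ref{def:reflection-component} the class $\colclass$ is non-abelian, so $\autgroup{\colstruct} \iso \DihedralGroup{n}$ with $n > 2$. By the definition of orientation, the sets $\oriA{\colclass}$ and $\oriB{\colclass}$ are precisely the vertex sets of the two directed cycles of $\rel_i^\Struct|_\colclass$. Hence $\rel_i^\Struct|_{\oriA{\colclass}}$ is a single directed $n$-cycle on $\oriA{\colclass}$, which any automorphism of $\Struct[\oriA{\colclass}]$ must preserve. This yields
\[
\autgroup{\Struct[\oriA{\colclass}]} \leq \autgroup{(\oriA{\colclass}, \rel_i^\Struct|_{\oriA{\colclass}})} \iso \CyclicGroup{n},
\]
and the same argument applies to $\oriB{\colclass}$. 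Since the preorder is trivial inside a single color class, the automorphism group of the induced substructure on $\oriA{\colclass}$ in $\oriO{\Struct[\reflcomp]}$ coincides with $\autgroup{\Struct[\oriA{\colclass}]}$. As $\colclass \in \reflcomp$ was arbitrary and $\reflcomp$ contains only non-abelian classes, this covers every color class of $\oriO{\Struct[\reflcomp]}$.

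The proof is essentially self-executing; the mildly subtle point is making sure we are referring to the automorphism group of the induced substructure on a single (halved) color class rather than to the restriction of $\autgroup{\oriO{\Struct[\reflcomp]}}$, but these happen to agree here because the directed-cycle relation $\rel_i^\Struct|_{\oriA{\colclass}}$ by itself already forces the automorphism group of the induced substructure into $\CyclicGroup{n}$.
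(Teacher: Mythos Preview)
Your proof is correct and matches the paper's (implicit) reasoning: the paper states this corollary without proof, treating it as immediate from the definition of orientation and the standard form of the color classes, which is exactly the content you spell out. The one cosmetic slip is that in the paper's standard-form convention it is $\rel_j^\Struct$ (not $\rel_i^\Struct$) that carries the two directed cycles, but this does not affect the argument.
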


\begin{corollary}
	\label{cor:refl-comp-rotate-or-reflect}
	$\autgroup{\Struct[\reflcomp]}$
	is a rotate-or-reflect group
	for every reflection component~$\reflcomp$.
\end{corollary}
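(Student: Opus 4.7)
The plan is to analyse a single $\bothRefl$-step and show it preserves the rotation/reflection type of an automorphism restriction, then iterate along the chain witnessing $\simRefl$-equivalence. Fix $\perm \in \autgroup{\Struct[\reflcomp]}$. Every $\colclass \in \reflcomp$ is non-abelian, so $\autgroup{\colstruct} \iso \DihedralGroup{n}$ for some $n > 2$, and the notions of rotation and reflection apply to $\perm|_\colclass$. Assign $f(\colclass)$ the label ``rotation'' or ``reflection'' accordingly; the goal becomes showing that $f$ is constant on $\reflcomp$. Since $\simRefl$ is the transitive closure of $\bothRefl$, it suffices to prove $f(\colclass_1) = f(\colclass_2)$ whenever $\colclass_1 \bothRefl \colclass_2$ with both classes non-abelian.

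For case~a) of Definition~\ref{def:sim-refl}, there is (up to reordering) a group type $\type = (\colclass_1,\colclass_2,\colclass_3) \in \grouptypes$, and $\perm|_{\colclass_1 \cup \colclass_2 \cup \colclass_3}$ lies in $\typegroup{\type}$. I invoke Lemma~\ref{lem:2inj-sub-aut-graphs}: since $\colclass_1$ and $\colclass_2$ are non-abelian, conclusion~1 is excluded, and either conclusion~2 applies (so $\typegroup{\type}$ itself is rotate-or-reflect) or conclusion~3 applies (so $\colclass_3$ is abelian and $\outproj_{\colclass_3}(\typegroup{\type})$ is rotate-or-reflect). In both situations, the pair $(\perm|_{\colclass_1},\perm|_{\colclass_2})$ must be jointly a rotation or jointly a reflection, so $f(\colclass_1) = f(\colclass_2)$.

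For case~b) assume without loss of generality that $\colclass_1$ is a non-abelian $N$-quotient of $\colclass_2$, where $N \normal \autgroup{\Struct[\colclass_2]}$. Since $\autgroup{\Struct[\colclass_1]} \iso \autgroup{\Struct[\colclass_2]}/N$ must again be non-abelian dihedral, I check via the classification of normal subgroups of $\DihedralGroup{n}$ that $N$ is forced to lie inside the rotation subgroup $\rotsubgroup{\autgroup{\Struct[\colclass_2]}}$: any other normal subgroup (the group itself, or an index-two dihedral subgroup in the even case) yields an abelian quotient. Therefore the canonical epimorphism $\autgroup{\Struct[\colclass_2]} \to \autgroup{\Struct[\colclass_1]}$ sends rotations to rotations and reflections to reflections. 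Since $\perm|_{\colclass_1}$ is the induced action of $\perm|_{\colclass_2}$ on the $N$-orbits (this is exactly what the orbit-map encodes), the types agree and $f(\colclass_1) = f(\colclass_2)$. Iterating along a $\bothRefl$-chain establishes that $f$ is constant on $\reflcomp$, which is precisely the rotate-or-reflect property.

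The main obstacle is really just a careful case analysis: in case~a) one needs that Lemma~\ref{lem:2inj-sub-aut-graphs} handles every configuration in which $\colclass_1$ and $\colclass_2$ are non-abelian (which is why double-CFI-freeness and the preceding classification were needed), while in case~b) one must justify containment of $N$ in the rotation subgroup using the normal-subgroup structure of $\DihedralGroup{n}$. No genuinely new tools are required beyond the structural lemmas already established.
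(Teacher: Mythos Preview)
There is a real gap in case~(a) of the $\bothRefl$-step. You assert that $\perm|_{\colclass_1 \cup \colclass_2 \cup \colclass_3}$ lies in $\typegroup{\type}$, but $\perm$ is an automorphism of $\Struct[\reflcomp]$, so this restriction is only defined when $\colclass_3 \in \reflcomp$. If $\colclass_3$ is abelian---exactly the situation of conclusion~3 in Lemma~\ref{lem:2inj-sub-aut-graphs}---then $\colclass_3 \notin \reflcomp$, and every tuple of the ternary relation of type $\type$ has a vertex in $\colclass_3$, so that relation is empty in the induced substructure $\Struct[\reflcomp]$. Nothing in $\Struct[\reflcomp]$ then links $\colclass_1$ and $\colclass_2$ through this type, and you cannot conclude $(\perm|_{\colclass_1},\perm|_{\colclass_2}) \in \outproj_{\colclass_3}(\typegroup{\type})$. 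Your handling of case~(b), and of case~(a) when all three classes are non-abelian, is fine.

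This is not a mere technicality: take non-abelian group classes $G_1, G_2$, each the trivial quotient of its own extension class $E_i$, occurring together only in a single group type $(G_1,G_2,G_3)$ with $G_3$ abelian. Then $\reflcomp=\{E_1,G_1,E_2,G_2\}$, the induced substructure $\Struct[\reflcomp]$ is a disjoint union of two pieces, and $\autgroup{\Struct[\reflcomp]}$ contains mixed elements. What the paper actually uses later (Lemmas~\ref{lem:reflcomp-canon-noniso-orientations} and~\ref{lem:reflcomp-canon-iso-orientations}) is that every automorphism of $\Struct_\reflcomp = \Struct[\borderclasses{\reflcomp}\cup\bigcup\reflcomp]$ is rotate-or-reflect on the classes of $\reflcomp$. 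For \emph{that} statement your argument works verbatim: the abelian $\colclass_3$ is then a border color class and hence lies in $\Struct_\reflcomp$, the ternary relation survives, and $\perm|_{\colclass_1\cup\colclass_2\cup\colclass_3}\in\typegroup{\type}$ holds as claimed.
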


\begin{corollary}
	If $\Struct$ contains only one reflection component $\reflcomp$,
	then we can canonize $\Struct$ in CPT.
\end{corollary}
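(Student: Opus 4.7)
The plan is to reduce canonization of $\Struct$ to canonization of structures with abelian color classes, which is available via the procedure of~\cite{AbuZaidGraedelGrohePakusa2014}. Since $\Struct$ contains exactly one reflection component $\reflcomp$, every color class outside $\reflcomp$ is abelian by definition of reflection component, and by the preceding corollary each of the two orientations of $\reflcomp$ turns all color classes of $\reflcomp$ into abelian ones as well. Thus both orientations $\oriA{\Struct}$ and $\oriB{\Struct}$ of $\Struct$ are structures with abelian colors throughout.

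First I would produce the two-element set $\set{\oriA{\Struct}, \oriB{\Struct}}$ of orientations of $\reflcomp$ (and hence of $\Struct$). Individually no orientation is canonically selectable, but the set is CPT-definable: pick the $\spleq$-smallest color class $\colclass_0 \in \reflcomp$, form both orientations of $\colclass_0$ alone (which just refines $\spleq$ in one of the two possible ways on $\colclass_0$), and propagate each of them to the rest of $\reflcomp$ via Corollary~\ref{cor:orientate-sim-refl}.

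Next I would apply the canonization of~\cite{AbuZaidGraedelGrohePakusa2014} in parallel to the two elements of this set, yielding a canonically defined set of at most two ordered copies. Each such ordered copy comes with the refined preorder of the underlying orientation, so to obtain ordered copies of $\Struct$ itself I coarsen the preorder by merging each pair $\oriA{\colclass}, \oriB{\colclass}$ back into the original color class $\colclass \in \reflcomp$; this step only changes the preorder, the total order produced by the canonization still refines it, and the resulting structure is isomorphic to $\Struct$.

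Finally I select the lexicographically smallest element of this (at most two-element) set of ordered copies of $\Struct$, which can be done in CPT since ordered structures are comparable. The result is a canonical ordered copy of $\Struct$. The only mild subtlety, rather than a genuine obstacle, is the coarsening step and the need to remember that what is compared are ordered copies of $\Struct$ and not of its orientations; all deeper work has already been carried out in the preparatory lemmas on orientation propagation and in the abelian case of~\cite{AbuZaidGraedelGrohePakusa2014}.
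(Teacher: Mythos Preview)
Your proposal is correct and follows essentially the same approach as the paper's own proof: form the two orientations of the unique reflection component, apply the abelian canonization procedure of~\cite{AbuZaidGraedelGrohePakusa2014} to each, undo the refinement of the preorder to obtain ordered copies of $\Struct$, and output the lexicographically smallest. The paper's proof phrases the final step as a case distinction (different versus equal canonizations), which your formulation with an ``at most two-element set'' handles implicitly.
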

\begin{proof}
	We define the two orientations $\oriO{\Struct}$,
	$\ori \in \orientations$,
	of $\Struct$ by orienting the unique reflection component.
	Then we apply the canonization procedure for
	structures with abelian color classes to them
	and obtain two canonizations $\can{\oriO{\Struct}}$.
	From these, we define two canonizations $\oriO{\can{\Struct}}$ of $\Struct$.
	If they are different, we output the lexicographically smaller one.
	Otherwise, we output the unique element contained in the set
	$\set{\oriA{\can{\Struct}},\oriB{\can{\Struct}}}$.
\end{proof}

We now want to deal with the case that $\Struct$
contains multiple reflection components
and thus have to deal with the restrictions
that possible canonizations of one reflection component
impose on other reflection components.
To do so, we want to define
the canonical labellings
(recall, the set of isomorphisms into the canonical copy)
of a reflection component .
Since this set can be exponentially large,
we want to encode it as a solution of an equation system
(as already done in the canonization procedure of abelian color classes).
In fact, we want to use the two equations systems given
for the two orientations of a reflection component $\reflcomp$
to define an equation system for~$\reflcomp$.
However, we are actually not interested in the entire isomorphisms,
but only in their restrictions onto the color classes
connecting a reflection component to the rest of the structure.
We now analyse properties of color classes that connect different reflection components.

\begin{definition}[Border Color Class]
	Let $\reflcomp \subseteq \colorclasses{\Struct}$ be a reflection component.
	We call a color class $\colclass \in \colorclasses{\Struct}$
	a \defining{border color class of $\reflcomp$}
	if $\colclass\notin \reflcomp$
	and $\colclass$ is related to a color class contained in $\reflcomp$.
	We denote with $\borderclasses{\reflcomp}$ the set of all border color classes of $\reflcomp$.
\end{definition}

\begin{corollary}
	There is a CPT term
	that given a reflection component $\reflcomp$
	defines the border color classes $\borderclasses{\reflcomp}$.
\end{corollary}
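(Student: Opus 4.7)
The plan is to express the definition of $\borderclasses{\reflcomp}$ directly as a CPT comprehension term over the set of color classes of $\Struct$. First, observe that the set $\colorclasses{\Struct}$ is CPT-definable: each color class is the set of atoms in a single $\spleq$-equivalence class, and the total order on classes induced by $\spleq$ gives us access to them as a set. Hence iterating over color classes is a standard CPT operation.

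Next, for a fixed color class $\colclass$, we need to express two conditions. The first condition $\colclass \notin \reflcomp$ is plain set membership, which is available in the BGS fragment of CPT. The second condition ``$\colclass$ is related to some $\colclass' \in \reflcomp$'' unfolds (by the definition of relatedness given in the preliminaries) to: there exists a relation $\rel_i^\Struct$ of $\Struct$ and a tuple $t \in \rel_i^\Struct$ such that $t$ contains both a vertex of $\colclass$ and a vertex of some $\colclass' \in \reflcomp$. Since the signature $\sig$ is fixed, we can quantify over its relation symbols as a constant disjunction; the existence of a suitable tuple and of a witness $\colclass' \in \reflcomp$ are then just bounded existential conditions over $\HF{\Struct}$, which are directly expressible as BGS formulas.

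Putting these pieces together, the comprehension term
\[
\borderclasses{\reflcomp} \;:=\; \bigl\{\, \colclass \in \colorclasses{\Struct} \,\bigm|\, \colclass \notin \reflcomp \text{ and } \exists \colclass' \in \reflcomp.\ \colclass \text{ and } \colclass' \text{ are related}\,\bigr\}
\]
is CPT-definable, and since the whole procedure only uses comprehension over sets whose size is polynomially bounded in $|\Struct|$, it admits a trivial polynomial bound and hence is a legitimate CPT term. No iteration term is needed, so there is no subtlety involving polynomial bounds on fixed-point stages; the only mild point is to be sure that quantification over relation symbols is finitary (which it is, since $\sig$ is fixed). The expected main (but still minor) obstacle is purely notational, namely writing the relatedness condition uniformly over the finitely many relation symbols of $\sig$, which is handled by a constant-size disjunction built into the term.
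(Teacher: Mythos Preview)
Your proof is correct and matches the paper's approach: the paper states this corollary without proof, treating it as immediate from the definition of $\borderclasses{\reflcomp}$, and your comprehension-term argument is exactly the obvious justification. There is nothing to add.
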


\begin{lemma}
	\label{lem:border-color-classes-abelian}
	Let $\reflcomp \subseteq \colorclasses{\Struct}$ be a reflection component
	and $\colclass \in \borderclasses{\reflcomp}$
	a border color class of $\reflcomp$.
	Then $\autgroup{\colstruct}$
	is isomorphic to one of
	$\set{\CyclicGroup{1}, \CyclicGroup{2}, \DihedralGroup{2}}$
	and $\colclass$ is a group color class.
\end{lemma}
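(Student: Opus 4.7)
The strategy is to enumerate how $\colclass$ can be related to some $\colclass' \in \reflcomp$ according to the structure of a $2$-injective quotient structure, and in each case either derive a contradiction from Definition~\ref{def:sim-refl} or determine $\autgroup{\colstruct}$. Recall that the only heterogeneous relations in the normal form are orbit-maps (between an extension class and its group-class quotients) and the arity-$3$ group-type relations (among group classes); in particular, two extension classes are never related, so an extension class can only be related to group classes, and only via orbit-maps.

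First, I would rule out that $\colclass$ is an extension class. Otherwise the related $\colclass' \in \reflcomp$ would necessarily be a group class that is a quotient of $\colclass$; but $\colclass'$ is non-abelian, so case~(b) of Definition~\ref{def:sim-refl} gives $\colclass \bothRefl \colclass'$, hence $\colclass \simRefl \colclass'$, contradicting $\colclass \notin \reflcomp$. Next, I would rule out that the (group class) $\colclass$ is non-abelian: if $\colclass'$ is a group class, the two share $\colclass$'s unique group type and case~(a) of Definition~\ref{def:sim-refl} applies; if $\colclass'$ is an extension class, then $\colclass$ is a non-abelian quotient of $\colclass'$ and case~(b) applies. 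Either way $\colclass \simRefl \colclass'$, again a contradiction. Hence $\colclass$ must be an abelian group color class.

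To pin down $\autgroup{\colstruct}$ I would split on the type of $\colclass'$. If $\colclass'$ is an extension class, then $\autgroup{\Struct[\colclass']}$ is a non-abelian dihedral group (cyclic groups are abelian), and its abelian quotients are exactly $\CyclicGroup{1}$, $\CyclicGroup{2} \iso \DihedralGroup{1}$, and $\DihedralGroup{2}$, so $\autgroup{\colstruct}$ lies in the claimed set. If $\colclass'$ is a group class, then $\colclass,\colclass'$ share a group type $(\colclass_1,\colclass_2,\colclass_3)$. Since $\colclass'$ is a non-abelian dihedral factor, the ``Furthermore'' clause of Theorem~\ref{thm:classification-2inj-sub-dihedral-cyclic} forbids the other two factors from both being abelian, so the third factor must be non-abelian as well. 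With $\colclass$ as the unique abelian factor of this $2$-injective subdirect product, the classification forces either Case~3 of Theorem~\ref{thm:classification-2inj-sub-dihedral} (if $\colclass$ is dihedral) or Case~1 of Theorem~\ref{thm:classification-2inj-sub-dihedral-cyclic} (if $\colclass$ is cyclic); the double-CFI exceptional cases are ruled out by double-CFI-freeness of $\Struct$. Both remaining cases yield $\autgroup{\colstruct} \in \set{\CyclicGroup{1}, \CyclicGroup{2}, \DihedralGroup{2}}$.

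The main obstacle is essentially nonexistent at this stage: the heavy lifting was done in Section~\ref{sec:classification-2-inj-products}. What remains is a careful case distinction, and the only mildly delicate point is invoking the correct classification theorem depending on whether the relevant group type is entirely dihedral or involves a cyclic factor, and keeping track of the fact that double-CFI-freeness is precisely what excludes the exceptional non-rotate-or-reflect subdirect products.
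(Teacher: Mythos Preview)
Your proposal is correct and follows essentially the same approach as the paper's proof: a case analysis on whether $\colclass$ and $\colclass'$ are group or extension classes, ruling out the impossible combinations via Definition~\ref{def:sim-refl}, and then pinning down $\autgroup{\colstruct}$ either as an abelian quotient of a non-abelian dihedral group or via the classification of $2$-injective subdirect products. The only cosmetic difference is that the paper routes the group-type case through Lemma~\ref{lem:2inj-sub-aut-graphs} (which packages the classification together with double-CFI-freeness), whereas you invoke Theorems~\ref{thm:classification-2inj-sub-dihedral} and~\ref{thm:classification-2inj-sub-dihedral-cyclic} directly; the logical content is identical.
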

\begin{proof}
	Let $\colclass' \in \reflcomp$ be related to $\colclass$
	(such a $\colclass' $ exists by the definition of a border color class).
	Because $\colclass' \in\reflcomp$, its automorphism group is non-abelian.
	We first make a case distinction on~$\colclass$.
	
	If $\colclass \subseteq \groupvertices$ is a group color class,
	we make a second case distinction on $\colclass'$.
	If $\colclass' \subseteq \groupvertices$ is a group color class, too,
	let $\colclass_1 = \colclass$, $\colclass_2 = \colclass'$, and $\colclass_3$
	be the three related  group color classes
	forming a $2$-injective subdirect product.
	Because $\colclass'$ is non-abelian,
	there are  by Lemma~\ref{lem:2inj-sub-aut-graphs}
	two non-abelian group color classes of the $\colclass_i$.
	If $\colclass$ is non-abelian, then $\colclass \simRefl \colclass'$
	contradicting that $\colclass \in \borderclasses{\reflcomp}$.
	So $\colclass$ must be abelian and hence 
	$\autgroup{\colstruct}$ is isomorphic to
	$\DihedralGroup{2}, \CyclicGroup{2}$, or $\CyclicGroup{1}$.
	
	If $\colclass' \subseteq \extensionvertices$ is an extension color class,
	then $\colclass$ is a quotient color class of $\colclass'$.
	If $\colclass$ is non-abelian,
	then $\colclass \simRefl \colclass'$
	contradicting $\colclass \in \borderclasses{\reflcomp}$.
	The only abelian quotients of $\colclass$
	are isomorphic to $\DihedralGroup{2}, \CyclicGroup{2}$, or $\CyclicGroup{1}$.
	
	Finally, consider the case that $\colclass \subseteq \extensionvertices$
	is an extension color class.
	Then $\colclass' \subseteq \groupvertices$ is a quotient of $\colclass$
	and because $\colclass'$ is non-abelian so is $\colclass$,
	in particular $\colclass \simRefl \colclass'$.
	Once again, we obtain a contradiction to $\colclass \in \borderclasses{\reflcomp}$.
\end{proof}

So the border color classes of a reflection component $\reflcomp$
are all abelian group color classes
and two reflection components
can only be connected through them
(there could, e.g.,~be other cyclic groups between the border color classes).
In other words,
the reflection components
are embedded in a global abelian part of the structure
(cf.~Figure~\ref{fig:2-inj-dih-structure}).

\begin{definition}
	For a reflection component $\reflcomp$
	we define $\Struct_\reflcomp := \Struct[\borderclasses{\reflcomp} \cup \bigcup \reflcomp]$.
	We denote the two CPT-definable (abelian) orientations of $\Struct_\reflcomp$
	with $\oriO{\Struct_\reflcomp}$, $\ori \in \orientations$.
\end{definition}

\subsubsection{Canonical Labellings of Reflection Components}

\newcommand{\undoorican}[1]{\overline{\mathsf{can}}(#1)}

Let $\reflcomp$ be a reflection component
and assume we are given canonizations $\can{\oriO{\Struct_\reflcomp}}$
for all $\ori \in \orientations$.
We denote with $\undoorican{\oriO{\Struct_\reflcomp}}$
the structure obtained from $\can{\oriO{\Struct_\reflcomp}}$
by undoing the orientation as explained earlier,
i.e., by undoing the refinement of the preorder.
Then $\Struct_\reflcomp \iso \undoorican{\oriO{\Struct_\reflcomp}}$.

Let $<$ be the lexicographical order on canonizations.
We define the canonization $\can{\Struct_\reflcomp}$
to be the $<$-minimal canonization $\undoorican{\oriO{\Struct_\reflcomp}}$ with~$\ori \in \orientations$.
We now analyse the canonical labellings
of a reflection component.

\begin{lemma}
	\label{lem:reflcomp-canon-noniso-orientations}
	If $\can{\oriO{\Struct_\reflcomp}} < \can{\oriOt{\Struct_\reflcomp}}$
	for some $\ori \in \orientations$,
	then we have $\isos{\Struct_\reflcomp} {\can{\Struct_\reflcomp}}  = 
	\isos{\oriO{\Struct_\reflcomp}}{\can{\oriO{\Struct_\reflcomp}}}$.
\end{lemma}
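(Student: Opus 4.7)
The plan is to prove both inclusions. For the easy direction $\supseteq$, every $\psi \in \isos{\oriO{\Struct_\reflcomp}}{\can{\oriO{\Struct_\reflcomp}}}$ remains an isomorphism after coarsening the refined preorders on both sides, so $\psi \in \isos{\Struct_\reflcomp}{\undoorican{\oriO{\Struct_\reflcomp}}}$. By the hypothesis and the definition of $\can{\Struct_\reflcomp}$ as the $<$-minimum of $\undoorican{\oriA{\Struct_\reflcomp}}$ and $\undoorican{\oriB{\Struct_\reflcomp}}$, we have $\can{\Struct_\reflcomp} = \undoorican{\oriO{\Struct_\reflcomp}}$, giving $\psi \in \isos{\Struct_\reflcomp}{\can{\Struct_\reflcomp}}$.

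For the reverse inclusion $\subseteq$, fix $\phi \in \isos{\Struct_\reflcomp}{\can{\Struct_\reflcomp}}$ and pick a reference $\psi \in \isos{\oriO{\Struct_\reflcomp}}{\can{\oriO{\Struct_\reflcomp}}}$. By the previous paragraph $\psi$ also lies in $\isos{\Struct_\reflcomp}{\can{\Struct_\reflcomp}}$, so $\sigma := \psi^{-1} \circ \phi$ is an automorphism of $\Struct_\reflcomp$. Its restriction to $\bigcup \reflcomp$ is an automorphism of $\Struct[\reflcomp]$, which by Corollary~\ref{cor:refl-comp-rotate-or-reflect} is a rotate-or-reflect group. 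Consequently $\sigma$ acts either as a rotation on every color class of $\reflcomp$ simultaneously, preserving the two directed cycles in each class, or as a reflection on every such class, swapping the two cycles. In the rotation case, $\sigma$ maps $\oriO{\colclass}$ to itself for every $\colclass\in\reflcomp$, so $\sigma$ is already an automorphism of $\oriO{\Struct_\reflcomp}$, and therefore $\phi = \psi \circ \sigma \in \isos{\oriO{\Struct_\reflcomp}}{\can{\oriO{\Struct_\reflcomp}}}$, as desired.

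It remains to rule out the reflection case, which is the main obstacle. There $\sigma$ swaps $\oriO{\colclass}$ and $\oriOt{\colclass}$ in every $\colclass\in\reflcomp$, so $\sigma$, viewed as a vertex bijection, is an isomorphism $\oriOt{\Struct_\reflcomp} \to \oriO{\Struct_\reflcomp}$ of the refined structures. Composing, $\phi = \psi \circ \sigma$ lies in $\isos{\oriOt{\Struct_\reflcomp}}{\can{\oriO{\Struct_\reflcomp}}}$, which exhibits $\can{\oriO{\Struct_\reflcomp}}$ as a canonical copy of $\oriOt{\Struct_\reflcomp}$ as well. Since the canonization of isomorphic structures agrees, this forces $\can{\oriOt{\Struct_\reflcomp}} = \can{\oriO{\Struct_\reflcomp}}$, contradicting the strict inequality in the hypothesis. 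The delicate point is to confirm that the rotate-or-reflect dichotomy of Corollary~\ref{cor:refl-comp-rotate-or-reflect} applies to $\sigma|_{\bigcup\reflcomp}$ (immediate because $\Struct[\reflcomp]$ is an induced substructure of $\Struct_\reflcomp$) and that this dichotomy indeed translates, via the definition of orientation, into the binary choice between preserving or swapping $\oriO$ globally across the entire component.
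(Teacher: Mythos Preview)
Your proof is correct and follows essentially the same approach as the paper's: both establish the easy inclusion by coarsening preorders, then for the hard inclusion decompose an arbitrary isomorphism as a fixed reference isomorphism composed with an automorphism of $\Struct_\reflcomp$, invoke Corollary~\ref{cor:refl-comp-rotate-or-reflect} for the rotate-or-reflect dichotomy, and in the reflection case derive $\oriA{\Struct_\reflcomp}\iso\oriB{\Struct_\reflcomp}$ to contradict the hypothesis via isomorphism-invariance of canonization. Your write-up is slightly more explicit than the paper's in spelling out why the reflection case yields equal canonizations, but the argument is the same.
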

\begin{proof}
	The inclusion $\isos{\oriO{\Struct_\reflcomp}}{\can{\oriO{\Struct_\reflcomp}}} \subseteq 
	\isos{\Struct_\reflcomp} {\can{\Struct_\reflcomp}}$
	holds because $\can{\Struct_\reflcomp} = \undoorican{\oriO{\Struct_\reflcomp}}$ and $\autgroup{\oriO{\Struct_\reflcomp}} \subseteq \autgroup{\Struct_\reflcomp}$.
	Hence we can write 
	$\isos{\Struct_\reflcomp} {\can{\Struct_\reflcomp}} = \phi \autgroup{\Struct_\reflcomp}$ for some (and thus every) $\phi \in \isos{\oriO{\Struct_\reflcomp}}{\can{\oriO{\Struct_\reflcomp}}}$.
	
	For the reverse direction, let $\phi' \in \isos{\Struct_\reflcomp} {\can{\Struct_\reflcomp}}$.
	Then there is a $\psi \in \autgroup{\Struct_\reflcomp}$
	such that $\phi' = \phi \circ \psi$.
	The automorphisms $\psi$ is either a rotation or a reflection on all group
	and extension color classes of $\reflcomp$ simultaneously
	by Corollary~\ref{cor:refl-comp-rotate-or-reflect}.
	If $\psi$ is a reflection on one (and hence all by Corollary~\ref{cor:refl-comp-rotate-or-reflect}) color classes,
	then it exchanges the two cycles of the standard form  within each color class and in particular
	$\oriA{\Struct_\reflcomp} \iso \oriB{\Struct_\reflcomp}$
	contradicting our assumption.
	Hence $\psi$ is a rotation on all color classes
	and thus $\psi \in \autgroup{\oriO{\Struct_\reflcomp}}$,
	i.e., $\autgroup{\oriO{\Struct_\reflcomp}} = \autgroup{\Struct_\reflcomp}$.
	Then $\phi' = \phi \circ\psi \in \phi \autgroup{\oriO{\Struct_\reflcomp}}
	= \isos{\oriO{\Struct_\reflcomp}}{\can{\oriO{\Struct_\reflcomp}}}$.
\end{proof}

\begin{lemma}
	\label{lem:reflcomp-canon-iso-orientations}
	If $\can{\oriA{\Struct_\reflcomp}} = \can{\oriB{\Struct_\reflcomp}}$,
	then $\isos{\Struct_\reflcomp} {\can{\Struct_\reflcomp}}  = 
	\bigcup_{\ori \in \orientations}\isos{\oriO{\Struct_\reflcomp}}{\can{\oriO{\Struct_\reflcomp}}}$.
\end{lemma}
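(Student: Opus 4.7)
The plan is to prove both inclusions separately, following closely the strategy of Lemma~\ref{lem:reflcomp-canon-noniso-orientations}. First I would record the key observation that when $\can{\oriA{\Struct_\reflcomp}} = \can{\oriB{\Struct_\reflcomp}}$, the two undone canonizations coincide (undoing only modifies the preorder, not the relational data), and both are equal to $\can{\Struct_\reflcomp}$ by the definition just before the lemma.

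For the inclusion $\supseteq$, let $\phi \in \isos{\oriO{\Struct_\reflcomp}}{\can{\oriO{\Struct_\reflcomp}}}$. Since the preorder of $\oriO{\Struct_\reflcomp}$ refines that of $\Struct_\reflcomp$, and similarly on the canonical side, any bijection respecting the finer preorder and all relations also respects the coarser preorder. Hence $\phi \in \isos{\Struct_\reflcomp}{\undoorican{\oriO{\Struct_\reflcomp}}} = \isos{\Struct_\reflcomp}{\can{\Struct_\reflcomp}}$.

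For the inclusion $\subseteq$, fix any reference $\phi \in \isos{\oriA{\Struct_\reflcomp}}{\can{\oriA{\Struct_\reflcomp}}}$ and let $\phi' \in \isos{\Struct_\reflcomp}{\can{\Struct_\reflcomp}}$ be arbitrary. By composing with $\phi^{-1}$ we obtain $\psi := \inv{\phi}\phi' \in \autgroup{\Struct_\reflcomp}$. By Corollary~\ref{cor:refl-comp-rotate-or-reflect}, $\autgroup{\Struct_\reflcomp}$ is a rotate-or-reflect group, so $\psi$ is either a rotation on every color class of $\reflcomp$ or a reflection on every such class. In the rotation case, $\psi$ preserves the two standard-form cycles in each color class, so $\psi \in \autgroup{\oriA{\Struct_\reflcomp}}$, and hence $\phi' = \phi\psi \in \isos{\oriA{\Struct_\reflcomp}}{\can{\oriA{\Struct_\reflcomp}}}$. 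In the reflection case, $\psi$ swaps the two standard-form cycles in each class simultaneously, which means it sends the orientation $\oriA$ to $\oriB$, so $\psi \in \isos{\oriB{\Struct_\reflcomp}}{\oriA{\Struct_\reflcomp}}$. Consequently $\phi' = \phi\psi \in \isos{\oriB{\Struct_\reflcomp}}{\can{\oriA{\Struct_\reflcomp}}}$, and invoking the hypothesis $\can{\oriA{\Struct_\reflcomp}} = \can{\oriB{\Struct_\reflcomp}}$ this set equals $\isos{\oriB{\Struct_\reflcomp}}{\can{\oriB{\Struct_\reflcomp}}}$, placing $\phi'$ in the right hand side.

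The main subtlety, which is really the whole point of distinguishing this lemma from the previous one, is the reflection case: without the hypothesis that the two orientation canonizations agree, $\phi\psi$ would map $\oriB{\Struct_\reflcomp}$ to the wrong target structure and nothing useful could be said. Beyond that, the argument reduces entirely to the dichotomy provided by Corollary~\ref{cor:refl-comp-rotate-or-reflect}, which ensures that automorphisms of $\Struct_\reflcomp$ behave uniformly on all color classes of the reflection component, so there are no mixed cases to treat.
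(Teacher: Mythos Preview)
Your proof is correct and follows essentially the same approach as the paper's: both prove the easy inclusion $\supseteq$ via undoing the orientation, then for $\subseteq$ fix a reference isomorphism $\phi$, write an arbitrary $\phi'$ as $\phi\psi$ with $\psi\in\autgroup{\Struct_\reflcomp}$, and use Corollary~\ref{cor:refl-comp-rotate-or-reflect} to split into the rotation and reflection cases. One tiny imprecision: the corollary literally asserts that $\autgroup{\Struct[\reflcomp]}$ (not $\autgroup{\Struct_\reflcomp}$, which also includes the abelian border classes) is rotate-or-reflect, but since the restriction of $\psi$ to $\reflcomp$ lies in $\autgroup{\Struct[\reflcomp]}$ your conclusion about the color classes of $\reflcomp$ is still correct.
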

\begin{proof}
	The inclusion $\isos{\oriO{\Struct_\reflcomp}}{\can{\oriO{\Struct_\reflcomp}}} \subseteq 
	\isos{\Struct_\reflcomp} {\can{\Struct_\reflcomp}}$
	follows because $\can{\Struct_\reflcomp} = \undoorican{\oriO{\Struct_\reflcomp}}$ for $\ori \in \orientations$
	as in Lemma~\ref{lem:reflcomp-canon-noniso-orientations}.
	Again, we can write $\isos{\Struct_\reflcomp} {\can{\Struct_\reflcomp}}
	= \phi\autgroup{\Struct_\reflcomp}$
	for some (and thus every) $\phi \in \isos{\oriA{\Struct_\reflcomp}}{\can{\oriA{\Struct_\reflcomp}}}$.
	Let $\phi' \in \isos{\Struct_\reflcomp} {\can{\Struct_\reflcomp}}$.
	Then there is a $\psi \in \autgroup{\Struct_\reflcomp}$ such that
	$\phi' = \phi \circ \psi$.
	Now $\psi$ is either a reflection or a rotation
	on all color classes of $\reflcomp$ simultaneously
	by Corollary~\ref{cor:refl-comp-rotate-or-reflect}.
	If $\psi$ is a rotation everywhere,
	then $\psi \in \autgroup{\oriA{\Struct_\reflcomp}}$
	and thus $\phi' = \phi\circ \psi \in \phi \autgroup{\oriA{\Struct_\reflcomp}} = \isos{\oriA{\Struct_\reflcomp}}{\can{\oriA{\Struct_\reflcomp}}}$.
	If $\psi$ is a reflection everywhere,
	then $\psi \in \isos{\oriA{\Struct_\reflcomp}}{\oriB{\Struct_\reflcomp}} = \isos{\oriB{\Struct_\reflcomp}}{\oriA{\Struct_\reflcomp}}$
	(equality holds because reflections are self-inverse).
	So $\phi' = \phi \circ \psi \in \isos{\oriB{\Struct_\reflcomp}}{\can{\oriB{\Struct_\reflcomp}}}$.
\end{proof}

\subsection{Canonizing Abelian Structures}

We sketch the canonization procedure for abelian color classes.
It proceeds inductively:
it canonizes an induced substructure consisting of $\leq r$ color classes,
where $r$ is the bound on the arity,
and adds it to the already computed partial canonization computed so far.
Because the added substructure is of constant size,
one can consider all possible ordered versions of it, and
pick the minimal one compatible with the existing canonization.
This compatibility check is done
by checking whether an equation system
encoding the canonical labellings of the canonization so far
and another one encoding the canonical labellings of the new substructure
are consistent together.
The following theorem is a precise statement of the canonization procedure
for bounded abelian colors
with all properties needed in this paper.

\begin{theorem}[\cite{AbuZaidGraedelGrohePakusa2014}]
	\label{thm:canonize-abelian-structures}
	There is a CPT term that given a 
	$q$-bounded $\sig$-structure $\Struct$ with transitive and abelian
	colors of arity $r$
	defines a canonical copy $\can{\Struct}$,
	cosets of orderings $\Psi_\colclass = \sigma_\colclass\autgroup{\colstruct} \subseteq \orderings{\colclass}$ with $\sigma_\colclass \in \orderings{\colclass}$ for every color class $\colclass \subseteq \colorclasses{\Struct}$,
	pairwise coprime prime powers $p_1, \dots, p_\ell$,
	distinct variable sets $\Vars_1, \dots, \Vars_\ell$,
	an embedding
	\[\iota : \bigotimes_{\colclass \in \colorclasses{\Struct}} \Psi_\colclass \to \bigotimes_{i \in [\ell]} \ZZ_{p_i}^{\Vars_i},\]
	and a sequence of CESs $\ces = (\ces_{p_1}, \dots, \ces_{p_\ell})$
	such that $\ces_i$ is a CES over $\ZZ_{p_i}$ using variables $\Vars_i$,
	$\solutions{\ces} \subseteq \img{\iota}$ and
	$\iota^{-1}(\solutions{\ces}) = \isos{\Struct}{\can{\Struct}}$.
	The embedding has the following properties:
	\begin{itemize}
		\item Let $\colclass \in \colorclasses{\Struct}$.
		The abelian group $\autgroup{\colstruct}$ is isomorphic to the
		direct sum of cyclic groups of prime power order.
		Let these prime powers be $q_1, \dots, q_k$.
		Then there are variable sets
		$\Vars^\colclass_1, \dots, \Vars^\colclass_k$
		such that $V^\colclass_i$ ranges over $\ZZ_{p_j}$
		with $q_i | p_j$ and there are constraints
		$q_i \cdot v = 0$ for all $v \in \Vars^\colclass_i$ in $\ces_{p_j}$
		embedding $\ZZ_{q_i}$ into $\ZZ_{p_j}$.
		\item There is a CPT term that on input $\colstruct$
		defines the variable sets $\Vars^\colclass_1 , \dots, V^\colclass_k$
		and the cyclic constraints on these variable sets.
		That is, every variable set $\Vars^\colclass_i$ belongs to one color class
		and only depends on this color class.
	\end{itemize}
\end{theorem}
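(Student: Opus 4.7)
The plan is to canonize by extension: process the color classes in the order given by $\spleq$, maintaining at each step the partial canonical copy, the ordering cosets $\Psi_\colclass$, the variable sets, the $\ces_{p_j}$, and the embedding $\iota$. Before the iteration begins, for each color class $\colclass$ I would canonically decompose the abelian group $\autgroup{\colstruct}$ into invariant factors $\CyclicGroup{q_{\colclass,1}} \oplus \cdots \oplus \CyclicGroup{q_{\colclass,k_\colclass}}$; this is CPT-definable because finite abelian groups are classified up to isomorphism by the ordered tuple of invariant factors, and each $\autgroup{\colstruct}$ has at most $q!$ elements. Collecting all primes that appear and taking the maximal exponent for each prime gives the coprime prime powers $p_1, \dots, p_\ell$. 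For every factor $\CyclicGroup{q_{\colclass,i}}$ I would introduce a variable set $\Vars^\colclass_i$ ranging over $\ZZ_{p_j}$ (where $p_j$ shares the prime of $q_{\colclass,i}$), together with the embedding constraints $q_{\colclass,i} \cdot v = 0$ that cut out a copy of $\ZZ_{q_{\colclass,i}}$ inside $\ZZ_{p_j}$.

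When a color class $\colclass$ is processed, I would first pick canonically from the accumulated data a base ordering $\sigma_\colclass \in \orderings{\colclass}$. This identifies $\Psi_\colclass = \sigma_\colclass \autgroup{\colstruct}$ with $\bigoplus_i \ZZ_{q_{\colclass,i}}$, and via the variable sets with a subset of $\bigotimes_i \ZZ_{p_j}^{\Vars^\colclass_i}$, yielding the $\colclass$-component of $\iota$. The cyclic constraints on each $\Vars^\colclass_i$ then encode exactly the orbit of $\sigma_\colclass$: the difference between the variables indexed by two elements $u, v \in \colclass$ equals the unique scalar in $\ZZ_{q_{\colclass,i}}$ corresponding to the element of the $i$-th factor mapping $\sigma_\colclass^{-1}(u)$ to $\sigma_\colclass^{-1}(v)$. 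Heterogeneous relations among at most $r$ color classes, one of them being $\colclass$, contribute the cross-class equations. The key point, where abelianness is essential, is that for any relation tuple $t$ over an $r$-subset $I$ of color classes, the stabilizer of $t$ in $\bigoplus_{\colclass' \in I} \autgroup{\colstruct'}$ is a subgroup, and the set of joint automorphisms sending $t$ to a designated canonical image is a coset of this subgroup. Cosets of subgroups of a direct sum of cyclic prime-power groups are cut out by linear equations, which lift along $\iota$ to linear equations over the $\Vars_j$; I add them to $\ces_{p_j}$ and check consistency with the CPT procedure for CESs. If the system becomes inconsistent I try another base ordering $\sigma_\colclass$ (only polynomially many modulo $\autgroup{\colstruct}$), then extend the partial canonical copy by adding the relations between $\colclass$ and already processed classes.

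The main obstacle will be to verify that the cross-class equations are genuinely linear and respect the cyclic-constraint structure demanded by a CES, and that the final solution set $\iota^{-1}(\solutions{\ces})$ equals $\isos{\Struct}{\can{\Struct}}$ precisely. This requires showing that, since every involved color class acts abelianly and the relation arity is bounded by $r$, the orbit structure of tuples under $\bigoplus_{\colclass' \in I} \autgroup{\colstruct'}$ is captured entirely by the pairwise linear differences read off from each cyclic factor, and that the restriction to orderings compatible with all previously added relations is itself a coset of the appropriate subgroup. The independence of the $\ces_{p_j}$ for different primes follows from coprimality, so the solution space decomposes as a product indexed by the $p_j$, and the totality of the preorder on variable classes within each $\ces_{p_j}$ can be arranged by ordering first by $\spleq$ on color classes and then by the invariant-factor index.
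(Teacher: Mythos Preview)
This theorem is not proved in the paper; it is quoted from \cite{AbuZaidGraedelGrohePakusa2014} and used as a black box. The paper only supplies a one-paragraph informal sketch (opening Section~6.2) of the inductive extension procedure, and that sketch matches your overall plan: process color classes in the $\spleq$-order, at each step add an induced substructure on at most $r$ color classes, enumerate its boundedly many orderings, and keep the minimal one for which the accumulated CES remains consistent.

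One caution on your phrasing: you say you would ``pick canonically from the accumulated data a base ordering $\sigma_\colclass \in \orderings{\colclass}$''. In CPT no single representative $\sigma_\colclass$ is definable; only the coset $\Psi_\colclass$ is, and the paper remarks on exactly this point immediately after the theorem. What the procedure actually does is enumerate all orderings of the bounded-size piece being added (there are at most $(rq)!$), test each for compatibility with the CES so far, and retain the lexicographically minimal resulting canonical copy; the set of orderings achieving that copy is then $\Psi_\colclass$. Your subsequent remark about ``try another base ordering'' suggests you have the right mechanism in mind, but the wording ``pick canonically a base ordering'' is misleading and, if taken literally, would be a genuine gap. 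The obstacle you flag---that the cross-class constraints are linear and fit the CES format---is indeed the technical heart of the cited result, and is where abelianness of the color classes is used; since the present paper does not reproduce that argument, there is nothing further to compare against here.
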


We do  not write the cosets $\Psi_\colclass$
as $\sigma_\colclass\autgroup{\colstruct}$,
because we cannot choose some $\sigma_\colclass \in \Psi_\colclass$.
The set $\bigotimes_{\colclass \in \colorclasses{\Struct}} \Psi_\colclass$
naturally corresponds to a subset of $\orderings{\StructP}$,
namely to the orderings refining the preorder on $\StructP$.
In the following, we just identify these sets.
Now, we show that we can start the canonization procedure
of abelian color classes with a TCES,
that describes initial restrictions on the color classes.
First, we define the set of orderings encoded by a TCES.

\begin{definition}
	Let $\Struct$ be a structure with abelian color classes,
	$\Psi_\colclass$ be the cosets of orderings,
	$p_1,\dots,p_\ell$ the prime powers,
	$\Vars :=\Vars_1 \cup \dots\cup \Vars_\ell$ the variables,
	and $\iota$ the embedding
	given by Theorem~\ref{thm:canonize-abelian-structures}.
	Furthermore, let $\tces$ be a series of TCESs
	using variables $\Vars_\tces$
	such that $\Vars_\tces \cap \Vars$ is contained
	in the topmost variables of $\tces$.
	We say that $\tces$
	\defining{encodes} a set of orderings $\Phi \subseteq \orderings{\StructP}$
	if
	$\extExpl{\Vars}{\restrictVect{\solutions{\tces}}{\Vars}} \subseteq \img{\iota}$
	and $\iota^{-1}(\extExpl{\Vars}{\restrictVect{\solutions{\tces}}{\Vars}}) = \Phi$.
\end{definition}

\begin{lemma}
	\label{lem:canonize-abelian-colors-initial-tces}
	Let $\Struct$ be a structure with transitive abelian color classes
	and $\tces$ a series of weakly global TCESs
	encoding a non-empty set of orderings $\Phi \subseteq \orderings{\StructP}$.
	There is a CPT term that given~$\Struct$ and~$\tces$
	outputs a canonical copy $\can{\Struct_\tces}$
	and a series of CESs~$\ces$ using variables~$\Vars$
	as given by Theorem~\ref{thm:canonize-abelian-structures} such that
	$\ces$ encodes 	$\isos{\Struct}{\can{\Struct_\tces}}$
	and
	$\tcesunion{\tces}{\ces}$ encodes the set
	$\isos{\Struct}{\can{\Struct_\tces}} \cap \Phi \neq \emptyset	$.
\end{lemma}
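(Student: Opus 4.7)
The plan is to modify the inductive abelian canonization procedure of Theorem~\ref{thm:canonize-abelian-structures} so that the TCES $\tces$ acts as an additional compatibility constraint at every inductive step. Recall that the original procedure processes related color classes in groups of constant size: having canonized a part $\Struct_i$ and obtained a series of CESs $\ces^{(i)}$ encoding $\isos{\Struct_i}{\can{\Struct_i}}$, it extends by the next bounded-size substructure by enumerating its constantly many local orderings, forming for each the contributed CESs $\ces^{(i+1)}_{\text{cand}}$, and picking the lex-smallest candidate for which $\ces^{(i)} \unionordered \ces^{(i+1)}_{\text{cand}}$ is consistent.

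In the adapted procedure I would, at every step, instead test consistency of $\tcesunion{\tces}{\bigl(\ces^{(i)} \unionordered \ces^{(i+1)}_{\text{cand}}\bigr)}$. Because each CES is a TCES with only one local component, all its variables are topmost; combined with the hypothesis that $\Vars_\tces \cap \Vars$ lies in the topmost variables of $\tces$, the compatibility precondition of Definition~\ref{def:compatible-vars} is satisfied, so Lemma~\ref{lem:union-tces} applies and the resulting union is again a weakly global TCES. Its consistency is therefore CPT-definable by Theorem~\ref{thm:solvability-tces}. Since $\Phi$ is nonempty, every ordering $\sigma \in \Phi$ projects at each step to some local ordering of the new substructure whose combined system is consistent, so at least one candidate extension always exists and the adapted selection never fails. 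As usual, picking the lex-smallest surviving candidate yields a CPT-definable canonical choice, which after all color classes have been processed is the desired $\can{\Struct_\tces}$.

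The output is then $\can{\Struct_\tces}$ together with the final series of CESs $\ces$. That $\ces$ encodes $\isos{\Struct}{\can{\Struct_\tces}}$ is inherited directly from the original procedure, since the abelian canonization produces exactly the CESs of Theorem~\ref{thm:canonize-abelian-structures} for the canonical copy it actually selects. That $\tcesunion{\tces}{\ces}$ encodes $\isos{\Struct}{\can{\Struct_\tces}} \cap \Phi$ then follows from Lemma~\ref{lem:union-tces} together with the definition of encoding, since the solution space of a union of TCESs is the intersection of the component solution spaces; nonemptiness of this intersection is precisely the invariant maintained throughout the induction. The main technical obstacle I anticipate is the variable bookkeeping: one must verify that as color classes are processed, their associated variable sets $\Vars^\colclass_i$ from Theorem~\ref{thm:canonize-abelian-structures} are either disjoint from $\Vars_\tces$ or sit as complete variable classes among the topmost variables of $\tces$, so that the compatibility condition of the union operation is preserved at every inductive step and the single weakly global TCES maintained by the procedure remains well-formed and CPT-solvable.
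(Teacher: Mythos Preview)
Your proposal is correct and takes essentially the same approach as the paper's own proof sketch: use $\tces$ as an initial equation system, run the inductive abelian canonization procedure while checking consistency of the union $\tcesunion{\tces}{\ces}$ at each step via Theorem~\ref{thm:solvability-tces}, and rely on Lemma~\ref{lem:union-tces} to show the union is weakly global and encodes the intersection. Your write-up is in fact more detailed than the paper's sketch, and the variable-bookkeeping caveat you flag is a reasonable point of care, though the paper does not elaborate on it.
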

\begin{proof}[Proof Sketch]	
	We use $\tces$ as an initial equation system.
	Because $\tces$ encodes a nonempty set of orderings,
	we just forbid some orderings initially, but at least one remains.
	So the canonization is done as before:
	We add induced substructures of constant size and
	accumulate all additional constraints in a series of CESs $\ces$
	using variables~$\Vars$.
	Then~$\ces$ encodes the set $\isos{\Struct}{\can{\Struct_\tces}}$.
	The union $\tcesunion{\tces}{\ces}$
	encodes the intersection of both encoded sets
	by Lemma~\ref{lem:union-tces} and is in particular weakly global
	because $\ces$ is weakly global.
	Checks for solvability can be done in CPT by
	Theorem~\ref{thm:solvability-tces}.
\end{proof}

\subsection{Canonization Procedure}
\label{sec:canonization-procedure}

For dihedral colors we want to 
maintain an equation system
encoding all canonical labellings of all abelian 
color classes (and hence including all border color classes)
that extend to canonical labellings of the input structure.
This suffices to encode the dependencies between
different reflection components
because -- as we have seen in the previous section --
they can only be connected via abelian color classes.
As initialization step,
we apply the canonization procedure for abelian colors
to all abelian color classes.
Then we want to inductively add one reflection component in each step
(possibly restricting the canonical labellings of the border color classes).
But a reflection component is not of constant size
so we cannot try out all orderings.
To overcome this limitation,
we want to define a canonical version of the reflection component $\reflcomp$
by taking the existing partial canonization into account.
That is, given an equation system
encoding all canonical labellings of the partial canonization computed so far,
we want to increase both,
the equation system and the canonization,
by $\reflcomp$ in one step.

From now, we assume that the abelian color classes of a structure $\Struct$
are smaller than the non-abelian ones (according to $\spleq$).
If not, we can easily reorder them canonization preservingly.
The canonization procedure is given in Figure~\ref{fig:canonization-procedure},
where $\extExpl{A}{\Phi}$ is shorthand notation for $\extExpl{\scalebox{0.7}{$\bigcup$} A}{\Phi}$.

\begin{figure}
	\LinesNumbered
	\setlength{\interspacetitleruled}{0pt}%
	\setlength{\algotitleheightrule}{0pt}%
\begin{algorithm}[H]
	\vspace{2mm}
	\KwIn{A $2$-injective double-CFI-free quotient structure $\Struct $}
	\KwOut{The canonical copy $\can{\Struct}$ of $\Struct$}
	\label{line:compute-abelian}
	Compute the set $A \subseteq \colorclasses{\Struct}$ of abelian color classes\;
	Compute all reflection components $\reflcomp_1 < \dots < \reflcomp_m$ of $\Struct$\;
	
	Compute $\can{\Struct_0} := \can{\Struct[A]}$ and $\Phi_0 := \isos{\Struct[A]}{ \can{\Struct[A]}}$ using the canonization procedure for abelian colors\;
	\label{line:canonize-abelian-color-classes}
	
	\For{$i \in [m]$}{
		$\reflcomp := \reflcomp_i$\;
		
		Define the two orientations $\oriO{\Struct_\reflcomp}$\; 	\label{line:define-orientations}
		
		Compute $\can{\oriO{\Struct_\reflcomp}}$ and
		$\oriO{\Phi} := \isos{\oriO{\Struct_\reflcomp}}{\can{\oriO{\Struct}_\reflcomp}}$
		such that $ \Phi_{i-1} \cap \extExpl{A}{\oriO{\Phi}} \neq \emptyset$
		with the canonization procedure for abelian colors\;
		\label{line:canonize-orientations}
		
		\eIf{$\can{\oriO{\Struct_\reflcomp}} < \can{\oriOt{\Struct_\reflcomp}}$ for some $\ori \in \orientations$} {
			\label{line:check-orientations-iso}
			
			$\can{\Struct_i} := \can{\Struct_{i-1}} \cup \undoorican{\oriO{\Struct}_\reflcomp}$\;
			
			$\Phi_i := \Phi_{i-1} \cap \extExpl{A}{\restrictVect{\oriO{\Phi}}{A}}$\;
			\label{line:define-isos-not-iso-orientations}
		}
		{
			$\can{\Struct_i} := \can{\Struct_{i-1}} \cup \undoorican{\oriA{\Struct}_\reflcomp} \cup \undoorican{\oriB{\Struct}_\reflcomp}$\;
			\label{line:define-canon-iso-orientations}
			
			$\Phi_i := \Phi_{i-1} \cap \extExpl{A}{\restrictVect{(\oriA{\Phi} \cup \oriB{\Phi})}{A}}$\;
			\label{line:define-isos-iso-orientations}
		}
	}
	$\can{\Struct} := \can{\Struct_m}$\;
\end{algorithm}
	\caption{Canonizing a 2-injective double-CFI-free
	structure $\Struct$ with dihedral colors in CPT.} 
\label{fig:canonization-procedure}
\end{figure}

We fix the input structure $\Struct = (\StructP, \rel_1^\Struct, \dots \rel_k^\Struct, \spleq)$ in the following 
and first argue that the algorithm indeed defines a canonization
and then show how it can be expressed in CPT
(again, our CPT terms will not depend on $\Struct$).
The algorithm maintains canonizations $\can{\Struct_i}$ of
$\Struct_i := \Struct[A \cup \bigcup_{j \in [i]} \reflcomp_j]$
and sets $\Phi_i$ of canonical labellings.
\begin{lemma}
	\label{lem:canonization-procedure}
	For $i \leq m$ the following holds:
	$\Struct_i \iso \can{\Struct_i}$ and 
	$\Phi_i = \restrictVect{\isos{\Struct_i}{\can{\Struct_i}}}{A}$.
\end{lemma}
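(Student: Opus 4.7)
The proof proceeds by induction on $i$. The base case $i=0$ is immediate: $\Struct_0 = \Struct[A]$ and $\can{\Struct_0} = \can{\Struct[A]}$, so $\Struct_0 \iso \can{\Struct_0}$ by Theorem~\ref{thm:canonize-abelian-structures}, and $\Phi_0 = \isos{\Struct[A]}{\can{\Struct[A]}}$ trivially restricts to itself on $A$ since every vertex of $\Struct_0$ lies in $\bigcup A$.

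For the inductive step, fix $i\geq 1$ and set $\reflcomp := \reflcomp_i$. The structure $\Struct_i$ is obtained by gluing $\Struct_{i-1}$ and $\Struct_\reflcomp$ along the border color classes $\borderclasses{\reflcomp} \subseteq A$; this gluing is a disjoint union on all other color classes because color classes in $\reflcomp$ are not related to color classes outside of $\borderclasses{\reflcomp}\cup \reflcomp$ (no color class outside a reflection component other than a border class is related to it, by Definition~\ref{def:sim-refl} combined with Lemma~\ref{lem:border-color-classes-abelian}). Consequently, a bijection $\phi\colon \Struct_i \to \can{\Struct_i}$ is an isomorphism if and only if its restrictions $\phi|_{\Struct_{i-1}}$ and $\phi|_{\Struct_\reflcomp}$ are isomorphisms to $\can{\Struct_{i-1}}$ and $\can{\Struct_\reflcomp}$ respectively, and these agree on $\bigcup \borderclasses{\reflcomp}$.

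By the induction hypothesis $\Phi_{i-1} = \restrictVect{\isos{\Struct_{i-1}}{\can{\Struct_{i-1}}}}{A}$ is nonempty. Line~\ref{line:canonize-orientations} invokes Lemma~\ref{lem:canonize-abelian-colors-initial-tces} to produce, for each orientation $\ori$, a canonization $\can{\oriO{\Struct_\reflcomp}}$ such that $\oriO{\Phi}$ is compatible with $\Phi_{i-1}$, i.e.\ $\Phi_{i-1}\cap \extExpl{A}{\oriO{\Phi}}\neq \emptyset$. This guarantees that the union of the two canonical copies is well-defined as a canonical copy of $\Struct_i$: any element of this intersection extends to an isomorphism $\Struct_i \to \can{\Struct_i}$, giving $\Struct_i \iso \can{\Struct_i}$. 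In the \textbf{first case} ($\can{\oriO{\Struct_\reflcomp}} < \can{\oriOt{\Struct_\reflcomp}}$), Lemma~\ref{lem:reflcomp-canon-noniso-orientations} gives $\isos{\Struct_\reflcomp}{\can{\Struct_\reflcomp}} = \isos{\oriO{\Struct_\reflcomp}}{\can{\oriO{\Struct_\reflcomp}}}$, so by the gluing description an isomorphism $\phi\colon \Struct_i \to \can{\Struct_i}$ corresponds exactly to a pair of isomorphisms from $\isos{\Struct_{i-1}}{\can{\Struct_{i-1}}}$ and $\oriO{\Phi}$ agreeing on $\bigcup\borderclasses{\reflcomp}$. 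In the \textbf{second case}, Lemma~\ref{lem:reflcomp-canon-iso-orientations} yields $\isos{\Struct_\reflcomp}{\can{\Struct_\reflcomp}} = \oriA{\Phi} \cup \oriB{\Phi}$ and the analogous description holds.

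It remains to verify $\Phi_i = \restrictVect{\isos{\Struct_i}{\can{\Struct_i}}}{A}$. Using the above pairing, the restriction to $A$ of an isomorphism $\Struct_i \to \can{\Struct_i}$ is exactly a restriction to $A$ of an isomorphism in $\isos{\Struct_{i-1}}{\can{\Struct_{i-1}}}$ that additionally, on $\bigcup \borderclasses{\reflcomp}\subseteq \bigcup A$, coincides with the restriction of some isomorphism in $\isos{\Struct_\reflcomp}{\can{\Struct_\reflcomp}}$. Because $\borderclasses{\reflcomp}\subseteq A$, this compatibility condition is captured precisely by the intersection $\Phi_{i-1}\cap \extExpl{A}{\restrictVect{\isos{\Struct_\reflcomp}{\can{\Struct_\reflcomp}}}{A}}$, which equals $\Phi_i$ by the two cases. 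The only step that requires care beyond bookkeeping is arguing that compatibility on the border classes (i.e., on $A$) already forces compatibility on the extension color classes of $\reflcomp$ that are $N$-quotients of group color classes in $\borderclasses{\reflcomp}$; this follows from the orbit-map relations, since an ordering of a border group class together with the orbit-map determines compatible orderings on its extension partner, and from the fact that within $\reflcomp$ any ordering of one color class propagates to the entire component via Corollary~\ref{cor:orientate-sim-refl}. This completes the induction and, taking $i = m$, yields $\Struct \iso \can{\Struct}$.
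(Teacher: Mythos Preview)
Your proof follows essentially the same inductive approach as the paper: base case from the abelian canonization, inductive step via the gluing of $\Struct_{i-1}$ and $\Struct_\reflcomp$ along $\borderclasses{\reflcomp}$, and the two cases handled by Lemmas~\ref{lem:reflcomp-canon-noniso-orientations} and~\ref{lem:reflcomp-canon-iso-orientations}. The core argument is correct.

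However, your final paragraph is confused and addresses a non-issue. First, the quotient direction is reversed: group color classes are $N$-quotients of extension color classes, not the other way around (Definition~\ref{def:quotient-construction}). Second, no extra compatibility argument is needed: the vertex sets of $\Struct_{i-1}$ and $\Struct_\reflcomp$ intersect exactly in $\bigcup\borderclasses{\reflcomp}$, and every relation of $\Struct_i$ lies entirely within one of the two substructures (because color classes in $\reflcomp$ are related only to $\borderclasses{\reflcomp}$ outside $\reflcomp$). Hence any pair of isomorphisms agreeing on $\borderclasses{\reflcomp}$ glues to an isomorphism of $\Struct_i$; the paper records this in one line. Third, Corollary~\ref{cor:orientate-sim-refl} propagates \emph{orientations} (a refinement of $\spleq$ splitting each dihedral class into its two cycles), not full orderings; it does not imply that an ordering of one color class determines orderings of the whole component, and that stronger statement is false in general. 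You should simply delete the last paragraph and replace it by the observation that the overlap is precisely $\borderclasses{\reflcomp}$.
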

\begin{proof}
	For the case $i=0$ the claim follows by definition.
	So assume $i > 0$,
	$\Struct_{i-1} \iso \can{\Struct_{i-1}}$, and
	$\Phi_{i-1} = \restrictVect{\isos{\Struct_{i-1}}{\can{\Struct_{i-1}}}}{A}$.
	Let $\reflcomp = \reflcomp_i$ and $\oriO{\Struct_\reflcomp}$
	be the two orientations of $\Struct_\reflcomp$.
	We perform the case distinction as in Line~\ref{line:check-orientations-iso}:
	If $\can{\oriO{\Struct_\reflcomp}} < \can{\oriOt{\Struct_\reflcomp}}$
	for some $\ori \in \orientations$,
	then $\isos{\Struct_\reflcomp} {\can{\Struct_\reflcomp}}  = 
	\isos{\oriO{\Struct_\reflcomp}}{\can{\oriO{\Struct_\reflcomp}}}$
	by Lemma~\ref{lem:reflcomp-canon-noniso-orientations}
	and hence \[\isos{\Struct_i}{\can{\Struct_i}} = 
		\extExpl{\StructP_i}{\isos{\Struct_{i-1}}{\can{\Struct_{i-1}}}} \cap
		\extExpl{\StructP_i}{\isos{\oriO{\Struct_\reflcomp}}{\can{\oriO{\Struct_\reflcomp}}}}\]
	where $\StructP_i$ are the vertices of $\Struct_i$ and
	\[\Phi_i = \Phi_{i-1}\cap \extExpl{A}{\restrictVect{\oriO{\Phi}}{A}} = 
	\restrictVect{\isos{\Struct_{i-1}}{\can{\Struct_{i-1}}}}{A}
		\cap \extExpl{A}{\restrictVect{\isos{\oriO{\Struct_\reflcomp}}{\can{\oriO{\Struct_\reflcomp}}}}{A}}.\]
 	The reflection component $\reflcomp$
	is only connected to its border color classes ${\borderclasses{\reflcomp} \subseteq A}$.
	So we finally have
	\begin{align*}
	&\quad\restrictVect{\left(\extExpl{\StructP_i}{\isos{\Struct_{i-1}}{\can{\Struct_{i-1}}}} \cap
		\extExpl{\StructP_i}{\isos{\oriO{\Struct_\reflcomp}}{\can{\oriO{\Struct_\reflcomp}}}}\right)}{A} \\
	&= \restrictVect{\isos{\Struct_{i-1}}{\can{\Struct_{i-1}}}}{A}
	\cap \extExpl{A}{\restrictVect{\isos{\oriO{\Struct_\reflcomp}}{\can{\oriO{\Struct_\reflcomp}}}}{A}}
	\end{align*}
	and so $\Phi_i =  \restrictVect{\isos{\Struct_{i}}{\can{\Struct_{i}}}}{A}$.
	The case $\can{\oriA{\Struct_\reflcomp}} = \can{\oriB{\Struct_\reflcomp}}$ 
	proceeds similarly using Lemma~\ref{lem:reflcomp-canon-iso-orientations}.
\end{proof}

We cannot compute with the sets $\Phi_i$
directly in CPT because they can be exponentially large.
Hence we adapt the approach of the canonization procedure for abelian color classes
and encode them with sequences of weakly global TCESs $\tces_i$.
We maintain that the variables~$\Vars_A$ of the abelian color classes~$A$
are contained in the topmost variables of the~$\tces_i$.

It is clear that Lines~\ref{line:compute-abelian} to~\ref{line:define-orientations}
can be defined in CPT.
In Line~\ref{line:canonize-abelian-color-classes}
we just use the canonization procedure for abelian color classes.
It outputs, apart from the canonization, a series of CESs~$\tces_0$
encoding~$\Phi_0$
using variables~$\Vars_A$ (and hence has topmost variables~$\Vars_A$)
by Theorem~\ref{thm:canonize-abelian-structures}.

For Line~\ref{line:canonize-orientations}
we use Lemma~\ref{lem:canonize-abelian-colors-initial-tces}
to define canonizations $\can{\oriO{\Struct_D}}$
that are compatible with the canonization of $\Struct_{i-1}$ computed so far.
We use $\tces_{i-1}$ as an initial equation system
to canonize $\oriO{\Struct_\reflcomp}$,
which has topmost variables $\Vars_A$.
Because there is at least one canonical labelling,
$\emptyset \neq \solutions{\tces_{i-1}} \subseteq \solutions{\tces_0}$.
So Lemma~\ref{lem:canonize-abelian-colors-initial-tces} can be applyied
and we obtain two sequences of CESs $\oriO{\ces_\reflcomp}$.
Note that the variables $\Vars_{\borderclasses{\reflcomp}}$
of the border color classes of~$\reflcomp$
are contained in~$\Vars_A$
and the variables of $\oriO{\ces_\reflcomp}$
for all $\ori \in \orientations$.

Lines~\ref{line:check-orientations-iso} to~\ref{line:define-canon-iso-orientations}
can be defined in CPT:
in Line~\ref{line:define-isos-not-iso-orientations}
we set
$\tces_i := \tcesunion{\tces_{i-1}}{\oriO{\ces_\reflcomp}}$.
Because the common topmost variables of
$\tces_{i-1}$ and $\oriO{\ces_\reflcomp}$
are $\Vars_{\borderclasses{\reflcomp}}$,
we can apply Lemma~\ref{lem:union-tces}:
$\Vars_A$ is contained in the topmost variables of $\tces_i$ and 
$\tces_i$ encodes $\Phi_{i-1} \cap \extExpl{A}{\restrictVect{\oriO{\Phi}}{A}}$
and is weakly global because $\tces_{i-1}$ is weakly global.

The intersection in Line~\ref{line:define-isos-iso-orientations}
can be performed again due to Lemma~\ref{lem:union-tces}.
So we are only left to show that we can define a series of weakly global TCESs
with~$\Vars_{\borderclasses{\reflcomp}}$ contained in its topmost variables
encoding $\extExpl{A}{\restrictVect{(\oriA{\Phi} \cup \oriB{\Phi})}{A}}$.

\subsection{Equation Systems for Reflection Components}
Let $\tces := \tces_{i-1}$  for some $1<i \leq m$
be the series of weakly global TCESs for the canonization constructed so far
and $\reflcomp := \reflcomp_i$ be the next reflection component
to canonize.
Let the variables of the border color classes of $\reflcomp$ be
$\BVars = \BVars_1 < \dots < \BVars_k$.
Recall that these variables only depend on the border classes
(Theorem~\ref{thm:canonize-abelian-structures})
and that $\tces$ already contains cyclic constrains
for these variables (because they belong to the abelian color classes).
Also, the set of variables $\BVars$
is contained in the topmost variables $\Vars_A$ of $\tces$.

Then we define  
the two possible orientations $\oriO{\Struct_\reflcomp}$ of $\reflcomp$
for both $\ori \in \orientations$. We define 
$\can{\oriO{\Struct_\reflcomp}}$
to be the canonizations compatible with $\tces$ 
and let $\oriO{\ces}$ be the series of CESs
encoding the sets $\oriO{\Phi} =
\isos{\oriO{\Struct_\reflcomp}}{\can{\oriO{\Struct_\reflcomp}}}$
as given by Lemma~\ref{lem:canonize-abelian-colors-initial-tces}.
Note that the two series of CESs use the same variables
for the border color classes (as they are equal in both orientations).
We consider the case $\can{\oriA{\Struct_\reflcomp}} = \can{\oriB{\Struct_\reflcomp}}$
and $\can{\Struct_\reflcomp} = \undoorican{\can{\oriA{\Struct_\reflcomp}}} =  \undoorican{\can{\oriB{\Struct_\reflcomp}}}$.
We cannot fix an isomorphism in
$ \isos{\oriA{\Struct_\reflcomp}}{\oriB{\Struct_\reflcomp}} =
\isos{\oriB{\Struct_\reflcomp}}{\oriA{\Struct_\reflcomp}}$
canonically.
But we now show that we can fix
an isomorphism
contained in
$\restrictVect{\isos{\oriA{\Struct_\reflcomp}}{\oriB{\Struct_\reflcomp}}}{\borderclasses{\reflcomp}}$
canonically
(which possibly extends to multiple isomorphisms between
the orientations).

Note that
by Lemma~\ref{lem:border-color-classes-abelian}
the border color classes
have automorphism groups $\CyclicGroup{2}^\ell$ for $\ell \in \set{0,1,2}$
(and are in particular all abelian).
Hence, all variables for the border color classes range over $\ZZ_2$
by Theorem~\ref{thm:canonize-abelian-structures}.
(Precisely, $\BVars$ are variables over $\ZZ_p$ for~$p$ a power of~$2$
and there are constraints $2v = 0$ for all $v \in \BVars$
embedding $\ZZ_2$ in $\ZZ_p$,
as already discussed for TCES).

We adapt both series of CESs
such that their variables are different,
but such that a variable of a border color class of $\oriO{\ces}$
can still be identified with a variable of a border color class of $\oriOt{\ces}$.
This can for example be achieved by renaming the variables as follows: $v \mapsto (v, \oriO{\Struct_\reflcomp})$.
We denote with 
$\oriO{\Vars}$ (and $\oriO{\BVars}$ respectively)
the changed variables for $\ori \in \orientations$.
Then, because the variables can be identified one-to-one,
we write for simplicity e.g.~${\oriO{x}=\oriOt{x}}$
for two vectors indexed by the variables of the 
border color class of the two series of CESs.

\begin{lemma}
	\label{lem:vector-for-reflection-on-border-classes}
	There is a CPT term defining two vectors
	$\oriO{\vectA} = (\oriO{\vectA}_1, \dots, \oriO{\vectA}_k)
	\in \ZZ_2^{\oriO{\BVars_1}} \times \dots \times \ZZ_2^{\oriO{\BVars_k}}$
	for both $\ori \in \orientations$
	(that are equal up to identification of the variables,
	i.e.,~$\oriA{\vectA} = \oriB{\vectA}$ in the above notational convention)
	such that if $\oriO{\vectB}$ is a solution of~$\oriO{\ces}$,
	then there is a solution~$\oriOt{\vectB}$ of~$\oriOt{\ces}$
	such that
	$\restrictVect{\oriO{\vectB}}{\oriO{\BVars}} + \oriO{\vectA} = \restrictVect{\oriOt{\vectB}}{\oriOt{\BVars}}$.
\end{lemma}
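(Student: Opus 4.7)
The plan is to exploit the fact that the assumption $\can{\oriA{\Struct_\reflcomp}} = \can{\oriB{\Struct_\reflcomp}}$ forces the set $\Psi := \isos{\oriA{\Struct_\reflcomp}}{\oriB{\Struct_\reflcomp}}$ to be non-empty, and that it is a coset of $\autgroup{\oriA{\Struct_\reflcomp}}$. Indeed, composing any $\phi_\oriA \in \oriA{\Phi}$ with the inverse of any $\phi_\oriB \in \oriB{\Phi}$ yields an element of $\Psi$, and $\Psi = \psi_0 \autgroup{\oriA{\Struct_\reflcomp}}$ for any fixed $\psi_0 \in \Psi$. By Corollary~\ref{cor:refl-comp-rotate-or-reflect}, $\autgroup{\oriA{\Struct_\reflcomp}} = \autgroup{\oriB{\Struct_\reflcomp}} = \rotsubgroup{\autgroup{\Struct_\reflcomp}}$. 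By Lemma~\ref{lem:border-color-classes-abelian}, each border color class has automorphism group in $\{\CyclicGroup{1}, \CyclicGroup{2}, \CyclicGroup{2}^2\}$, so by Theorem~\ref{thm:canonize-abelian-structures} every variable in $\BVars$ ranges over $\ZZ_2$ (with associated $2v=0$ constraints).

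Next I would use the crucial affine structure in $\ZZ_2$. Via $\iota$, orderings of the border classes correspond bijectively to vectors in $\ZZ_2^\BVars$, and the action of $\restrictVect{\autgroup{\oriA{\Struct_\reflcomp}}}{\borderclasses{\reflcomp}}$ on such orderings (by precomposition) translates into addition in $\ZZ_2^\BVars$. Hence $K_\ori := \iota(\restrictVect{\oriO{\Phi}}{\borderclasses{\reflcomp}}) \subseteq \ZZ_2^\BVars$ is a coset of the common subgroup $H := \iota(\restrictVect{\autgroup{\oriA{\Struct_\reflcomp}}}{\borderclasses{\reflcomp}})$. Writing $K^\ast := K_{\oriBSym} + K_{\oriASym}$ (where $+ = -$ in $\ZZ_2$), the set $\iota(\restrictVect{\Psi}{\borderclasses{\reflcomp}})$ equals the coset $K^\ast$, which is symmetric in the two orientations.

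I would then define $\oriO{\vectA}$ to be the lexicographically minimal element of $K^\ast$ with respect to the canonical order on $\BVars$. This gives the same vector regardless of which orientation is chosen as the starting point, so $\oriA{\vectA} = \oriB{\vectA}$ holds under the identification of variables. The required property is then immediate: for any solution $\oriO{\vectB}$ of $\oriO{\ces}$ we have $\restrictVect{\oriO{\vectB}}{\oriO{\BVars}} \in K_\ori$, hence $\restrictVect{\oriO{\vectB}}{\oriO{\BVars}} + \oriO{\vectA} \in K_{\other{\ori}} = \restrictVect{\solutions{\oriOt{\ces}}}{\oriOt{\BVars}}$, and since $\oriOt{\BVars}$ is contained in the topmost variables of $\oriOt{\ces}$, this partial assignment extends to a full solution $\oriOt{\vectB}$.

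The main obstacle is CPT-definability of $\oriO{\vectA}$. I would argue that $K^\ast$ coincides with the set of $\vectA \in \ZZ_2^\BVars$ for which the system obtained by taking $\tcesunion{\oriA{\ces}}{\oriOt{\ces}}$ and further imposing the linking equations $v' = v + \vectA(v)$ identifying the corresponding variables of $\oriA{\BVars}$ and $\oriB{\BVars}$ is solvable; by Theorem~\ref{thm:solvability-tces} this solvability check is CPT-definable, and selecting the lex-minimum of the resulting set is routine using the canonical order on $\BVars$. The key conceptual step remains the coset observation in the second paragraph, which genuinely relies on the $2$-torsion of the border classes so that the action on orderings is affine over $\ZZ_2$; without this the difference $K_{\oriBSym} - K_{\oriASym}$ would not be well-defined as a single coset.
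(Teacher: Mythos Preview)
Your approach is correct and essentially the same as the paper's: both identify the set of valid difference vectors as the coset $K^\ast = K_{\oriASym} + K_{\oriBSym}$ in $\ZZ_2^{\BVars}$ and extract its lexicographically minimal element via solvability checks on a combined system linking $\oriA{\ces}$ and $\oriB{\ces}$.

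Two technical points you gloss over that the paper treats explicitly are worth flagging. First, since $|\BVars|$ is unbounded, ``selecting the lex-minimum'' cannot proceed by materializing $K^\ast$; the paper carries this out greedily, fixing one variable-class block $\BVars_j$ at a time and testing solvability of the partially constrained linked system at each step. Second, writing $\tcesunion{\oriA{\ces}}{\oriB{\ces}}$ presupposes a CPT-definable order between the two orientations, which does not exist canonically (the $\unionordered$ operation is asymmetric). The paper handles this by solving the system under \emph{both} possible orderings $\oriA{\BVars} \spless \oriB{\BVars}$ and $\oriB{\BVars} \spless \oriA{\BVars}$ and observing that the solution sets coincide, so the consistency outcome is canonical. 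With these two details filled in, your argument and the paper's are the same.
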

\begin{proof}

	We first show that
	$\restrictVect{\oriO{\vectB}}{\oriO{\BVars}} - \restrictVect{\oriOt{\vectB}}{\oriOt{\BVars}}$
	has the desired property
	when~$\oriO{\vectB}$ is a solution of~$\oriO{\ces}$ for both
	$\ori \in \orientations$.
	We then show how we can define such two vectors in CPT.
	The vectors~$\oriO{\vectB}$
	encode isomorphisms $\oriO{\phi} \colon \oriO{\Struct_\reflcomp} \to \can{\oriO{\Struct_\reflcomp}}$
	via the embedding $\iota$ given by the abelian canonization procedure
	(cf.~Theorem~\ref{thm:canonize-abelian-structures}).
	They define two isomorphisms
	$\oriO{\psi} := \oriO{\phi} \circ \inv{(\oriOt{\phi})} : \oriO{\Struct_\reflcomp} \to \oriOt{\Struct_\reflcomp}$
	(recall that $\can{\oriO{\Struct_\reflcomp}}=\can{\oriOt{\Struct_\reflcomp}}$).
	We cannot describe their actions on the reflection component $\reflcomp$
	by vectors
	(because $\reflcomp$ is not abelian),
	but on the border color classes the action is precisely given by
	$\restrictVect{\oriO{\vectB}}{\oriO{\BVars}} - \restrictVect{\oriOt{\vectB}}{\oriOt{\BVars}}$.
	Here, if $\oriO{\vectB}(\varA) - \oriOt{\vectB}(\varA)=1$
	for $\varA \in \oriO{\BVars}_i$ and some $i \in[k]$
	then the $i$-th $\CyclicGroup{2}$-group
	has to be flipped
	and if $\oriO{\vectB}(\varA) -  \oriOt{\vectB}(\varA)=0$
	then it stays constant.
	Now let $\oriO{\vectB'}$ be another solution of $\oriO{\ces}$.
	It again encodes an isomorphism
	$\oriO{\phi'} \colon \oriO{\Struct_\reflcomp} \to \can{\oriO{\Struct_\reflcomp}}$.
	Then
	 $\oriOt{\phi'} := \oriO{\phi'} \circ \oriOt{\psi}
	 \colon \oriOt{\Struct_\reflcomp} \to \can{\oriOt{\Struct_\reflcomp}}$
	is an isomorphism 
	that is encoded by a solution $\oriOt{\vectB}$ of $\oriOt{\ces}$.
	It satisfies $\inv{(\oriOt{\phi'})}\circ \oriO{\phi'}
	= \inv{(\oriO{\phi'} \circ \oriOt{\psi})} \circ \oriO{\phi'} = \inv{(\oriOt{\psi})} = \oriO{\psi}$.
	Hence, we have
	that $\restrictVect{\oriO{\vectB}}{\oriO{\BVars}} - \restrictVect{\oriOt{\vectB}}{\oriOt{\BVars}} = \restrictVect{\oriO{\vectB'}}{\oriO{\BVars}} - \restrictVect{\oriOt{\vectB'}}{\oriOt{\BVars}}$.

	Note that $\oriO{\vectA}_i(\varA) = \oriO{\vectA}_i(\varB)$ must hold 
	for $\varA,\varB \in \oriO{\BVars_i}$,
	because by the cyclic constraints $\oriO{\vectB}(\varA) \neq \oriO{\vectB}(\varB)$ if $\varA \neq \varB$.
	Hence the desired vector $\oriO{\vectA}$ corresponds to a string in $\ZZ_2^k$.
	To define~$\oriO{\vectA}$ in CPT,
	we want to choose the lexicographically minimal one.
	This cannot be done straightforwardly, because $k$ is not bounded.
	
	We construct the vectors $\oriO{\vectA}$ inductively
	and fix the entries for variable set $\oriO{B_i}$ per step.
	Assume we have defined $(\oriO{\vectA}_1, \dots , \oriO{\vectA}_{j-1})$
	such that we can complete
	$\oriO{\vectA}$ to have the desired form
	$\restrictVect{\oriO{\vectB}}{\oriO{\BVars}} - \restrictVect{\oriOt{\vectB}}{\oriOt{\BVars}}$
	for $\oriO{\vectB}$ and $\oriOt{\vectB}$.
	To define $\oriO{\vectA}_j$, we want to solve the equation system
	\begin{align*}
	\oriA{\vectB}& \in \solutions{\oriA{\ces}}\\
	\oriB{\vectB}& \in \solutions{\oriB{\ces}}\\
	\oriA{\vectB}(\varA) - \oriB{\vectB}(\varA) &=  \oriA{\vectA}_i(\varA) & i \in [j], \varA \in \BVars_i\\
	\oriB{\vectB}(\varA) - \oriA{\vectB}(\varA) &=  \oriB{\vectA}_i(\varA) & i \in [j], \varA \in \BVars_i
	\end{align*}
	for the two possible values of $\oriO{\vectA}_j$
	(or equivalently of $\oriOt{\vectA}_j$,
	note that $\oriA{\vectA}_i(\varA) = \oriA{\vectB}(\varA) - \oriB{\vectB}(\varA) = \oriB{\vectB}(\varA) - \oriA{\vectB}(\varA) = \oriB{\vectA}_i(\varA)$ because we are working in $\ZZ_2$).
	But the equations above do not define a series of TCESs
	because there is no order between the variables $\oriA{\BVars_i}$ and $\oriB{\BVars_i}$.
	Hence we solve four equations systems,
	two for each possible value of $\oriO{\vectA}_i$
	and two for each possible order setting $\oriA{\BVars_i} < \oriB{\BVars_i}$ or vice versa (for which the equation system has obviously the same solutions). 
	
	For at least one possible value of $\oriO{\vectA}_i$
	the equation system is consistent by induction hypothesis.
	If it is consistent for both, we choose $\oriO{\vectA}_i(\varA) = 1$.
	The enlarged vector $(\oriO{\vectA}_1, \dots , \oriO{\vectA}_{j})$
	satisfies the induction hypothesis, too,
	because by construction there are 
	solutions
	$\oriO{\vectB} \in \solutions{\oriO{\ces}}$
	that satisfy 
	$\oriO{\vectA} = \restrictVect{\oriO{\vectB}}{\oriO{\BVars}} - \restrictVect{\oriOt{\vectB}}{\oriOt{\BVars}}$
	(as part of the solution of the equation system above).
\end{proof}

We now use the vectors $\oriO{\vectA}$
to represent the canonical labellings of the border color classes,
which additionally extend to canonical labellings of the reflection component,
as a TCES.

\begin{lemma}
	\label{lem:tces-for-reflection-comp}
	There is a CPT term defining a series of weakly global TCESs $\tces_\reflcomp$
	with the following properties:
	\begin{enumerate}
		\item $B$ is contained in the topmost variables of $\tces_\reflcomp$.
		\item $\tces_\reflcomp$ encodes the set
		$\restrictVect{\isos{\Struct_\reflcomp}{\can{\Struct_\reflcomp}}}{\borderclasses{\reflcomp}}$.
		\item The size of $\tces_\reflcomp$ is polynomial in $|\reflcomp|$.
	\end{enumerate}
\end{lemma}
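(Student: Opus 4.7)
My plan is to case-split on whether the two orientation canonizations agree, which is a CPT-definable lexicographic comparison. If $\can{\oriO{\Struct_\reflcomp}} < \can{\oriOt{\Struct_\reflcomp}}$ for some $\ori \in \orientations$, then Lemma~\ref{lem:reflcomp-canon-noniso-orientations} already gives $\isos{\Struct_\reflcomp}{\can{\Struct_\reflcomp}} = \isos{\oriO{\Struct_\reflcomp}}{\can{\oriO{\Struct_\reflcomp}}}$, so I would simply take $\tces_\reflcomp := \oriO{\ces}$ for the lexicographically smaller $\ori$, reordering the preorder within the component CES carrying the border variables so that $\BVars_1 < \dots < \BVars_k$ is the initial segment. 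Each component is a CES, hence a TCES with a single local component in which every variable is local, so weak globality is automatic.

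The more interesting case is $\can{\oriA{\Struct_\reflcomp}} = \can{\oriB{\Struct_\reflcomp}}$. Combining Lemma~\ref{lem:reflcomp-canon-iso-orientations} with Lemma~\ref{lem:vector-for-reflection-on-border-classes} shows that, writing $S_1 := \restrictVect{\solutions{\oriA{\ces}}}{\BVars}$, the set to encode is $S_1 \cup (S_1 + \oriA{\vectA})$, where the second summand realizes the border restriction of the $\oriB{}$-solutions as a shift of $S_1$ by the canonically defined vector $\oriA{\vectA}$. My idea is to express this union using a single switch variable $w$ forced into $\{0, 2^{\ell-1}\} \subseteq \ZZ_{2^\ell}$. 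Concretely, in the component for the prime power $p = 2^\ell$ (the unique one carrying border variables), I would rename the border variables of $\oriA{\ces}_p$ to a fresh internal copy $\tilde{B}$, introduce fresh top-level classes $\BVars_1 < \dots < \BVars_k$ equipped with the original cyclic constraints and with $2b=0$, insert a singleton class $\{w\}$ between $\BVars$ and $\tilde{B}$ together with the equation $2w=0$, and adjoin the shift equations $b - \tilde{b} - \oriA{\vectA}(b)\cdot w = 0$ for each $b \in \BVars$. For the other prime powers I would copy $\oriA{\ces}_p$ verbatim. The resulting variable tree is a path, so the entire TCES lies inside one local component, which again makes weak globality automatic, and the added gadget is of size $O(|\BVars|)$, keeping total size polynomial in $|\reflcomp|$.

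The main step to verify is that these shift equations really realize $S_1 \cup S_2$ after projecting solutions to $\BVars$. With $2w=0$ the variable $w$ is constrained to the two values $0$ and $2^{\ell-1}$. When $w=0$ each shift equation collapses to $b=\tilde{b}$ and the projection gives exactly $S_1$; when $w=2^{\ell-1}$ we obtain $b = \tilde{b} + 2^{\ell-1}$ precisely on the indices $b$ with $\oriA{\vectA}(b)=1$, so under the identification $\{0,2^{\ell-1}\}\leftrightarrow \ZZ_2$ the projection is $S_1 + \oriA{\vectA}$, which by Lemma~\ref{lem:vector-for-reflection-on-border-classes} equals $S_2$. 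The subtlest point is that the shift equations mix three distinct classes ($\BVars_i$, $\{w\}$, $\tilde{B}_i$), but since the variable tree is linear they all lie on a common root-to-leaf path and the equations are admissible in a TCES. CPT-definability then follows by composing the orientations, the application of Lemma~\ref{lem:canonize-abelian-colors-initial-tces} yielding $\oriO{\ces}$, the definition of $\oriA{\vectA}$ from Lemma~\ref{lem:vector-for-reflection-on-border-classes}, and the uniform assembly of the gadget, with $\BVars$ sitting at the root of the variable tree as required.
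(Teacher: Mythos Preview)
Your construction in the equal-canonization case has a genuine gap: it is not CPT-definable. You build $\tces_\reflcomp$ from $\oriA{\ces}$ alone (renaming its border variables to $\tilde{B}$, adding a switch $w$, and shift equations), and you argue this is harmless because the encoded set $S_1 \cup (S_1 + \oriA{\vectA})$ comes out the same whichever orientation is used. But CPT must define the TCES itself as a hereditarily finite set, not merely its solution set. In the case $\can{\oriA{\Struct_\reflcomp}} = \can{\oriB{\Struct_\reflcomp}}$ there is an automorphism of $\Struct_\reflcomp$ (any reflection) that exchanges the two orientations; every CPT-definable object over $\Struct$ must be fixed by it. Your TCES is not: its non-border variables are derived from the split color classes of $\oriA{\Struct_\reflcomp}$, which are different atoms from those of $\oriB{\Struct_\reflcomp}$. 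What you actually define is the two-element set $\{\tces_\reflcomp^{(\oriA{})},\tces_\reflcomp^{(\oriB{})}\}$, and $\Unique$ does not collapse it.

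The paper's construction avoids this by being fully symmetric in the two orientations: it includes \emph{both} $\oriA{\ces}$ and $\oriB{\ces}$ as incomparable subtrees below a common root segment $\BVars \spless \BVars_\refl$, where $\BVars_\refl = \{\oriA{\refl}, \oriB{\refl}\}$ is a single two-variable class carrying the cyclic constraint $\oriA{\refl} + \oriB{\refl} = 1$ and playing the role of your switch. The resulting object is invariant under swapping $\oriASym$ and $\oriBSym$, hence CPT-definable. The price is that the variable tree genuinely branches, so the equations linking $\BVars$ to the two copies of the border variables become global; weak globality then has real content and is established via the $2\var = 0$ constraints on all variables involved. Your observation that a path-shaped tree would make every equation local is correct, but it is precisely the symmetry requirement that forces the branching and hence the need for the weakly-global machinery.
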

\begin{proof}
	Let $\oriO{\vectA}$ be the two vectors given
	by Lemma~\ref{lem:vector-for-reflection-on-border-classes}.
	We define a set of two variables
	$\BVars_\refl := \set{\oriA{\refl},\oriB{\refl}}$
	(and set $\oriO{\refl} := \oriO{\Struct_\reflcomp}$),
	$V_\reflcomp := \BVars \cup \BVars_\refl \cup \oriA{\Vars} \cup \oriB{\Vars}$,
	and $\spleq_\reflcomp$
	such that it respects the orders on $B$ and $\oriO{\Vars}$
	and $\BVars \spless \BVars_\refl \spless \oriO{\Vars}$ for all $\ori \in \orientations$. 
	The variable sets $\oriA{\Vars}$ and $\oriB{\Vars}$ are incomparable.
	
	We want to define a TCES $\tces_\reflcomp$ enforcing that if
	$\vectC \in \solutions{\tces_\reflcomp}$,
	then there is an $\ori \in \orientations$ and a
	$\oriO{\vectB} \in \solutions{\oriO{\ces}}$
	such that $\vectC = \restrictVect{\oriO{\vectB}}{\BVars}$.
	To do so, we guess two solutions $\oriO{\vectB} \in \solutions{\oriO{\ces}}$
	(one for each $\ori \in \orientations$)
	with the property that $\restrictVect{\oriO{\vectB}}{\BVars} + \oriO{\vectA} = \restrictVect{\oriOt{\vectB}}{\BVars}$ (Lemma~\ref{lem:vector-for-reflection-on-border-classes}).
	Then we want to ensure that 
	$\vectC = \restrictVect{\oriA{\vectB}}{\BVars}$ or $\vectC = \restrictVect{\oriB{\vectB}}{\BVars}$.
	To allow that one equality does not hold,
	we use the additional variables $\oriA{\refl}$
	to express the constraints
	$\vectC = \restrictVect{\oriO{\vectB}}{\BVars} + \oriO{\refl} \cdot  \oriO{\vectA}$.
	By enforcing that exactly one of $\oriA{\refl}$ and $\oriB{\refl}$ is
	$1$, we obtain the desired system.
	Finally, to make the system linear, we encode the multiplication
	$\oriO{\refl} \cdot  \oriO{\vectA}$.
	This is possible, because $\oriO{\vectA}$ does not depend on $\oriO{\vectB}$
	and can be defined before defining the following TCES:
	\begin{align*}
	\oriA{\vectB} &\in \solutions{\oriA{\ces}}\\
	\oriB{\vectB} &\in \solutions{\oriB{\ces}}\\
	\vectC(\varA) &= \oriA{\vectB}(\varA) = \oriB{\vectB}(\varA) & \text{if }\oriA{\vectA}(\varA) = \oriB{\vectA}(\varA) = 0, \varA \in \BVars\\
	\vectC(\varA) &= \oriA{\vectB}(\varA)+ \oriA{\refl} = \oriB{\vectB}(\varA) + \oriB{\refl}  & \text{if } \oriA{\vectA}(\varA) = \oriB{\vectA}(\varA) = 1, \varA \in \BVars\\
	1 &=\oriA{\refl} + \oriB{\refl} 
	\end{align*}
	where $\oriO{\vectB}$ is indexed by $\oriO{V}$
	and $\vectC$ is indexed by $B$ and ranges over $\ZZ_2$.
	If the variable~$\oriO{\refl}$ is assigned to~$1$,
	then $\vectC = \oriO{\vectB}|_{\oriO{\BVars}} + \oriO{\vectA}$
	and $\vectC =  \oriO{\vectB}|_{\oriO{\BVars}} $ otherwise.
	Because of the cyclic constraint $1 =\oriA{\refl} + \oriB{\refl}$,
	we add the vector $\oriO{\vectA}$
	to a solution $\oriO{\vectB}$ of $\oriO{\ces}$
	for exactly one orientation $\ori \in \orientations$.
	
	Clearly, $\tces_\reflcomp$ has topmost variables $\BVars$
	and, because the size of the $\oriO{\ces}$ is polynomial $|\reflcomp|$,
	so is the size of $\tces_\reflcomp$.
	We argue that $\tces_D$ is weakly global: 
	The only global equations are the equations relating
	$\vectC(\varA)$ and $\oriO{\vectB}(\varA)$.
	These variables and so all global variables are over $\ZZ_2$
	(embedded in $\ZZ_{2^\ell}$ by $2\var = 0$ equations, as discussed earlier).
	It suffices to show that $\tces_\reflcomp$
	encodes the set
	\[\restrictVect{ \isos{\Struct_\reflcomp} {\can{\Struct_\reflcomp}}}{\borderclasses{\reflcomp}}
	=
	\bigcup_{\ori \in \orientations}\restrictVect{\isos{\oriO{\Struct_\reflcomp}}{\can{\oriO{\Struct_\reflcomp}}}}{\borderclasses{\reflcomp}}\]
	by Lemma~\ref{lem:reflcomp-canon-iso-orientations}.
	So we have to show that
	\[\restrictVect{\solutions{\tces_\reflcomp}}{\BVars} = 
	\restrictVect{\solutions{\oriA{\ces}}}{\oriA{\BVars}}
	\cup
	\restrictVect{\solutions{\oriB{\ces}}}{\oriB{\BVars}},\]
	because $\oriO{\ces}$ encodes $\isos{\oriO{\Struct_\reflcomp}}{\can{\oriO{\Struct_\reflcomp}}}$.
	Let $\vectC \in \restrictVect{\solutions{\tces_\reflcomp}}{\BVars}$,
	so there is a solution consisting of 
	$\vectC, \oriA{\refl},\oriB{\refl},\oriA{\vectB},$ and $\oriB{\vectB}$
	of $\tces_\reflcomp$
	with $\oriO{\refl} = 0$ for some $\ori \in \orientations$
	(which must be the case by the cyclic constraint on $\oriA{\refl}$ and $\oriB{\refl}$).
	Then in particular $\restrictVect{\oriO{\vectB}}{\oriO{\BVars}} = \vectC$,
	$\oriO{\vectB} \in \solutions{\oriO{\ces}}$,
	and thus $\vectC \in \restrictVect{\solutions{\oriO{\ces}}}{\oriO{\BVars}}$.
	
	For the reverse direction,
	let $\oriO{\vectB} \in \solutions{\oriO{\ces}}$.
	Then by Lemma~\ref{lem:vector-for-reflection-on-border-classes}
	there is a solution $\oriOt{\vectB} \in \solutions{\oriO{\ces}}$
	such that $\restrictVect{\oriO{\vectB}}{\oriO{\BVars}} - \restrictVect{\oriOt{\vectB}}{\oriOt{\BVars}} = \oriO{\vectA}$.
	So $\vectC := \restrictVect{\oriO{\vectB}}{\oriO{\BVars}}$, $\oriO{\refl} = 0$, $\oriOt{\refl}=1$,
	$\oriO{\vectB}$, and 
	$\oriOt{\vectB}$
	form a solution of $\tces_\reflcomp$,
	in particular $\vectC = \restrictVect{\oriO{\vectB}}{\oriO{\BVars}} \in \restrictVect{\solutions{\tces_\reflcomp}}{\BVars}$.
\end{proof}

Now, we finally defined all operation on TCESs needed
for our canonization procedure and conclude:

\begin{theorem}
	\label{thm:canonize-double-CFI-free}
	Canonization of $2$-injective double-CFI-free $q$-bounded gadget quotient structures is CPT-definable.
\end{theorem}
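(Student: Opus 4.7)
The plan is to execute and verify the canonization procedure given in Figure~\ref{fig:canonization-procedure}, combining the machinery assembled in Sections~\ref{sec:cyclic-linear-equations-systems} and~\ref{sec:canonization-dihedral}. The correctness of the output (that $\can{\Struct_m}$ is indeed a canonical copy of $\Struct$) is already established by Lemma~\ref{lem:canonization-procedure} via a straightforward induction on $i$, so the remaining task is to show that each line is CPT-definable and that all auxiliary TCESs constructed along the way are weakly global with the required topmost variables.

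First I would handle the base case. Line~\ref{line:compute-abelian} and the computation of the reflection components $\reflcomp_1 < \cdots < \reflcomp_m$ are CPT-definable directly from Definition~\ref{def:reflection-component} and Corollary~\ref{cor:orientate-sim-refl}. Line~\ref{line:canonize-abelian-color-classes} applies Theorem~\ref{thm:canonize-abelian-structures} to the union of abelian color classes $A$, which produces not only $\can{\Struct_0}$ but also a series of CESs $\tces_0$ over pairwise coprime prime powers whose variable set is precisely $\Vars_A$. This $\tces_0$ will serve as the invariant we maintain: a weakly global series of TCESs encoding $\Phi_i$, with $\Vars_A$ contained in its topmost variables.

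For the inductive step, the two orientations $\oriO{\Struct_\reflcomp}$ are definable by Corollary~\ref{cor:orientate-sim-refl} (Line~\ref{line:define-orientations}). Line~\ref{line:canonize-orientations} invokes Lemma~\ref{lem:canonize-abelian-colors-initial-tces} with $\tces_{i-1}$ as the initialization: since the border variables $\BVars \subseteq \Vars_A$ lie in the topmost variables of $\tces_{i-1}$, compatibility with the reflection component's variables is assured. The case distinction in Line~\ref{line:check-orientations-iso} reduces to comparing two CPT-defined canonical copies, which is decidable in CPT. In the easy branch (Line~\ref{line:define-isos-not-iso-orientations}), the updated TCES is simply $\tcesunion{\tces_{i-1}}{\oriO{\ces_\reflcomp}}$, and Lemma~\ref{lem:union-tces} guarantees that the union is weakly global, tree-like, and encodes the correct intersection $\Phi_{i-1} \cap \extExpl{A}{\restrictVect{\oriO{\Phi}}{A}}$, with $\Vars_A$ still among the topmost variables.

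The hard part is the symmetric branch where $\can{\oriA{\Struct_\reflcomp}} = \can{\oriB{\Struct_\reflcomp}}$, since naively one would need to take a set-union of two solution spaces rather than an intersection. This is precisely where Lemma~\ref{lem:tces-for-reflection-comp} enters: it produces a single weakly global series of TCESs $\tces_\reflcomp$ whose topmost variables contain $\BVars$ and which encodes $\restrictVect{\isos{\Struct_\reflcomp}{\can{\Struct_\reflcomp}}}{\borderclasses{\reflcomp}}$ via the indicator variables $\oriA{\refl}, \oriB{\refl}$ and the reflection-shift vectors from Lemma~\ref{lem:vector-for-reflection-on-border-classes}. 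Taking $\tces_i := \tcesunion{\tces_{i-1}}{\tces_\reflcomp}$ then again yields, by Lemma~\ref{lem:union-tces}, a weakly global tree-like series of TCESs encoding $\Phi_i$ with $\Vars_A$ in its topmost variables. Throughout, solvability and consistency checks (needed both by the abelian canonization subroutine and when verifying that $\Phi_i$ is nonempty) are CPT-definable by Theorem~\ref{thm:solvability-tces}. Polynomial boundedness of all constructed TCESs follows from the explicit size bound in Lemma~\ref{lem:tces-for-reflection-comp} and the fact that each $\oriO{\ces_\reflcomp}$ is polynomial in $|\reflcomp|$; hence the iteration terms used to build the $\tces_i$ fit within the polynomial-time fragment of BGS, which concludes the argument.
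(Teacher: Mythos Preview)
Your proposal is correct and takes essentially the same approach as the paper: invoke Lemma~\ref{lem:canonization-procedure} for correctness, then walk through the algorithm line by line using Theorem~\ref{thm:canonize-abelian-structures}, Lemma~\ref{lem:canonize-abelian-colors-initial-tces}, Lemma~\ref{lem:union-tces}, and Lemma~\ref{lem:tces-for-reflection-comp} (for the symmetric branch), with Theorem~\ref{thm:solvability-tces} handling the solvability checks. The paper's own proof is a two-sentence summary of exactly this argument, and your version is simply a more detailed unpacking of the discussion that follows Lemma~\ref{lem:canonization-procedure}.
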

\begin{proof}
	By Lemma~\ref{lem:canonization-procedure}
	the canonization procedure yields indeed a canonization
	given a CPT-definable encoding of the sets $\Phi_i$.
	The discussion after Lemma~\ref{lem:canonization-procedure}
	and Lemma~\ref{lem:tces-for-reflection-comp}
	show that the sets can be encoded by series of TCESs.
\end{proof}

\begin{proof}[Proof of Theorem~\ref{thm:canonize-structures-CPT}]
Theorem~\ref{thm:canonize-structures-CPT}
now follows from Lemma~\ref{lem:odd-dihedral-and-graphs-are-double-CFI-free}
and Theorem~\ref{thm:canonize-double-CFI-free}.
\end{proof}

Since~CPT subsumes IFP, it follows that CPT is a logic that captures \PTime{} on the class of structures mentioned in the theorem.

\section{Conclusion}

We separated a relational structure into $2$-injective subdirect products and quotients,
gave a classification of all $2$-injective subdirect products of dihedral and cyclic groups,
and used this classification to canonize relational structures with bounded dihedral colors of arity at most~$3$.
We showed that the structure decomposes into reflection components
and that in these components either all color classes have to be reflected or none.
If we exclude a single $2$-injective subdirect product,
namely the double CFI group,
the reflection components can only have abelian dependencies.
This is always true for graphs,
because the said group cannot be realized by graphs with dihedral colors.
In fact, we demonstrated the increase of complexity when considering structures of arity $3$ instead of $2$.
Apart from the fact that the double CFI group does not appear,
a classification of $1$-injective $2$-factor subdirect products of dihedral groups is much easier.
Considering higher arity,
already $3$-injective $4$-factor subdirect products of dihedral groups
cannot be classified to be (almost) abelian or reflect-or-rotate groups.
If one instead tries to reduce the arity of the structures,
one needs not only to work with a class of groups closed under taking quotients and subgroups (which is the case for dihedral and cyclic groups),
but also closed under taking direct products.

One natural way to exclude the double CFI group is a restriction to odd dihedral colors.
The difficulty with even dihedral groups
might indicate that looking at odd (non-dihedral) groups could be a reasonable next step.
A natural graph class with odd automorphism groups are tournaments.
Since such groups are solvable there is hope for an inductive approach exploiting the abelian case.
It could be possible that the techniques developed in this paper
transfer to this case.
Just like dihedral groups, odd groups are closed under taking quotients and subgroups.
However, they are also closed under direct products (and are solvable),
which would allow a reduction of the arity.
Thus, it is possible to apply our reduction to quotients and $2$-injective groups.
As a next step, one could try to follow a similar strategy as for dihedral colors:
identify components of the graph,
in which the complexity of all color classes decreases simultaneously,
when a single color class is made easier
(similar to reflection components).
This might not immediately result in abelian groups,
but recursion on the complexity of the groups could be a reasonable option,
e.g.~on the length of the composition series or on the nilpotency class.
All the mentioned avenues remain as future work.

\bibliographystyle{plainurl}
\bibliography{canon}

\end{document}